\tikzset{>=latex, 
    point/.style={circle,semithick,fill=white,draw=black,minimum size=1.3mm,inner sep=0pt},
    spoint/.style={circle,semithick,fill=gray,draw=black,minimum size=1.3mm,inner sep=0pt},
    qpoint/.style={circle,semithick,fill=white,draw=black,minimum size=2mm,inner sep=0pt},
    ppoint/.style={circle,semithick,fill=gray,draw=black,minimum size=2mm,inner sep=0pt},
    wpoint/.style={circle,thick,fill=white,draw=black,minimum size=4.5mm,inner sep=1pt},
    graypoint/.style={circle,very thin,fill=white,draw=gray,minimum size=1.1mm,inner sep=0pt},
    tpoint/.style={circle,semithick,fill=white,draw=black,minimum size=1.2mm,inner sep=0pt},
    object-timeline/.style={ultra thick, gray!30},
    time-guideline/.style={ultra thin},
    time-guideline2/.style={dash dot,gray}
}
\newtheorem{theorem}{Theorem}
\newtheorem{lemma}[theorem]{Lemma}
\newtheorem{proposition}[theorem]{Proposition}
\newtheorem{example}[theorem]{Example}
\newtheorem{definition}[theorem]{Definition}
\newcommand{\LTL}{\textsl{LTL}}
\newcommand{\DL}{\textsl{DL-Lite}}
\newcommand{\EL}{\mathcal{EL}}
\newcommand{\OWLQL}{\textsl{OWL\,2\,QL}}
\newcommand{\OWLEL}{\textsl{OWL\,2\,EL}}
\newcommand{\Xallop}{^{\smash{\Box\raisebox{1pt}{$\scriptscriptstyle\bigcirc$}}}}
\newcommand{\Xnext}{^{\smash{\raisebox{1pt}{$\scriptscriptstyle\bigcirc$}}}}
\newcommand{\Xbox}{^{\smash{\Box}}}
\newcommand{\LogSpace}{\textsc{L}}
\newcommand{\NP}{\textsc{NP}}
\newcommand{\NL}{\textsc{NL}}
\newcommand{\coNP}{\textsc{coNP}}
\newcommand{\PSpace}{\textsc{PSpace}}
\newcommand{\ExpSpace}{\textsc{ExpSpace}}
\newcommand{\NCo}{{{\ensuremath{\textsc{NC}^1}}}}
\newcommand{\ACz}{{\ensuremath{\textsc{AC}^0}}}
\newcommand{\LogTime}{\textsc{LogTime}}
\newcommand{\ExpTime}{\textsc{ExpTime}}
\newcommand{\NExpTime}{\textsc{NExpTime}}
\newcommand{\PTime}{\textsc{P}}
\newcommand{\FOLTLI}{\textsl{FOLTL$_1$}}
\newcommand{\FO}{\textup{FO}}
\newcommand{\RPR}{\textup{RPR}}
\newcommand{\FOE}{\FO(<,\equiv)}
\newcommand{\bool}{\textit{bool}}
\newcommand{\gbool}{\textit{g-bool}}
\newcommand{\horn}{\textit{horn}}
\newcommand{\rhorn}{\textit{horn}^+}
\newcommand{\krom}{\textit{krom}}
\newcommand{\core}{\textit{core}}
\newcommand{\FF}{{\scriptscriptstyle F}}
\newcommand{\PP}{{\scriptscriptstyle P}}
\newcommand{\nm}[1]{\textsf{#1}}
\newcommand{\nxt}{{\ensuremath\raisebox{0.25ex}{\text{\scriptsize$\bigcirc$}}}}
\newcommand{\Rnext}{\nxt_{\!\scriptscriptstyle F}}
\newcommand{\Lnext}{\nxt_{\!\scriptscriptstyle P}}
\newcommand{\Rdiamond}{\Diamond_{\!\scriptscriptstyle F}}
\newcommand{\Ldiamond}{\Diamond_{\!\scriptscriptstyle P}}
\newcommand{\Rbox}{\rule{0pt}{1.4ex}\Box_{\!\scriptscriptstyle F}}
\newcommand{\Lbox}{\rule{0pt}{1.4ex}\Box_{\!\scriptscriptstyle P}}
\newcommand{\Si}{\mathbin{\mathcal{S}}}
\newcommand{\U}{\mathbin{\mathcal{U}}}
\newcommand{\Z}{\mathbb{Z}}
\newcommand{\N}{\mathbb{N}}
\newcommand{\I}{\mathcal{I}}
\newcommand{\J}{\mathcal{J}}
\newcommand{\A}{\mathcal{A}}
\newcommand{\TO}{\mathcal{O}}
\newcommand{\T}{\mathcal{T}}
\newcommand{\R}{\mathcal{R}}
\newcommand{\C}{\mathcal{C}}
\newcommand{\K}{\mathcal{K}}
\newcommand{\can}{{\C_{\TO,\A}}}
\newcommand{\q}{\boldsymbol q}
\newcommand{\rew}{\boldsymbol{Q}}
\newcommand{\ind}{\mathsf{ind}}
\newcommand{\tem}{\mathsf{tem}}
\newcommand{\ans}{\mathsf{ans}}
\newcommand{\cl}{\mathsf{cl}}
\newcommand{\subo}{\mathsf{sub}_{\TO}}
\newcommand{\subt}{\mathsf{sub}_{\T}}
\newcommand{\subr}{\mathsf{sub}_{\R}}
\newcommand{\lang}{\mathcal{L}}
\newcommand{\frag}{{\boldsymbol{c}}}
\newcommand{\fragr}{{\boldsymbol{r}}}
\newcommand{\op}{{\boldsymbol{o}}}
\newcommand{\ka}[1]{\sigma_\A(#1)}
\newcommand{\kpar}{k}
\newcommand{\OMPIQ}{OMPIQ}
\newcommand{\OMQ}{OMQ}
\newcommand{\OMAQ}{OMAQ}
\newcommand{\avec}[1]{\boldsymbol{#1}}
\newcommand{\SA}{\mathfrak S_\A}
\newcommand{\GOA}{\mathfrak G_{\TO,\A}}
\newcommand{\type}{\tau}
\newcommand{\beam}{\boldsymbol{b}}
\newcommand{\rod}{\boldsymbol{r}}
\newcommand{\tp}{\tau}
\newcommand{\rtp}{\rho}
\newcommand{\parity}{\textsc{Parity}}
\begin{document}

\title{First-Order Rewritability and Complexity of Two-Dimensional Temporal Ontology-Mediated Queries}

\author{\name Alessandro Artale \email artale@inf.unibz.it\\
\addr KRDB Research Centre, Free University of Bozen-Bolzano, Italy
\AND
\name Roman Kontchakov \email roman@dcs.bbk.ac.uk\\
\addr Department of Computer Science, Birkbeck, University of London, U.K.
       \AND
\name Alisa Kovtunova \email alisa.kovtunova@tu-dresden.de\\
\addr Chair for Automata Theory, Technische Universit\"at Dresden, Germany
\AND
\name Vladislav Ryzhikov \email vlad@dcs.bbk.ac.uk \\
        \addr Department of Computer Science, Birkbeck, University of London, U.K.
       \AND
       \name Frank Wolter \email wolter@liverpool.ac.uk \\
        \addr Department of Computer Science, University of Liverpool, U.K.
       \AND
       \name Michael Zakharyaschev \email michael@dcs.bbk.ac.uk \\
        \addr Department of Computer Science, Birkbeck, University of London, U.K.\\
        School of Data Analysis and Artificial Intelligence, HSE University, Moscow, Russia
       }

\maketitle


\begin{abstract}
Aiming at ontology-based data access to temporal data, we design two-dimensional temporal ontology and query languages by combining logics from the (extended) \DL{} family with linear temporal logic \LTL{} over discrete time $(\Z,<)$. Our main concern is first-order rewritability of ontology-mediated queries (OMQs) that consist of a 2D ontology and a positive temporal instance query. Our target languages for FO-rewritings are two-sorted FO$(<)$---first-order logic with sorts for time instants ordered by the built-in precedence relation $<$ and for the domain of individuals---its extension $\FOE$ with the standard congruence predicates \mbox{$t \equiv 0 \pmod n$}, for any fixed $n > 1$, and $\FO(\RPR)$ that admits relational primitive recursion. In terms of circuit complexity, $\FOE$- and $\FO(\RPR)$-rewritability guarantee answering OMQs in uniform \ACz{} and \NCo{}, respectively.

We proceed in three steps. First, we define a hierarchy of 2D \DL/\LTL{} ontology languages and investigate the FO-rewritability of OMQs with atomic queries by constructing projections onto 1D \LTL{} OMQs and employing recent results on the FO-rewritability of propositional \LTL{} OMQs. As the projections involve deciding  consistency of ontologies and data, we also consider the consistency problem for our languages. While the  undecidability of consistency for 2D ontology languages with expressive Boolean role inclusions might be expected, we also show that, rather surprisingly, the restriction to Krom
and Horn role inclusions leads to decidability (and \ExpSpace-completeness), even if one admits full Booleans on concepts. As a final step, we lift some of the rewritability results for atomic OMQs to OMQs with expressive positive temporal instance queries. The lifting results are based on an in-depth study of the canonical models and only concern Horn ontologies.
\end{abstract}



\section{Introduction}
\label{intro}

Ontology-based data access~\cite{CDLLR07}, also known as virtual knowledge graphs~\cite{DBLP:journals/dint/XiaoDCC19}, 
has recently become one of the most successful applications of ontologies. The main aim of ontology-based data access (OBDA, for short) is to facilitate access to possibly heterogeneous, distributed and incomplete data for non-IT-users. To this end, an ontology is employed to provide both a user-friendly and uniform vocabulary for formulating queries and a conceptual model of the domain for capturing background knowledge and obtaining more complete answers. Thus, instead of querying data directly by means of often convoluted and \emph{ad hoc} database queries, one can use ontology-mediated queries (OMQs) of the form $\q = (\TO, \varphi)$ with an ontology $\TO$ and a query~$\varphi$ over the familiar and natural vocabulary provided by $\TO$. The answers to $\q$ over a data instance~$\A$ (which can often be obtained via mappings from the original data) are then those tuples of individual names from $\A$ that satisfy $\varphi$ in every model of $\TO$ and $\A$. After nearly 20 years of research, OMQ answering is now well understood both in theory and real-world applications; consult~\citeA{DBLP:conf/rweb/BienvenuO15,DBLP:conf/ijcai/XiaoCKLPRZ18,DBLP:journals/dint/XiaoDCC19} for surveys. 

One of the main challenges in OBDA has been to identify ontology languages that strike a good balance between the expressive power required for conceptual modelling and querying on the one hand and the computational complexity of answering OMQs on the other. The ontology languages employed for OMQ answering nowadays are either based on description logics---DLs for short---\cite{BaaderCalvaneseEtAl2007,DBLP:books/daglib/0041477}, and in particular the \DL{} family~\cite{PLCD*08,ACKZ09}, or extensions of datalog and various forms of tuple-generating dependences~\cite{Abitebouletal95} such as linear and sticky sets of tgds~\cite{DBLP:journals/ws/CaliGL12,DBLP:journals/ai/CaliGP12} and existential rules~\cite{DBLP:journals/ai/BagetLMS11}. 
The data complexity of answering OMQs, where only the data is regarded as an input, while the OMQ is deemed to be fixed (or negligibly small compared to the data), and the rewritability of OMQs into conventional database queries that can be directly evaluated over the data without ontology reasoning have emerged as the two most important measures of the efficiency of OMQ answering.
Thus,  the \DL{}-based ontology language \OWLQL{}\footnote{https://www.w3.org/TR/owl2-profiles}\!, which was standardised by W3C specifically for OBDA and supported by systems such as \textsc{Mastro}\footnote{https://www.obdasystems.com} and \textsc{Ontop}\footnote{https://ontopic.ai}\!, ensures rewritability of all OMQs with conjunctive queries into first-order (FO) queries, i.e., essentially SQL queries~\cite{Abitebouletal95}. Complexity-wise, it means that such OMQs can be answered in \textsc{LogTime}-uniform \ACz{}, one of the smallest complexity classes~\cite{Immerman99}.

In many applications, data comes with timestamps indicating at which moment of time a fact holds. For instance, suppose we have a database on the submission and publication of papers in the area of computer science collected from various sources on the Web and elsewhere. The database may contain, among others, the facts
\begin{align*}
	& \nm{underSubmissionTo}(a,\, \text{JACM},\, \text{Feb2017}),
	&& \nm{UnderSubmission}(b,\, \text{Jan2021}),\\
	& \nm{underSubmissionTo}(a,\, \text{JACM},\, \text{Sep2020}),
	&&  \nm{Published}(b,\, \text{Oct2021}), \\
	& \nm{authorOf}(\text{Bob},\, b,\, \text{May2014}),
	&& \nm{Journal}(\text{JACM},\, \text{Jan1954})
\end{align*}
stating that paper $a$ was under submission to JACM in February 2017
and September~2020; paper $b$, authored by Bob in May 2014, was under
submission in January 2021 (to an unknown venue) and was published
(again in an unknown venue) in October 2021; JACM was a journal in
January 1954. Observe that the predicates in the snippet above have a
timestamp as their last argument (e.g.,~$\text{Feb2017}$) and either
one or two object domain arguments (e.g., $a$, $b$, $\text{Bob}$,
$\text{JACM}$).

While existing standard ontology languages  could use concrete datatypes
to support answering queries over timestamped data (e.g., the 
datatype \textsf{xsd:dateTime} represents timestamps in \textsl{OWL}), they do
not have the expressive power to model even basic temporal
aspects of the domain, and thus support the formulation of queries
taking temporal domain knowledge into account.
In particular, \textsl{OWL} does not have comparison operators over 
concrete domains, the so-called concrete predicates~\cite{LutzAHS05}, and so is
unable to express any temporal constraints for domain concepts. 
For instance, in the context of publishing papers, both ontology engineers and users need means for referring to a
discrete linearly ordered temporal precedence relation in order to
define axioms such as `all published papers have
previously been accepted' and formulate queries such as `find all
journals $x$ and months $t$ such that all papers under submission at~$x$ in $t$ were eventually published.'

In this article, we combine DLs from the \DL{} family with the well-established linear temporal logic \LTL{}~\cite{DBLP:books/cu/Demri2016} to obtain a hierarchy of ontology languages that support temporal conceptual modelling and OMQ answering over temporal data. Our main aim is to explore the trade-off between the expressive power and computational complexity/FO-rewritability of OMQs formulated in these combined languages that are interpreted over the two-dimensional Cartesian products of an object domain and a discrete linear order representing a flow of time.

Combinations of DLs with temporal formalisms have been widely investigated since the pioneering work of~\citeA{DBLP:conf/aaai/Schmiedel90} and~\citeA{DBLP:conf/epia/Schild93} in the early 1990s; we refer the reader to~\citeA{gkwz,Baader:2003:EDL:885746.885753,DBLP:reference/fai/ArtaleF05,DBLP:conf/time/LutzWZ08} for surveys and \citeA{DBLP:conf/dlog/PagliarecciST13,DBLP:journals/tocl/ArtaleKRZ14,DBLP:conf/kr/Gutierrez-BasultoJ014,DBLP:conf/ijcai/Gutierrez-Basulto15,DBLP:conf/ecai/Gutierrez-Basulto16,DBLP:journals/tocl/BaaderBKOT20} for more recent developments. However, the main reasoning task targeted in this line of research was concept satisfiability rather than OMQ answering, and the general aim was to identify and tailor combinations of temporal and DL constructs that ensure decidability of concept satisfiability with acceptable combined complexity.\!\footnote{In a nutshell, the rule of thumb is that to be decidable a temporalised DL should be embeddable into the \emph{monodic} fragment of first-order temporal logic, where no temporal operator is applied to a formula with two free variables~\cite{DBLP:journals/apal/HodkinsonWZ00}. Even tiny additions to one-variable temporal FO such as the `elsewhere' quantifier lead to undecidability~\cite{DBLP:journals/tocl/HampsonK15}.} In contrast, our main concern in this article is OMQ answering, and thus we focus instead on the data complexity and FO-rewritability of ontology-mediated queries.

We use the standard discrete time model with the integers $\Z$ and the order $<$ as precedence. The temporal operators supplied by \LTL{} are $\Rnext$ (at the next moment of time), $\Rdiamond$ (eventually), $\Rbox$ (always in the future),
$\mathcal{U}$ (until), and their past-time counterparts $\Lnext$ (at the previous moment),
$\Ldiamond$ (some time in the past), $\Lbox$ (always in the past) and $\mathcal{S}$ (since); e.g.,~\cite{DBLP:books/daglib/0077033,Gabbayetal94,DBLP:books/cu/Demri2016}.

Following the DL terminology, we refer to data instances as ABoxes; in the ABox snippet~$\A$  above, we call the binary relations \nm{UnderSubmission}, \nm{Published} and \nm{Journal} \emph{concept names} and the ternary relations \nm{underSubmissionTo} and \nm{authorOf} \emph{role names}. In general, an ABox is a finite set of atoms of the form $A(a,\ell)$ and $P(a,b,\ell)$, where $a$, $b$ are individual names, $\ell\in \Z$ is a timestamp, $A$ a concept name, and $P$ a role name.

The combined \DL{}/\LTL{} ontologies we consider are finite sets of inclusions between concepts and between roles in the style of \DL{}, in which temporal operators and DL constructs (e.g., intersection $\sqcap$ or union $\sqcup$) can be applied to concept and role names to construct compound concepts and roles. Concept and role inclusions are assumed to be true at all moments of time. In contrast to standard \DL{} (and generally DLs), here we treat roles in the same way as concepts and thus also allow Boolean operators to be applied to roles. We illustrate the expressive power of our languages using the domain of computer science papers. By applying temporal operators to roles we can state, for instance, that \nm{underSubmissionTo} is convex using
\begin{equation}\label{pub1}
\Ldiamond \nm{underSubmissionTo} \sqcap \Rdiamond \nm{underSubmissionTo}
\sqsubseteq \nm{underSubmissionTo},
\end{equation}
and that \nm{authorOf} is rigid (does not change in time):
\begin{equation}\label{pub2}
\Ldiamond \nm{authorOf} \sqsubseteq \nm{authorOf}, \qquad \Rdiamond \nm{authorOf} \sqsubseteq \nm{authorOf}.
\end{equation}
Similarly, we can postulate that concept \nm{Journal} is rigid.\!\footnote{Subtler modelling would be to state that each instance of \nm{Journal} remains a journal within its lifespan, which could be done using an additional concept `exists' that, for any moment of time, comprises those objects that exist at that moment~\cite<e.g.,>{gkwz}.}   
We could state inclusion similar to~\eqref{pub1} also for the concept name \nm{UnderSubmission}, but observe that, while convexity of role $\nm{underSubmissionTo}$ corresponds to the widespread policy that a paper can be submitted to the same venue only once, convexity of concept $\nm{UnderSubmission}$ implies that rejected papers cannot be submitted to another venue. So far, the role names have not been linked to the corresponding concept names. From \DL{} we inherit the capability of doing so via domain and range restrictions. Thus, we can extend our ontology with the equivalences 
	\begin{equation}\label{pub3}
		\exists \nm{underSubmissionTo} \equiv \nm{UnderSubmission}, \qquad
		\exists \nm{publishedIn} \equiv \nm{Published},
	\end{equation}
the first of which, for example, says that every article that is under submission to some venue is submitted (and the other way round). Observe that a concept inclusion stating that $\nm{UnderSubmission}$ and $\nm{Published}$ are disjoint will imply  that the roles $\nm{underSubmissionTo}$ and $\nm{publishedIn}$ are also disjoint. The converse does not hold as there can be submitted papers that have already been published elsewhere. Additional vocabulary items can be introduced by stating, for example, that one can only submit to conferences or journals:
	\begin{equation}\label{pub4}
		\exists \nm{underSubmissionTo}^{-} \sqsubseteq \nm{Conference} \sqcup \nm{Journal},
	\end{equation}
where the role $\nm{underSubmissionTo}^{-}$ is the inverse of \nm{underSubmissionTo}.
We denote by $\TO$ the ontology snippet with inclusions~\eqref{pub1}--\eqref{pub4}.
To illustrate OMQ answering, consider the atomic query $\varphi_{1}=\nm{UnderSubmission}$ that asks to find all pairs $(x,t)$ such that paper $x$ is under submission at time point $t$. By inclusions~\eqref{pub1} and~\eqref{pub3} of $\TO$, $(b,\text{Jan2021})$ and all pairs in the interval $(a,\text{Feb2017}),\dots,(a,\text{Sep2020})$ are answers to the OMQ $\q_1 = (\TO,\varphi_{1})$ over $\A$. Next, consider the OMQ $\q_2 = (\TO,\varphi_{2})$ with instance query $\varphi_2 = \exists \nm{authorOf}.(\nm{UnderSubmission}\sqcap \Rdiamond \nm{Published})$ asking for pairs $(x,t)$ such that $x$ is an author of
a paper that is under submission at time point~$t$ but eventually published. Then, by inclusion~\eqref{pub2}, $(\text{Bob},\text{Jan2021})$ is the single answer to this query over $\A$. Observe that both OMQs are rewritable into $\FO(<)$, \emph{two-sorted FO} with quantification over the convex closure of the time points in the ABox with the temporal precedence relation~$<$ supplemented by quantification over the ABox individuals. In fact, the following formulas are $\FO(<)$-rewritings of~$\q_1$ and~$\q_2$, respectively:
\begin{align*}
& \rew_{1}(x,t) \  = \  \nm{UnderSubmission}(x,t) \ \ \vee \ \ \ \nm{underSubmissionTo}(x,y,t)  \ \ \vee \ \ {} \\ 
& \hspace*{4em}\exists y, t', t''\, \bigl((t'< t < t'') \wedge
               \nm{underSubmissionTo}(x,y,t') \wedge \nm{underSubmissionTo}(x,y,t'')\bigr),\\[6pt]
& \rew_{2}(x,t) \  = \  \exists y, t'\, \bigl(\nm{authorOf}(x,y,t') \wedge \rew_{1}(y,t) \wedge {} \\ 
& \hspace*{13em} \exists t''\,\bigl[(t''>t) \wedge \bigl(\nm{Published}(y,t'') \lor \exists z\,\nm{publishedIn}(y,z,t'') \bigr)\bigr]\bigr).
\end{align*}
It follows that answering such OMQs is in \ACz{} for data complexity and can be implemented using conventional relational database management systems (RDBMSs).

In this article, we determine classes of OMQs all of which are $\FO(<)$-rewritable. To illustrate, call a concept \emph{basic} if it is of the form $A$ or $\exists S$, where $A$ is a concept name, and $S$ a role name or the inverse thereof. The set of $\mathcal{C}^{\textit{Prior}}$-concepts is obtained from basic concepts using arbitrary Boolean connectives and Prior's temporal operators $\Lbox$, $\Ldiamond$, $\Rbox$ and $\Rdiamond$~\cite{prior:1956b,DBLP:conf/spin/Vardi08}. 
$\mathcal{C}^{\textit{Prior}}$-concept inclusions (CIs) are inclusions between $\mathcal{C}^{\textit{Prior}}$-concepts. Also, let $\mathcal{R}^{\Diamond}_{\smash{\rhorn}}$ denote the class of role inclusions (RIs) $R_{1}\sqcap \cdots \sqcap R_{n}\sqsubseteq R$, where the $R_i$ are roles possibly prefixed by $\Ldiamond$ or $\Rdiamond$ and $R$ is either $\bot$ or a role possibly prefixed by $\Lbox$ or $\Rbox$. The ontology
$\TO$ above is a union of $\mathcal{C}^{\textit{Prior}}$-CIs and $\mathcal{R}^{\Diamond}_{\smash{\rhorn}}$-RIs.
An atomic OMQ (\OMAQ{}, for short) takes the form $(\TO,B)$, where $B$ is a basic concept.
We obtain the following rewritability result:

\medskip
\noindent
\textbf{Theorem A.}
\emph{All \OMAQ{}s $(\TO,B)$, where $\TO$ is a union of $\mathcal{C}^{\textit{Prior}}$-CIs and $\mathcal{R}^{\Diamond}_{\smash{\rhorn}}$-RIs, are $\FO(<)$-rewritable.} 

\medskip

Our next aim is to go beyond \OMAQ{}s, as in Theorem~A, and admit 2D instance queries that support both DL and \LTL{} constructs. In fact, our main expressive instance query language is given by positive temporal concepts, $\varkappa$, that are constructed from concept names using $\sqcap$, $\sqcup$, any temporal operators of \LTL{}, and the DL construct $\exists S.\varkappa$, where $S$ is a role name or its inverse.
An OMQ $(\TO,\varkappa)$ with such a $\varkappa$ is called an ontology-mediated  positive instance query (or \OMPIQ{}).
We cannot expect Theorem~A to generalise to \OMPIQ{}s as already for the atemporal ontology consisting of the CI $\top \sqsubseteq A \sqcup B$ answering \OMPIQ{}s (even without temporal operators in the query) is \coNP-hard~\cite{DBLP:journals/jiis/Schaerf93}, which implies non-$\FO(<)$-rewritability. This \coNP-hardness result also holds for \OMPIQ{}s with a single temporal CI $A \sqsubseteq \Ldiamond B$ in the ontology. Nevertheless, we can define an expressive ontology language
with \OMPIQ{}s rewritable into $\FO(<)$.
Let $\mathcal{C}^{\Diamond}_{\smash{\horn}}$ denote the class of
CIs $C_{1} \sqcap \cdots \sqcap C_{n} \sqsubseteq C$, where the $C_{i}$ are basic concepts possibly prefixed by operators of the form $\Ldiamond,\Lbox,\Rdiamond,\Rbox$, and $C$ is either $\bot$ or a basic concept possibly prefixed by $\Lbox$ or~$\Rbox$.

\medskip
\noindent
\textbf{Theorem B.}
\emph{All \OMPIQ{}s $(\TO,\varkappa)$, where $\TO$ is the union of $\mathcal{C}^{\Diamond}_{\smash{\horn}}$-CIs and $\mathcal{R}^{\Diamond}_{\smash{\rhorn}}$-RIs, are $\FO(<)$-rewritable.}

\medskip

Notice that the ontology $\TO$ from our example without CI~\eqref{pub4} is covered by Theorem~B. The combined ontology languages considered up to now do not use the temporal operators~$\Lnext$ and $\Rnext$. In fact, as observed by~\citeA{AIJ21},  already the \OMAQ{} $(\{A \sqsubseteq \Rnext B,B \sqsubseteq \Rnext A\},A)$ is not $\FO(<)$-rewritable.
Following this work, we extend \mbox{$\FO(<)$} to obtain a suitable target language for rewriting OMQs using $\Lnext$ and $\Rnext$ with the data complexity still in
$\ACz$. Let \mbox{$\FOE$} be the extension of $\FO(<)$ with the unary congruence predicates $t\equiv 0 \pmod{n}$, for any fixed~$n > 1$. To illustrate our rewritability results with the target language $\FOE$, let a  $\DL_{\core}\Xnext$ ontology contain any inclusion $\vartheta_{1} \sqsubseteq \vartheta_{2}$ or $\vartheta_{1} \sqcap \vartheta_{2} \sqsubseteq \bot$, in which $\vartheta_{1},\vartheta_{2}$ are
either basic concepts or roles possibly prefixed by the operators $\Lnext$ and/or $\Rnext$. In other words, $\DL_{\core}\Xnext$ is the extension of the \DL{} dialect underpinning \OWLQL{} with the operators $\Lnext$ and~$\Rnext$.

\medskip
\noindent
\textbf{Theorem C.}
\emph{All \OMPIQ{}s $(\TO,\varkappa)$ with a $\DL_{\core}\Xnext$ ontology $\TO$ are $\FOE$-rewritable.}

\medskip

Observe that rigidity of concepts and roles as in~\eqref{pub2} can also be expressed using $\Lnext$ and $\Rnext$. To cover ontologies that are able to capture~\eqref{pub1} and more general Horn-shaped inclusions as well as $\Lnext$ and $\Rnext$, we have to go beyond $\FOE$ as the target language for rewritings and admit some form of recursion. Again following~\citeA{AIJ21}, we consider the extension $\FO(\RPR)$ of $\FO(<)$ with relational primitive recursion.
Rewritability into $\FO(\RPR)$ implies that OMQ answering is in $\NCo ~\subseteq~ \LogSpace$ for data complexity. Note that $\FO(\RPR)$-queries can be expressed in SQL with recursion or procedural extensions, which are, in general, less efficient and not always supported by RDBMSs. A $\DL_{\horn}\Xallop$ ontology contains
CIs and RIs of the form $\vartheta_{1}\sqcap \cdots \sqcap \vartheta_{n}\sqsubseteq \vartheta$, where each $\vartheta_{i}$ is a basic concept/role possibly prefixed by operators
of the from $\Rnext, \Rdiamond, \Rbox,\Lnext, \Ldiamond, \Lbox$, and $\vartheta$ is either $\bot$ or a basic concept/role possibly prefixed by operators of the form $\Rnext, \Rbox,\Lnext, \Lbox$.

\medskip
\noindent
\textbf{Theorem D.}
\emph{All \OMPIQ{}s $(\TO,\varkappa)$ with a $\DL_{\horn}\Xallop$ ontology $\TO$ are $\FO(\RPR)$-rewritable.}

\medskip

Theorems~A--D only show a few `impressions' of the results obtained in this article. To give a natural hierarchy of the combined \DL{}/\LTL{} OMQs and facilitate a systematic study of their rewritability and data complexity, we present the ontology languages in a rather different way than in Theorems~A--D, using a clausal normal form to be introduced in Section~\ref{sec:tdl}. An overview of our rewritability and complexity results for ontologies in the normal form will be provided in Table~\ref{TDL-table-omaq} (Section~\ref{sec:DL-Lite}) and Table~\ref{TDL-table-ompiq} (Section~\ref{OBDAfor Temporal}). The theorems above are then obtained by  straightforward polynomial-time normalisations.

\smallskip

We establish our results via a series of reductions to simpler cases. Our first main step
\begin{description}
\item[(prj)] projects two-dimensional \DL{}/\LTL{} OMAQs onto one-dimensional \LTL{} OMQs.
\end{description}
Here, by 1D \LTL{} we mean propositional \LTL{} speaking, intuitively, about how a single individual develops in time; input data is given by 1D ABoxes containing assertions of the form $A(\ell)$ with $\ell\in \Z$, and answers to 1D OMQs are sets of timestamps from the 1D ABox. \citeA{AIJ21} obtain 1D rewritability results for the target languages $\FO(<)$, $\FOE$ and $\FO(\RPR)$ using automata-theoretic machinery with a few model-theoretic insights. We apply those results here in a black-box manner. Thus, despite the fact that ultimately the results obtained in this article heavily rely on automata theory, it is only present implicitly as the vehicle used by~\citeA{AIJ21} to obtain 1D rewritings.

It turns out, however, that reduction~\textbf{(prj)} is not always possible and, even if possible, it requires careful treatment.  Recall that the interaction between concepts and roles in classical atemporal \DL{} ontologies is by design rather weak: it can be captured by the one-variable fragment of FO if all role inclusions are Horn and by the two-variable fragment of FO  otherwise~\cite{DBLP:conf/kr/KontchakovRWZ20}. Temporalised concepts and roles interpreted over 2D Cartesian products of \DL{} and \LTL{} structures make the combined logic more expressive than both of its components.  For example, answering OMAQs with Booleans on roles turns out to be similar to reasoning with the two-variable first-order \LTL{}, which is known to be undecidable~\cite{DBLP:journals/apal/HodkinsonWZ00}; see Theorem~\ref{thm:undec}. Even more unexpectedly, as we show in Theorem~\ref{thm:unexpected}, answering OMAQs with Horn-shaped CIs and RIs having $\Box$-operators on roles only is $\NCo$-complete (and so needs recursion), while the OMAQs in the respective \DL{} and \LTL{} component fragments are $\FO(<)$-rewritable: in fact, the combined logic is capable of expressing the operators $\Rnext$ and $\Lnext$ using $\Rbox$, $\Lbox$ and $\sqcap$. 

We construct a projection of 2D OMAQs with Boolean CIs and Horn RIs onto 1D \LTL{} \OMAQ{}s by showing that the interaction between the component logics can be captured using (exponentially many, in general) `connecting axioms' in the form of an implication  between propositional variables possibly prefixed by a temporal operator that, in many cases, has to be $\Rnext$ or $\Lnext$, which explains the unexpected increase of expressivity mentioned above. This projection is, however, sound and complete only if the input OMAQs are evaluated over ABoxes that are consistent with the ontology. So, before applying~\textbf{(prj)}, a preliminary reduction step is required:
\begin{description} 
\item[(con)] reduce FO-rewritability of \OMAQ{}s to FO-rewritability of $\bot$-free quasi-\OMAQ{}s\\ (restricted \OMPIQ{}s with the same rewritability properties as \OMAQ{}s),
\end{description}
whose ontology is consistent with any ABox, ensuring soundness and completeness of~\textbf{(prj)}. 

Reduction~\textbf{(con)} brings up the problem of deciding whether an ABox is consistent with an ontology. As this problem is of independent interest for temporal conceptual modelling, we consider not only the data complexity of deciding consistency but also its combined complexity. Our results are given in Table~\ref{table:consistency:tdl-lite} (Section~\ref{sec:consistency}) and provide a systematic investigation into the way how expressive role inclusions affect the complexity of consistency. While we confirm that full Boolean expressive power on roles leads to undecidability, we also show that, rather surprisingly, the restriction to Krom and Horn RIs leads to decidability and \ExpSpace-completeness, even if one admits full Booleans on concepts. The latter result is one of the very few instances breaking the monodicity barrier in temporal first-order logic, according to which in almost all instances reasoning about the temporal evolution of binary relations enables the encoding of undecidable tiling problems~\cite{gkwz}. The upper bounds are proved by relating the temporal DLs considered here to the one-variable fragment of first-order \LTL.

The final and technically most difficult step of our construction shows how to
\begin{description}
\item[(lift)]  lift rewritability and data complexity results from \OMAQ{}s to \OMPIQ{}s. 
\end{description}
This is again not always possible. In particular, as already mentioned above, if the ontology language admits disjunction, answering \OMPIQ{}s is \coNP-hard. We thus focus on \OMPIQ{}s whose ontology is given in the combined Horn \DL{}/\LTL{} language. We construct our rewritings inductively from the known rewritings of the constituent \OMAQ{}s by describing possible embeddings of the query into the canonical model. The classical atemporal FO-rewritings of \DL{} OMQs of, say~\citeA{PLCD*08,DBLP:journals/jacm/BienvenuKKPZ18}, rely upon the key property of \DL{} that can be characterised as  `concept locality' in the sense that the concepts a given individual belongs to in the canonical model only depend on the ABox concepts containing that element, if any, and the roles adjacent to it, but not on other individuals. The rewritability results in the temporal case are only possible because concept locality comes together with the (exponential) periodicity of the temporal evolution of each pair of domain elements in the canonical models, which can also be captured in FO for any given Horn \OMPIQ{}.

We have chosen to focus in this investigation on OMQs with positive instance queries because, under the open-world semantics, answers to queries involving negation, implication, or universal quantification are typically rather uninformative. To illustrate, consider the OMQ `find all journals $x$ and months $t$ such that all papers under submission to $x$ at $t$ are eventually published in $x$', 
\begin{align*}
\varphi(x,t) \ = \ \forall y\, \big(\nm{underSubmissionTo}(y,x,t) \rightarrow \Rdiamond \nm{publishedIn}(y,x,t)\big),
\end{align*}
where $\Rdiamond \nm{publishedIn}(y,x,t)$ is an atom built from a positive temporal role $\Rdiamond \nm{publishedIn}$.
Under the open-world semantics, this query has no answers
over any ABox simply because, for any journal and month, we can add a paper under submission that was never published. The epistemic semantics for OMQs with FO-queries proposed by~\citeA{CalvaneseGLLR07} and partially realised in SPARQL~\cite{GlimmOgbuji13} yields more useful answers. In the query above, the answers $(x,t)$ are then such that every paper $y$ \emph{known} to be under submission to~$x$ at $t$ is eventually published in $x$.
We observe that, as expected from the atemporal and 1D temporal cases, for OMQs with FO-queries that use positive temporal concepts and roles as atoms (e.g., $\Rdiamond \nm{publishedIn}(y,x,t)$) interpreted under the epistemic semantics, the data complexity is the same as for \OMPIQ{}s.

The article is structured as follows. In the remainder of this section, we briefly discuss related work.  Section~\ref{sec:tdl} introduces our classification of temporal \DL{} logics, and  Section~\ref{sec:tomq} defines and illustrates \OMAQ{}s and \OMPIQ{}s with their semantics. Section~\ref{sec:consistency} investigates the combined complexity of checking consistency of knowledge bases in our formalisms. Section~\ref{sec:DL-Lite} identifies classes of FO-rewritable \OMAQ{}s by projection to \LTL, while Section~\ref{OBDAfor Temporal} lifts the obtained results from Horn \OMAQ{}s to \OMPIQ{}s. First-order temporal OMQs under the epistemic semantics are briefly considered in Section~\ref{sec:omq}. We conclude in Section~\ref{conclusions} by discussing open problems and directions of further research.


\subsection{Related Work}
\label{related}

We have already discussed the main related work on atemporal OBDA and on combining DLs and temporal logic. In this section, we focus on two related topics: (1) work on OMQ answering over temporal data with discrete linear time and (2) work on temporal deductive databases. For detailed recent surveys of temporal OMQ answering in general we refer the reader to~\citeA{AKKRWZ:TIME17,DBLP:conf/rweb/RyzhikovWZ20}.

In temporal OMQ answering, there is a basic distinction between formalisms where temporal constructs are added to \emph{both} ontology and query languages (as in this article) and those in which only the query language is temporalised while ontologies are given in some standard atemporal language. The main advantage of keeping the ontology language atemporal is that the increase in the complexity of query answering compared to the atemporal case (observed also in this article) is less severe, if any. Temporal OMQ answering with atemporal ontologies has been introduced and investigated in the context of semantic technologies for situation awareness; a detailed introduction is given by~\citeA{DBLP:journals/ki/BaaderBKTT20}. A basic query language, \LTL-CQ, proposed  in this framework is obtained from \LTL{} formulas by replacing occurrences of propositional variables by arbitrary conjunctive queries (CQs). In this framework, a fundamental distinction is between answering  \LTL-CQs without any additional temporal constraints and answering  \LTL-CQs  when some concept and/or role names are declared rigid so that their interpretation does not change over time. \citeA{DBLP:conf/cade/BaaderBL13,DBLP:journals/ws/BaaderBL15} analyse the problem of answering \LTL-CQs with respect to $\mathcal{ALC}$ and $\mathcal{SHQ}$ ontologies and show, for example, that \LTL-CQ answering under $\mathcal{ALC}$ ontologies is \ExpTime-complete in combined complexity without rigid names but 2\ExpTime-complete with rigid names.
\citeA{DBLP:conf/gcai/BorgwardtT15,DBLP:conf/ijcai/BorgwardtT15} and \citeA{DBLP:conf/ausai/BaaderBL15} analyse the complexity of answering \LTL-CQs for weaker ontology languages such as $\EL$ (see below) and \DL{}, while~\citeA{DBLP:conf/frocos/BorgwardtLT13,DBLP:journals/ws/BorgwardtLT15} study  rewritability of \LTL-CQs. 
\citeA{DBLP:journals/semweb/BourgauxKT19} investigate the problem of querying inconsistent data, and \citeA{DBLP:conf/aaai/Koopmann19} proposes an extension to probabilistic data. As the monitoring of systems based on data streams is a major application area for these logics, it is of interest to be able to answer queries without storing the whole ABox; for such results on the query language \LTL-CQs, we refer the reader to~\citeA{DBLP:journals/ws/BorgwardtLT15}.

In the monitoring context, the temporal query language STARQL inspired by SPARQL has 
been proposed and investigated~\cite{OzcepM14}: its expressive power is quite different from
\LTL-CQs as it extends SPARQL with  time window operators and comes with an epistemic semantics similar to ours in Section~\ref{sec:omq}. %
Efficient query answering under the assumption that only restricted portions of data, or time windows, are available for querying is the central problem of a more recent research direction of stream reasoning~\cite<e.g.,>{DBLP:journals/semweb/DellAglioEHP19,DBLP:journals/ai/BeckDE18,DBLP:conf/esws/AjileyeMH21}.

We next consider the work on OMQ answering with temporal ontologies that combine \LTL{} with $\EL$ instead of \DL{}. $\EL$ underpins the \OWLEL{} profile of \textsl{OWL\,2}~\cite{DBLP:books/daglib/0041477}, and, unlike $\DL$, it admits qualified existential restrictions, $\exists R.C$, but not inverse roles. Since OMQ answering with atemporal $\EL$ is  \PTime-complete for data complexity, a more expressive target language than $\FO(<)$ is required. \citeA{DBLP:conf/ijcai/Gutierrez-Basulto16} consider combinations of fragments of \LTL{} and $\EL$ and investigate the complexity and rewritability of atomic queries. Since those combinations contain no RIs,  the only temporal expressive power in the ontology language for roles is to declare them rigid (CIs can contain temporal operators). The target language for rewritings is the extension datalog$_{1s}$ of datalog with a single successor function $\textit{succ}$ introduced by~\citeA{Chomicki:1988:TDD:308386.308416}, and answering atomic OMQs is shown to be \PTime-complete for data and \PSpace-complete for combined complexity in an $\EL$/\LTL{} combination without rigid roles, and \PSpace-complete for data and in \ExpTime{} for combined complexity if rigid roles can occur only on the left-hand side of CIs. For acyclic ontologies, rewritability into the extension of $\FO(<)$ with the standard numerical predicate~$+$ is obtained. \citeA{BorgwardtFK21} investigate an extension of $\mathcal{ELH}_\bot$ with diamond-type temporal operators for bridging gaps in sparse temporal data and show that
answering metric temporal rooted CQs with guarded negation under the minimal-world semantics is \PTime-complete for data and \ExpSpace-complete for combined complexity. 

The extension of relational databases to temporal deductive databases has been an active area of research for almost 40 years. It is beyond the scope of this discussion to provide a survey of this field, and we refer the reader
instead to the Encyclopaedia of Database Systems for brief introductions  and pointers to the literature~\cite{DBLP:reference/db/2018}. Of particular relevance here is the work on the extension of logic programming languages by  temporal operators or by terms for the natural numbers or integers representing time points and possibly constraints over them in the sense of constraint databases~\cite{DBLP:books/sp/Kuper00}. Two basic such languages are Templog,  which extends Prolog with the operators $\Rbox$, $\Rdiamond$, and $\Rnext$ (restricted to admit Herbrand models) interpreted over the natural numbers~\cite{DBLP:books/bc/tanselCGSS93/BaudinetCW93} and the aforementioned datalog$_{1s}$, which offers terms $0, \textit{succ}(0), \dots$; for further relevant languages consult~\citeA{DBLP:conf/cdb/Revesz00}. We share with this research direction the issue of having to deal with infinite intended models over the integers and thus answering queries over infinite domains. Another main concern of temporal deductive databases is the problem of finding finite representations of infinite answers~\cite{DBLP:books/bc/tanselCGSS93/BaudinetCW93}. This problem is beyond the scope of this article, where we follow the standard approach, which is  taken in both database theory and OBDA, and only look for answers to OMQs from the finite active domain of the input ABox. 

The expressive power of the ontologies and OMQs considered in this article and the queries considered in temporal datalog extensions are incomparable. A major insight of research into OBDA was the need for existential restrictions on the right-hand side of concept inclusions. The main datalog extensions considered in temporal deductive databases do not admit such existential quantifiers (except possibly over the time points). On the other hand, plain datalog without additional temporal features is already much more expressive than OMQs with ontologies in Horn \DL{} dialects without time. As temporal extensions of datalog inherit this expressive power, query answering is much harder in data complexity than in the languages we are investigating.
Finally, we mention the recent temporalisations of datalog using the operators of metric temporal logic \textsl{MTL} rather than \LTL{}~\cite{DBLP:journals/jair/BrandtKRXZ18,KR2020-79,DBLP:conf/ijcai/WalegaGKK20,DBLP:conf/aaai/CucalaWGK21}.


\section{Temporal \DL{} Logics}
\label{sec:tdl}

We begin by defining the ontology languages we investigate in the context of temporal OBDA via FO-rewriting. The W3C standardised ontology language \OWLQL{} for atemporal OBDA is based on the \DL{} family~\cite{CDLLR07,ACKZ09} of description logics that was designed as a compromise between two aims:
\begin{itemize}
\item[--] the logics should be expressive enough to represent basic conceptual modelling constructs (thereby providing a link between relational databases and ontologies), and

\item[--] simple enough to guarantee uniform reducibility of OMQ answering to standard database query evaluation (that is, FO-rewritability, which also ensures that answering OMQs can be done in \ACz{} for data complexity).
\end{itemize}
Conceptual data models for temporal databases~\cite{DBLP:journals/tods/ChomickiTB01} were analysed and encoded in 2D temporalised \DL{} logics  by~\citeauthor{DBLP:journals/tocl/ArtaleKRZ14}~\citeyear{DBLP:journals/tocl/ArtaleKRZ14}, but the rewritability properties of those logics have remained open.
The temporal \DL{} logics we are about to define are constructed similarly to those of~\citeA{DBLP:journals/tocl/ArtaleKRZ14}. However, following the standard atemporal \OWLQL{}, we opted not to include cardinality constraints in our languages as their interaction with role inclusions ruins FO-rewritability already in the atemporal case~\cite{ACKZ09}. On the other hand, we
allow role inclusion axioms of the same shape as concept inclusion axioms.

The alphabet of our temporal \DL{} logics contains countably infinite sets of \emph{individual names} $a_0,a_1,\dots$, \emph{concept names} $A_0,A_1,\dots$, and \emph{role names} $P_0,P_1,\dots$. We then construct \emph{roles} $S$, \emph{temporalised roles} $R$, \emph{basic concepts} $B$, and \emph{temporalised concepts} $C$ by means of the following grammar:
\begin{align*}
& S \ \ ::=\ \ P_i \ \ \mid \ \ P_i^-,
&& R \ \ ::=\ \ S \ \ \mid\ \ \Rbox R \ \ \mid \ \ \Lbox R \ \ \mid\ \ \Rnext R \ \ \mid \ \ \Lnext R,\\
& B \ \ ::= \ \ A_i \ \ \mid \ \ \exists S,
&& C \ \ ::=\ \ B  \ \ \mid\ \ \Rbox C \ \ \mid \ \ \Lbox C \ \ \mid\ \ \Rnext C \ \ \mid \ \ \Lnext C
\end{align*}
with the \emph{temporal operators} $\Rbox$ (always in the future), $\Lbox$ (always in the past), $\Rnext$ (at the next moment), and $\Lnext$ (at the previous moment).
A \emph{concept} or \emph{role inclusion} takes the~form
\begin{equation}\label{axiom1}
\vartheta_1 \sqcap \dots \sqcap \vartheta_k ~\sqsubseteq~ \vartheta_{k+1} \sqcup \dots \sqcup \vartheta_{k+m},
\end{equation}
where the $\vartheta_i$ are all temporalised concepts of the form $C$ or, respectively, temporalised roles of the form $R$. As usual, we denote the empty $\sqcap$ by $\top$ and the empty $\sqcup$ by $\bot$.
A  \emph{TBox} $\T$ and an \emph{RBox} $\R$ are finite sets of concept inclusions (CIs, for short) and, respectively, role inclusions (RIs); their union $\TO= \T\cup \R$ is called an \emph{ontology}. In the sequel, $\TO$, $\T$ and $\R$ (possibly with decorations) always denote an ontology, TBox and RBox, respectively.

Following~\citeA{AIJ21}, we classify ontologies depending on the form of their concept and role inclusions and the temporal operators that are allowed to occur in them. Let $\frag,\fragr \in \{\gbool,\bool, \horn, \krom,\core\}$ and $\op \in \{\Box, \nxt,\Box\nxt\}$. We denote by $\DL_{\frag/\fragr}^{\op}$ the \emph{temporal description logic} whose ontologies contain inclusions of the form~\eqref{axiom1} with the (future and past) operators indicated by $\op$ (for example, $\op = \Box$ means that only $\Rbox$ and $\Lbox$ can be used); in addition, CIs in $\DL_{\frag/\fragr}^{\op}$ satisfy the following restrictions on $k$ and~$m$ in~\eqref{axiom1} indicated by $\frag$:
\begin{description}\itemsep=0pt
\item[(horn)] $m\leq 1$ if $\frag = \horn$,

\item[(krom)] $k + m\leq 2$ if $\frag = \krom$,

\item[(core)] $k + m\leq 2$ and $m \leq 1$ if $\frag = \core$,

\item[(g-bool)] $k \geq 1$ and any $m$ if $\frag = \gbool$,\footnote{Here, {\bf g} stands for `guarded'~\cite{DBLP:journals/jphil/AndrekaNB98}.}

\item[(bool)] any $k$ and $m$ if $\frag = \bool$,
\end{description}
and RIs in $\DL_{\frag/\fragr}^{\op}$  satisfy analogous restrictions on $k$ and $m$ indicated by $\fragr$. Whenever  $\frag = \fragr$, we use a single subscript: $\DL^\op_{\frag} = \DL^\op_{\frag/\frag}$. We shall also require the fragments $\smash{\DL_{\frag/\rhorn}^{\op}}$ of $\smash{\DL_{\frag/\horn}^{\op}}$ that disallow temporal operators on the left-hand side of RIs; without temporal operators, $\rhorn$ coincides with $\horn$. Containment between the fragments are shown in Fig.~\ref{fig:fragment:inclusions}. Note that all of our logics feature disjointness inclusions of the form $\vartheta_1 \sqcap \vartheta_2 \sqsubseteq \bot$. However, unlike the standard atemporal \DL{} logics~\cite{CDLLR07,ACKZ09}, which can have various types of CIs but allow only \emph{core} RIs (of the form $S_1 \sqsubseteq S_2$ and $S_1 \sqcap S_2 \sqsubseteq \bot$), we treat CIs and RIs in a uniform way and impose restrictions on the clausal structure of CIs and RIs separately (the complexity  of reasoning with such atemporal DLs is discussed in Section~\ref{underlying-DLs} below).

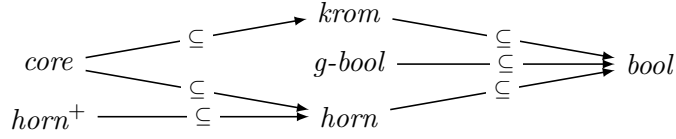
\begin{figure}[t]
\centering
\begin{tikzpicture}[xscale=4,yscale=0.7,semithick]
\node (c) at (0,0) {\core};
\node (k) at (1,1) {\krom};
\node (h) at (1,-1) {\horn};
\node (m) at (0,-1) {$\rhorn$};
\node (b) at (2,0) {\bool};
\node (g) at (1,0) {\gbool};
\begin{scope}\scriptsize
\draw[->] (c) -- node[fill=white] {$\subseteq$} (k);
\draw[->] (c) -- node[fill=white] {$\subseteq$} (h);
\draw[->] (k) -- node[fill=white] {$\subseteq$} (b);
\draw[->] (h) -- node[fill=white] {$\subseteq$} (b);
\draw[->] (m) -- node[fill=white] {$\subseteq$} (h);
\draw[->] (g) -- node[fill=white] {$\subseteq$} (b);
\end{scope}
\end{tikzpicture}
\caption{Containment between fragments (only the shown containments hold).}\label{fig:fragment:inclusions}
\end{figure}

An \emph{ABox} (or \emph{data instance}), $\A$, is a finite set of atoms of the form $A_i(a,\ell)$ and $P_i(a,b,\ell)$, where $a$ and $b$ are individual names and $\ell \in \Z$ is a \emph{timestamp}.  We write $P^-(a,b,\ell)\in\A$ whenever $P(b,a,\ell)\in\A$. We denote by $\ind(\A)$ the set of individual names that occur in~$\A$, by $\min \A$ and $\max \A$ the minimal and maximal integer numbers occurring in~$\A$, and set $\tem(\A) = \bigl\{\,n \in \Z\mid \min \A \leq n \leq \max \A\,\bigr\}$. To simplify constructions and without much loss of generality, we assume that $0 = \min \A$ and $1 \le \max \A$, implicitly adding `dummies' such as $D(a,0)$ and $D(a,1)$ if necessary, where $D$ is a fresh concept name (which will never be used in queries).
A $\DL^{\op}_{\frag/\fragr}$ \emph{knowledge base} (KB) is a pair $(\TO,\A)$, where $\TO$ is a $\DL^{\op}_{\frag/\fragr}$ ontology and $\A$ an ABox. The \emph{size}~$|\TO|$ of an ontology $\TO$ is the number of occurrences of symbols in~$\TO$; the size of a TBox, RBox, ABox, and knowledge base is defined analogously assuming that the numbers (timestamps) in ABoxes are given in \emph{unary}.

A (\emph{temporal}) \emph{interpretation} is a pair $\I = (\Delta^\I,\cdot^{\I(n)})$, where $\Delta^\I \ne\emptyset$ and, for  each $n \in \Z$,
\begin{equation}\label{interpretation}
\I(n)=(\Delta^\I, a_0^\I,\dots ,A_0^{\smash{\I(n)}}, \dots
,P_0^{\smash{\I(n)}},\dots )
\end{equation}
is a standard (atemporal) description logic interpretation with  $a_i^{\I}\in \Delta^\I$, $A_i^{\smash{\I(n)}}\subseteq \Delta^\I$
and $P_i^{\smash{\I(n)}}\subseteq\Delta^\I\times\Delta^\I$.
Thus, we assume that the \emph{domain} $\Delta^\I$ and the interpretations $a_i^\I\in \Delta^\I$ of the individual names are the same for all \mbox{$n\in \Z$}. (However, we do not adopt the unique name assumption, which does not affect our results.)
The description logic and temporal constructs are interpreted in~$\I(n)$ as follows:
\begin{align}
\nonumber
(P^-_i)^{\I(n)} &~=~ \bigl\{\, (u,v) \mid (v,u) \in P_i^{\I(n)} \,\bigr\}, &
(\exists S)^{\I(n)} &~=~ \bigl\{\, u \mid  (u,v) \in S^{\I(n)}, \text{ for some } v  \,\bigr\},\\[4pt]
\label{eq:semantics:box}
(\Rbox \vartheta)^{\I(n)} &~=~ \bigcap_{k>n}
  \vartheta^{\I(k)}, &
(\Lbox \vartheta)^{\I(n)} &~=~ \bigcap_{k<n}
  \vartheta^{\I(k)},\\
\label{eq:semantics:next}
(\Rnext \vartheta)^{\I(n)} &~=~ \vartheta^{\I(n+1)}, &
(\Lnext \vartheta)^{\I(n)} &~=~ \vartheta^{\I(n-1)}.
\end{align}
CIs and RIs are interpreted in $\I$ \emph{globally} in the sense that inclusion~\eqref{axiom1} is \emph{true} in $\I$ if
\begin{equation*}
\vartheta_1^{\I(n)} \cap \dots \cap \vartheta_k^{\I(n)} ~\subseteq~ \vartheta_{k+1}^{\I(n)} \cup \dots \cup \vartheta_{k+m}^{\I(n)},  \qquad\text{for \emph{all} $n \in \Z$}.
\end{equation*}
As usual, $\bot$ (the empty $\sqcup$) is interpreted by $\emptyset$, and $\top$ (the empty $\sqcap$) by $\Delta^{\smash{\I}}$ for concepts and  by $\Delta^{\smash{\I}}\times\Delta^{\smash{\I}}$ for roles. Given an inclusion $\alpha$, we write $\I \models \alpha$ if $\alpha$ is true in $\I$.
We call $\I$ a \emph{model} of $(\TO,\A)$ and write
$\I\models (\TO,\A)$ if
$\I \models \alpha$ for all $\alpha \in \TO$, $a^{\smash{\I}}\in A^{\smash{\I(\ell)}}$   for all $A(a,\ell)\in \A$, and $(a^{\smash{\I}},b^{\smash{\I}})\in P^{\smash{\I(\ell)}}$ for all $P(a,b,\ell)\in \A$.
We say that $\TO$ is \emph{consistent} if there is an interpretation $\I$, a \emph{model of $\TO$}, such that $\I\models\alpha$,  for all $\alpha \in \TO$; we also say that  $\A$ is \emph{consistent with} $\TO$ if there is a model of $(\TO,\A)$.
For an inclusion $\alpha$, we write $\TO\models \alpha$ if $\I \models \alpha$ for every model $\I$ of $\TO$. A concept $C$ is
\emph{consistent with} $\TO$ if there is a model $\I$ of $\TO$ and $n\in \Z$ such that $C^{\I(n)}\ne\emptyset$; consistency of roles with $\TO$ is defined analogously.

In the sequel, we assume that our RBoxes are \emph{closed under taking the inverses of roles in RIs} in the sense that, together with every RI (say, $\Lnext P_1 \sqcap \Rbox P^-_2 \sqsubseteq P^-_3$), the RBox contains the corresponding RI for the inverse roles ($\Lnext P^-_1 \sqcap \Rbox P_2 \sqsubseteq P_3$ in this instance). However, to avoid notational clutter, we do not mention RIs with inverse roles in our examples.

It is not hard to see that, for any $\smash{\DL_{\frag/\fragr}^{\op}}$ ontology $\TO$, one can construct a $\smash{\DL_{\frag/\fragr}^{\op}}$ ontology $\TO'$, possibly using some fresh concept and role names, such that $\TO'$ contains no nested temporal operators, the size of $\TO'$ is linear in the size of $\TO$, and $\TO'$ is a model-conservative  extension\footnote{An ontology $\TO'$ is a \emph{model conservative extension} of an ontology $\TO$ if $\TO'$ entails $\TO$, the signature of $\TO$ is contained in the signature of $\TO'$, and every model of $\TO$ can be extended to a model of $\TO'$ by providing interpretations of the fresh symbols of $\TO'$ and leaving the domain and the interpretation of the symbols in $\TO$ unchanged.}
of $\TO$, which implies that $\TO$ and $\TO'$ give the same certain answers to queries  (to be formally defined in Section~\ref{sec:tomq}). For example, the inclusion $\Rbox\Lnext A \sqsubseteq B$ in $\TO$ can be replaced with two inclusions $\Lnext A \sqsubseteq A'$ and $\Rbox A' \sqsubseteq B$, where $A'$ is a fresh name. In what follows and where convenient, we assume without loss of generality that our ontologies do not contain nested temporal operators.

Although
we do not include the standard \LTL{} operators $\Rdiamond$ (eventually), $\Ldiamond$ (some time in the past), $\U$ (until) and $\Si$ (since) in our ontology languages, all of them can be expressed in the $\bool$ fragment (but not necessarily in smaller fragments). For example, $A \sqsubseteq \Rdiamond B$ can be simulated by two $\krom$ inclusions with $\Rbox$ and a fresh name $A'$: namely, $A \sqcap \Rbox A' \sqsubseteq \bot$ and $\top \sqsubseteq A' \sqcup B$. However,  it cannot be expressed in $\core$ or $\horn$ fragments; more details and examples can be found in~\cite{AIJ21}.
Each of our ontology languages can say that a concept $A$ is \emph{expanding} (by means of $A \sqsubseteq \Rnext A$ in the languages with $\nxt$ and $A \sqsubseteq \Rbox A$ in the languages with $\Box$) or \emph{rigid} (using, in addition, $A \sqsubseteq \Lnext A$  and $A \sqsubseteq \Lbox A$, respectively), and similarly for roles.
Finally, note that $\smash{\DL\Xbox_{\horn/\rhorn}}$ extends the temporal ontology language \textsl{TQL}~\cite{ArtaleKWZ13}, which only allows $\Ldiamond$ and $\Rdiamond$ on the left-hand side of $\horn$ CIs and RIs: $\Ldiamond A \sqsubseteq B$ is equivalent to $A \sqsubseteq \Rbox B$. Thus, \emph{convexity} axioms such as~\eqref{pub1} for both concepts and roles  are expressible in this language:
\begin{gather*}
 \nm{underSubmissionTo}  \sqsubseteq \Rbox \nm{wasUST}, \qquad\qquad
 \nm{underSubmissionTo}  \sqsubseteq \Lbox \nm{willBeUST},\\
 \nm{wasUST} \sqcap \nm{willBeUST}  \sqsubseteq \nm{underSubmissionTo},
\end{gather*}
where $\nm{wasUST}$ and $\nm{willBeUST}$ are two fresh roles names.

\subsection{Remarks on the Underlying Description Logics}\label{underlying-DLs}

We conclude Section~\ref{sec:tdl} with a brief summary of what is known about the computational complexity of reasoning with the DL fragments of our temporal \DL{} logics; for more details the reader is referred to~\citeA{DBLP:conf/kr/KontchakovRWZ20}.

\begin{table}[t]
\centering%
\renewcommand{\tabcolsep}{6pt}%
\begin{tabular}{ccccc}\toprule
role & \multicolumn{4}{c}{concept inclusions}\\
inclusions & \bool & \krom  &   \horn & \core\\\midrule
 \bool & \multicolumn{2}{c}{\NExpTime{}} & \multicolumn{2}{c}{\NExpTime{}}   \\
\gbool &  \multicolumn{2}{c}{\ExpTime{}} & \multicolumn{2}{c}{\ExpTime{}}  \\
 \krom & \NP{}   & \NL{}  &  \NP{}   & \NL{}  \\
 \horn &  \NP{}  & \PTime{}  &  \PTime{}  & \PTime{}  \\
 \core &  \NP &  \NL &  \PTime{}  & \NL{} \\\bottomrule\\[-6pt]
 \multicolumn{5}{c}{\small a) Consistency: combined complexity.}
\end{tabular}
\hfill
\renewcommand{\tabcolsep}{6pt}%
\begin{tabular}{ccc}\toprule
role & \multicolumn{2}{c}{concept inclusions}\\
inclusions\ & \bool \hfil \krom &  \horn   \hfil \core\\\midrule
  \bool & \coNP{} &  \coNP{} \\
  \gbool  &   \coNP{} &   \coNP{}    \\
  \krom &   \coNP{}/in \ACz{} & \coNP{}/in \ACz{} \\
  \horn &   \coNP{}/in \ACz{} &  in \ACz{} \\
  \core &   \coNP{}/in \ACz{} & in \ACz{}  \\\bottomrule\\[-6pt]
 \multicolumn{3}{c}{\small b) Instance/atomic queries: data complexity.}
\end{tabular}
\caption{Complexity of the underlying (non-temporal) description logics. For data complexity, a single result is shown when it holds for both instance queries ($\mathcal{ELUI}$-concepts) and atomic queries (concept names).}
\label{DL-sat}
\label{DL-IC}
\end{table}

Table~\ref{DL-sat}a shows the known results on the \emph{combined complexity} of checking whether a given knowledge base is consistent (with both ontology and ABox regarded as input). Observe 
first that the most expressive \DL{} logic, which admits Boolean CIs and RIs, is contained in the two-variable fragment FO$_2$ of first-order logic (FO)
in the sense that, for every inclusion, one can construct in linear time an equivalent sentence in FO$_{2}$. From the \NExpTime{} upper bound for consistency in FO$_{2}$
we then obtain \NExpTime{} membership for consistency for all our \DL{} logics. A matching lower bound is proved by~\citeA{DBLP:conf/kr/KontchakovRWZ20} for the logic admitting Boolean CIs and RIs by observing that the extension $\mathcal{ALC}^{\top,\text{id},\cap,\neg,-}$ of $\mathcal{ALCI}$ with Boolean operators for roles captures
FO$_{2}$~\cite{DBLP:conf/csl/LutzSW01} and that, for every CI without the identity role in $\mathcal{ALC}^{\top,\text{id},\cap,\neg,-}$, one can construct in polynomial time a model conservative extension in the \DL{} logic.
By disallowing RIs of the form $\top~\sqsubseteq~ \vartheta_1 \sqcup \dots \sqcup \vartheta_m$, we obtain a \DL{} logic whose CIs and RIs are translated into \emph{guarded} FO$_{2}$-sentences~\cite{DBLP:journals/jphil/AndrekaNB98}. Observe that no such restriction is needed for CIs: for instance, the standard FO-translation of the CI $\top \sqsubseteq A \sqcup B$ is the guarded sentence $\forall x\, \bigl((x=x) \to A(x) \lor B(x)\bigr)$. This logic inherits from guarded FO$_{2}$ that consistency is
\ExpTime-complete. The \DL{} logics with Horn/Krom RIs are polynomially reducible (preserving Horness/Kromness) to propositional logic. Finally, the \DL{} logics with core RIs are well-documented in the literature: the one with core CIs goes under the monikers $\DL_{\mathcal{R}}$~\cite{CDLLR07} and $\DL^{\mathcal{H}}_\core$~\cite{ACKZ09}; the one with Horn CIs is known as $\DL_{\mathcal{R},\sqcap}$~\cite{DBLP:conf/kr/CalvaneseGLLR06} and $\DL^{\mathcal{H}}_\horn$~\cite{ACKZ09}; the remaining two logics are called $\DL_{\krom}^\mathcal{H}$ and $\DL_{\bool}^\mathcal{H}$~\cite{ACKZ09}.

In the context of answering ontology-mediated queries to be discussed in the sequel, we are interested in the \emph{data complexity} (when only the ABox is regarded as input) of the \emph{instance checking problem} for concepts constructed from concept names using $\sqcap$, $\sqcup$ and the qualified existential restriction $\exists S.C$ (in other words, $\mathcal{ELUI}$-concepts). The \coNP{} upper bound in Table~\ref{DL-IC}b follows, e.g., from the results on the two-variable FO with counting quantifiers~\cite{DBLP:journals/iandc/Pratt-Hartmann09}, the \coNP{} lower bound for the ontology $\{\top \sqsubseteq A \sqcup B\}$ with a single Krom CI was established by~\citeauthor{DBLP:journals/jiis/Schaerf93}~\citeyear{DBLP:journals/jiis/Schaerf93}; see also Section~\ref{OBDAfor Temporal}. This Krom CI can obviously be expressed by a Krom RI; using the guarded RI $P \sqsubseteq R \sqcup Q$, one can capture the CI~$A \sqsubseteq B \sqcup C$, for which instance checking is also \coNP-hard~\cite<e.g.,>{DBLP:conf/kr/GerasimovaKKPZ20}. The \ACz{} upper bound (FO-rewritablity, to be more precise) for Horn CIs and core RIs is a classical result of~\citeauthor{DBLP:conf/kr/CalvaneseGLLR06}~\citeyear{DBLP:conf/kr/CalvaneseGLLR06,CDLLR07}, which can be readily extended to Horn RIs~\cite{DBLP:conf/kr/KontchakovRWZ20}. The \ACz{} upper bound can also be extended to Krom RIs for atomic queries (which is essentially the consistency problem). On the other hand, for \DL{} ontologies with  (guarded) Boolean RIs, \coNP-hardness already holds for atomic queries (concept names) as we can encode instance queries in the ontology by using the correspondence with $\mathcal{ALC}^{\top,\text{id},\cap,\neg,-}$ mentioned above; see Theorem~\ref{app:th:coNP-gbool} in Appendix~\ref{sec:app-atemporal}.


\section{Ontology-Mediated Atomic and Positive Instance Queries}
\label{sec:tomq}

We use as our main query language positive temporal concepts and roles.
The most important alternative options are (fragments of) two-sorted first-order logic, and we first briefly motivate our choice. The restriction to positive fragments is standard
in OBDA since the use of negation, implication, or universal quantification hardly yields satisfactory answers to queries. For those connectives, an epistemic semantics appears to be more appropriate; we shall discuss it in Section~\ref{sec:omq}. Using (variations of) \LTL{} to query temporal data has a long tradition, going back more than 30 years~\cite{DBLP:conf/ictl/Chomicki94}. 
In fact, by Kamp's celebrated theorem~\cite{phd-kamp}, propositional temporal logic over discrete (and more general Dedekind complete) linear orders has the same expressive power as monadic first-order logic, and so \LTL{} supplies a user-friendly query language without sacrificing expressivity. An analogous result can be proved for the positive fragment of \LTL~\cite{AIJ21}.
It follows that as far as querying the temporal evolution of a single individual is concerned, we do not lose any expressive power compared to positive monadic first-order logic. 
In contrast, in the 2D case our language does have less expressive power than positive two-sorted first-order logic since already the atemporal part of our queries only captures tree-shaped positive existential queries rather than arbitrary positive existential queries. We believe this restriction is justified as it leads to a user-friendly variable-free query language complementing the language used in the ontology. The extension of our results to a language supporting arbitrary positive existential queries is non-trivial and left for future work.

We now introduce our basic language for querying temporal knowledge bases. It consists of  \emph{positive temporal concepts}, $\varkappa$, and \emph{positive temporal roles}, $\varrho$, that are defined by the following grammar:
\begin{align*}
\varkappa  \quad&::=\quad  \top \quad \mid\quad A_k \quad \mid\quad \exists S.\varkappa \quad \mid \quad \varkappa_1 \sqcap \varkappa_2 \quad \mid \quad \varkappa_1 \sqcup \varkappa_2 \quad \mid \quad \mathop{\boldsymbol{op}_1} \varkappa \quad \mid \quad \varkappa_1 \mathbin{\boldsymbol{op}_2}\varkappa_2,\\
\varrho  \quad&::=\quad S \quad \mid \quad \varrho_1 \sqcap \varrho_2 \quad \mid \quad \varrho_1  \sqcup \varrho_2 \quad \mid \quad \mathop{\boldsymbol{op}_1} \varrho \quad \mid \quad \varrho_1 \mathbin{\boldsymbol{op}_2}  \varrho_2,
\end{align*}
where $\boldsymbol{op}_1\in \{ \Rnext, \Rdiamond, \Rbox,
\Lnext, \Ldiamond, \Lbox\}$ and $\boldsymbol{op}_2\in\{\U,
\Si\}$. Let $\I = (\Delta^\I,\cdot^{\I(n)})$ be an  interpretation.  The \emph{extensions} $\varkappa^{\I(n)}$ of~$\varkappa$ in $\I$, for $n \in \Z$, are determined using~\eqref{eq:semantics:box}--\eqref{eq:semantics:next} and the following:
\begin{align*}
 \top^{\I(n)} & = \Delta^{\I},\qquad\qquad (\exists S.\varkappa)^{\I(n)} = \bigl\{\, u \in \Delta^\I \mid  (u,v) \in S^{\I(n)}, \text{ for some } v\in \varkappa^{\I(n)} \,\bigr\},\hspace*{-20em}\\
  (\varkappa_1 \sqcap \varkappa_2)^{\I(n)} & = \varkappa_1^{\I(n)} \cap \varkappa_2^{\I(n)}, & (\varkappa_1 \sqcup \varkappa_2)^{\I(n)} & = \varkappa_1^{\I(n)} \cup \varkappa_2^{\I(n)},\\
 (\Rdiamond \varkappa)^{\I(n)}  & = \bigcup_{k>n} \varkappa^{\I(k)}, &
(\Ldiamond \varkappa)^{\I(n)}  & = \bigcup_{k<n} \varkappa^{\I(k)},\\
 (\varkappa_1 \U \varkappa_2)^{\I(n)}  & =
  \bigcup_{k>n}\,\bigl(\varkappa_2^{\I(k)}
  \cap \hspace*{-0.3em}\bigcap_{n < m < k} \hspace*{-0.3em}\varkappa_1^{\I(m)}\bigr), &
(\varkappa_1 \Si \varkappa_2)^{\I(n)} & =
  \bigcup_{k<n}\,\bigl(\varkappa_2^{\I(k)}
  \cap \hspace*{-0.3em}\bigcap_{k < m < n} \hspace*{-0.3em}\varkappa_1^{\I(m)}\bigr).
\end{align*}
The definition of $\varrho^{\I(n)}$ is analogous. Note that positive temporal concepts $\varkappa$ and roles $\varrho$ include all temporalised concepts $C$ and roles $R$, respectively ($\exists S$ is a shortcut for $\exists S.\top$).

A $\DL^{\op}_{\frag/\fragr}$ \emph{ontology-mediated positive instance query} (\OMPIQ)
is a pair of the form $\q = (\TO, \varkappa)$ or $\q = (\TO,
\varrho)$, where $\TO$ is a $\DL^{\op}_{\frag/\fragr}$ ontology, $\varkappa$
is a positive temporal concept, and $\varrho$ a positive temporal role
(which can use all temporal operators, not necessarily only those in
$\op$). If $\varkappa$ is a basic concept (i.e., $A$ or $\exists S$) and $\varrho$ a role, then we refer to $\q$ as an \emph{ontology-mediated atomic query} (\OMAQ). The \emph{size}, $|\varkappa|$, $|\varrho|$, and $|\q|$, is the number of occurrences of symbols in $\varkappa$, $\varrho$ and $\q$, respectively. 

A \emph{certain answer} to an \OMPIQ{} $(\TO,\varkappa)$ over an ABox $\A$ is a pair $(a,\ell)\in \ind(\A)\times \tem(\A)$ such that  $a^{\I} \in \varkappa^{\smash{\I(\ell)}}$ for every model $\I$ of
$(\TO, \A)$. A \emph{certain answer} to $(\TO,\varrho)$
over $\A$ is a triple $(a,b,\ell)\in \ind(\A)\times\ind(\A)\times \tem(\A)$ such that $(a^{\I},b^{\I}) \in \varrho^{\smash{\I(\ell)}}$ for every
model~$\I$ of~$(\TO, \A)$.
The set of all certain answers to $\q$ over $\A$ is denoted by $\ans(\q,\A)$.
As a technical tool in our constructions, we also require `certain answers' in which $\ell$ ranges over the whole~$\Z$ rather than only the finite \emph{active temporal domain} $\tem(\A)$; we denote the (possibly infinite) set of such certain answers \emph{over $\A$ and $\Z$} by $\ans^\Z(\q,\A)$.

\begin{example}\label{ex1}\em
Suppose
$\T = \{\,A \sqsubseteq \exists P\,\}$,  $\R = \{\, P \sqsubseteq \Rnext Q \,\}$, and $\varkappa = \exists P. \Rnext \exists Q^-.B$.
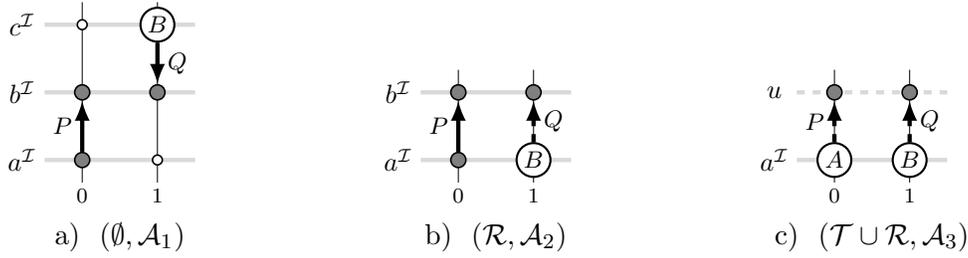
\begin{figure}[t]
\centerline{%
\begin{tikzpicture}[xscale=1, yscale=0.6]\footnotesize
\begin{scope}
\begin{scope}\small
\draw[object-timeline] (-0.5, 0) -- ++(2,0); \node at (-0.8, 0) {$a^\I$};
\draw[object-timeline] (-0.5, 1.5) -- ++(2,0); \node at (-0.8, 1.5) {$b^\I$};
\draw[object-timeline] (-0.5, 3) -- ++(2,0); \node at (-0.8, 3) {$c^\I$};
\end{scope}
\begin{scope}\scriptsize
\draw[time-guideline] (0,-0.5) -- ++(0,4); \node at (0,-0.8) {$0$};
\draw[time-guideline] (1,-0.5) -- ++(0,4); \node at (1,-0.8) {$1$};
\end{scope}
\node[ppoint] (a0) at (0,0) {};
\node[ppoint] (b0) at (0,1.5) {};
\node[point] (c0) at (0,3) {};
\node[point] (a1) at (1,0) {};
\node[ppoint] (b1) at (1,1.5) {};
\node[wpoint] (c1) at (1,3) {$B$};
\begin{scope}[ultra thick]
\draw[->] (a0) to  node[midway,left] {$P$} (b0);
\draw[->] (c1) to  node[midway,right] {$Q$} (b1);
\end{scope}
\node at (0.5, -1.7) {\normalsize a) \ $(\emptyset, \A_1)$};
\end{scope}
\begin{scope}[xshift=50mm]
\begin{scope}\small
\draw[object-timeline] (-0.5, 0) -- ++(2,0); \node at (-0.8, 0) {$a^\I$};
\draw[object-timeline] (-0.5, 1.5) -- ++(2,0); \node at (-0.8, 1.5) {$b^\I$};
\end{scope}
\begin{scope}\scriptsize
\draw[time-guideline] (0,-0.5) -- ++(0,2.5); \node at (0,-0.8) {$0$};
\draw[time-guideline] (1,-0.5) -- ++(0,2.5); \node at (1,-0.8) {$1$};
\end{scope}
\node[ppoint] (a0) at (0,0) {};
\node[ppoint] (b0) at (0,1.5) {};
\node[wpoint] (a1) at (1,0) {$B$};
\node[ppoint] (b1) at (1,1.5) {};
\begin{scope}[ultra thick]
\draw[->] (a0) to  node[midway,left] {$P$} (b0);
\draw[->,dashed] (a1) to  node[midway,right] {$Q$} (b1);
\end{scope}
\node at (0.5, -1.7) {\normalsize b) \ $(\R, \A_2)$};
\end{scope}
\begin{scope}[xshift=100mm]
\begin{scope}\small
\draw[object-timeline] (-0.5, 0) -- ++(2,0); \node at (-0.8, 0) {$a^\I$};
\draw[object-timeline,dashed] (-0.5, 1.5) -- ++(2,0); \node at (-0.8, 1.5) {$u$};
\end{scope}
\begin{scope}\scriptsize
\draw[time-guideline] (0,-0.5) -- ++(0,2.5); \node at (0,-0.8) {$0$};
\draw[time-guideline] (1,-0.5) -- ++(0,2.5); \node at (1,-0.8) {$1$};
\end{scope}
\node[wpoint] (a0) at (0,0) {$A$};
\node[ppoint] (b0) at (0,1.5) {};
\node[wpoint] (a1) at (1,0) {$B$};
\node[ppoint] (b1) at (1,1.5) {};
\begin{scope}[ultra thick]
\draw[->,dashed] (a0) to  node[midway,left] {$P$} (b0);
\draw[->,dashed] (a1) to  node[midway,right] {$Q$} (b1);
\end{scope}
\node at (0.5, -1.7) {\normalsize c) \ $(\T\cup \R, \A_3)$};
\end{scope}
\end{tikzpicture}
}%
\caption{Fragments of models in Example~\ref{ex1}.}\label{fig:ex1}
\end{figure}%
For $\A_1 = \{ P(a,b,0), Q(c, b, 1), B(c,1)\}$ and $\q_1 = (\emptyset, \varkappa)$, we have $\ans(\q_1,\A_1) = \{(a,0)\}$ because $a^\I \in \varkappa^{\I(0)}$ for any model $\I$ of $\A_1$; see Fig.~\ref{fig:ex1}a.
For $\A_2 = \{ P(a,b,0), B(a, 1)\}$ and $\q_2 = (\R, \varkappa)$, we also have $\ans(\q_2,\A_2) = \{(a,0)\}$ because $(a^\I, b^\I) \in Q^{\I(1)}$, for every model~$\I$ of $\A_2$ and~$\R$; see Fig.~\ref{fig:ex1}b.
For $\A_3 = \{ A(a,0), B(a, 1)\}$ and $\q_3 = (\T \cup \R, \varkappa)$, we again have $\ans(\q_3,\A_3) = \{(a,0)\}$ because, in every model $\I$ of $(\T \cup \R, \A_3)$,  there exists $u \in \Delta^\I$ with $(a^\I, u) \in P^{\I(0)}$ and $(a^\I, u) \in Q^{\I(1)}$; see Fig.~\ref{fig:ex1}c.

Consider  $\varrho = P \sqcap \Rnext Q$ with $\T$ and $\R$ as above.
For $\A_4 = \{ A(a,0)\}$ and $\q_4 = (\T \cup \R, \varrho)$, we obviously have $\ans(\q_4,\A_4) = \{(a,b,0)\}$, while $\ans(\q_5,\A_4) = \emptyset$ for $\q_5 = (\T \cup \T', \varrho)$, where  $\T' = \{\, \exists P^- \sqsubseteq \Rnext \exists Q^-\,\}$.
\end{example}

By the \emph{\OMPIQ{} answering problem for} $\DL^{\op}_{\smash{\frag/\fragr}}$ we understand the decision problem for the set $\ans(\q,\A)$, where $\q$ is a $\DL^{\op}_{\smash{\frag/\fragr}}$ \OMPIQ{} and $\A$ an ABox.
In the context of OBDA, we are usually interested in the \emph{data complexity} of this problem when $\q$ is considered to be fixed or negligibly small compared to data $\A$, which is regarded as the only input to the problem~\cite{Vardi82}. In the atemporal case, the data complexity of answering  conjunctive queries mediated by ontologies from the \DL{} family is well understood~\cite{CDLLR07,ACKZ09}: it ranges from \ACz{}---which guarantees FO-rewritability (see below)---to \PTime{} and further to \coNP{} (see also Table~\ref{DL-sat})\footnote{The fine-grained combined complexity of  answering conjunctive queries mediated by \OWLQL{} ontologies was investigated by~\citeauthor{DBLP:journals/jacm/BienvenuKKPZ18}~\citeyear{DBLP:journals/jacm/BienvenuKKPZ18}.}\!.
The data complexity of answering \LTL{} OMQs is either \ACz{} or \NCo~\cite{AIJ21}; we remind the reader that
\begin{equation*}
\ACz ~\subsetneq~ \NCo ~\subseteq~ \LogSpace ~\subseteq~ \PTime ~\subseteq~ \coNP.
\end{equation*}
As our aim here is to identify families of OMPIQs answering which can be done in \ACz{} or \NCo, we now look at these two complexity classes in more detail.

Let $\A$ be an ABox with $\ind(\A) = \{a_0,\dots,a_m\}$. Without loss of generality, we always assume that $\max \A \ge m$ (if this is not so, we simply add a `dummy' $D(a_0,m)$ to $\A$).
We represent $\A$ as a first-order structure $\SA$ with domain $\tem(\A)$ ordered by $<$ such that
\begin{equation*}
\SA \models A(k,\ell) \quad \text{iff} \quad  A(a_k,\ell)\in \A \qquad \text{and} \qquad
\SA \models P(k,k',\ell) \quad \text{iff} \quad P(a_k,a_{k'},\ell)\in \A,
\end{equation*}
for any concept and role names $A$, $P$ and any $k,k',\ell\in\tem(\A)$. To simplify notation, we often identify an individual name $a_k \in \ind(\A)$ with its numerical representation $k \in \tem(\A)$.
As a technical tool in our constructions, we also use infinite first-order structures $\mathfrak{S}_\A^{\Z}$ with domain $\Z$ that are defined in the same way as $\SA$ but over the whole $\Z$.

The structure $\SA$ represents a temporal database over which we can evaluate first-order formulas (queries) with data atoms of the form $A(x,t)$ and $P(x,y,t)$ as well as atoms with the order $t_1 < t_2$  and congruence $t \equiv 0\pmod n$ predicates, for $n > 1$.
It is well-known~\cite{Immerman99} that the evaluation problem for $\FO$-formulas with arbitrary numerical predicates is in non-uniform $\smash{\ACz{}}$ for data complexity, the class of languages computable by bounded-depth polynomial-size circuits with unary \textsc{not}-gates and unbounded fan-in \textsc{and}- and \textsc{or}-gates. In this article, we require \emph{$\FO(<)$-formulas}, which only use data and order atoms, and \mbox{$\FOE$}-\emph{formulas}, which in addition may contain congruence atoms of the form $t \equiv 0 \pmod n$.
Evaluation of $\FOE$-formulas is known to be in \LogTime-uniform $\smash{\ACz{}}$ for data complexity~\cite{Immerman99}.
We also use $\FO(\RPR)$-\emph{formulas}, that is, $\FO$-formulas extended with \emph{relational primitive recursion} (\RPR), whose evaluation is in~\NCo{} for data complexity~\cite{DBLP:journals/iandc/ComptonL90}, the class computed by a family of polynomial-size logarithmic-depth circuits with gates of at most two inputs.
We remind the reader that, using \RPR, we can construct formulas $\Phi(\avec{z},\avec{z}_1,\dots,\avec{z}_n)$ such as
\begin{equation*}
\left[ \begin{array}{l}
Q_{1}(\avec{z}_1,t) \equiv \varTheta_1\big(\avec{z}_1,t,Q_1(\avec{z}_1,t-1),\dots,Q_n(\avec{z}_n,t-1)\big)\\
\dots\\
Q_{n}(\avec{z}_n,t) \equiv \varTheta_n\big(\avec{z}_n,t,Q_1(\avec{z}_1,t-1),\dots,Q_n(\avec{z}_n,t-1)\big)
\end{array}\right] \ \Psi(\avec{z},\avec{z}_1,\dots,\avec{z}_n),
\end{equation*}
where the part of $\Phi$ within $[\dots]$ defines recursively, via the $\FO(\RPR)$-formulas $\varTheta_i$, the interpretations of the predicates~$Q_i$ in the $\FO(\RPR)$-formula $\Psi$.
The recursion starts at $t=0$ assuming that $Q_{i}(\avec{z}_i,-1)$ is false for all $Q_{i}$ and $\avec{z}_i$, $1 \leq i\leq n$. Thus, the truth value of $Q_{i}(\avec{z}_i,0)$ is computed by substituting falsehood $\bot$ for all $Q_{i}(\avec{z}_i,-1)$.
We allow the relation variables $Q_i$ to occur in only one recursive definition $[\dots]$, so it makes sense to write $\SA\models Q_i(\avec{n}_i,k)$, for any tuple $\avec{n}_i$ in $\tem(\A)$ and $k \in \tem(\A)$, if the computed value is `true'.
Using thus defined truth-values, we compute inductively the truth-relation $\SA\models \Psi(\avec{n},\avec{n}_1,\dots,\avec{n}_n)$, and so $\SA\models \Phi(\avec{n},\avec{n}_1,\dots,\avec{n}_n)$, as usual in first-order logic. It is to be noted that $\FO(\RPR)$ offers a rather limited form of recursion that can only infer  a predicate at $t$ if certain predicates hold at $t-1$, which puts $\RPR$ near the bottom in the hierarchy of fixed-point operations, below the (deterministic) transitive closure, least fixed-point, etc.~\cite{Immerman99}.

\begin{definition}\label{rewriting}\em
Let $\lang$ be one of the three classes of FO-formulas introduced above. A constant-free $\lang$-formula $\rew(x,t)$ is said to be an $\lang$-\emph{rewriting of} an \OMPIQ{} $\q=(\TO, \varkappa)$ if $\ans(\q,\A) = \bigl\{\, (a,\ell) \in \ind(\A)\times \tem(\A) \mid \SA \models \rew(a,\ell)\,\bigr\}$, for any ABox~$\A$.
Similarly, a constant-free $\lang$-formula $\rew(x,y,t)$ is an $\lang$-\emph{rewriting of} an \OMPIQ{} $\q = (\TO,\varrho)$ if we have $\ans(\q,\A) = \bigl\{\,(a,b,\ell) \in \ind(\A)\times \ind(\A)\times\tem(\A)\mid \SA \models \rew(a,b,\ell)\,\bigr\}$, for any ABox~$\A$.
\end{definition}

It follows from the definition that answering $\FO(<)$- and $\FOE$-rewritable \OMPIQ{}s is in $\ACz$ for data complexity, and answering $\FO(\RPR)$-rewritable \OMPIQ{}s is in \NCo{}. We illustrate these three types of FO-rewriting by instructive examples.

\begin{example}\label{exampleBvNB}\em
Consider the $\DL\Xnext_{\core}$  OMPIQs $\q = (\T \cup \R, Q)$ and $\q' = (\T \cup \R, \varkappa)$, where $\T$, $\R$ and $\varkappa$ are as in Example~\ref{ex1}. It is readily seen that $\rew(x,y,t) = Q(x,y,t) \lor P(x,y, t-1)$ is an $\FO(<)$-rewriting of $\q$, where $P(x,y, t-1)$ abbreviates
\begin{equation*}
\exists t'\,\bigl((t' < t) \land \neg\exists s\, \bigl((t' < s) \land (s < t)\bigr)\land P(x,y, t') \bigr);
\end{equation*}
consult~\citeA[Remark~3]{AIJ21}.\ The ABoxes $\A_1$--$\A_3$ from Example~\ref{ex1} show three sets of atoms at least one of which  must be present in an ABox $\A$ for $(a,0)$ to be a certain answer to~$\q'$ over $\A$; see Fig.~\ref{fig:ex1}.  This observation implies that an $\FO(<)$-rewriting of $\q'$ can be defined as $\rew'(x,t) = \rew_1(x,t) \lor \rew_2(x,t) \lor \rew_3(x,t)$, where
\begin{align*}
\rew_1(x,t) ~=~& \exists y, z \,\bigl( P(x,y,t) \land Q(z,y,t+1) \land B(z, t+1) \bigr),\\
\rew_2(x,t) ~=~& \exists y \, \bigl( P(x,y,t) \land B(x, t+1) \bigr),\\
\rew_3(x,t) ~=~& A(x,t) \land B(x, t+1),
\end{align*}
and $B(x, t+1)$ and $B(z, t+1) $ are shortcuts similar to  $P(x,y, t-1)$ above.
\end{example}

\begin{example}\label{ex:4}\em
Consider the $\DL\Xbox_{\horn/\core}$ OMPIQ $\q = (\T \cup \R, A \sqcap \exists S)$ with
\begin{equation*}
\T = \{\, \exists R \sqsubseteq \exists S \,\}, \quad \R = \{\, S \sqsubseteq \Lbox T, \ S \sqsubseteq \Rbox \Rbox \Rbox P, \ \Lbox \Lbox \Lbox T \sqcap \Rbox P \sqsubseteq R \,\}.
\end{equation*}
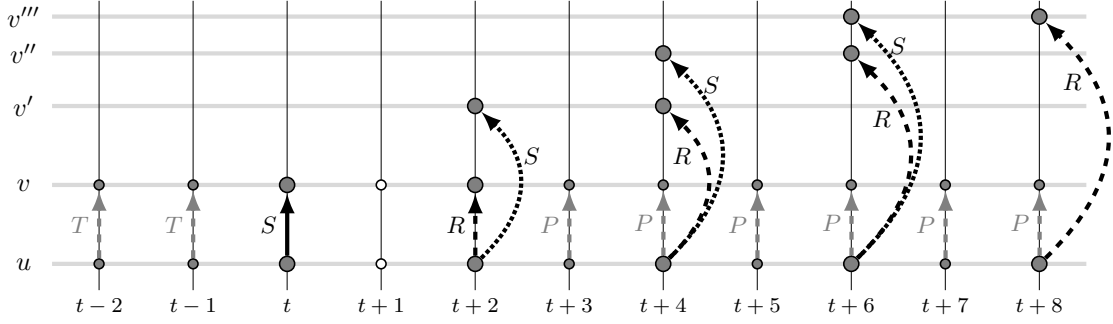
\begin{figure}[t]
\centerline{%
\begin{tikzpicture}[xscale=1.25, yscale=0.7]\footnotesize
\begin{scope}\small
\draw[object-timeline] (0.5, 0) -- ++(11,0); \node at (0.2, 0) {$u$};
\draw[object-timeline] (0.5, 1.5) -- ++(11,0); \node at (0.2, 1.5) {$v$};
\draw[object-timeline] (0.5, 3) -- ++(11,0); \node at (0.2, 3) {$v'$};
\draw[object-timeline] (0.5, 4) -- ++(11,0); \node at (0.2, 4) {$v''$};
\draw[object-timeline] (0.5, 4.7) -- ++(11,0); \node at (0.2, 4.7) {$v'''$};
\end{scope}
\begin{scope}\scriptsize
\draw[time-guideline] (1,-0.5) -- ++(0,5.5); \node at (1,-0.8) {$t-2$};
\draw[time-guideline] (2,-0.5) -- ++(0,5.5); \node at (2,-0.8) {$t-1$};
\draw[time-guideline] (3,-0.5) -- ++(0,5.5); \node at (3,-0.8) {$t$};
\draw[time-guideline] (4,-0.5) -- ++(0,5.5); \node at (4,-0.8) {$t+1$};
\draw[time-guideline] (5,-0.5) -- ++(0,5.5); \node at (5,-0.8) {$t+2$};
\draw[time-guideline] (6,-0.5) -- ++(0,5.5); \node at (6,-0.8) {$t+3$};
\draw[time-guideline] (7,-0.5) -- ++(0,5.5); \node at (7,-0.8) {$t+4$};
\draw[time-guideline] (8,-0.5) -- ++(0,5.5); \node at (8,-0.8) {$t+5$};
\draw[time-guideline] (9,-0.5) -- ++(0,5.5); \node at (9,-0.8) {$t+6$};
\draw[time-guideline] (10,-0.5) -- ++(0,5.5); \node at (10,-0.8) {$t+7$};
\draw[time-guideline] (11,-0.5) -- ++(0,5.5); \node at (11,-0.8) {$t+8$};
\end{scope}
\node[spoint] (a1) at (1,0) {};
\node[spoint] (b1) at (1,1.5) {};
\node[spoint] (a2) at (2,0) {};
\node[spoint] (b2) at (2,1.5) {};
\node[ppoint] (a3) at (3,0) {};
\node[ppoint] (b3) at (3,1.5) {};
\node[point] (a4) at (4,0) {};
\node[point] (b4) at (4,1.5) {};
\node[ppoint] (a5) at (5,0) {};
\node[ppoint] (b5) at (5,1.5) {};
\node[ppoint] (c5) at (5,3) {};
\node[spoint] (a6) at (6,0) {};
\node[spoint] (b6) at (6,1.5) {};
\node[ppoint] (a7) at (7,0) {};
\node[spoint] (b7) at (7,1.5) {};
\node[ppoint] (c7) at (7,3) {};
\node[ppoint] (d7) at (7,4) {};
\node[spoint] (a8) at (8,0) {};
\node[spoint] (b8) at (8,1.5) {};
\node[ppoint] (a9) at (9,0) {};
\node[spoint] (b9) at (9,1.5) {};
\node[ppoint] (d9) at (9,4) {};
\node[ppoint] (e9) at (9,4.7) {};
\node[spoint] (a10) at (10,0) {};
\node[spoint] (b10) at (10,1.5) {};
\node[ppoint] (a11) at (11,0) {};
\node[spoint] (b11) at (11,1.5) {};
\node[ppoint] (e11) at (11,4.7) {};
\begin{scope}[ultra thick]
\draw[->] (a3) to  node[midway,left] {$S$} (b3);
\draw[->,dashed,gray] (a2) to  node[midway,left] {$T$} (b2);
\draw[->,dashed,gray] (a1) to  node[midway,left] {$T$} (b1);
\draw[->,dashed,gray] (a6) to  node[midway,left] {$P$} (b6);
\draw[->,dashed,gray] (a7) to  node[midway,left] {$P$} (b7);
\draw[->,dashed,gray] (a8) to  node[midway,left] {$P$} (b8);
\draw[->,dashed,gray] (a9) to  node[midway,left] {$P$} (b9);
\draw[->,dashed,gray] (a10) to  node[midway,left] {$P$} (b10);
\draw[->,dashed,gray] (a11) to  node[midway,left] {$P$} (b11);
\draw[->,dashed] (a5) to  node[midway,left] {$R$} (b5);
\draw[->,densely dotted,bend right] (a5) to  node[pos=0.7,right] {$S$} (c5);
\draw[->,dashed,bend right] (a7) to  node[pos=0.7,left] {$R$} (c7);
\draw[->,densely dotted,bend right] (a7) to  node[pos=0.87,right] {$S$} (d7);
\draw[->,dashed,bend right] (a9) to  node[pos=0.7,left] {$R$} (d9);
\draw[->,densely dotted,bend right] (a9) to  node[pos=0.9,right] {$S$} (e9);
\draw[->,dashed,bend right] (a11) to  node[pos=0.75,left] {$R$} (e11);
\end{scope}
\end{tikzpicture}
}%
\caption{Structure of models in Example~\ref{ex:4} (roles $T$ and $P$ are shown only for $(u,v)$).}\label{fig:ex4}
\end{figure}%
We first observe that $\R \models S \sqsubseteq \Rnext \Rnext R$, and so $\T \cup \R \models \exists S \sqsubseteq \Rnext \Rnext \exists S$, although we do not have $\T \cup \R \models S \sqsubseteq \Rnext \Rnext S$; see Fig.~\ref{fig:ex4}. Now, for every ABox $\A_n = \{\,S(a,b,0), A(a,n) \,\}$, $n \geq 0$, we have $\ans(\q,\A_n) = \{(a,n)\}$ iff $n$ is even; otherwise, $\ans(\q,\A_n) = \emptyset$.
This suggests the following $\FOE$-rewriting of $\q$:
\begin{equation*}
\rew(x,t) ~=~ A(x,t) \land \exists y,s\,\bigl(\bigl[R(x,y,s) \lor S(x,y,s)\bigr] \land (t - s \in 0 + 2\N)\bigr),
\end{equation*}
where $t - s \in 0 + 2\N$ is an $\FOE$-formula $\varphi(t,s)$ such that $\SA \models \varphi(n,m)$ iff $n - m \in 0 + 2\N$ and $p + q\N$ is the set $\{\, p + qk \mid k \geq 0\,\}$, for $p, q \geq 0$~\cite[Remark~3]{AIJ21}.
Note, however, that $\q$ is \emph{not} $\FO(<)$-rewritable since properties such as `the size of the domain is even' are not definable by $\FO(<)$-formulas, which can be established using a standard Ehrenfeucht-Fra\"iss\'e argument~\cite{Straubing94,Libkin}. On the other hand, all $\LTL\Xbox_{\horn}$ OMPIQs are $\FO(<)$-rewritable~\cite{AIJ21}.
\end{example}

\begin{example}\label{exampleBvNB:2}\em
As shown by~\citeA[Example 5]{AIJ21}, the \OMAQ{} $\q = (\TO,B_0)$ with
\begin{equation*}
\TO ~=~ \{\, \Lnext B_k \sqcap A_0 \sqsubseteq B_k, \ \Lnext B_{1-k} \sqcap A_1 \sqsubseteq B_k \mid k = 0, 1 \,\}
\end{equation*}
encodes \parity{} in the following sense. For any binary word  $\avec{e} = (e_1,\dots,e_{n}) \in \{0,1\}^n$, let
$\A_{\avec{e}}  =   \{\,B_0(a, 0)\,\} \cup  \{\, A_{e_i}(a, i) \mid 0 < i \leq n\,\}$.
Then $(a,n)$ is a certain answer to $\q$  over $\A_{\avec{e}}$ iff the number of 1s in~$\avec{e}$ is even. As \parity{} is not in $\ACz$~\cite{DBLP:journals/mst/FurstSS84}, $\q$ is not rewritable to any FO-formula with numerical predicates. An $\FO(\RPR)$-rewriting of $\q$ inspired by~\citeA[Example 5]{AIJ21} is shown below:
\begin{equation*}
\rew(x,t) ~=~  \left[ \begin{array}{l}
Q_{0}(x,t) \equiv \varTheta_0(x,t, Q_{0}(x,t-1), Q_{1}(x,t-1))\\
Q_{1}(x,t) \equiv \varTheta_1(x,t, Q_{0}(x,t-1), Q_{1}(x,t-1))
\end{array}\right] \ Q_0(x,t),
\end{equation*}
where, for $k = 0,1$, the formula $\varTheta_k(x,t, Q_{0}(x,t-1), Q_{1}(x,t-1))$ is
\begin{equation*}
 B_k(x,t) \lor \bigl(Q_k(x,t-1)\land A_0(x,t)\bigr) \lor \bigl(Q_{1-k}(x,t-1)\land A_1(x,t)\bigr).
\end{equation*}
\end{example}

We obtain our FO-rewritability results for temporal \DL{} \OMPIQ{}s in three steps. First, we reduce FO-rewritability of \OMAQ{}s to FO-rewritability of $\bot$-free  \OMPIQ{}s of some restricted form (whose ontology is consistent with all ABoxes). Then, we further reduce, where possible, FO-rewritability of those $\bot$-free restricted \OMPIQ{}s to FO-rewritability of \LTL{} \OMAQ{}s, which has been thoroughly investigated by~\citeauthor{AIJ21}~\citeyear{AIJ21}. This `projection' of a 2D formalism onto one of its `axes' relies on the fact that atemporal \DL{} ontologies with Horn or Krom RIs can be encoded in the one-variable fragment of first-order logic~\cite{ACKZ09}. The third step reduces FO-rewritability of some $\DL_{\horn}\Xallop{}$ \OMPIQ{}s to FO-rewritability of \OMAQ{}s using model-theoretic considerations.

In the remainder of Section~\ref{sec:tomq}, we first briefly remind the reader of \LTL{} \OMAQ{}s and \OMPIQ{}s~\cite{AIJ21} and make two basic observations on the reducibility of answering $\DL^{\op}_{\smash{\frag/\fragr}}$ \OMAQ{}s that do not contain interacting concepts and roles to answering $\LTL^{\op}_{\smash{\frag}}$ and $\LTL^{\op}_{\smash{\fragr}}$ \OMAQ{}s.


\subsection{\LTL{} OMAQs and OMPIQs}\label{sec:ltl-can-period}

$\LTL^{\op}_{\frag}$ \emph{\OMAQ{}s} and \emph{\OMPIQ{}s} can be defined as $\DL^{\op}_{\frag}$ \OMAQ{}s and \OMPIQ{}s of the form $\q = (\TO, A)$ and, respectively, $\q = (\TO, \varkappa)$ that contain no occurrences of roles. In this case, $\TO$ is referred to as an $\LTL^{\op}_{\frag}$ \emph{ontology}. We assume (without loss of generality) that an $\LTL$ \emph{ABox} $\A$ has \emph{one fixed} individual name, $a$, and consists of assertions of the form $C(a,\ell)$ with a role-free temporalised concept $C$. In an $\LTL$ interpretation $\I$, we have $\Delta^\I = \{ a \}$ and $P_i^{\I(n)} = \emptyset$ for all role names $P_i$ and $n \in \Z$.
To simplify notation, we write $\I\models C(n)$ instead of $a^\I \in C^{\I(n)}$ and  $C(\ell) \in \A$ instead of $C(a,\ell) \in \A$.

The key property of $\LTL_\horn\Xallop$ we need is that every \emph{consistent} $\LTL_\horn\Xallop$ KB $(\TO, \A)$ has a \emph{canonical model} $\C_{\TO, \A}$, which can be defined as the intersection of all of its models:
\begin{equation}\label{intersection}
\C_{\TO, \A} \models A_i(n) \quad \text{ iff }\quad \I\models A_i(n), \text{ for all models } \I \text{ with } \I\models (\TO,\A),
\end{equation}
for every concept name $A_i$~\cite{AKRZ:LPAR13}. (An alternative, syntactic, definition of the canonical models is given in Section~\ref{sec:canonical-model}.)

Denote by $\subo$ the set of all subconcepts occurring in $\TO$ and their negations. By~$\type_{\TO,\A}(n)$ we denote the set of all $C\in\subo$ with $\can \models C(n)$, where we write $\can\models \neg C'(n)$ iff $\can\not\models C'(n)$.  It is known that every satisfiable $\LTL$-formula is satisfied in an \emph{ultimately periodic model}~\cite{DBLP:journals/jacm/SistlaC85}. Since $\C_{\TO,\A}$ is the minimal model in the sense of~\eqref{intersection},  it is also ultimately periodic. This is formalised in the next lemma, which follows immediately from Lemmas~19 and~21 by~\citeA{AIJ21}. The periodic structure will be used for defining our FO-rewritings in the sequel.
\begin{lemma}\label{period:A}
$(i)$ For any consistent $\LTL_{\smash{\horn}}\Xallop$ KB $(\TO, \A)$ with $\{ C \mid C(\ell) \in \A \} \subseteq \subo$, there are positive integers $s_{\TO,\A} \le 2^{|\TO|}$ and $p_{\TO,\A} \le 2^{2|\TO|}$ such that
\begin{align}\label{eq:period:A}
\begin{array}{rl}
&\type_{\TO,\A}(n) = \type_{\TO,\A}(n - p_{\TO,\A}), \text{ for } n \le \min \A -s_{\TO,\A}, \\[3pt]
&\type_{\TO,\A}(n) = \type_{\TO,\A}(n + p_{\TO,\A}), \text{ for }n \ge \max \A + s_{\TO,\A}.
\end{array}
\end{align}
If $\TO$ is an $\LTL_\horn\Xbox$ ontology, one can take $s_{\TO,\A} \le |\TO|$ and  $p_{\TO,\A} =1$.

$(ii)$ For any $\LTL_{\smash{\horn}}\Xallop$ ontology $\TO$, there are positive integers $s_{\TO} \le 2^{|\TO|}$ and $p_{\TO} \le 2^{2|\TO|\cdot 2^{|\TO|}}$ such that, for any ABox  $\A$ consistent with $\TO$ and with $\{ C \mid C(\ell) \in \A \} \subseteq \subo$, we have
\begin{align}\label{eq:period}
\begin{array}{rl}
&\type_{\TO,\A}(n) = \type_{\TO,\A}(n - p_{\TO}),  \text{ for } n \le \min \A -s_{\TO}, \\[3pt]
&\type_{\TO,\A}(n) = \type_{\TO,\A}(n + p_{\TO}),  \text{ for }n \ge \max \A + s_{\TO}.
\end{array}
\end{align}
If $\TO$ is an $\LTL_\horn\Xbox$ ontology, one can take $s_{\TO} \le |\TO|$ and  $p_{\TO} =1$.
\end{lemma}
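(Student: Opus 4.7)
The plan is to build on Lemmas 19 and 21 of \citeA{AIJ21}, which establish essentially the same statements for propositional Horn $\LTL$; the present lemma is the role-free specialisation in our notation. I would nonetheless reconstruct the key ideas.

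First, I count types: since $\subo$ contains at most $2|\TO|$ (sub)concepts and their negations, the number of possible values of $\type_{\TO,\A}(n)$ is bounded by $2^{|\subo|}\le 2^{2|\TO|}$. For the $\Xbox$-only fragment $\LTL_\horn\Xbox$, a basic concept $C$ can enter $\type_{\TO,\A}(n)$ only through a chain of applications of CIs originating either in an ABox atom or in a $\Rbox$/$\Lbox$ premise witnessed by the tail behaviour; since no $\Rnext$ or $\Lnext$ is available to create step-by-step periodicity, once $n$ is at distance at least $|\TO|$ from $\tem(\A)$ the type stabilises, giving $s_{\TO,\A}\le|\TO|$ and $p_{\TO,\A}=1$.

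For part $(i)$ in the full $\Xallop$ case, the plan is a pumping argument applied to the right of $\max\A$ (the past direction is symmetric). Beyond the active domain, the rightward evolution of $\type_{\TO,\A}$ is governed by a finite-state process: in the minimum canonical model $\can$, the value $\type_{\TO,\A}(n+1)$ depends only on $\type_{\TO,\A}(n)$ together with a bounded look-ahead summary sufficient to evaluate the $\Rbox$-concepts (which, by Horn minimality, are themselves witnessed by propagation of finitely many basic atoms). Pigeonhole over the at most $2^{|\subo|}$ candidate types then yields two positions $n_1<n_2$, with $n_2-n_1\le 2^{2|\TO|}$, sharing the same type and look-ahead, at which point the suffix must be periodic with period $n_2-n_1$. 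This produces $s_{\TO,\A}\le 2^{|\TO|}$ and $p_{\TO,\A}\le 2^{2|\TO|}$.

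For part $(ii)$, I would argue uniformity in $\A$ as follows. The closure $\subo$ is independent of $\A$, and the eventual tail of $\can$ on the right of $\max\A$ is determined by the `boundary type' $\type_{\TO,\A}(\max\A+s_{\TO,\A})$, which is one of at most $2^{|\subo|}\le 2^{2|\TO|}$ subsets of $\subo$; by part $(i)$ each such tail is periodic with period at most $2^{2|\TO|}$. Taking $s_\TO$ to be the maximum of the $s_{\TO,\A}$ and $p_\TO$ to be the least common multiple of all tail periods over the (at most $2^{|\TO|}$) possible boundary types yields $s_\TO\le 2^{|\TO|}$ and $p_\TO\le (2^{2|\TO|})^{2^{|\TO|}}=2^{2|\TO|\cdot 2^{|\TO|}}$, as required; the $\Xbox$-only case again collapses to $s_\TO\le|\TO|$ and $p_\TO=1$.

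The main obstacle is justifying rigorously that $\type_{\TO,\A}(n+1)$ is a function of $\type_{\TO,\A}(n)$ plus bounded extra information for $n$ outside $\tem(\A)$: the $\Rbox$-concepts look infinitely into the future, so one has to argue that in the \emph{minimum} Horn model they are already witnessed by a finite prefix of basic atoms. I would handle this using the explicit syntactic fixed-point construction of $\can$ developed in Section~\ref{sec:canonical-model} (together with the characterisation~\eqref{intersection} of $\can$ as the intersection of all models) as a black box, so that the periodicity argument reduces to a pumping lemma on a finite-state transducer over the alphabet $2^{\subo}$.
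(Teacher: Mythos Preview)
Your proposal is correct and matches the paper's approach exactly: the paper does not prove this lemma but states that it ``follows immediately from~\cite[Lemmas~19 and~21]{AIJ21}'', which is precisely your opening citation. Your additional reconstruction of the pigeonhole/pumping argument and the LCM-of-periods step for part~$(ii)$ is a faithful sketch of what those cited lemmas do, and you correctly flag the one nontrivial point (that the transition function beyond $\tem(\A)$ is well-defined in the minimal Horn model despite the infinitary $\Rbox$-premises).
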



\subsection{Projecting Temporal $\DL$ OMAQs to \LTL: Initial Observations}\label{sec:proj}

Consider a $\DL^{\op}_{\smash{\frag/\fragr}}$ \OMAQ{} $\q = (\T, B_0)$ with a basic concept $B_0$ such that the ontology does not have role inclusions. 
Define an $\LTL^{\op}_{\frag}$ OMAQ $\q^{\smash{\dagger}} = (\T,B_0)^{\smash{\dagger}}$ as follows. For basic concepts $A$ and $\exists S$, set
\begin{equation*}
A^\dagger = A  \qquad \text{ and }\qquad (\exists S)^\dagger = E_S,
\end{equation*}
where $E_S$ is a fresh concept name, the \emph{surrogate} of $\exists S$. The TBox $\T^\dagger$ of $(\T,B_0)^{\smash{\dagger}}$ is obtained from~$\T$ by replacing every basic concept~$B$ with~$B^\dagger$;
the \LTL-translation $C^\dag$ of a temporalised concept $C$ is defined analogously. Given an ABox $\A$ and $a\in\ind(\A)$, denote by~$\A_a^\dagger$ an \LTL{} ABox that consists of all ground atoms $A(\ell)$ for $A(a,\ell)\in \A$ and $E_S(\ell)$
for $S(a,b,\ell)\in\A$ (remembering that $S^-(b,a,\ell)\in\A$ iff $S(a,b,\ell)\in\A$). 

Suppose now that the TBox~$\T$ does not contain $\bot$ in the sense that all inclusions have at least one temporalised concept on the right-hand side; such $\T$ is called \emph{$\bot$-free}.
We claim that
\begin{equation}\label{observ1}
\ans^{\Z}(\q,\A) ~=~ \bigl\{\, (a,n) \mid a\in\ind(\A) \text{ and } n \in \ans^{\Z}(\q^\dagger,\A^\dagger_{\smash{a}}) \,\bigr\}.
\end{equation}
Indeed, it suffices to show that, for any $a \in \ind(\A)$ and $n \in \Z$, there is a model $\I$ of $(\T,\A)$ with $a \notin B_0^{\I(n)}$ iff there is a model $\I_a$ of $(\T^\dagger,\A_a^\dagger)$ with $\I_a\not\models B_0^\dag(n)$. Direction $(\Rightarrow)$ is easy: every model $\I$ of $(\T,\A)$ gives rise to the \LTL-interpretations~$\I_b$, for $b\in\ind(\A)$, defined by taking $\I_b\models A(n)$ iff $b \in A^{\I(n)}$, and $\I_b \models  E_S(n)$ iff $b \in (\exists S)^{\I(n)}$. It is readily seen by induction that, for any temporalised concept $C$ and any $n \in \Z$, we have $b \in C^{\I(n)}$ iff $\I_b \models C^\dag(n)$, and so the $\I_b$ are models of $(\T^\dagger,\A_b^\dagger)$.

For $(\Leftarrow)$, a bit more work is needed. For each $b\in\ind(\A)$ different from the given $a$, we take any model $\I_b$ of $(\T^\dagger,\A_b^\dagger)$. Further, for any role $S$ and any $n \in \Z$, we take a model $\I_{S,n}$ of $(\T, \{\exists S^-(w_S,n)\})$, for some fresh individual name $w_S$. These models exist because $\T$ is $\bot$-free. Finally, define an interpretation $\I$ by taking, for all $b,c \in \ind(\A)$ and $n \in \Z$: $b \in A^{\I(n)}$ iff $\I_b\models A(n)$, $(b,c) \in S^{\I(n)}$ iff $S(b,c,n) \in \A$, and $(b,w_S) \in S^{\I(n)}$ iff $\I_b\models E_S(n)$. Note that $b \in (\exists S)^{\I(n)}$ iff $\I_n\models  E_S(n)$. Using this fact, it is not hard to show by induction that, for any temporalised concept $C$ and any $n \in \Z$, we have $b \in C^{\I(n)}$ iff $\I_b\models C^\dag(n)$. It follows that $\I$ is the model of $(\T,\A)$ as required.

Equality~\eqref{observ1} gives us the following:
\begin{proposition}\label{prop:conceptOMAQs}
Let $\lang$ be $\FO(<)$, $\FOE$, or $\FO(\RPR)$. A $\bot$-free $\DL^{\op}_{\smash{\frag/\fragr}}$ \OMAQ{} $(\T, B)$ with a basic concept $B$ is $\lang$-rewritable if the $\LTL^{\op}_{\frag}$ \OMAQ{} $(\T,B)^\dagger$ is $\lang$-rewritable.
\end{proposition}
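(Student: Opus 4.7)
The plan is to construct the desired $\lang$-rewriting $\rq(x,t)$ of $\q=(\T,B)$ from the given $\lang$-rewriting $\rq^\dagger(t)$ of $\q^\dagger=(\T^\dagger,B^\dagger)$ by a uniform atom-by-atom substitution that mirrors the ${}^\dagger$-translation. First, I would replace every concept atom $A(t')$ occurring in $\rq^\dagger(t)$ by $A(x,t')$, and every surrogate atom $E_S(t')$ by $\exists y\, S(x,y,t')$ with a fresh variable $y$---understood as $\exists y\, P(x,y,t')$ when $S=P$ and as $\exists y\, P(y,x,t')$ when $S=P^-$, since $\SA$ only contains predicates for role names. Order atoms, congruence atoms and, in the $\FO(\RPR)$ case, the recursion construct $[\dots]\Psi$ are left untouched. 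The resulting $\rq(x,t)$ is a constant-free formula of $\lang$.

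For correctness, I would show that
\[
\SA \models \rq(a,\ell) \quad \iff \quad (a,\ell) \in \ans(\q,\A)
\]
for every ABox $\A$, every $a \in \ind(\A)$ and every $\ell \in \tem(\A)$. The right-hand side is controlled by the already-established equation~\eqref{observ1}, which reduces it to $\ell \in \ans^\Z(\q^\dagger,\A^\dagger_a)$. For the left-hand side, the choice of substitution ensures, by a routine induction on the structure of $\rq^\dagger$, that the value of every subformula of $\rq(x,t)$ at $x=a$ over $\SA$ agrees with the value of the corresponding subformula of $\rq^\dagger(t)$ over the 1D structure induced by $\A^\dagger_a$: concept and surrogate atoms match by the very construction of $\A^\dagger_a$, while Boolean connectives, order and congruence atoms, and recursive constructs are preserved verbatim.

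The only genuinely subtle point---and the main obstacle---is the possible mismatch between $\tem(\A)$ and $\tem(\A^\dagger_a)$: when the individual $a$ occurs in $\A$ only at a proper subset of the timestamps, the quantifiers of $\rq(x,t)$ evaluated over $\SA$ range over a strictly larger temporal domain than those of $\rq^\dagger(t)$ over $\mathfrak{S}_{\A^\dagger_a}$. I would handle this by lifting the correspondence to the infinite structures $\mathfrak{S}^\Z_\A$ and $\mathfrak{S}^\Z_{\A^\dagger_a}$, which both share the temporal domain $\Z$, applying~\eqref{observ1} at the level of $\ans^\Z$, and then restricting back to $\tem(\A)$---legitimate because Definition~\ref{rewriting} only evaluates $\rq(x,t)$ at $\ell \in \tem(\A)$. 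This requires the mild additional property that $\rq^\dagger(t)$ correctly computes $\ans^\Z(\q^\dagger,\A')$ on $\mathfrak{S}^\Z_{\A'}$ for every $\LTL$ ABox $\A'$, which is true of the $\lang$-rewritings produced by the constructions of~\citeA{AIJ21} and may therefore be assumed without loss of generality.
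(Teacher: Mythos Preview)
Your overall approach---the atom-by-atom substitution $A(t')\mapsto A(x,t')$, $E_P(t')\mapsto\exists y\,P(x,y,t')$, $E_{P^-}(t')\mapsto\exists y\,P(y,x,t')$---is exactly what the paper does, and the paper's proof is even terser than yours (it simply states the substitution and stops).

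There is, however, a genuine gap in your handling of the $\FO(\RPR)$ case. You say the recursion construct $[\ldots]\Psi$ is left untouched, but this cannot work: once you carry out the atom substitution inside the defining formulas $\varTheta_i$, each $\varTheta_i$ acquires the individual variable $x$ as a free variable that is not among the declared arguments $\avec{z}_i,t$ of the relation $Q_i$ being defined. The resulting expression is no longer a well-formed $\FO(\RPR)$-formula, and even read charitably it would define a single relation globally rather than one relation per individual. The paper's fix is precisely to thread $x$ through the recursion: replace every occurrence of a relation-variable atom $Q_i(t_1,\dots,t_k)$---in the headers, in the bodies $\varTheta_i$, and in $\Psi$---by $Q_i(x,t_1,\dots,t_k)$. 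With that emendation your argument goes through.

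On the $\tem(\A)$ versus $\tem(\A^\dagger_a)$ mismatch: you are right that there is something to check, and the paper does not spell it out. Your detour through $\mathfrak{S}^\Z$ works but imports an extra assumption about $\rq^\dagger$. A simpler resolution, consistent with the paper's standing conventions, is to pad $\A^\dagger_a$ with dummy atoms so that $\tem(\A^\dagger_a)=\tem(\A)$; the rewriting $\rq^\dagger$ is then evaluated over a structure with the same temporal domain as $\SA$, and the inductive correspondence between subformulas holds directly over the finite structures without passing to $\Z$.
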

\begin{proof}
We obtain an $\lang$-rewriting of $(\T,B)$ from an $\lang$-rewriting $\rew^\dagger(t)$ of $(\T,B)^\dagger$ by replacing every atom of the form~$A(s)$ with $A(x,s)$, every $E_P(s)$ with $\exists y\, P(x,y,s)$ and every $E_{P^-}(s)$ with $\exists y\, P(y,x,s)$. In the case of $\FO(\RPR)$, we additionally replace every $Q(t_1,\dots, t_k)$, for a relation variable $Q$, with $Q(x,t_1,\dots, t_k)$.
\end{proof} 

Consider now a $\smash{\DL^{\op}_{\frag/\fragr}}$ \OMPIQ{} $\q = (\TO, \varrho)$ with $\TO = \T\cup \R$ and a positive temporal role~$\varrho$. 
For every role name~$P$, we introduce two concepts names~$A_P$ and $A_{P^-}$. Let $\R^\ddag$ and~$\varrho^\ddag$ be the results of replacing each role $S$ in them  with $A_S$, and let $\q^\ddag =  (\TO,\varrho)^\ddag = (\R^\ddagger,\varrho^\ddag)$. 
Given an ABox~$\A$ and $a,b\in \ind(\A)$, denote by $\A^\ddagger_{\smash{a,b}}$ the \LTL{} ABox with  the atoms $A_S(\ell)$ such that $S(a,b,\ell) \in \A$; by the assumption made in Section~\ref{sec:tdl}, $\A^\ddagger_{\smash{b,a}}$ will contain $A_{S^-}(\ell)$. 

First, observe that ontologies without TBoxes have the following important property: 
\begin{proposition}\label{prop:RBox:consistency}
A $\DL_{\smash{\frag/\fragr}}^{\op}$ KB $(\R, \A)$ is consistent iff the $\LTL_{\fragr}^{\op}$ 
KB $(\R^\ddagger,\A_{a,b}^\ddagger)$ is consistent, for every $a,b\in\ind(\A)$. 
\end{proposition}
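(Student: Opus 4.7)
The plan is to prove both directions by translating between a single 2D model of $(\R,\A)$ and a family of LTL slice models of $(\R^\ddag,\A_{a,b}^\ddag)$, using an inductive semantic bridge for temporalised roles. The forward direction (KB consistent $\Rightarrow$ every slice consistent) is a simple projection, while the harder backward direction has to assemble independently chosen slice models into a coherent 2D model, and the obstacle there is inverse-role coherence between the $(a,b)$- and $(b,a)$-slices.

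For $(\Rightarrow)$, assume $\I\models(\R,\A)$ and, for each $a,b\in\ind(\A)$, define the LTL interpretation $\I_{a,b}$ (on the single LTL individual) by $\I_{a,b}\models A_S(n)$ iff $(a^\I,b^\I)\in S^{\I(n)}$, for every role $S$ (including inverses) and every $n\in\Z$. By a straightforward structural induction on temporalised roles $R$, using the semantic clauses~\eqref{eq:semantics:box} and~\eqref{eq:semantics:next}, one shows $\I_{a,b}\models R^\ddag(n)$ iff $(a^\I,b^\I)\in R^{\I(n)}$. This immediately yields that $\I_{a,b}$ satisfies every inclusion in $\R^\ddag$ and every atom of $\A_{a,b}^\ddag$.

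For $(\Leftarrow)$, assume each $(\R^\ddag,\A_{a,b}^\ddag)$ is consistent. I would choose, for each unordered pair $\{a,b\}$ with $a\ne b$, one representative slice model $\I_{a,b}$, and then \emph{define} $\I_{b,a}$ as the swap of $\I_{a,b}$, obtained by exchanging $A_P$ with $A_{P^-}$ at every time point for every role name $P$. Since $\R^\ddag$ is invariant under the swap (this is exactly what the paper's closure of $\R$ under inverses of roles in RIs gives us), $\I_{b,a}$ satisfies $\R^\ddag$; and $\A_{b,a}^\ddag$ is, by the identity $P^-(a,b,\ell)\in\A\Leftrightarrow P(b,a,\ell)\in\A$, precisely the swap of $\A_{a,b}^\ddag$, so $\I_{b,a}$ satisfies it as well. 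For $a=b$ take any model $\I_{a,a}$; no swap fix-up is needed here because $(P^-)^{\I(n)}$ is by definition the converse of $P^{\I(n)}$, which collapses on the diagonal. Now build $\I$ with $\Delta^\I=\ind(\A)$, $a^\I=a$, and $(a,b)\in P^{\I(n)}$ iff $\I_{a,b}\models A_P(n)$; concept names can be interpreted freely (there is no TBox, so concept atoms of $\A$ are trivial).

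The key verification is the converse inductive lemma: for every temporalised role $R$ and every $a,b\in\ind(\A)$, $(a,b)\in R^{\I(n)}$ iff $\I_{a,b}\models R^\ddag(n)$. The base case for $P^-$ is exactly where swap coherence is used:
\begin{equation*}
(a,b)\in (P^-)^{\I(n)} \ \Leftrightarrow\ (b,a)\in P^{\I(n)} \ \Leftrightarrow\ \I_{b,a}\models A_P(n) \ \Leftrightarrow\ \I_{a,b}\models A_{P^-}(n),
\end{equation*}
and the inductive steps for $\Rbox,\Lbox,\Rnext,\Lnext$ are routine unfoldings of~\eqref{eq:semantics:box}--\eqref{eq:semantics:next}. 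With this lemma, each RI $\vartheta_1\sqcap\dots\sqcap\vartheta_k\sqsubseteq\vartheta_{k+1}\sqcup\dots\sqcup\vartheta_{k+m}$ in $\R$ is satisfied at the pair $(a,b)$ in $\I$ precisely because its $\R^\ddag$-counterpart is satisfied at time $n$ by $\I_{a,b}$, and each ABox atom $S(a,b,\ell)\in\A$ holds because $A_S(\ell)\in\A_{a,b}^\ddag$ is satisfied by $\I_{a,b}$. The main obstacle throughout is the glueing step: without closure of $\R$ under inversion the swap would not preserve $\R^\ddag$-models, and independently chosen $\I_{a,b},\I_{b,a}$ could violate the forced identity $(P^-)^{\I(n)}=(P^{\I(n)})^{-1}$; the swap construction is precisely what removes this degree of freedom and lets the slice models be combined soundly.
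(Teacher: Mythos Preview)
Your $(\Rightarrow)$ direction is fine, and for $a\ne b$ the swap construction in $(\Leftarrow)$ is exactly the right idea; the paper itself offers no detailed argument here, merely presenting the proposition as an observation about the $\ddag$ translation.

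The gap is in the diagonal case. You assert that for $a=b$ ``no swap fix-up is needed'' because $(P^-)^{\I(n)}$ collapses to $P^{\I(n)}$ on the diagonal. That collapse is precisely the problem rather than its solution: the 2D semantics \emph{forces} $(a,a)\in(P^-)^{\I(n)}\Leftrightarrow(a,a)\in P^{\I(n)}$, whereas an arbitrarily chosen LTL model $\I_{a,a}$ of $(\R^\ddag,\A_{a,a}^\ddag)$ need not satisfy $\I_{a,a}\models A_{P^-}(n)\Leftrightarrow\I_{a,a}\models A_P(n)$. Your bridge lemma then fails at the base case $R=P^-$ for $a=b$: the chain $(a,a)\in(P^-)^{\I(n)}\Leftrightarrow\I_{a,a}\models A_P(n)\Leftrightarrow\I_{a,a}\models A_{P^-}(n)$ breaks at the last step. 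Consequently you cannot verify the RIs of $\R$ at diagonal pairs.

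This is not a cosmetic omission. Take $\R=\{\top\sqsubseteq P\sqcup P^-,\ P\sqcap P^-\sqsubseteq\bot\}$ (Krom, closed under inverses) and $\A=\{A(a,0)\}$. Any model $\I$ of $\R$ must have, for every $d\in\Delta^\I$ and $n\in\Z$, $(d,d)\in P^{\I(n)}$ or $(d,d)\in(P^-)^{\I(n)}$; since these coincide on the diagonal, both hold, contradicting the disjointness RI. Hence $(\R,\A)$ is inconsistent. Yet $(\R^\ddag,\A_{a,a}^\ddag)=(\R^\ddag,\emptyset)$ is consistent (set $A_P$ true everywhere, $A_{P^-}$ false). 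So for Krom (and Bool) RIs the biconditional actually fails, and your construction cannot be repaired there. For Horn RIs the proposition does hold: you can complete your argument by taking $\I_{a,a}$ to be the \emph{canonical} model of $(\R^\ddag,\A_{a,a}^\ddag)$, which is swap-invariant because both $\R^\ddag$ (by inverse-closure) and $\A_{a,a}^\ddag$ (since $P(a,a,\ell)\in\A\Leftrightarrow P^-(a,a,\ell)\in\A$) are.
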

\begin{proof}
Every interpretation~$\I$ with domain $\ind(\A)$ and empty concept names is the `sum' of \LTL{} interpretations~$\I_{a,b}$ for the concepts $S^\ddagger$, with $a,b\in\ind(\A)$: if the $\I_{a,b}$ and $\I_{b,a}$ agree on role inverses, then we take $P^{\I(n)} = \{ (a,b) \mid \I_{a,b} \models P^\ddagger(n) \}$, for each role name $P$. It remains to observe that every consistent $(\R, \A)$ has a model with domain $\ind(\A)$ and the empty interpretation of concept names.
\end{proof}

Suppose now that $\q = (\TO, \varrho)$  is $\bot$-free.  We claim that
\begin{equation}\label{observ2}
\ans^{\Z}(\q,\A) = \bigl\{\, (a,b, n) \mid a,b\in\ind(\A) \text{ and } n \in \ans^{\Z}(\q^\ddagger,\A^\ddagger_{\smash{a,b}}) \,\bigr\}.
\end{equation}
Indeed, since  $\R$ is $\bot$-free and so trivially consistent, models of $(\R,\A)$ are `sums' of models of $(\R^\ddagger,\A^\ddagger_{\smash{a,b}})$; see Proposition~\ref{prop:RBox:consistency}. As $\T$ is $\bot$-free, the TBox does not affect $\ans^{\Z}(\q,\A)$. Indeed, every model of $(\TO, \A)$  is a model of $(\R, \A)$, and conversely, every model $\I$ of $(\R,\A)$ gives rise to a model $\J$ of $(\TO, \A)$ that has the same interpretation of roles as $\I$ on the ABox individuals. More precisely, the model $\J$ is obtained by extending the domain of $\I$ with a special object $w$ and interpreting all concept names by the whole domain; the interpretation of roles on the ABox is inherited from $\I$; in addition, the $(a^\J, w)$, for individual names $a$, and $(w,w)$ belong to all roles at all time instants. That $\J$ is a `saturated' model of  $(\TO, \A)$ follows from the assumption that $\TO$ is $\bot$-free.
Thus, we obtain the following transfer result:
\begin{proposition}\label{prop:roleOMIQs}
Let $\lang$ be $\FO(<)$, $\FOE$ or $\FO(\RPR)$.
A $\bot$-free $\DL^{\op}_{\smash{\frag/\fragr}}$ \OMPIQ{} $(\TO, \varrho)$ with a positive temporal role $\varrho$ is $\lang$-rewritable if the $\LTL^{\op}_{\fragr}$ \OMPIQ{} $(\TO, \varrho)^\ddagger$ is $\lang$-rewritable.
\end{proposition}
\begin{proof}
Suppose $\rew^\ddagger(t)$ is an $\lang$-rewriting of $(\TO,\varrho)^\ddagger$. We can assume that $\rew^\ddagger(t)$ is constructed from atoms of the form~$A_{P}(s)$ and $A_{P-}(s)$, with $P$ occurring in $(\TO, \varrho)$,  and the built-in predicates and constructs of $\lang$. We obtain the required $\lang$-rewriting $\rew(x,y,t)$ from $\rew^\ddagger(t)$  by replacing every~$A_P(s)$ with $P(x,y,s)$, every $A_{P-}(s)$ with $P(y,x,s)$ and, in the case of $\FO(\RPR)$, by additionally replacing every  occurrence of $Q(t_1,\dots, t_k)$, for a relation variable~$Q$, with  $Q(x,y,t_1,\dots, t_k)$.
\end{proof}

In Section~\ref{sec:DL-Lite}, we show that answering \OMAQ{}s in temporal \DL{} can be reduced to answering restricted OMPIQs with $\bot$-free ontologies. This reduction relies on our ability to decide consistency of temporal \DL{} KBs, which will be studied in the next section.


\section{Consistency of Temporal $\DL$ Knowledge Bases}\label{sec:consistency}

The consistency problem for temporal \DL{} KBs that do \emph{not} contain CIs can be reduced by Proposition~\ref{prop:RBox:consistency} to \LTL{} satisfiability, which is known to be decidable and in fact \PSpace-complete~\cite{DBLP:journals/jacm/SistlaC85}.
On the other hand, due to the interaction of Boolean RIs with CIs, both consistency and \OMAQ{} answering with $\DL_{\core/\gbool}\Xnext{}$ ontologies turn out to be undecidable (even for data complexity):
\begin{theorem}\label{thm:undec}
$(i)$ Checking consistency of $\DL_{\core/\gbool}\Xnext{}$ KBs is undecidable.

$(ii)$ There 
is a $\DL_{\core/\gbool}\Xnext{}$ ontology $\TO$ such that the following problems are undecidable\textup{:} whether a given ABox~$\A$ is consistent with $\TO$, whether 
the pair $(a,0)$ is a certain answer  over a given ABox $\A$ to \OMAQ{} $(\TO, A)$ with a concept name $A$, and whether $(a,b,0)$ is a certain answer  over a given ABox~$\A$ to \OMAQ{}  $(\TO, P)$  with a role name $P$.
\end{theorem}
\begin{proof}
$(i)$ The proof is by reduction of the  $\N \times \N$-tiling problem, which is known to be undecidable~\cite{Berger1966-BERTUO-6}: given a finite set $\mathfrak T$ of tile types $\{1,\dots,m\}$,  decide whether $\mathfrak T$  can tile the $\N\times\N$ grid. For each $\mathfrak{T}$, we denote by $\textit{up}(i)$, $\textit{down}(i)$, $\textit{left}(i)$ and $\textit{right}(i)$ the colours on the four edges of any tile type $i\in\mathfrak{T}$. Define a $\DL_{\core/\gbool}\Xnext{}$ ontology~$\TO_{\mathfrak T}$, where $T$ is a role name and the $T_i$ and the~$R_i$ are role names associated with tile types $i \in \mathfrak T$, by taking
\begin{align*}
& T \ \sqsubseteq \  \bigsqcup_{i\in \mathfrak T} R_i, \qquad\qquad
 \exists R_i \sqcap \exists R_j   \ \sqsubseteq \ \bot, \quad \text{ for } i,j\in \mathfrak T \text{ with } i \ne j, \\
& \exists R^-_i \ \sqsubseteq \ \exists T_i, \qquad T_i \ \ \sqsubseteq \hspace*{-1em} \bigsqcup_{\begin{subarray}{c}j \in \mathfrak T\\\textit{up}(i) = \textit{down}(j)\end{subarray}}\hspace*{-1.5em} R_j \quad\text{ and }\quad
R_i \ \ \sqsubseteq \hspace*{-1em} \bigsqcup_{\begin{subarray}{c}j \in \mathfrak T\\\textit{right}(i)  = \textit{left}(j)\end{subarray}} \hspace*{-1.5em}\Rnext R_j, \qquad \text{ for } i \in \mathfrak T.
\end{align*}
It is readily checked that $\{T(a,b,0)\}$ is consistent with $\TO_{\mathfrak T}$ iff $\mathfrak T$ can tile the $\N \times \N$ grid.

\smallskip

$(ii)$ Using the representation of the universal Turing machine by means of tiles~\cite{Borgeretal97}, we obtain a set~$\mathfrak U$ of tile types for which the following problem is undecidable: given a finite sequence of tile types $i_0,\dots,i_n$, decide whether $\mathfrak U$ can tile the $\N \times \N$ grid so that tiles of types $i_0,\dots,i_n$ are placed on $(0,0),\dots,(n,0)$, respectively. Given such $i_0,\dots,i_n$, we take the ABox $\A = \{\,T(a,b,0), R_{i_0}(a,b,0),\dots,R_{i_n}(a,b,n)\,\}$. Then $\mathfrak U$ can tile $\mathbb N \times \mathbb N$ with $i_0,\dots,i_n$ on the first row iff  $\A$ is consistent with $\TO_{\mathfrak U}$ iff $A(a,0)$ is \emph{not} a certain answer to \OMAQ{} $(\TO_{\mathfrak U},A)$ over $\A$, where $A$ is a fresh concept name. Similar considerations apply to the case of a fresh role $P$.
\end{proof}


\begin{table}[t]
\centering%
\newcommand{\cmplx}[2]{#1~\scriptsize{}[#2]}
\renewcommand{\tabcolsep}{2pt}%
\begin{tabular}{ccccc}\toprule
role & \multicolumn{4}{c}{concept inclusions}\\
inclusions & \bool & \horn  &   \krom & \core\\\midrule
(\textit{g}-)\bool &  ? & ? & ? & ? \\ 
  \krom &   ? & ? & ? & ? \\
  \horn &   \multicolumn{4}{c}{\cmplx{in \ExpSpace{}}{Th.~\ref{th:bool-sat-expspace}}}  \\
  \core &   \multicolumn{4}{c}{\cmplx{in \PSpace{}}{Th.~\ref{th:core-ris:pspace}}}  \\\bottomrule\\[-6pt]
 \multicolumn{5}{c}{\small a) $\op = \Box$.}
\end{tabular}
\hfil
\renewcommand{\tabcolsep}{7pt}%
\begin{tabular}{ccc}\toprule
role & \multicolumn{2}{c}{concept inclusions}\\
inclusions & \bool \hfil \horn  &  \krom \hfil \core\\\midrule
(\textit{g}-)\bool & \multicolumn{2}{c}{\cmplx{undecidable}{Th.~\ref{thm:undec}}}    \\
 \krom & \cmplx{\ExpSpace$^*$}{Th.~\ref{thm:bool-krom:hardness}} & \cmplx{in \PSpace$^*$}{Th.~\ref{thm:krom:qtl-krom}} \\
 \horn &  \cmplx{\ExpSpace}{Th.~\ref{th:bool-sat-expspace}} & \cmplx{in \ExpSpace}{Th.~\ref{th:bool-sat-expspace}} \\
 \core &  \cmplx{\PSpace}{Th.~\ref{th:core-ris:pspace}} &  \cmplx{in \PSpace}{Th.~\ref{th:core-ris:pspace}} \\\bottomrule\\[-6pt]
 \multicolumn{3}{c}{\small b) $\op = \nxt$ or $\op = \Box\nxt$.}
\end{tabular}
\caption{Combined complexity of the consistency problem for
  $\DL^\op_{\frag/\fragr}$ (the upper bounds in cases with $^*$ are only known for $\op = \nxt$).}
\label{table:consistency:tdl-lite}
\end{table}

In Sections~\ref{sec:HornRI:sat} and~\ref{sec:KromRI:sat}, we show that, by admitting Horn or Krom RIs only, we make KB consistency decidable in \ExpSpace; see Table~\ref{table:consistency:tdl-lite}. 
These results are established by reduction to the one-variable fragment \FOLTLI{} of first-order \LTL{}, which is known to be \ExpSpace-complete~\cite{DBLP:journals/jcss/HalpernV89,gkwz}.
We remind the reader that formulas in \FOLTLI{}  are constructed from predicates with individual constants and a single individual variable $x$ using the Boolean  connectives, first-order quantifiers $\forall x$, $\exists x$ and \LTL{} operators. Interpretations,~$\mathfrak M$, for first-order  \LTL{} are similar to the DL interpretations~\eqref{interpretation}: for all $n \in \Z$, we have standard FO-structures~$\mathfrak M(n)$ with the same domain $\Delta^{\mathfrak M}$ and the same interpretations $a^{\mathfrak M}$ of individual constants~$a$  but possibly varying interpretations $P^{\smash{\mathfrak M(n)}}$ of predicate symbols $P$. At each time instant $n$, the quantifiers $\forall x$ and $\exists x$ are interpreted over the domain of~$\mathfrak M$ as usual, while the \LTL{} operators are interpreted over $(\Z,<)$, also as usual, given an assignment of $x$ to some domain element of $\Delta^{\mathfrak M}$. For a formula $\varphi$ with a free variable and an individual constant $a$, we write $\mathfrak M,n \models \varphi(a)$ to say that the formula~$\varphi$ is true at moment~$n$ in $\mathfrak M$ under the assignment $x \mapsto a^{\mathfrak M}$. We abbreviate $ \Rbox\Lbox$ by $\Box$ and a sequence of $\ell$-many $\Rnext$ and $\Lnext$  by $\Rnext^\ell$ and $\Lnext^\ell$, respectively. Also, $\nxt^k\vartheta$ stands for $\Rnext^k\vartheta$ if $k > 0$, $\vartheta$ if $k = 0$, and $\Lnext^{-k} \vartheta$ if $k < 0$.

\subsection{Consistency of $\DL_{\bool/\horn}\Xallop{}$ KBs}\label{sec:HornRI:sat}

Let $\K = (\TO,\A)$ be a $\DL_{\smash{\bool/\horn}}\Xallop{}$ KB with $\TO = \T \cup \R$. As in Section~\ref{sec:ltl-can-period}, denote by $\subt$ the set of subconcepts in~$\T$ 
and their negations.  For $\tp\subseteq\subt$, let $\tp^\dagger$ be the result of replacing each~$B$  in~$\tp$ with~$B^\dagger$ (see Section~\ref{sec:proj}). A \emph{concept type} $\tp$ for~$\T^\dagger$ is a maximal subset $\tp$ of $\subt$ such that $\tp^\dagger$ is consistent with $\T^\dagger$  (note, however, that $\tp$ is not necessarily consistent with~$\T$: for example, $\tau^\dagger$ can be consistent with $\T^\dagger$ even if  $\exists P\in\tau$ and  $\T$ contains $\exists P^-\sqsubseteq \bot$).
A \emph{beam} $\beam$ for~$\T$ is a function from $\Z$ to the set of all concept types for $\T^\dagger$ such that, for all $n\in\Z$, we have the following:
\begin{align}
\label{eq:beam:0}
& \Rnext C \in \beam(n) ~\text{ iff }~  C\in\beam(n+1),
&& \Lnext C \in \beam(n) ~\text{ iff }~  C\in\beam(n-1),\\
\label{eq:beam:1}
& \Rbox C\in\beam(n) ~\text{ iff }~  C \in\beam(k), \text{ for all } k > n,
&& \Lbox C\in\beam(n) ~\text{ iff }~  C \in\beam(k), \text{ for all } k < n.
\end{align}
Given an interpretation $\I$ and $u \in \Delta^\I$, the function $\beam^\I_u\colon n\mapsto \bigl\{\, C\in\subt \mid u \in C^{\I(n)}\,\bigr\}$ is a beam; we refer to it as \emph{the beam of $u$ in~$\I$}. Here and below, we normally omit the negated elements of types, which is unambiguous due to their maximality.

Assume that $\R$ contains all roles that occur in $\T$. Denote by $\subr$ the set of
subroles in~$\R$ and their negations.  A \emph{role type} $\rtp$ for $\R$ is a maximal  subset of $\subr$ consistent with~$\R$ (equivalently, by Proposition~\ref{prop:RBox:consistency}, $\rtp^\ddagger$ is consistent with the $\LTL_\horn\Xallop$ ontology $\R^\ddagger$)
and  a \emph{rod} $\rod$ for~$\R$ is a function from $\Z$ to the set of all role types for $\R$ such that~\eqref{eq:beam:0} and~\eqref{eq:beam:1} hold for all $n\in\Z$ with $\beam$ replaced by~$\rod$ and $C$ by temporalised roles~$R$. Given an interpretation~$\I$ and $u,v\in\Delta^\I$, the function $\rod_{u,v}^\I\colon n \mapsto \bigl\{\, R \in\subr \mid (u,v)\in R^{\I(n)}\,\bigr\}$ is a rod for $\R$; we call it the \emph{rod of $(u,v)$ in $\I$}.

In this section we consider Horn RIs, and the key property of $\LTL_\horn\Xallop$ is that every consistent $\LTL_\horn\Xallop$ KB $(\R^\ddagger,\A^\ddagger)$ has a canonical model  $\C_{\R^\ddagger,\A^\ddagger}$; see Section~\ref{sec:ltl-can-period}. 
Since the RIs in $\R$ are Horn, given any \LTL{} ABox $\A^\ddagger$, which is a set of atoms of the form $S^\ddagger(\ell)$, we define the \emph{$\R$-canonical rod} $\rod_{\A^\ddagger}$ for $\A^\ddagger$ (provided that~$\A^\ddagger$ is consistent with~$\R^\ddagger$) by taking
$\rod_{\A^\ddagger}\colon n\mapsto \bigl\{\, R \in\subr \mid R^\ddagger(n)\in \C_{\R^\ddagger,\A^\ddagger}\,\bigr\}$.
In other words, $\R$-canonical rods are the minimal rods for $\R$ `containing' all atoms of $\A^\ddagger$:  for any $R$ and $n\in\Z$,
\begin{equation}\label{eq:rods:minimal}
R\in\rod_{\A^\ddagger}(n) \ \text{ iff }\  R\in\rod(n), \text{ for all rods } \rod \text{ for } \R \text{ such that } S\in\rod(\ell) \text{ for all } S^\ddagger(\ell)\in\A^\ddagger.
\end{equation}
Finally, given a beam $\beam$, we say a rod $\rod$ is \emph{$\beam$-compatible} if $\exists S\in\beam(n)$ whenever $S\in\rod(n)$, for all $n\in \Z$ and basic concepts $\exists S$ in $\T$.
We are now fully equipped to prove the following characterisation of  $\DL_{\bool/\horn}\Xallop{}$ KBs consistency.
\begin{lemma}\label{lem:unravelling}
Let
$\K = (\TO,\A)$ be a $\DL_{\bool/\horn}\Xallop{}$ KB with
$\TO = \T \cup \R$. Let
\begin{equation*}
\Xi = \ind(\A)\cup\bigl\{\, w_P, w_{P^-}\mid P \text{ a role name in } \TO\,\bigr\}.
\end{equation*}
Then $\K$ is consistent iff  there are beams $\beam_w$,  $w\in \Xi$, for $\T$ satisfying the following conditions\textup{:}
\begin{align}
\label{eq:quasimodel:beam:abox}
& A\in\beam_a(\ell), \text{ for all } A(a,\ell)\in\A,\\
\label{eq:quasimodel:inv}
& \text{if } \exists S\in\beam_w(n), \text{ then } \exists S^-\in \beam_{w_{S^-}}(k), \text{ for some } k\in\Z,\\
\label{eq:quasimodel:rod:abox}
& \text{for every } a,b\in\ind(\A), \text{ there is a $\beam_a$-compatible rod } \rod \text{ for } \R\\\nonumber & \hspace*{16em}\text{ such that } S\in\rod(\ell) \text{ for all } S(a,b,\ell)\in\A,\\
\label{eq:quasimodel:rod:witness}
& \text{if }\exists S\in\beam_w(n), \text{ then there is a $\beam_w$-compatible rod } \rod \text{ for } \R \text{ such that } S\in \rod(n).
\end{align}
Moreover, for any beams~$\beam_w$,  $w\in \Xi$, for $\T$ satisfying~\eqref{eq:quasimodel:beam:abox}--\eqref{eq:quasimodel:rod:witness}, there is a model $\I$ of $\K$ such that
\begin{itemize}
\item[--] for any $a\in\ind(\A)$, the beam of $a^\I$ in $\I$ coincides with $\beam_a$,
\item[--] for any $u\in\Delta^\I\setminus\{a^\I\mid a\in\ind(\A)\}$, there are $S$ and $n\in\Z$ with $\beam_u^\I(k) = \beam_{w_S}(k +n)$, for all  $k\in\Z$, 
\item[--]  for any $a,b\in\ind(\A)$, the rod of $(a^\I,b^\I)$ in $\I$ is the $\R$-canonical rod for $\A^\ddagger_{\smash{a,b}}$, and
\item[--]  for any $u\in\Delta^\I$, $v\in\Delta^\I\setminus\{a^\I\mid a\in\ind(\A)\}$, the rod of $(u,v)$ in $\I$ is the $\R$-canonical rod for some $\{S^\ddagger(n)\}$.
\end{itemize}
\end{lemma}

Before proving the lemma, we illustrate the construction by an example.
\begin{example}\em\label{ex:kdagger-sat}
Consider the KB $\K = (\TO,  \{ Q(a,b,0)\})$, where $\TO$ consists of
\begin{equation*}
\exists Q \sqcap \Rbox A \sqsubseteq \bot,\qquad \top \sqsubseteq A \sqcup \exists P\qquad\text{ and }\qquad P^- \sqsubseteq \Rnext Q,
\end{equation*}
which is the result of converting
$\exists Q \sqsubseteq \Rdiamond \exists P$ and $P^- \sqsubseteq \Rnext Q$
into normal form~\eqref{axiom1}.
\begin{figure}[t]
\centerline{%
\begin{tikzpicture}[xscale=0.6, yscale=0.7, semithick]\footnotesize
\foreach \x/\i in {-2/0,-1/1,0/2,1/3,2/4,3/5,4/6,5/7} {
\node at (\x,-1.3) {\scriptsize $\i$};
\draw[time-guideline] (\x,-1.1) -- ++(0,2.1);
\draw[time-guideline] (\x,1.6) -- ++(0,2);
\draw[time-guideline] (\x,4.2) -- ++(0,1.2);
}
\foreach \y/\j in {-0.8/0,0/1,2.6/4,5.2/7} {
\draw[object-timeline] (-2.5,\y) -- ++(8,0);
\foreach \x/\i in {-2/0,-1/1,0/2,1/3,2/4,3/5,4/6,5/7} {
\node (a\j\i) at (\x,\y) [point]{};
}
}
\foreach \y/\j in {0.8/2,1.8/3,3.4/5,4.4/6} {
\draw[very thin,draw=gray] (-2,\y) -- ++(7,0);
\draw[very thin,draw=gray,dashed] (-2,\y) -- ++(0.7,0);
\draw[very thin,draw=gray,dashed] (5,\y) -- ++(0.7,0);
\foreach \x/\i in {-2/0,-1/1,0/2,1/3,2/4,3/5,4/6,5/7} {
\node (a\j\i) at (\x,\y) [graypoint]{};
}
}
\fill[rounded corners=1mm,gray,fill opacity=0.2] (-2.5,-0.7) rectangle +(8,0.6);
\fill[rounded corners=1mm,pattern=north west lines, pattern color=gray!50] (-2.5,-0.7) rectangle +(8,0.6);
\node at (3,-0.45) {$\rod_{a,b}$};
\draw[ultra thick,{Bracket[]}-,black!60] (2.5,-0.1) -- +(0,-0.6);
\node at (5,-0.4) {$\rod_{b,a}$};
\draw[ultra thick,{Bracket[]}-,black!60] (4.5,-0.7) -- +(0,0.6);
\foreach \x/\y/\l in {1.5/2.7/4,-0.5/0.1/2} {
\fill[rounded corners=1mm,gray,fill opacity=0.5] (\x,\y) rectangle +(4,0.6);
\node at ($(\x+3.55,\y+0.3)$) {$\rod_{P,\l}$};
\draw[ultra thick,{Bracket[]}-,black!80] ($(\x+3,\y)$) -- +(0,0.6);
}
\foreach \x/\y/\l in {-2.5/1.9/0} {
\fill[rounded corners=1mm,gray,fill opacity=0.5] (\x,\y) rectangle +(4.2,0.6);
\node at ($(\x+3.7,\y+0.2)$) {$\rod_{\smash{P^-\!\!,0}}$};
\draw[ultra thick,-{Bracket[]},black!80] ($(\x+3,\y)$) -- +(0,0.6);
}
\foreach \x/\y/\l in {-2.5/4.5/0} {
\fill[rounded corners=1mm,gray,fill opacity=0.5] (\x,\y) rectangle +(4,0.6);
\node at ($(\x+3.5,\y+0.2)$) {$\rod_{P,\l}$};
\draw[ultra thick,-{Bracket[]},black!80] ($(\x+3,\y)$) -- +(0,0.6);
}
\node at (-3.2,-0.8) {$\beam_b$};
\node at (-3.2,0) {$\beam_a$};
\node at (-3.2,0.8) {\textcolor{gray}{$u_{P,2}$}};
\node at (-3.2,1.8) {\textcolor{gray}{$u_{P^-,0}$}};
\node at (-3.2,2.6) {$\beam_{w_{P^-}}$};
\node at (-3.2,3.4) {\textcolor{gray}{$u_{P,4}$}};
\node at (-3.2,4.4) {\textcolor{gray}{$u_{P,0}$}};
\node at (-3.2,5.2) {$\beam_{w_P}$};
%
%
%
\node (a10p) at (a10) [ppoint]{};
\node (a00p) at (a00) [qpoint]{};
\node (a12p) at (a12) [ppoint]{};
\node (a13p) at (a13) [qpoint]{};
\node (a40p) at (a40) [qpoint]{};
\node (a41p) at (a41) [ppoint]{};
\node (a44p) at (a44) [ppoint]{};
\node (a45p) at (a45) [qpoint]{};
\node (a70p) at (a70) [ppoint]{};
\node (a71p) at (a71) [qpoint]{};
\begin{scope}[thick]
\draw[<-] (a00p)  to node [right]{$Q$}  (a10p);
\draw[->] (a12p)  to node [right]{$P$} (a22);
\draw[<-] (a13p) to node [right]{$Q$}  (a23);
\draw[->] (a30)  to node [right]{$P$} (a40p);
\draw[->] (a41p)  to node [right]{$Q$}  (a31);
\draw[->] (a44p)  to node [right]{$P$} (a54);
\draw[->] (a55)  to node [right]{$Q$} (a45p);
\draw[->] (a70p)  to node [right]{$P$} (a60);
\draw[->] (a61)  to node [right]{$Q$} (a71p);
\end{scope}
%
\begin{scope}[xshift=107mm,yshift=2mm]
\foreach \y in {-1,0.2,1.4,2.6,3.8} {
\draw[object-timeline] (-2.5,\y) -- ++(14.2,0);
}
\foreach \x/\i in {-2/0,-1/1,0/2,1/3,2/4,3/5,4/6,5/7,6/8,7/9,8/10,9/11,10/12,11/13} {
\node at (\x,-1.5) {\scriptsize $\i$};
\draw[time-guideline] (\x, -1.2) -- ++(0, 5.5);
\draw[dotted] (\x,4.35) -- +(0,0.5);
}
\fill[rounded corners=1mm,gray,fill opacity=0.2] (-2.5,-0.9) rectangle +(14.2,1);
\fill[rounded corners=1mm,pattern=north west lines, pattern color=gray!50] (-2.5,-0.9) rectangle +(14.2,1);
\node at (6,-0.4) {$\rod_{\smash{a,b}}$};
\draw[ultra thick,-{Bracket[]},black!60] (5.5,-0.9) -- +(0,1);
\node at (8,-0.4) {$\rod_{\smash{b,a}}$};
\draw[ultra thick,{Bracket[]}-,black!60] (7.5,-0.9) -- +(0,1);
\fill[rounded corners=1mm,gray,fill opacity=0.50] (-0.5,0.3) rectangle +(4.1,1);
\node at (3.1,0.8) {$\rod_{P,2}$};
\draw[ultra thick,{Bracket[]}-,black!80] (2.5,0.3) -- +(0,1);
\fill[rounded corners=1mm,gray,fill opacity=0.50] (3.5,1.5) rectangle +(4.1,1);
\node at (7.1,2) {$\rod_{P,6}$};
\draw[ultra thick,{Bracket[]}-,black!80] (6.5,1.5) -- +(0,1);
\fill[rounded corners=1mm,gray,fill opacity=0.50] (7.5,2.7) rectangle +(4.2,1);
\node at (11.2,3.2) {$\rod_{P,10}$};
\draw[ultra thick,{Bracket[]}-,black!80] (10.5,2.7) -- +(0,1);
{\footnotesize
\node at (-3.2,-1) {$b$};
\node at (-3.2,0.2) {$a$};
\node at (-3.2,1.4) {$aP^2$};
\node at (-3.4,2.6) {$aP^2P^6$};
\node at (-3.7,3.8) {$aP^2P^6P^{10}$};
}
\foreach \y/\j in {-1/0,0.2/1,1.4/2,2.6/3,3.8/4} {
\foreach \x/\i in {-2/0,-1/1,0/2,1/3,2/4,3/5,4/6,5/7,6/8,7/9,8/10,9/11,10/12,11/13} {
\node (a\j\i) at (\x,\y) [point]{};
}
}
\node (a00p) at (a00) [qpoint]{};
\node (a10p) at (a10) [ppoint]{};
\node (a12p) at (a12) [ppoint]{};
\node (a13p) at (a13) [qpoint]{};
\node (a22p) at (a22) [qpoint]{};
\node (a23p) at (a23) [ppoint]{};
\node (a26p) at (a26) [ppoint]{};
\node (a36p) at (a36) [qpoint]{};
\node (a27p) at (a27) [qpoint]{};
\node (a37p) at (a37) [ppoint]{};
\node (a310p) at (a310) [ppoint]{};
\node (a410p) at (a410) [qpoint]{};
\node (a311p) at (a311) [qpoint]{};
\node (a411p) at (a411) [ppoint]{};
\begin{scope}[->,thick]
\draw (a10p) to node [right]{$Q$} (a00p);
\draw (a12p) to node [right]{$P$} (a22p);
\draw (a23p) to node [right]{$Q$} (a13p);
\draw (a26p) to node [right]{$P$} (a36p);
\draw (a37p) to node [right]{$Q$} (a27p);
\draw (a310p) to node [right]{$P$} (a410p);
\draw (a411p) to node [right]{$Q$} (a311p);
\end{scope}
\node at (-3.2,4.5) {$\cdots$};
\end{scope}
\end{tikzpicture}%
}%
\caption{The beams and rods in Example~\ref{ex:kdagger-sat} (left) and the temporal interpretation $\I$ obtained by unravelling them (right).}\label{fig:FOLTL-pieces}
\end{figure}
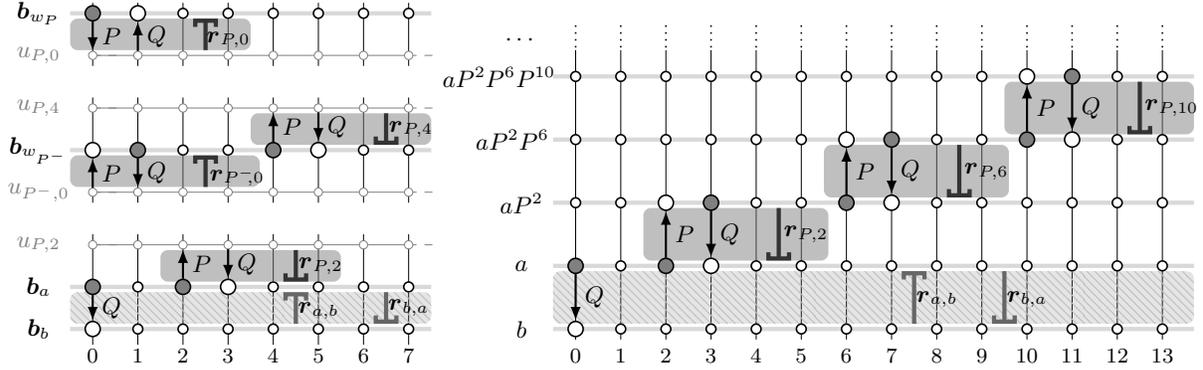
Beams $\beam_a$, $\beam_b$, $\beam_{\smash{w_{P^-}}}$ and $\beam_{w_P}$ are depicted in Fig.~\ref{fig:FOLTL-pieces} (left) by horizontal lines: the type contains $\exists P$ or $\exists Q$ whenever the large node is grey; similarly, the type contains $\exists P^-$ or $\exists Q^-$ whenever the large node is white (the label of the arrow specifies the role); we omit $A$ to avoid clutter. The rods are the arrows between the pairs of horizontal lines. For example, the rod required by~\eqref{eq:quasimodel:rod:abox} for $a$ and $b$ is labelled by $\rod_{a,b}$: it contains only $Q$ at $0$ (we specify only the positive components of the types); the rod required by~\eqref{eq:quasimodel:rod:abox} for $b$ and $a$ is labelled by $\rod_{b,a}$, and in this case, it is the mirror image of $\rod_{a,b}$. In fact, if we choose $\R$-canonical rods in~\eqref{eq:quasimodel:rod:abox}, then the rod for any $b,a$ will be the mirror image of the rod for $a,b$. The rod $\rod_{P,2}$ required by~\eqref{eq:quasimodel:rod:witness} for~$\exists P$ on~$\beam_a$ at moment $2$ is depicted between $\beam_a$ and $u_{P,2}$: it contains $P$ at $2$ and $Q$ at~$3$. In fact, it should be clear that, if  we choose $\R$-canonical rods in~\eqref{eq:quasimodel:rod:witness}, then they will all be isomorphic copies of at most $|\TO|$ rods: more precisely, the rods will be of the form $\rod_{\{S^\ddagger(n)\}}$, for a role $S$ from~$\TO$.
In the proof of Lemma~\ref{lem:unravelling}, we show how this collection of beams and $\R$-canonical rods can be used to obtain a  model $\I$ of $\K$ depicted on the right-hand side of  Fig.~\ref{fig:FOLTL-pieces} (with $A$ omitted again). We now proceed with the proof of the lemma.
\end{example}

\begin{proof}
$(\Leftarrow)$ Suppose that we have the required collection of beams $\beam_w$ for $\T$. We construct by induction on $m < \omega$ a sequence of temporal interpretations $\I_m = (\Delta^{\I_m}, \cdot^{\I_m(n)})$ and maps $f_m \colon \Delta^{\I_m} \times \Z \to \Xi \times \Z$. The meaning of $f_m$ is as follows: if $f_m(\lambda, n) = (w, k)$, then the type of the element $\lambda$ in $\I_m$ at moment $n$ is `copied' from the beam $\beam_w$ at  moment~$k$. For the basis of induction, set $\Delta^{\I_0} = \ind(\A)$, $A^{\I_0(n)} = \{a \mid A\in\beam_a(n)\}$ and $P^{\I_0(n)} = \{(a,b) \mid P\in\rod_{a,b}(n)\}$, for all concept and roles names $A$ and $P$ and all $n \in\Z$, and $f_0(a,n) = (a,n)$, for all $a\in \ind(\A)$ and $n \in \Z$,  where $\rod_{a,b}$ is the $\R$-canonical rod for $\A^\ddagger_{\smash{a,b}}$, which exists by~\eqref{eq:quasimodel:rod:abox} and which, by~\eqref{eq:rods:minimal}, is compatible with $\beam_a$, while its inverse is compatible with $\beam_b$.

Suppose next that $\I_m$ and $f_m$, for $m \ge 0$, have already been defined, that the elements of~$\Delta^{\I_m}$ are words of the form $\lambda = aS_1^{n_1} \dots S_l^{n_l}$, for $a \in \ind(\A)$, $n_i \in \Z$ and $l \ge 0$. We call a pair $(\lambda,n)$ an $S$-\emph{defect} in $\I_m$ if $f_m(\lambda,n) = (w,k)$, $\exists S\in \beam_w(k)$, and $(\lambda, \lambda') \notin S^{\I_m(n)}$ for any $\lambda' \in \Delta^{\I_m}$.
For any role~$S$ and any $S$-defect $(\lambda,n)$ in $\I_m$, we add the word $\lambda S^n$ to~$\Delta^{\I_m}$ and denote the result by $\Delta^{\I_{m+1}}$. By~\eqref{eq:quasimodel:inv}, we have $\exists S^-\in\beam_{w_{S^-}}(k')$, for some $k' \in \Z$. We fix one such $k'$ and extend $f_m$ to $f_{m+1}$ by setting $f_{m+1}(\lambda S^n,n') = (w_{S^-},n'-n+k')$, for any~$n' \in \Z$. We also define $A^{\I_{m+1}(n')}$ by extending $A^{\I_m(n')}$ with those $\lambda S^n$ for which $A\in\beam_{w_{S^-}}(n'-n+k')$, and  define~$P^{\I_{m+1}(n')}$ by extending $P^{\I_m(n')}$ with $(\lambda, \lambda S^n)$ for which $P\in\rod_{S,n}(n')$ and with $(\lambda S^n, \lambda)$ for which $P^-\in\rod_{S,n}(n')$, where~$\rod_{S,n}$ is the $\R$-canonical rod for $\{S^\ddagger(n)\}$, which exists by~\eqref{eq:quasimodel:rod:witness} and, by~\eqref{eq:rods:minimal}, is compatible with the respective beams.

Finally, let $\I$ and $f$ be the unions of all $\I_m$ and $f_m$, for $m < \omega$, respectively.
We show that $\I$ is a model of $\K$. It follows immediately from the construction that $\I$ is a model of~$\R$ and~$\A$. To show that $\I$ is also a model of $\T$, it suffices to prove that, for any $\lambda \in\Delta^\I$ and any role $Q$, we have $\lambda \in (\exists Q)^{\I(n')}$ iff $\exists Q\in\beam_w(k')$, where $f(\lambda, n') = (w,k')$. The implication $(\Leftarrow)$ follows directly from the procedure of `curing defects'\!. Let $\lambda \in (\exists Q)^{\I(n')}$, and so $(\lambda,\lambda') \in Q^{\I(n')}$, for some $\lambda' \in\Delta^\I$. Two cases are possible now.
\begin{itemize}
\item[--] If $\lambda,\lambda' \in \ind(\A)$, then $Q\in\rod_{\lambda,\lambda'}(n')$. Then, by~\eqref{eq:quasimodel:rod:abox}, $\exists Q\in\beam_{\lambda}(n')$. It remains to recall that $f(\lambda, n') = f_0(\lambda, n') =  (\lambda,n')$.
\item[--]  If $\lambda' \notin \ind(\A)$, then $\lambda' = \lambda S^n$, for some $S$ and $n$, and $Q\in\rod_{S,n}(n')$. We also have $\exists S\in\beam_w(k)$, where $f(\lambda,n) = (w,k)$. By~\eqref{eq:quasimodel:rod:witness}, there is a rod $\rod$ for $\R$ such that $S\in \rod(n)$, and so, we must have $Q\in\rod(n')$. Since $\rod$ is compatible with $\beam_w$,  we obtain $\exists Q\in\beam_w(n' - n + k)$. It remains to recall that $f(\lambda, n') = (w, n' - n + k)$.
\end{itemize}

$(\Rightarrow)$ Given a model $\I$ of $\K$, we construct beams $\beam_w$ for $\T$ as follows. Set  $\beam_a=\beam^\I_{\smash{a^\I}}$, for all $a\in\ind(\A)$.  For each~$S$, if  $S^{\I(n)}\ne\emptyset$, for some $n\in\Z$, then set $\beam_{w_S} = \beam^\I_{u}$, for $u \in (\exists S)^{\I(n)}$; otherwise, set $\beam_{w_S}=\beam^\I_{\smash{a^\I}}$, for an arbitrary~$a\in\ind(\A)$.
It is  straightforward to check that these beams are as required.
\end{proof}

We now reduce the existence of the required collection of beams to the \ExpSpace-complete satisfiability problem for \FOLTLI~\cite{DBLP:journals/jcss/HalpernV89,gkwz}, thereby establishing the upper complexity bound for $\DL_{\bool/\horn}\Xallop{}$. Let
$\K = (\TO,\A)$ be a $\DL_{\bool/\horn}\Xallop{}$ KB with
$\TO = \T \cup \R$ 
and let $\Xi$ be as in Lemma~\ref{lem:unravelling}. We treat its elements as constants in the first-order language and
define a translation~$\Psi_\K$ of $\K$ into \FOLTLI{} with a single individual variable $x$ as a conjunction of the following sentences, for all constants~$w \in \Xi$:
\begin{align}
\label{eq:TBox}
& \Box \bigl( C_1(w) \land \dots \land C_k(w) \to C_{k+1}(w)\lor \dots \lor C_{k+m}(w) \bigr),\hspace*{-4em}\\
\nonumber&&&\hspace*{-6em} \text{for all } C_1\sqcap \dots \sqcap C_k  \sqsubseteq C_{k+1}\sqcup \dots \sqcup C_{k+m} \text{ in } \T,\\
\label{eq:RBox}
& \Box \forall x \, \bigl( R_1(w,x)  \land \dots \land R_k(w,x) \to R(w,x)\bigr), && \text{for all } R_1\sqcap \dots \sqcap R_k \sqsubseteq R \text{ in } \R,\\
\label{eq:RBox:b}
& \Box\forall x \, \bigl( R_1(w,x)  \land \dots \land R_k(w,x) \to \bot\bigr), && \text{for all } R_1\sqcap \dots \sqcap R_k \sqsubseteq \bot \text{ in } \R,\\
& \Rnext^\ell A(a), &&  \text{for all } A(a,\ell) \text{ in } \A,\\
\label{abox-role1}
& \Rnext^\ell P(a,b), &&  \text{for all }  P(a,b,\ell) \text{ in  }\A,\\
\label{eq:unrav1}
& \Box \bigl(\exists S(w) \to \Rdiamond \Ldiamond \exists S^-(w_{S^-})\bigr), && \text{for all roles $S$ in $\TO$},\\
\label{eq:unrav2}
& \Box \bigl(\exists S(w) \leftrightarrow \exists x\, S(w,x)\bigr), && \text{for all roles $S$ in $\TO$}.
\end{align}
Thus, in $\Psi_\K$, we regard concept names $A$ and basic concepts $\exists S$ as \emph{unary predicates} and roles~$S$ as \emph{binary predicates}, assuming that $(P^-)^-$ is $P$ and $S^-(w,x)$ is $S(x,w)$. The interesting conjuncts in $\Psi_\K$ are~\eqref{eq:unrav1} and~\eqref{eq:unrav2}, which reflect the interaction between~$\T$~and~$\R$.

\begin{lemma}\label{lemma:horn-red}
A $\DL\Xallop_{\bool/\horn}$ KB $\K$ is consistent iff $\Psi_\K$ is satisfiable.
\end{lemma}
\begin{proof}
Each collection of beams $\beam_w$, $w\in \Xi$, for $\T$ gives rise to a model $\mathfrak{M}$ of $\Psi_\K$: the domain of $\mathfrak{M}$ comprises $\Xi$ and elements $u_{S,m}$, for a role $S$ and $m\in\Z$. We fix $\R$-canonical rods $\rod_{a,b}$ for $\A_{\smash{a,b}}^\ddagger$, which are guaranteed to exist by~\eqref{eq:quasimodel:rod:abox}, and $\R$-canonical rods $\rod_{S,m}$ for $\{S^\ddagger(m)\}$ for each $S$ and $m\in\Z$  with $\exists S\in\beam_w(m)$, for some $w\in\Xi$, which are guaranteed to exist by~\eqref{eq:quasimodel:rod:witness}, and set
\begin{align*}
\mathfrak{M},n\models B(w) &~\text{ iff }~ B\in\beam_w(n), \text{ for all } w\in\Xi, n\in\Z \text{ and basic concepts } B,\\
\mathfrak{M},n\models P(a,b) &~\text{ iff }~ P\in\rod_{a,b}(n), \text{ for all } a,b\in\ind(\A), n\in\Z, \text{ and role names } P,\\
\mathfrak{M},n\models S'(w,u_{S,m}) &~\text{ iff }~ S'\in\rod_{S,m}(n),\\ & \hspace*{4em} \text{ for all } w\in\Xi, n,m\in\Z \text{ and roles } S, S'\text{ with } \exists S\in\beam_w(m).
\end{align*}
It is readily checked that $\mathfrak{M}$ is as required; see also Fig.~\ref{fig:FOLTL-pieces}, where, in the context of Example~\ref{ex:kdagger-sat}, the $u_{S,m}$ are represented explicitly by grey horizontal lines. Conversely, it can be verified that every model $\mathfrak{M}$ of $\Psi_\K$ gives rise to the required collection of beams for $\T$.
\end{proof}

\begin{theorem}\label{th:bool-sat-expspace}
Checking consistency of $\DL_{\bool/\horn}\Xallop{}$ and $\DL_{\horn}\Xnext{}$ KBs is \ExpSpace-complete.
\end{theorem}
\begin{proof}
The upper bound follows from Lemma~\ref{lemma:horn-red} and \ExpSpace-completeness of \FOLTLI{}.
The hardness is proved by reduction of the non-halting problem for deterministic Turing machines with exponential tape. We assume that the head of a given machine~$M$ never runs beyond the first $2^n$ cells of its tape on an input word $\avec{a}$ of length $m$, where $n = p(m)$ for some polynomial~$p$; we also assume that it never attempts to access cells before the start of the tape. We construct a $\DL_\horn\Xnext{}$ ontology that encodes the computation of~$M$ on $\avec{a}$ using a single individual. 
The initial configuration is spread over the time instants~$1, \dots, 2^n$, from which the first $m$ instants represent $\avec{a}$ and the remaining ones encode the blank symbol~$\#$ (time instant 0 also contains~$\#$). The second configuration uses the next $2^n$ instants, $2^n+1, \dots, 2^n + 2^n$, etc. The configurations are encoded with the following concept names:
\begin{itemize}
\item[--] $H_{q,a}$  contains the ABox individual at the moment $i2^n+j$ whenever the machine head scans the $j$th cell of the $i$th configuration and sees symbol $a$, with $q$ being the current state of the machine;
\item[--] $C_a$ contains the ABox individual at $i2^n+j$ whenever the $j$th cell of the $i$th configuration contains $a$ but is \emph{not scanned} by the head.
\end{itemize}
For example, if $\avec{a} =a_1,\dots,a_m$ is the input and $q_0$ is the initial state of $M$,  then the ABox, which encodes the initial configuration, consists of
\begin{equation*}
I(e, 0), \ C_{\#}(e,0), \ H_{q_0,a_1}(e, 1), \ C_{a_2}(e, 2), \ \dots, \ C_{a_m}(e, m), \ E(e, m + 1),
\end{equation*}
where concept names $I$ and $E$ mark the start and the end of the input $\avec{a}$, respectively. Then, we use $\DL_\horn\Xnext{}$ CIs
\begin{equation*}
I \sqsubseteq \Rnext^{2^n} F,\qquad
F  \sqsubseteq \Lnext F,\qquad
E  \sqsubseteq \Rnext E,\qquad
E \sqcap F  \sqsubseteq C_{\#}
\end{equation*}
to fill the rest of the tape in the initial configuration with $\#$: concept name $F$ marks all the cells of the first configuration (along with all negative time points), and $E$ is propagated to all the cells starting from the end of the input~$\avec{a}$ in the first configuration along with all cells in all other configurations. 
Next, we encode computations of the deterministic Turing machine $M$ with tape alphabet $\Gamma$ and transition function $\delta\colon Q\times\Gamma \to Q \times \Gamma \times \{R, L\}$ by means of the following $\DL_\horn\Xnext{}$ CIs:
\begin{align*}
 \Lnext C_{a'} \sqcap H_{q,a} \sqcap \Rnext C_{a''} & \sqsubseteq  \Rnext^{2^n} (\Lnext C_{a'} \sqcap C_{b} \sqcap\Rnext H_{q', a''}), && \text{for } \delta(q,a) = (q',b,R) \text{ and } a',a''\in\Gamma,\\
 \Lnext C_{a'}\sqcap  H_{q,a} \sqcap \Rnext C_{a''} & \sqsubseteq \Rnext^{2^n} (\Lnext H_{q', a'}\sqcap  C_{b} \sqcap \Rnext C_{a''}), && \text{for } \delta(q,a) = (q',b,L)\text{ and } a',a''\in\Gamma,\\
  \Lnext C_{a'} \sqcap C_a \sqcap \Rnext C_{a''} & \sqsubseteq \Rnext^{2^n} C_a, && \text{for } a,a',a''\in\Gamma,
\end{align*}
where $\vartheta \sqsubseteq \Rnext^n(\vartheta_1 \sqcap \dots \sqcap \vartheta_k)$ abbreviates the $\vartheta\sqsubseteq \Rnext^n\vartheta_i$, $1\leq i \leq k$. Finally, CIs 
\begin{equation*}
H_{q,a} \sqsubseteq \bot, \hspace*{2em} \text{for each accepting and rejecting state } q \text{ and } a\in\Gamma,
\end{equation*}
ensures that accepting and rejecting states never occur in the encodings of computations.
These CIs are of exponential size, and our next task is to show how to convert them into a $\DL_\horn\Xnext{}$ ontology of polynomial size.

Consider a CI of the form $A \sqsubseteq \Rnext^{2^n} B$. First, we replace the temporalised concept $\Rnext^{2^n} B$ by $\exists P$ and add the CI $\exists Q \sqsubseteq B$ to the TBox, where $P$ and $Q$ are fresh role names. Then, we add the following RIs to the RBox, $\R$, for fresh role names $P_0,\dots,P_{n-1},\bar P_0,\dots, \bar P_{n-1}$:
\begin{align*}
  & P \sqsubseteq \Rnext (\bar P_{n-1} \sqcap \dots \sqcap \bar P_0),\\*
  & P_{n-1} \sqcap \dots \sqcap P_{0} \sqsubseteq Q,\\
& \bar P_{k}  \sqcap P_{k-1} \sqcap \dots \sqcap P_{0}\sqsubseteq \Rnext (P_{k} \sqcap  \bar P_{k-1}\sqcap \dots \sqcap \bar{P}_0),  &&  \text{for } 0 \leq k < n,\\*
& \bar P_{j} \sqcap \bar P_{k} \sqsubseteq \Rnext \bar P_{j} \quad  \text{ and }\quad %
 P_{j} \sqcap \bar P_{k} \sqsubseteq \Rnext P_{j}, &&  \text{for }  0 \leq k < j < n.
\end{align*}
Intuitively, these RIs encode a binary counter from $0$ to $2^n - 1$, where roles $\bar P_i$ and $P_i$ stand for `the $i$th bit of the counter is 0' and, respectively, `is 1', and ensure that $\R \models P \sqsubseteq \Rnext^{2^n} Q$ but $\R \not \models P \sqsubseteq \Rnext^i Q$ for any $i \neq 2^n$ (remember that $\exists P$ generates different $P$-successors at different time points). Further details are left to the reader. Note that the encoding of $\Rnext^{2^n}$ on concepts using an RBox of size polynomial in $n$ is based on the same idea as the encoding of~$\Rnext$ on concepts using an RBox with $\Box$ operators only in Example~\ref{exampleBvNB:2} and Theorem~\ref{thm:unexpected}.
\end{proof}

\subsection{Consistency of $\smash{\DL_{\bool/\core}\Xallop{}}$ KBs}\label{sec:corelRI:sat}

We now adapt the technique of Section~\ref{sec:HornRI:sat} to reduce consistency of $\smash{\DL_{\bool/\core}\Xallop{}}$ KBs to that of \LTL{} KBs. The modification is based on the observation that the logical consequences of core RIs, and so the auxiliary $\R$-canonical rods as well, have a simpler structure than those of Horn RIs: intuitively, while \core{} RIs  can entail $\R \models P \sqsubseteq \nxt^n Q$ for $n$ exponentially large in the size of $\R$, it must also be the case that $\R \models P \sqsubseteq \nxt^i Q$ for either all $i < n$ or all $i > n$. Using this property and a trick with binary counters (see below), we reduce satisfiability of~$\K$ to satisfiability of a polynomial-size \LTL{} KB.

Let $\R$ be a $\smash{\DL_{\bool/\core}\Xallop{}}$ RBox and $\rod$ the $\R$-canonical rod for some $\A^\ddagger_R = \{ R^\ddagger(0) \}$. Then $S\in \rod(n)$ iff one of the following conditions holds:
\begin{itemize}\itemsep=0pt
\item[--] $(\R')^\ddagger, \A^\ddagger_R \models S^\ddagger (n)$, where $\R'$ is obtained from $\R$ by removing the RIs with $\Box$,
\item[--] there is $m > n$ with $|m| \leq 2^{|\R|}$ and $\Lbox S\in\rod(m)$,
\item[--] there is $m < n$ with $|m| \leq 2^{|\R|}$ and $\Rbox S\in\rod(m)$.
\end{itemize}
Let $\min_{R,S}$ be the minimal  integer $m$ with $\Rbox S\in\rod(m)$; if it exists, then $|\min_{R,S}| \leq \smash{2^{|\R|}}$. The maximal integer $m$
 with $\Lbox S\in\rod(m)$ is also bounded by  $\smash{2^{|\R|}}$ (if defined at all). 
The following example shows that these integers can indeed be exponential in $|\R|$.

\begin{example}\em
Let $\R$ be the $\DL_{\bool/\core}\Xallop{}$ RBox with the following RIs:
\begin{align*}
&P \sqsubseteq  R_0, && R_i \sqsubseteq \Rnext R_{i + 1\hspace*{-0.5em}\pmod 2},\quad \text{ for } 0 \leq i < 2, && R_1 \sqsubseteq S,\\
& P \sqsubseteq Q_0, && Q_i \sqsubseteq \Rnext Q_{i + 1\hspace*{-0.5em}\pmod 3}, \quad\text{ for } 0 \leq i < 3, && Q_1 \sqsubseteq S, && Q_2 \sqsubseteq S,\\
& && && P \sqsubseteq S,  && P \sqsubseteq \Lbox S.
\end{align*}
Clearly, $\R \models P \sqsubseteq \Rnext^6 \Lbox S$. If instead of the 2- and 3-cycles we use $p_1$, $p_2$, \dots, $p_n$-cycles, where $p_i$ is the $i$th prime number, for $1 \leq i \leq n$, then $\R \models P \sqsubseteq \Rnext^{p_1 \times \dots \times p_n} \Lbox S$.
\end{example}

In any case, the restriction to core RIs brings down the complexity:
\begin{theorem}\label{th:core-ris:pspace}
Checking consistency of $\DL_{\bool/\core}\Xallop{}$ and $\DL_{\horn/\core}\Xnext{}$ KBs is $\PSpace$-complete.
\end{theorem}
\begin{proof}
We encode the given KB $\K$ in \LTL{} following the proof of Lemma~\ref{lemma:horn-red} and representing~$\Psi_\K$ as an \LTL-formula with propositional variables of the form $C_u$ and $R_{u,v}$, for $u,v \in \Xi$, assuming that $R^-_{u,v} = R_{v,u}$; in particular, $(\exists S)_w$ denotes the propositional variable for the unary atom $\exists S(w)$ of  $\Psi_\K$. Sentences~\eqref{eq:TBox}--\eqref{eq:unrav1} can be translated into \LTL{} by simply instantiating all universal quantifiers by constants in~$\Xi$.  Sentences~\eqref{eq:unrav2}, however, require a special treatment.
First, we take
\begin{align}\label{eq:core-RI-conn}
& \Box\bigl(\nxt_1 (\exists S_1)_{w} \to \nxt_2 (\exists S_2)_{w}\bigr),  & \text{ for every } \nxt_1 S_1 \sqsubseteq \nxt_2 S_2 \text{ in } \R,
\end{align}
where each $\nxt_i$ is $\Rnext$, $\Lnext$ or blank. We also require the consequences of $\R$ of the form \mbox{$\exists R \sqsubseteq \nxt^{\max_{R,S}} \Lbox \exists S$} and $\exists R \sqsubseteq \nxt^{\min_{R,S}} \Rbox \exists S$, for all $R$ and $S$ with defined $\max_{R,S}$ and $\min_{R,S}$, that are not entailed by~\eqref{eq:core-RI-conn}. First, we use a \PSpace{} subroutine to check the existence and compute the  $\min_{R,S}$ and $\max_{R,S}$: we guess the binary representation of, say, $\min_{R,S}$ and check whether the conjunction of $\R^\ddagger$, $R^\ddagger(0)$ and $\smash{\Rnext^{\max_{R,S}}} \neg\Lbox S^\ddagger(0)$ is unsatisfiable; as the latter conjunct can be exponential in $|\R|$, we need to represent it by $O(|\R|^2)$ conjuncts using the binary counter (similarly to the proof of Theorem~\ref{thm:bool-krom:hardness}). By Savitch's theorem and the \PSpace{} membership for \LTL{} satisfiability~\cite{DBLP:journals/jacm/SistlaC85}, such a non-deterministic subroutine can be implemented in \PSpace{}.
Next, we encode the required consequences of $\R$: for example, if $\max_{R,S} \geq 0$, then $\exists R \sqsubseteq \nxt^{\max_{R,S}} \Lbox \exists S$ gives rise to
\begin{align}
\label{eq:core-RI-1}
& \Box\bigl(\Rbox \Rdiamond (\exists R)_w \to \Rbox (\exists S)_w\bigr),\\
\label{eq:core-RI-2}
&\Box\bigl((\exists R)_w \land \neg \Rdiamond  (\exists R)_w \to \Rnext^{\max_{R,S}}\Lbox (\exists S)_w\bigr),
\end{align}
where $\Rnext^{\max_{R,S}}$ can again be expressed by $O(|\R|^2)$ binary counter formulas. 
To explain the meaning of~\eqref{eq:core-RI-1}--\eqref{eq:core-RI-2}, consider any $w \in \Delta^\I$ in a model $\I$ of $\K$. If  $w \in (\exists R)^{\I(n)}$ for infinitely many $n > 0$, then $w \in (\exists S)^{\I(n)}$ for all $n$, which is captured by~\eqref{eq:core-RI-1}. Otherwise, there is $n$ such that $w \in (\exists R)^{\I(n)}$ and $w \notin (\exists R)^{\I(m)}$, for $m>n$, whence $w \in (\exists S)^{\I(k)}$, for any $k < n+ \max_{R,S}$, which is captured  by~\eqref{eq:core-RI-2}.
The \LTL{} translation~$\Psi'_\K$ of $\K$ is a conjunction of~\eqref{eq:TBox}--\eqref{eq:unrav1} and \eqref{eq:core-RI-conn}, as well as~\eqref{eq:core-RI-1}--\eqref{eq:core-RI-2} for all $R$ and $S$ with defined $\max_{R,S}$ and their counterparts for $\exists R \sqsubseteq \nxt^{\min_{R,S}} \Rbox \exists S$.
One can show that $\K$ is satisfiable iff $\Psi'_\K$ is satisfiable.

The \PSpace{} lower bound follows from the fact that $\LTL\Xnext_\horn$ is \PSpace-complete and every $\LTL\Xnext_\horn$-formula is equisatisfiable with some  $\DL_{\horn/\core}\Xnext$ KB.
\end{proof}

\subsection{Consistency of $\smash{\DL\Xnext_{\bool/\krom}}$ KBs}\label{sec:KromRI:sat}

Let  $\K = (\T \cup \R,\A)$ be a $\smash{\DL\Xnext_{\bool/\krom}}$ KB. We assume that $\K$ has no nested temporal operators and that, in RIs of the form $\top \sqsubseteq R_1 \sqcup R_2$ and $R_1 \sqcap R_2\sqsubseteq \bot$ from $\R$, both $R_i$ are plain (atemporal) roles.
We construct a \FOLTLI{} sentence $\Phi_\K$ with one variable~$x$. First, we set $\Phi_\K = \bot$ if $(\R, \A)$ is inconsistent, which can be checked in polynomial time by Proposition~\ref{prop:RBox:consistency} and Lemma~5.3 by~\citeA{DBLP:journals/tocl/ArtaleKRZ14}.
If, however, $(\R, \A)$ is consistent, then we treat  basic concepts in $\K$ as unary predicates (see Lemma~\ref{lemma:horn-red})
and define $\Phi_\K$ as a conjunction of the following \FOLTLI{} sentences:
\begin{align}
& \Box\forall x\, \bigl(C_1(x) \land \dots \land C_k(x) \to  C_{k+1}(x) \lor \dots \lor C_{k+m}(x)\bigr), \\\nonumber{} &&&\hspace*{-9em}  \text{ for  all } C_1 \sqcap \dots \sqcap C_k \sqsubseteq C_{k+1} \sqcup \dots \sqcup C_{k+m} \text{ in }\T,\\
\label{R1}
& \Box \forall x\,  \bigl(\exists S_1(x) \lor \exists S_2(x)\bigr)\ \land \  \Box \bigl(\forall x\,  \exists S_1(x) \lor \forall x\, \exists S_2^-(x)\bigr), && \text{ for all  } \top \sqsubseteq S_1 \sqcup S_2 \text{ in }\R,\\
\label{R8}
& \Rnext^\ell A(a), && \text{ for all } A(a,\ell) \text{ in }\A,\\
\label{R8a}
&   \Rnext^\ell \exists P(a) \text{ and } \Rnext^\ell \exists P^-(b), &&\text{ for all } P(a,b,\ell) \text{ in } \A,\\
\label{R7}
& \Box \bigl(\exists x\, \exists P(x) \leftrightarrow \exists x\, \exists P^-(x)\bigr), && \text{ for all role names } P \text{ in }\T,\\
\label{R5}
& \Box\forall x\,  \bigl(\nxt_1 \exists S_1(x) \to \nxt_2 \exists S_2(x)\bigr), && \text{ for all } \nxt_1 S_1 \sqsubseteq \nxt_2 S_2 \\*\nonumber &&& \hspace*{1em}\text{ with  }\R \models \nxt_1 S_1 \sqsubseteq \nxt_2 S_2,
\end{align}
where 
each $\nxt_i$ is $\Rnext$, $\Lnext$ or blank, and $S_1$ can be $\top$ and $S_2$ can be $\bot$ (we assume that if $S_1$ is $\top$, then $\exists S_1(x)$ is simply $\top$ rather than an atom with a unary predicate; similarly, $\exists S_2(x)$ is $\bot$ if $S_2$ is $\bot$).
Note that the problem of checking whether $\R \models \nxt_1 S_1 \sqsubseteq \nxt_2 S_2$ is in~\PTime~\cite[Lemma 5.3]{DBLP:journals/tocl/ArtaleKRZ14}, and so $\Phi_\K$ can be constructed in polynomial time.

\begin{lemma}\label{red}
A $\DL\Xnext_{\bool/\krom}$ KB $\K$ is consistent iff $\Phi_\K$ is satisfiable.
\end{lemma}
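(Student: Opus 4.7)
The plan is to prove both directions of the biconditional, concentrating the technical work in $(\Leftarrow)$.

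For $(\Rightarrow)$, given $\I \models \K$, I would turn $\I$ into a \FOLTLI-interpretation $\mathfrak M$ by reinterpreting basic concepts as unary predicates: set $\Delta^{\mathfrak M} = \Delta^\I$, $a^{\mathfrak M} = a^\I$, and $B^{\mathfrak M(n)} = B^{\I(n)}$ for every basic concept $B$ (so $\exists P$ becomes a predicate symbol whose $\mathfrak M(n)$-extension is $(\exists P)^{\I(n)}$). Then I would inspect the conjuncts of $\Phi_\K$ one by one: the TBox translations, the ABox atoms~\eqref{R8}--\eqref{R8a}, the symmetry~\eqref{R7}, and the entailed RI implications~\eqref{R5} are all immediate from $\I \models \T \cup \R \cup \A$. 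The only nontrivial conjunct is the second clause of~\eqref{R1} for a disjunctive RI $\top \sqsubseteq S_1 \sqcup S_2$: if some $u$ has no $S_1$-successor in $\I(n)$, then $(u, v) \in S_2^{\I(n)}$ for every $v$, so every $v$ has an $S_2^-$-predecessor, yielding $\forall x\, \exists S_2^-(x)$ at $n$.

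For $(\Leftarrow)$, suppose $\mathfrak M \models \Phi_\K$. I would build $\I$ with $\Delta^\I = \Delta^{\mathfrak M}$, $a^\I = a^{\mathfrak M}$, and $A^{\I(n)} = A^{\mathfrak M(n)}$ for every concept name $A$, leaving the definition of each $P^{\I(n)}$ to the main construction. My plan is to populate $P^{\I(n)}$ in three layers: (i) add all ABox-forced pairs $(a^\I, b^\I)$ for $P(a, b, n) \in \A$; (ii) for every $u \in (\exists P)^{\mathfrak M(n)}$ still without a witness, add a single pair $(u, w)$ with $w$ chosen from $(\exists P^-)^{\mathfrak M(n)}$, which is non-empty by conjunct~\eqref{R7}; (iii) add further pairs until every disjunctive RI $\top \sqsubseteq R_1 \sqcup R_2$ in $\R$ is satisfied. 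Once the roles are fixed, I would observe that the first conjunct of $\Phi_\K$ delivers all the CIs (since basic-concept extensions in $\I$ coincide with their $\mathfrak M$-extensions by design), while conjunct~\eqref{R5} delivers every Krom RI consequence $\nxt_1 S_1 \sqsubseteq \nxt_2 S_2$, so the simple-inclusion and disjointness RIs of $\R$ hold automatically.

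The main obstacle will be step (iii): covering $\Delta^\I \times \Delta^\I$ by $R_1^{\I(n)} \cup R_2^{\I(n)}$ for every disjunctive RI without violating any disjointness RI $R_1' \sqcap R_2' \sqsubseteq \bot$ and without inflating $(\exists P)^{\I(n)}$ beyond $(\exists P)^{\mathfrak M(n)}$, since the latter would ruin the CI conjunct. My plan here is to exploit the case split of~\eqref{R1} together with the fact that $\R$ is closed under inverses of RIs, which applied to $\top \sqsubseteq S_1 \sqcup S_2$ also produces $\top \sqsubseteq S_1^- \sqcup S_2^-$ and hence a symmetric clause about predecessors. Combining these one-variable disjunctions, I would show that for every pair $(u, v)$ at moment $n$, at least one of $R_1, R_2$ can accommodate $(u, v)$ consistently with the $\mathfrak M$-extensions of $\exists R_i$ and $\exists R_i^-$. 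Because disjunctive RIs involve only plain atemporal roles by preprocessing, the choices at different $n$ decouple into propositional Krom feasibility problems on the role labels of each pair; the witnesses from step (ii) and the ABox from step (i) fix some labels, and $\mathfrak M$ provides the remaining freedom. A final unfolding of definitions then verifies $\I \models \K$.
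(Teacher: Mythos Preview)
Your $(\Rightarrow)$ direction is fine and matches the paper's argument.

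The $(\Leftarrow)$ direction has a genuine gap. You assert that once roles are fixed, ``conjunct~\eqref{R5} delivers every Krom RI consequence $\nxt_1 S_1 \sqsubseteq \nxt_2 S_2$, so the simple-inclusion and disjointness RIs of $\R$ hold automatically.'' This confuses the unary predicates $\exists S$ in $\mathfrak M$ with the binary role extensions $S^{\I(n)}$ in $\I$. Conjunct~\eqref{R5} only says that $(\exists S_1)^{\mathfrak M(n)} \subseteq (\exists S_2)^{\mathfrak M(n')}$; it says nothing about pairs. For $\I$ to satisfy $\nxt_1 S_1 \sqsubseteq \nxt_2 S_2$ you must explicitly close each pair $(u,v)$ under all entailed role inclusions, and this closure interacts nontrivially with the disjointness RIs and with the disjunctive choices in step~(iii). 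The paper does this via a carefully ordered three-step procedure on pairs (forced disjunctive choices first, then closure under all $\R\models P\sqsubseteq\nxt^k R$, then the remaining free choices), verifying after each step that no disjointness RI is violated.

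More seriously, your decision to set $\Delta^\I = \Delta^{\mathfrak M}$ and pick witnesses from $(\exists P^-)^{\mathfrak M(n)}$ in step~(ii) can fail outright. Take $\T = \{A \sqsubseteq \exists P,\ A \sqsubseteq \exists Q\}$, $\R = \{P \sqcap Q \sqsubseteq \bot\}$, $\A = \{A(a,0)\}$. A model $\mathfrak M$ of $\Phi_\K$ can have domain $\{a,b\}$ with $(\exists P^-)^{\mathfrak M(0)} = (\exists Q^-)^{\mathfrak M(0)} = \{b\}$ (nothing in $\Phi_\K$ prevents this). Your step~(ii) is then forced to add both $(a,b)\in P^{\I(0)}$ and $(a,b)\in Q^{\I(0)}$, violating the disjointness RI. The paper avoids this by \emph{not} reusing $\Delta^{\mathfrak M}$: it starts from $\ind(\A)$, handles all role atoms between ABox individuals via the three-step procedure, and then introduces \emph{fresh} witness individuals $w_P$ (one per role) whose beams are copied from suitable elements of $\mathfrak M$ but whose role pairs start clean. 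This unravelling is not cosmetic; it is what makes the binary construction compatible with the purely unary information encoded in $\Phi_\K$.
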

\begin{proof}
$(\Rightarrow)$ Suppose $\I \models \K$. Treating $\I$ as a temporal FO-interpretation, we show that $\I \models \Phi_\K$. The only non-standard sentences are~\eqref{R1}. Suppose $\I \models\top \sqsubseteq S_1 \sqcup S_2$ and $\I,n \not\models \exists S_1(d)$, for some $d\in\Delta^\I$ and $n\in\Z$. Then, for every $e\in \Delta^\I$, we have $\I,n \models S_2(d,e)$, and so $\I,n \models \exists S_2^-(e)$.

\smallskip

$(\Leftarrow)$ Suppose $\mathfrak{M} \models \Phi_\K$. We require the following property of $\mathfrak{M}$, which follows, for any RI $\top \sqsubseteq S_1 \sqcup S_2$ in $\R$, from~\eqref{R1}: for any $n \in \Z$ and $d,e \in \Delta^{\mathfrak{M}}$, either
\begin{equation}\label{eq:balanced}
  \mathfrak{M}, n \models \exists S_1(d) \text{ and } \mathfrak{M}, n \models \exists S_1^-(e) \quad \text{ or }\quad
   \mathfrak{M}, n \models \exists S_2(d) \text{ and }  \mathfrak{M}, n \models \exists S_2^-(e).
\end{equation}
We construct a model $\I$ of $\K$ in a step-by-step manner, regarding $\I$ as a set of ground atoms. To begin with, we put in $\I$ all $P(a,b,n) \in \A$ and then proceed in three steps.

\emph{Step} 1: If $\top \sqsubseteq S_1 \sqcup S_2$ is in $\R$ with $\mathfrak{M}, n \models \exists S_1(a)$ and either $\mathfrak{M}, n \not\models \exists S_2(a)$ or $\mathfrak{M}, n \not\models \exists S_2^-(b)$, for $n \in \Z$ and $a,b \in \ind(\A)$, then, by~\eqref{eq:balanced}, $\mathfrak{M}, n \models \exists S_1^-(b)$, and we add $S_1(a,b,n)$ to $\I$. We do the same for $\exists S_1^-$, $\exists S_2$ and $\exists S_2^-$ (recall that $\R$ is closed under role inverses).
We now show that the constructed interpretation $\I$ is consistent with $\R$. Suppose otherwise, that is, there are some $P(a,b,n)$ and $R(a,b,n)$ in $\I$ with $P \sqcap R \sqsubseteq \bot$ in $\R$. Two cases need consideration. $(i)$~If both atoms were added at Step~1 because of some RIs $\top \sqsubseteq P \sqcup Q$ and $\top \sqsubseteq R \sqcup S$, then $\R \models P \sqsubseteq S$ and so, by~\eqref{R5},
$\mathfrak{M},n \models \exists S(a)$ and $\mathfrak{M},n \models \exists S^-(b)$, contrary to the definition of Step 1. $(ii)$~Otherwise, as $\A$ is consistent with~$\R$, the only other possibility is that $R(a,b,n) \in \A$ and $P(a,b,n)$ was added at Step~1 because of some RI $\top\sqsubseteq P \sqcup Q$. In this case, $\R \models R \sqsubseteq Q$, whence, by~\eqref{R8a}, $\mathfrak{M},n \models \exists Q(a)$ and $\mathfrak{M},n \models \exists Q^-(b)$, contrary to $P(a,b,n)$ being added at Step~1.

\emph{Step} 2: For all roles $P$ and $R$ and all $n,k\in\Z$, if $P(a,b,n) \in \I$ and $\R \models P \sqsubseteq \nxt^k R$, then we add $R(a,b,n+k)$ to~$\I$. We show that the resulting $\I$ remains consistent with $\R$. Suppose otherwise, that is, there are  $R_1(a,b,n), R_2(a,b,n) \in \I$ with $R_1 \sqcap R_2 \sqsubseteq \bot$ in $\R$. Suppose that $R_i(a,b,n)$, for $i = 1,2$, was added to $\I$ for $\R \models P_i \sqsubseteq \nxt^{k_i} R_i$ and a $P_i$-atom constructed at Step 1. As $\R \models \neg R_i \sqsubseteq \neg \nxt^{-k_i} P_i$, we arrive to a contradiction with the consistency of $\I$ at Step~1.

\emph{Step} 3: For each RI $\top \sqsubseteq P \sqcup Q$ in $\R$, each $a,b\in\ind(\A)$ and each $n\in\Z$ such that $\mathfrak{M}, n \models \exists P(a)$, $\mathfrak{M}, n \models \exists P^-(b)$, $\mathfrak{M}, n \models \exists Q(a)$, $\mathfrak{M}, n \models \exists Q^-(b)$, but neither $P(a,b,n)$ nor $Q(a,b,n)$ are in~$\I$, we add one of them, say $P(a,b,n)$, to~$\I$. The result remains consistent with $\R$: indeed, if we had $S(a,b,n)\in\I$ with $P \sqcap S \sqsubseteq \bot$ in $\R$, then  $Q(a,b,n)$ would have been added to $\I$ at Step 2 because $\R \models S \sqsubseteq Q$. We take the closure of $P(a,b,n)$ as at Step~2 and repeat the process.

We conclude the first stage of constructing $\I$ by extending it with all $B(a,n)$ such that $\mathfrak{M},n \models B(a)$, for $n \in \Z$ and $a \in \ind(\A)$. By construction, $P(a,b,n) \in \I$ implies $\exists P(a,n)$ and  $\exists P^-(b,n) \in \I$, but not necessarily the other way round.
So, suppose $\exists P(a,n) \in \I$ but there is no $P(a,b,n)$ in $\I$. Take a fresh individual $w_P$, add it to the domain of $\I$ and add $P(a,w_P,n)$ to $\I$. By~\eqref{R5} and~\eqref{R7}, the result is consistent with $\R$. By~\eqref{R7}, there is $d \in \Delta^{\mathfrak{M}}$ with $\mathfrak{M},n \models \exists P^-(d)$. So, we add $B(w_P,m)$ to $\I$ for each basic concept $B$ and $m \in \Z$ with $\mathfrak{M},m \models B(d)$. We then apply to $\I$ the three-step procedure described above and repeat this \emph{ad infinitum}.

It is readily seen that the obtained  interpretation $\I$ is a model of $\K$ (the complete argument is left to the reader, noting that a similar
unravelling construction is given in detail in the proof of Lemma~\ref{lem:unravelling}).
\end{proof}

It follows from the proof of Lemma~\ref{red} that it is always possible to construct a model~$\I$ of $\R$ from a model $\mathfrak{M}$ of $\Phi_\K$ if $\mathfrak{M}$ satisfies the domain/range restrictions for the roles in~$\R$ encoded by~\eqref{R1} and~\eqref{R5}. One reason why this encoding is enough is that, e.g., the CI $\Rnext \exists P \sqsubseteq \exists Q$ is sufficient to capture the effect of the RI $\Rnext P \sqsubseteq Q$ on the domains/ranges. However, $\Rbox P \sqsubseteq Q$ does not entail $\Rbox \exists P \sqsubseteq \exists Q$, and it is not clear what domain/range axioms can be used to capture the impact of the RI on domains/ranges in the presence of $\Box$-operators. The complexity of the consistency problem for $\DL\Xbox_{\bool/\krom}$ remains open.

\begin{theorem}\label{thm:bool-krom:hardness}
Checking consistency of $\DL\Xnext_{\bool/\krom}$ and $\DL\Xnext_{\horn/\krom}$ KBs is\linebreak \ExpSpace-complete. 
\end{theorem}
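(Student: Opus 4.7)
The upper bound is immediate from Lemma~\ref{red}: the translation $\K \mapsto \Phi_\K$ runs in polynomial time (the entailment check $\R \models \nxt_1 S_1 \sqsubseteq \nxt_2 S_2$ used for the conjuncts~\eqref{R5} is in \PTime{} by~\cite[Lemma~5.3]{DBLP:journals/tocl/ArtaleKRZ14}); $\K$ is consistent iff $\Phi_\K$ is satisfiable by Lemma~\ref{red}; and \FOLTLI{}-satisfiability is in \ExpSpace{}~\cite{DBLP:journals/jcss/HalpernV89,gkwz}.

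For the matching lower bound, my plan is to reduce from the acceptance problem for $2^n$-space-bounded deterministic Turing machines, a standard \ExpSpace-complete problem. In fact I expect the sublogic $\DL\Xnext_\bool$ with \emph{no} role inclusions to already suffice, so that the Krom RIs play no role in the hardness. Given a machine $M$ and an input $w$ of length $n$, I will construct in polynomial time a Boolean ontology $\TO_{M,w}$ and a singleton ABox $\A_w = \{I(a,0)\}$ such that $(\TO_{M,w}, \A_w)$ is consistent iff $M$ accepts $w$. The cells of successive configurations of $M$ on $w$ are laid out along the time axis in row-major order: $n$ concept names $X_0, \dots, X_{n-1}$ implement a binary counter modulo $2^n$, incremented along $\Rnext$ by the standard ripple-carry Boolean CIs; pairwise-disjoint concept names track tape symbols, head presence and control states; local Boolean CIs with $\Rnext$ enforce the horizontal copy condition (cells untouched by the head are reproduced in the next configuration) and the $M$-transition rules at the head. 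The initial configuration on $w$ is installed at times $0, \dots, 2^n - 1$ via a concept $\mathsf{Init}$ asserted at $a$ by $\A_w$ together with bookkeeping CIs keyed on the counter, and $M$-acceptance reduces to consistency with a final disjointness CI forbidding a dedicated ``reject'' concept.

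The main technical obstacle is that $\DL\Xnext$ lacks $\Rbox$ and $\Lbox$, whereas the row-to-row copy condition compares tape contents at time points that are $2^n$ steps apart rather than adjacent. I plan to address this in the standard way: introduce a handful of auxiliary ``carrier'' concepts and make them rigid via self-inclusions $A \sqsubseteq \Rnext A$ and $A \sqsubseteq \Lnext A$, which are valid $\DL\Xnext_\bool$ CIs and, by the global reading of CIs, effectively emulate $\Box A$ on the concepts we actually need. Combining such rigid carriers with the counter lets one recognise cells of the same column across successive rows using purely local Boolean checks involving $\Rnext$ and $\Lnext$. This is essentially the encoding used in the \ExpSpace lower-bound proof for one-variable first-order \LTL~\cite{DBLP:journals/jcss/HalpernV89} and in its temporal-DL analogues in~\cite{DBLP:journals/tocl/ArtaleKRZ14}, and it transfers to $\DL\Xnext_\bool$ because basic concepts and existential restrictions play exactly the role of unary predicates of \FOLTLI{} at the level of a single individual.
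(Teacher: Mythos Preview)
Your upper bound is correct and matches the paper.

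Your lower bound argument has a genuine gap. You claim that the sublogic $\DL\Xnext_\bool$ with \emph{no} role inclusions already suffices for \ExpSpace-hardness, but this is false: by Theorem~\ref{th:core-ris:pspace}, consistency of $\DL_{\bool/\core}\Xallop$ KBs---and hence of $\DL\Xnext_\bool$ KBs with empty RBox, a special case---is in \PSpace. So the Krom RIs are not optional; they are where the \ExpSpace-hardness comes from.

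The error is in your analogy with \FOLTLI{}. In one-variable first-order \LTL, the \ExpSpace{} lower bound exploits genuine first-order quantification $\forall x$/$\exists x$ over an unbounded domain to address $2^n$ tape cells. Basic concepts $\exists S$ in \DL{} are \emph{not} a substitute for that: on a single ABox individual with no RIs, the behaviour of $a$ is governed by an \LTL{} theory in the surrogates (cf.\ Section~\ref{sec:proj}), and the anonymous witnesses created by $\exists S$ cannot be universally constrained or cross-referenced. Your ``rigid carrier'' trick $A \sqsubseteq \Rnext A$, $A \sqsubseteq \Lnext A$ stays within propositional \LTL{} on~$a$, which is \PSpace; it cannot bridge time points $2^n$ apart with a polynomial ontology.

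The paper's reduction (from the $\N\times(2^n{-}1)$ corridor tiling problem) uses the Krom RIs in an essential way. It creates $2^n{-}1$ distinct $P$-successors of~$a$, one per row of the corridor, and the horizontal (left--right) matching is handled by Boolean CIs with a binary counter. The vertical (up--down) matching between \emph{different} successors, however, is enforced via RIs of the form $\top \sqsubseteq S_i \sqcup Q_i$: these are Krom but not core, and they force, for every pair of domain elements, a choice that can then be read off at~$a$ through $\exists Q_i$ and compared at adjacent time instants. This cross-element coordination is exactly what your single-individual, RI-free construction cannot express.
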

\begin{proof}
The upper bound follows from Lemma~\ref{red}. We prove the matching lower bound by reduction of the non-halting problem for deterministic Turing machines with exponential tape, as in the proof of Theorem~\ref{th:bool-sat-expspace}. We also make the same assumptions as in that proof and use the concept names $C_a$ and $H_{q,a}$ with the same meaning. This time, though, the computation is encoded on $2^n$ successors of a single ABox element, $e$; see Fig.~\ref{fig:horn-krom:hardness}.

\begin{figure}[t]
\centerline{%
\begin{tikzpicture}[xscale=1.25, yscale=0.7, 
rpoint/.style={rectangle,semithick,fill=gray!20,draw=black,minimum size=3mm,inner sep=0pt, 
                 append after command={  (\tikzlastnode.north east) edge[draw](\tikzlastnode.south west) }, 
                 append after command={  (\tikzlastnode.south east) edge[draw](\tikzlastnode.north west) }}]\footnotesize
\begin{scope}\small
\draw[object-timeline] (0.5, 0) -- ++(11.2,0); \node at (0.1, 0) {$e^\I$};
\draw[object-timeline] (0.5, 1.5) -- ++(11.2,0); \node at (0.1, 1.5) {$v_1$};
\draw[object-timeline] (0.5, 3) -- ++(11.2,0); \node at (0.2, 3) {$\dots$};
\draw[object-timeline] (0.5, 4.5) -- ++(11.2,0); \node at (0.2, 4.5) {$v_{2^n - 1}$};
\draw[object-timeline] (0.5, 6) -- ++(11.2,0); \node at (0.2, 6) {$v_{2^n}$};
\end{scope}
\begin{scope}\scriptsize
\draw[time-guideline] (1,-0.6) -- ++(0,7.1); \node at (1,-0.9) { $0$};
\draw[time-guideline] (2,-0.6) -- ++(0,7.1); \node at (2,-0.9) {$1$};
\draw[time-guideline] (3,-0.6) -- ++(0,7.1); \node at (3,-0.9) {$\dots$};
\draw[time-guideline] (4,-0.6) -- ++(0,7.1); \node at (4,-0.9) {$2^n - 1$};
\draw[time-guideline] (5,-0.6) -- ++(0,7.1); \node at (5,-0.9) {$2^n$};
\draw[time-guideline] (6,-0.6) -- ++(0,7.1); \node at (6,-0.9) {$2^n + 1$};
\draw[time-guideline] (7,-0.6) -- ++(0,7.1); \node at (7,-0.9) {$\dots$};
\draw[time-guideline] (8,-0.6) -- ++(0,7.1); \node at (8,-0.9) {$2\cdot 2^n - 1$};
\draw[time-guideline] (9,-0.6) -- ++(0,7.1); \node at (9,-0.9) {$2 \cdot 2^n$};
\draw[time-guideline] (10,-0.6) -- ++(0,7.1); \node at (10,-0.9) {$2 \cdot 2^n +1$};
\draw[time-guideline] (11,-0.6) -- ++(0,7.1); \node at (11,-0.9) {$\dots$};
\end{scope}
\begin{scope}[semithick]\scriptsize
\draw (1.8,-1.1) -- ++(0,-0.2) -- ++(3.4,0) node[midway,below] {initial configuration} -- ++(0,0.2);
\draw (5.8,-1.1) -- ++(0,-0.2) -- ++(3.4,0) node[midway,below] {configuration 2} -- ++(0,0.2);
\draw (9.8,-1.1) -- ++(0,-0.2) -- ++(1.9,0) node[pos=0.55,below] {configuration 3};
\end{scope}
\node[rpoint, label=below left:{$I,F$}, label={[above,fill=white]{$C_{\#}$}}] (a1) at (1,0) {};
\node[qpoint, label=below left:{$F$}] (a2) at (2,0) {};
\node[rpoint, label={[above,fill=white]{$H_{q_0,a_1}$}}] (b2) at (2,1.5) {};
\node[qpoint, label=below left:{$F$}] (a3) at (3,0) {};
\node[rpoint, label={[above,fill=white]{$C_{a_i}$}}] (b3) at (3,3) {};
\node[ppoint, label=below left:{$F$}, label=below right:{$E$}] (a4) at (4,0) {};
\node[rpoint, label={[above,fill=white]{$C_{\#}$}}] (b4) at (4,4.5) {};
\node[ppoint, label=below left:{$F$}, label=below right:{$E$}] (a5) at (5,0) {};
\node[rpoint, label={[above,fill=white]{$C_{\#}$}}] (b5) at (5,6) {};
\node[qpoint, label=below right:{$E$}] (a6) at (6,0) {};
\node[rpoint, label={[above,fill=white]{$C_{a}$}}] (b6) at (6,1.5) {};
\node[qpoint, label=below right:{$E$}] (a7) at (7,0) {};
\node[rpoint, label={[above,fill=white]{$C_{a'}$}}] (b7) at (7,3) {};
\node[qpoint, label=below right:{$E$}] (a8) at (8,0) {};
\node[rpoint, label={[above,fill=white]{$C_{\#}$}}] (b8) at (8,4.5) {};
\node[qpoint, label=below right:{$E$}] (a9) at (9,0) {};
\node[rpoint, label={[above,fill=white]{$C_{\#}$}}] (b9) at (9,6) {};
\node[qpoint, label=below right:{$E$}] (a10) at (10,0) {};
\node[rpoint, label={[above,fill=white]{$C_{a''}$}}] (b10) at (10,1.5) {};
\node[qpoint, label=below right:{$E$}] (a11) at (11,0) {};
\node[rpoint, label={[above,fill=white]{$C_{a'''}$}}] (b11) at (11,3) {};
\begin{scope}[ultra thick]\scriptsize
\draw[->,out=45,in=135,looseness=20] (a1.east) to node[above,fill=white] {$QC_{\#}$} (a1.west);
\draw[->] (a2) to node[left] {$\boldsymbol P$} node[right,pos=0.42] {$QH_{q_0,a_1}$} (b2);
\draw[->] (a3) to node[left,pos=0.4] {$\boldsymbol P$} node[right,pos=0.37] {$QC_{a_i}$} (b3);
\draw[->] (a4) to node[left] {$\boldsymbol P_{\#}$} node[right,pos=0.5] {$QC_{\#}$} (b4);
\draw[->] (a5) to node[left,pos=0.63] {$\boldsymbol P_{\#}$} node[right,pos=0.63] {$QC_{\#}$} (b5);
\draw[->] (a6) to node[right,pos=0.42] {$QC_{a}$} (b6);
\draw[->] (a7) to node[right,pos=0.36] {$QC_{a'}$} (b7);
\draw[->] (a8) to node[right,pos=0.48] {$QC_{\#}$} (b8);
\draw[->] (a9) to node[right,pos=0.63] {$QC_{\#}$} (b9);
\draw[->] (a10) to node[right,pos=0.42] {$QC_{a''}$} (b10);
\draw[->] (a11) to node[right,pos=0.36] {$QC_{a'''}$} (b11);
\end{scope}
\begin{scope}[<->,dashed,thick]
\draw(b2) -- (b6);
\draw(b6) -- (b10);
\draw(b3) -- (b7);
\draw(b7) -- (b11);
\draw(b4) -- (b8);
\draw(b5) -- (b9);
\draw (b2) -- (b3);
\draw (b3) -- (b4);
\draw (b4) -- (b5);
\draw (b6) -- (b7);
\draw (b7) -- (b8);
\draw (b8) -- (b9);
\draw (b10) -- (b11);
\draw[<-] (b8) -- (11.7, 4.5);
\draw[<-] (b9) -- (11.7, 6);
\draw[<-] (b11) -- (11.7, 3.95);
\end{scope}
\end{tikzpicture}
}%
\caption{Structure of models in the proof of Theorem~\ref{thm:bool-krom:hardness}. The horizontal dashed arrows connect representations of the same cell in successive configurations, and diagonal dashed arrows connect successive cells in the same configuration. Roles $SC_a$ and $SH_{q,a}$ are not shown, and roles $QC_a$ and $QH_{q,a}$ are only partially depicted: if, for example, $QC_a$ contains some $(e^\I,v_i)$ at $t$, then it contains all $(u,v)$ at~$t$.}\label{fig:horn-krom:hardness}
\end{figure}
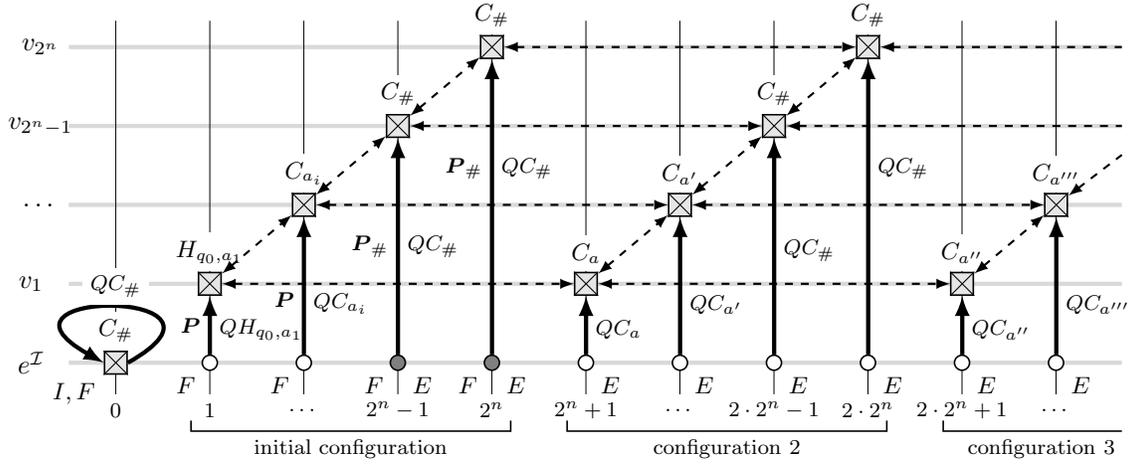%

If the input $\avec{a}$ is $a_1,\dots,a_m$ and $q_0$ is the initial state of the Turing machine $M$, then the ABox consists of the following:
\begin{multline*}
I(e, 0), \ C_{\#}(e,0), \ P(e, v_1, 1), \ H_{q_0,a_1}(v_1, 1), \ P(e,v_2,2), \ C_{a_2}(v_2, 2), \ \dots,\\ P(e,v_m,m), \ C_{a_m}(v_m, m), \ E(e, m + 1),
\end{multline*}
where concept names $I$ and $E$ mark the start and the end of the input $\avec{a}$, respectively. Then we use the CIs
\begin{equation*}
I \sqsubseteq \Rnext^{2^n} F,\qquad
F  \sqsubseteq \Lnext F,\qquad
E  \sqsubseteq \Rnext E,\qquad
E \sqcap F  \sqsubseteq \exists P_{\#}, \qquad 
\exists P_{\#} \sqsubseteq C_{\#}
\end{equation*}
to fill the rest of the tape in the initial configuration with $\#$, similarly to the proof of Theorem~\ref{th:bool-sat-expspace}. So, we obtain $m$ `named' $P$-successors $v_1,\dots, v_m$ of $e$ and $2^n - m$ `anonymous' $P_{\#}$-successors of $e$, which will be denoted $v_{m+1},\dots,v_n$. Each of the successors is labelled with the $H_{q,a}$ or $C_a$ from the initial configuration of $M$ on $\avec{a}$. Next, we use a Krom RI and CI to `propagate' the information from the labels $C_a$ and $H_{q,a}$ to other domain elements: for example,
\begin{equation*}
\top \sqsubseteq SC_a \sqcup QC_a \qquad\text{ and }\qquad C_a \sqcap \exists SC_a^- \sqsubseteq \bot
\end{equation*}
mean that, if any element $v$ is labelled by $C_a$ at moment $n$, then any pair $(v',v)$ must belong  to role $QC_a$ at moment $n$. We use similar RIs and CIs with roles $SH_{q,a}$ and  $QH_{q,a}$ for labels~$H_{q,a}$. Moreover, we state that the domains of distinct $QC_a$ and $QH_{q,a}$ are disjoint:
\begin{multline*}
\exists QC_a \sqcap \exists QC_{a'} \sqsubseteq \bot, \text{ for } a\ne a', \qquad \exists QC_a\sqcap \exists QH_{q,a} \sqsubseteq \bot,\\ \exists QH_{q,a} \sqcap \exists QH_{q'a'} \sqsubseteq \bot, \text{ for } (q,a)\ne (q',a'). 
\end{multline*}
This ensures that if a successor $v_i$ of $e$ is labelled with $C_a$, then $(e,v_i)$ and indeed every $(e,v_j)$ belongs to $QC_a$; the same applies to labels $H_{q,a}$ and roles $QH_{q,a}$. Next, we encode computations of $M$ with tape alphabet $\Gamma$ and transition function $\delta\colon Q\times\Gamma \to Q \times \Gamma \times \{R, L\}$ by means of the following CIs, for  $\delta(q,a) = (q',b,R)$ and $a',a''\in\Gamma$:
\begin{align*}
 C_{a'} \sqcap \Rnext \exists QH^-_{q,a}  & \sqsubseteq  \Rnext^{2^n} C_{a'},\\
 H^-_{q,a} & \sqsubseteq  \Rnext^{2^n} C_{b},\\
 \Lnext \exists QH^-_{q,a} \sqcap  C_{a''} & \sqsubseteq  \Rnext^{2^n} H_{q', a''}.
%
 %
\end{align*}
Note that only one of the  concepts on the left-hand side of the CIs is a label ($C_a$ or $H_{q,a}$), while the other (if present) is the range of the respective role: this ensures that labels occur only every $2^n$ moments on each of the successors $v_i$. We use similar CIs for $\delta(q,a) = (q',b,L)$ and $a',a''\in\Gamma$. We also need the following CIs, for  $a,a',a''\in\Gamma$:
\begin{align*}
\Lnext \exists QC_{a'} \sqcap C_a \sqcap \Rnext \exists QC_{a''} & \sqsubseteq \Rnext^{2^n} C_a,
\end{align*}
to ensure that the contents of the tape does not change unless it's overwritten by the head of the Turing machine.
Finally, CIs
\begin{equation*}
H_{q,a} \sqsubseteq \bot, \hspace*{2em} \text{for each accepting and rejecting state } q \text{ and } a\in\Gamma,
\end{equation*}
ensures that accepting and rejecting states never occur in the encodings of computations.
These CIs are of exponential size, and our next task is to show how to convert them into a $\DL_{\horn/\krom}\Xnext{}$ ontology of polynomial size.

Consider, for example, a CI $I \sqsubseteq \Rnext^{2^n} F$. We express it using the following CIs:
\begin{align*}
  & I \sqsubseteq \Rnext (\bar B_{n-1} \sqcap \dots \sqcap \bar B_0), \\
  & B_{n-1} \sqcap \dots \sqcap B_{0} \sqsubseteq F,\\
& \bar B_{k}  \sqcap B_{k-1} \sqcap \dots \sqcap B_{0}\sqsubseteq \Rnext (B_{k} \sqcap  \bar B_{k-1}\sqcap \dots \sqcap \bar B_0), && \text{ for } 0 \leq k < n, \\
%
%
& \bar B_{j} \sqcap \bar B_{k} \sqsubseteq \Rnext \bar B_{j}  \quad \text{ and } \quad  B_{j} \sqcap \bar B_{k} \sqsubseteq \Rnext B_{j}, && \text{ for }  0 \leq k < j < n,
\end{align*}
which have to be converted into normal form~\eqref{axiom1}. Intuitively, they encode a binary counter from $0$ to $2^n - 1$, where~$\bar B_i$ and~$B_i$ stand for `the $i$th bit of the counter is $0$' and, respectively, `the $i$th bit of the counter is $1$'\!. Other CIs of the form $C_1\sqsubseteq \Rnext^{2^n} C_2$ are handled similarly, each with a fresh set of concept names $\bar B_i$ and~$B_i$.
\end{proof}

The next result follows from Lemma~\ref{red} and an observation that $\Phi_\K$ is a formula in the Krom fragment of \FOLTLI{}, the satisfiability in which is known to be \PSpace-complete~\cite[Theorem~9]{DBLP:conf/time/ArtaleKLWZ07}: 
\begin{theorem}\label{thm:krom:qtl-krom}
Checking consistency of $\DL\Xnext_{\krom}$ KBs is in \PSpace. 
\end{theorem}


\section{Rewriting $\DL_{\bool/\horn}\Xallop$ OMAQs by Projection to \LTL}\label{sec:DL-Lite}

Our aim in this section is to identify classes of FO-rewritable $\DL_{\bool/\horn}\Xallop$ OMAQs, which will be done by projecting them to \LTL{} OMAQs and using the classification of~\citeA{AIJ21}. Initial steps in this direction have been made in Section~\ref{sec:proj} for OMAQs without $\bot$ and interacting concepts and roles. The results of this section are summarised in Table~\ref{TDL-table-omaq}.

\begin{table}[t]
\centering%
\tabcolsep=4pt
\begin{tabular}{ccc}\toprule
 \rule[-3pt]{0pt}{13pt} & $\DL_{\frag/\fragr}\Xbox$ &  $\DL_{\frag/\fragr}\Xnext$  {\footnotesize and} $\DL_{\frag/\fragr}\Xallop$  \\\midrule
$*/(\mathit{g}\text{-})\bool$ & \coNP-hard {\scriptsize [Th.~\ref{app:th:coNP-gbool}]} & undecidable {\scriptsize [Th.~\ref{thm:undec}]} \\\midrule
$*/\krom$ & ? & ? \\\midrule
$\bool/\horn$, $\horn$ & \FO(\RPR) {\scriptsize [Th.~\ref{cor:dl-omaq-rewritability}/\ref{cor:role-ompiqs}]}, {\small \NCo-hard} {\scriptsize [Th.~\ref{thm:unexpected}\,(\emph{i})]}
& \multirow{4}{*}[-1\dimexpr \aboverulesep + \belowrulesep + \cmidrulewidth]{\renewcommand{\tabcolsep}{0pt}\begin{tabular}{c}\FO(\RPR)  {\scriptsize [Th.~\ref{cor:dl-omaq-rewritability}\,(\emph{ii})/\ref{cor:role-ompiqs}\,(\emph{iii})]}\\[4pt] {\small \NCo-hard}\\[-4pt] {\scriptsize \cite[Th.~10]{AIJ21}}\end{tabular}} \\ \cmidrule(lr){2-2}
$\krom/\horn$, $\core/\horn$ & $\FOE$ {\scriptsize [Th.~\ref{cor:dl-omaq-rewritability}~(\emph{i})/\ref{cor:role-ompiqs}\,(\emph{ii}) \& Th.~\ref{thm:unexpected}\,(\emph{ii})]}  &  \\ \cmidrule(lr){2-2}
$*/\rhorn$  & \multirow{3}{*}[-0.5\dimexpr \aboverulesep + \belowrulesep + \cmidrulewidth]{$\FO(<)$ {\scriptsize [Th.~\ref{ex:dl-core-mon-fo}/\ref{cor:role-ompiqs}\,(\emph{i})]}} & \\[3pt]
$\bool/\core$, $\horn/\core$  & & \\\cmidrule(lr){3-3}
$\krom/\core$, $\core$  & &  $\FOE$ {\scriptsize [Th.~\ref{cor:dl-omaq-rewritability}~(\emph{i})/\ref{cor:role-ompiqs}\,(\emph{ii})]}
 \\\bottomrule
\end{tabular}
\caption{Rewritability and data complexity of $\DL_{\frag/\fragr}^{\op}$ OMAQs and \OMPIQ{}s of the form~$(\TO,\varrho)$ with a positive temporal role $\varrho$, where * denotes any of $\bool$, $\horn$, $\krom$ or $\core$.
}
\label{TDL-table-omaq}
\end{table}

We begin by showing how to get rid of $\bot$ from $\DL_{\bool/\horn}\Xallop$ ontologies.

\begin{lemma}\label{lemma:consistency}
Let $\lang$ be one of $\FO(<)$, $\FOE$, or $\FO(\RPR)$, $\fragr \in \{\core,\rhorn,\horn\}$.
Suppose $\TO = \T \cup \R$ is a $\DL_{\frag/\fragr}^\op$ ontology. Denote by $\TO'$ the $\bot$-free $\DL_{\frag/\fragr}^\op$ ontology obtained by removing all  disjointness axioms $\vartheta_1 \sqcap \dots \sqcap \vartheta_k \sqsubseteq \bot$ from $\TO$, and let $\varkappa_\bot' = \Ldiamond\Rdiamond\varkappa_\bot$  and $\varrho_\bot' = \Ldiamond\Rdiamond\varrho_\bot$ with
\begin{equation*}
\varkappa_\bot \ = \ \hspace*{-2.5em}\bigsqcup_{C_1 \sqcap \dots \sqcap C_k \sqsubseteq \bot \text{ in } \T}\hspace*{-2.5em} (C_1 \sqcap \dots \sqcap C_k) \ \sqcup \hspace*{-1.5em}\bigsqcup_{\begin{subarray}{c}S \text{ is a role with}\\\TO\models S\sqsubseteq \bot\end{subarray}}\hspace*{-1.75em} \exists S 
\qquad\text{ and }\qquad
\varrho_\bot  \ =  \ \hspace*{-2.5em}\bigsqcup_{R_1 \sqcap \dots \sqcap R_k \sqsubseteq \bot \text{ in } \R}\hspace*{-2.5em} (R_1 \sqcap \dots \sqcap R_k).
\end{equation*}
Assume that $\rew^\T_\bot(x,t)$ and $\rew^\R_\bot(x,y,t)$ are $\lang$-rewritings of the \OMPIQ{}s $(\TO', \varkappa'_\bot)$ and $(\TO',\varrho'_\bot)$, respectively, and 
$\chi_\bot = \exists x,t\,\rew_\bot^\T(x, t)\  \lor \  \exists x,y,t\,\rew_\bot^\R(x,y, t)$.
Then the following hold\emph{:} 
\begin{itemize}
\item[$(i)$] for any basic concept~$B$, if $\rew'(x,t)$ is an $\lang$-rewriting of the \OMPIQ{} $(\TO',  \varkappa'_\bot \sqcup B)$, then
$\rew'(x,t)  \lor  \chi_\bot$
is an $\lang$-rewriting of the \OMAQ{}~$(\TO,B)$\textup{;}
\item[$(ii)$]  for any positive temporal role $\varrho$, if $\rew'(x,y,t)$ is an $\lang$-rewriting of the \OMPIQ{} $(\TO',\varrho)$, then
$\rew'(x,y,t)   \lor \chi_\bot$
is an $\lang$-rewriting of the \OMPIQ{}~$(\TO,\varrho)$.
\end{itemize}
\end{lemma}
\begin{proof}
As we establish below, $\chi_\bot$ is true when evaluated over an ABox $\A$ (that is, $\SA \models \chi_\bot$) iff $\TO$ and $\A$ are inconsistent. 
Thus, it suffices to show that, for any ABox $\A$,
\begin{itemize}
\item[--] $\TO$ and $\A$ are consistent iff $\ans(\TO',\varkappa'_\bot,\A) = \emptyset$ and $\ans(\TO',\varrho'_\bot,\A) = \emptyset$;

\item[--] if $\TO$ and $\A$ are consistent, then $\ans^{\Z}(\TO,B,\A) = \ans^{\Z}(\TO', \varkappa'_\bot \sqcup B,\A)$, for any basic concept $B$, and $\ans^{\Z}(\TO,\varrho,\A) = \ans^{\Z}(\TO', \varrho,\A)$, for any positive temporal role~$\varrho$.
\end{itemize}
If $\TO$ and $\A$ are consistent, any model $\I$ of $\TO$ and~$\A$ is trivially a model of $\TO'$ with $\varkappa_{\smash{\bot}}^{\I(n)} = \emptyset$ and $\varrho_{\smash{\bot}}^{\I(n)} = \emptyset$, for all $n\in \Z$. So  $\ans(\TO',\varkappa'_{\smash{\bot}},\A) = \emptyset$ and $\ans(\TO',\varrho'_{\smash{\bot}},\A) = \emptyset$. Also, we clearly have $\ans^{\Z}(\TO,B,\A) \supseteq \ans^{\Z}(\TO',\varkappa'_\bot \sqcup B,\A)$ and $\ans^{\Z}(\TO,\varrho,\A) \supseteq \ans^{\Z}(\TO',\varrho,\A)$. 

Next, suppose $\ans(\TO', \varkappa'_\bot,\A) = \emptyset$ and $\ans(\TO', \varrho'_\bot,\A)=\emptyset$. We show how to construct a model $\I$ of $(\TO,\A)$. By definition, $\TO'$ and $\A$ are consistent.  Since $\ans(\TO', \varkappa'_\bot,\A) = \emptyset$,  for each $a \in \ind(\A)$, there is a model $\I_a$ of~$(\TO',\A)$ such that $a^{\I_{a}} \notin \varkappa_{\smash{\bot}}^{\I_{a}(n)}$ for all~$n \in \Z$. Also, for each role $S$ consistent with $\TO$, there is a model~$\I_S$ of $(\TO',\{S(w,u,0)\})$ with $w^{\I_S} \notin \varkappa_{\smash{\bot}}^{\I_S(n)}$ for all $n \in \Z$.
We take, for each $a \in \ind(\A)$, the beam $\beam_a$ for $a^{\I_a}$ in $\I_a$, and, for each role $S$ consistent with $\TO$, the beam $\beam_{w_S}$ of $w^{\I_S}$ in~$\I_S$. Observe that  $\varrho_{\smash{\bot}}$ and the second group of disjuncts in $\varkappa_{\smash{\bot}}$ ensure that, for the chosen beams,  there are compatible $\R'$-canonical rods, which also satisfy disjointness axioms in $\R$.
We then apply Lemma~\ref{lem:unravelling} to obtain a model~$\I$ of $(\TO', \A)$. By construction, $\varkappa_{\smash{\bot}}^{\I(n)}= \emptyset$, for all $n\in\Z$, and so $\I\models\T$. Since the $\R'$- and $\R$-canonical rods coincide, we also have $\I\models\R$. Thus, $\I$ is a model of $(\TO,\A)$.

It remains to show that $\ans^{\Z}(\TO,B,\A) \subseteq \ans^{\Z}(\TO',\varkappa'_{\smash{\bot}} \sqcup B,\A)$ for consistent $\TO$ and~$\A$. Suppose $(a,\ell) \notin \ans^{\Z}(\TO', \varkappa'_{\smash{\bot}}\sqcup B,\A)$. Then there is a model $\I_a$ of $(\TO',\A)$ such that $a^{\I_a} \notin (\varkappa'_{\smash{\bot}}\sqcup B)^{\I_{a}(\ell)}$, whence  $a^{\I_a} \notin B^{\I_{a}(\ell)}$ and $a^{\I_a} \notin \varkappa_{\smash{\bot}}^{\I_{a}(n)}$, for all $n\in\Z$. Take the beam $\beam_a$ of $a^{\I_a}$ in $\I_a$. As $\TO$ and $\A$ are consistent, $\ans(\TO',\varkappa'_{\smash{\bot}},\A) = \emptyset$, and so, for every $b \in \ind(\A)\setminus \{ a\}$, there is a model $\I_b$ of $(\TO',\A)$ such that $b^{\I_b} \notin \varkappa_{\smash{\bot}}^{\I_{b}(n)}$, for all $n\in \Z$. Take the beam $\beam_b$  of~$b^{\I_b}$ in $\I_b$. We now construct a model $\I$ of $(\TO,\A)$ with $a^\I\notin B^{\I(\ell)}$ from the chosen beams using Lemma~\ref{lem:unravelling} in the same way as in the previous paragraph. It follows that  $(a,\ell) \notin \ans^{\Z}(\TO, B,\A)$. 
The case of role \OMPIQ{}s is similar except that no individual requires any special treatment like $a$ above.
\end{proof}

It is to be noted that the asymmetricity of the \OMPIQ{}s $(\TO',  \varkappa'_\bot \sqcup B)$ and $(\TO',\varrho)$ in $(i)$ and $(ii)$ above is explained by the fact that $\frag$ can be $\bool$ or $\krom$, while $\fragr$ is always $\horn$. To illustrate, consider $\TO$ with CIs $C \sqsubseteq B \sqcup E$ and  $D \sqcap E \sqsubseteq \bot$. Clearly, $\TO\models C \sqcap D \sqsubseteq B$, and so, over $\A = \{ C(a,0), D(a,0) \}$, which is consistent with $\TO$, we have $(\TO,\A)\models B(a,0)$. On the other hand, $(\TO',\A) \not \models B(a,0)$ because there are models $\I$ of $(\TO',\A)$ with $a^\I \notin B^{\I(0)}$ and $a^\I \in E^{\I(0)}$. The disjunct $\varkappa_\bot' = \Ldiamond\Rdiamond (D \sqcap E)$ is needed to capture such models.

Proposition~\ref{prop:roleOMIQs} allows us to project the $\bot$-free $\DL_{\smash{\frag/\fragr}}^\op$ \OMPIQ{}s of the form $(\TO', \varrho)$ and $(\TO', \varrho_{\smash{\bot}}')$ to  \LTL{} \OMPIQ{}s. Thus, it remains to deal with $\bot$-free $\DL_{\smash{\frag/\fragr}}^\op$ \OMPIQ{}s of the form $(\TO', \varkappa_{\smash{\bot}}'\sqcup B)$ and  $(\TO', \varkappa_{\smash{\bot}}')$. Observe  that $\varkappa_\bot'$ contains no \emph{qualified existential restrictions}---that is, subqueries of the form $\exists S.\lambda$ with $\lambda \ne \top$---and conjunctions in $\varkappa_\bot'$ are prefixed by $\Ldiamond\Rdiamond$ and match the form of CIs allowed by~$\frag$. So we next focus on projecting such \OMPIQ{}s with interacting concepts and roles to the \LTL{} axis.


\subsection{$\lang$-Rewritability of $\DL_{\bool/\horn}\Xallop{}$ OMAQs}\label{sec:rewrOMAQ}

We begin with two examples illustrating the interaction between the DL and temporal dimensions in $\DL_{\bool/\horn}\Xallop{}$ we need to take into account when constructing rewritings.

\begin{example}\label{ex:connexion}\em
Let $\T = \{ \, B \sqsubseteq \exists P, \ \exists Q \sqsubseteq  A\,\}$ and $\R = \{\, P \sqsubseteq \Rnext Q \,\}$. An obvious idea of constructing a rewriting for the OMAQ $\q = (\T \cup \R,\, A)$ would be to find first a rewriting of the $\LTL$ \OMPIQ{} $(\T,A)^\dagger$ obtained from $(\T, A)$ by replacing the basic concepts $\exists P$ and~$\exists Q$ with their surrogates $(\exists P)^\dagger = E_P$ and $(\exists Q)^\dagger= E_Q$, respectively. This would give us the first-order query $A(t) \lor E_Q(t)$. By restoring the intended meaning of $A$ and $E_Q$ (see the proof of Proposition~\ref{prop:conceptOMAQs}), we would then obtain $A(x,t) \lor \exists y \, Q(x,y,t)$. The second step would be to rewrite, using the RBox~$\R$, the atom $Q(x,y,t)$ into $Q(x,y,t) \lor P(x,y,t-1)$. Alas, the resulting formula
\begin{equation*}
A(x,t) \lor \exists y \, \bigl(Q(x,y,t) \lor P(x,y, t-1) \bigr)
\end{equation*}
falls short of being an $\FO(<)$-rewriting of $\q$ as it does not return the certain answer $(a,1)$ over $\A = \{\, B(a,0) \,\}$.
The reason is that, in our construction, we did not take into account the CI $\exists P \sqsubseteq \Rnext \exists Q$, which is a consequence of $\R$. If we now add the `connecting axiom'  $(\exists P)^\dagger \sqsubseteq \Rnext (\exists Q)^\dagger$ to $\T^\dagger$, then in the first step we obtain $A(t) \lor E_Q(t) \lor E_P(t-1) \lor B(t-1)$, which gives us the correct $\FO(<)$-rewriting
\begin{equation*}
A(x,t) \ \ \lor \ \ \exists y \, \big( Q(x,y,t)  \lor P(x,y, t-1)\big) \ \ \lor \ \ \exists y \, P(x,y, t-1) \ \ \lor \ \ B(x, t-1)
\end{equation*}
of $\q$, where the third disjunct is obviously redundant and can be omitted.
\end{example}

\begin{example}\label{ex:shrub}\em
Consider now the OMAQ $\q = (\T \cup \R,\, A)$ with
\begin{equation*}
\T = \{\,\exists Q \sqsubseteq \Lbox A\,\}, \qquad
\R = \{\, P \sqsubseteq \Rbox P_1,\ T \sqsubseteq \Rbox T_1, \ T_1\sqsubseteq \Rbox T_2,\ P_1 \sqcap T_2 \sqsubseteq Q \,\}.
\end{equation*}
The two-step construction outlined in Example~\ref{ex:connexion} would give us first the formula
\begin{equation*}
\Phi(x,t) \ \ =\ \  A(x,t) \ \lor \ \exists t' \, \big( (t < t') \land \exists y \, Q(x,y,t') \big)
\end{equation*}
as a rewriting of $(\T,A)$. It it readily seen that the following formula is a rewriting of $(\R,Q)$:
\begin{multline*}
\Psi(x,y,t') \ \ = \ \  Q(x,y,t') \ \lor{} \
\big( \bigl[ P_1(x,y,t')  \lor  \exists t'' \, \bigl( (t'' < t') \land P(x,y,t'') \bigr) \bigr]  \land{}\\ \bigl[ T_2(x,y,t') \lor \exists t'' \, \bigl( (t'' < t') \land \bigl(T_1(x,y,t'')\lor \exists t'''\,\bigl((t'''<t'') \land T(x,y,t''')\bigr)\bigr) \bigr) \bigr] \bigr).
\end{multline*}
However, the result of replacing $Q(x,y,t')$ in $\Phi(x,t)$ with $\Psi(x,y,t')$ is not an $\FO$-rewriting of $(\TO, A)$: when evaluated over $\A = \{\, T(a,b,0),\ P(a,b,1)\,\}$,  it does not return the certain answers $(a,0)$ and $(a,1)$; see Fig.~\ref{fig:ex:connexion}. (Note that these answers would  be found had we evaluated the obtained `rewriting' over $\Z$ rather than $\{0,1\}$.)

\begin{figure}[t]
\centerline{
\begin{tikzpicture}[xscale=1.2, yscale=0.9]\footnotesize
\foreach \y in {0,1} {
\draw[object-timeline] (-5.4,\y) -- ++(11,0);
}
\foreach \x/\i in {-4/0,-2/1,0/2,2/3,4/4} {
\node at (\x,-0.6) {\scriptsize$\i$};
\draw[time-guideline] (\x, -0.4) -- ++(0,1.7);
}
\fill[gray,fill opacity=0.2] (-4,-0.4) rectangle +(2,1.7);
%
%
\node (a) at (-5.8,0) {$a$};
\begin{scope}\scriptsize
\node (a0) at (-4,0) [wpoint]{$A$};
\node (a1) at (-2,0) [wpoint]{$A$};
\node (a2) at (0,0) [wpoint]{$A$};
\node (a3) at (2,0) [wpoint]{$A$};
\node (a4) at (4,0) [wpoint]{$A$};
\end{scope}
\node (c) at (-5.8,1) {$b$};
\node (c0) at (-4,1) [qpoint]{};
\node (c1) at (-2,1) [qpoint]{};
\node (c2) at (0,1) [qpoint]{};
\node (c3) at (2,1) [qpoint]{};
\node (c4) at (4,1) [qpoint]{};
\begin{scope}[ultra thick,draw=black!90]\scriptsize
\draw[->] (a0)  to node [right]{$\boldsymbol{T}$} (c0);
\draw[->] (a1)  to node [right]{$\boldsymbol{P},T_1$}  (c1);
\draw[->] (a2)  to node [right]{$P_1,T_1,T_2,Q$}  (c2);
\draw[->] (a3)  to node [right]{$P_1,T_1,T_2,Q$}  (c3);
\draw[->] (a4)  to node [right]{$P_1,T_1,T_2,Q$}  (c4);
\end{scope}
\end{tikzpicture}%
}%
\caption{A typical model
for $(\T\cup\R, \{T(a,b,0), P(a,b,1)\})$ in Example~\ref{ex:shrub}.}\label{fig:ex:connexion}
\end{figure}
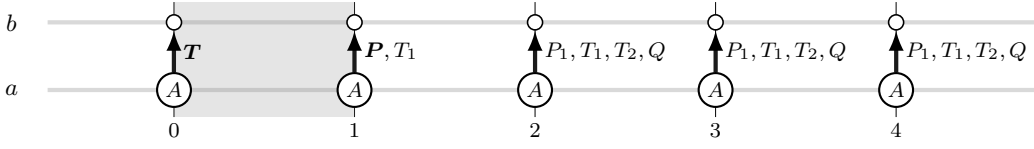

This time, in the two-step construction of the rewriting, we are missing the `consequence' $\exists (\Rbox P_1 \sqcap \Rbox T_2) \sqsubseteq \Rbox\exists Q$ of $\R$ and $\T$.
To fix the problem, we can take a fresh role name~$G_\rtp$, for $\rtp = \{\,\Rbox P_1, \Rbox T_2 \,\}$, and add the `connecting axiom' $\exists G_\rtp \sqsubseteq \Rbox \exists Q$ to $\T$. Then, in the first step, we rewrite the extended TBox and $A$ into the formula
\begin{equation*}
\Phi'(x,t) \ \ =\ \  A(x,t) \ \lor \ \exists t' \, \bigl( (t < t') \land \exists y \, Q(x,y,t')\bigr) \ \lor \  \exists t'\exists y\, G_\rtp(x,y,t'),
\end{equation*}
where we replace $Q(x,y,t')$ with $\Psi(x,y,t')$ as before, and restore the intended meaning of $G_\rtp(x,y,t')$ by rewriting $(\R,\Rbox P_1 \sqcap \Rbox T_2)$ into
\begin{equation*}
P(x,y,t') \land \bigl(T_1(x,y,t') \lor \exists t''\,\bigl((t'' < t') \land T(x,y,t'')\bigr)\bigr)
\end{equation*}
and substituting it for $G_\rtp(x,y,t')$ in $\Phi'(x,t)$.
\end{example}

We now formally define the connecting axioms for a given $\DL_{\bool/\horn}\Xallop{}$ ontology $\TO = \T \cup \R$. Again, we assume that $\R$ contains all role names from $\T$. Recall that a role type $\rtp$ for $\R$ is a maximal  subset of $\subr$ consistent with $\R$. As before, we only specify the positive part of role types and say that a role type is \emph{non-empty} if it contains some role $R$.
Given a role type $\rtp$, we consider the $\R$-canonical rod~$\rod_\rtp$  (see Section~\ref{sec:HornRI:sat}) for $\{R^\ddagger(0) \mid R\in\rtp \}$. Note that, by definition, we have $\rod_\rtp(0) = \rtp$. By Lemma~\ref{period:A} $(i)$, we can find positive integers $s^{\,\rtp} \leq |\R|$ and $p^{\,\rtp} \le 2^{2|\R|}$ such that
\begin{equation*}
\rod_\rtp(n) = \rod_\rtp(n - p^{\,\rtp}), \quad \text{ for }n \leq  - s^{\,\rtp}, \qquad\text{ and }\qquad \rod_\rtp(n) = \rod_\rtp(n + p^{\,\rtp}), \quad \text{ for } n \ge s^{\,\rtp}.
\end{equation*}
For a role type $\rtp$ for $\R$, we take a fresh role name $G_{\rtp}$ and fresh concept names $D^n_{\rtp}$, for $- s^{\,\rtp} - p^{\,\rtp} < n < s^{\,\rtp}+ p^{\,\rtp}$, and define the following CIs:
\begin{multline*}
\exists G_\rtp \sqsubseteq D^0_\rtp, \qquad\qquad
D_{\rtp}^n \sqsubseteq \Rnext D_{\rtp}^{n + 1}, \  \text{ for } 0 \le n <  s^{\,\rtp} + p^{\,\rtp} - 1, \qquad\quad
D_{\rtp}^{s^{\,\rtp} + p^{\,\rtp} - 1} \sqsubseteq \Rnext D_\rtp^{s^{\,\rtp}},\\
\text{ and } \qquad D_{\rtp}^n  \sqsubseteq \exists S, \text { for roles } S\in \rod_{\rtp}(n) \text{ and } 0 \le n < s^{\,\rtp} + p^{\,\rtp},
\end{multline*}
together with symmetrical CIs for $-s^{\,\rtp} - p^{\,\rtp} \le n \le 0$ for the past-time `loop'. Let {\bf (con)} be the set of all such CIs for all possible role types~$\rtp$ for $\R$, and let $\T_\R = \T \cup \textbf{(con)}$.

\begin{example}\em
In Example~\ref{ex:connexion}, for the role type $\rtp = \{P,\Rnext Q\}$, we have $s^{\,\rtp} = 2$, $p^{\,\rtp} = 1$, and so $\T_\R$ contains the following:
\begin{equation*}
\exists P \sqsubseteq D^0_\rtp, \quad D^0_\rtp \sqsubseteq \Rnext D^1_\rtp, \quad D^1_\rtp \sqsubseteq \Rnext D^2_\rtp,  \quad D^2_\rtp \sqsubseteq \Rnext D^2_\rtp, \quad \text{ and } \quad D^0_\rtp \sqsubseteq \exists P, \quad D^1_\rtp \sqsubseteq \exists Q,
\end{equation*}
which imply $\exists P \sqsubseteq \Rnext \exists Q$.
In the context of Example~\ref{ex:shrub}, for the role type $\rtp = \{\Rbox P_1, \Rbox T_2\}$, we have $s^{\,\rtp} = 1$, $p^{\,\rtp} = 1$, and so $\T_\R$ contains the following CIs:
\begin{equation*}
\exists G_\rtp \sqsubseteq D^0_\rtp, \quad D^0_\rtp \sqsubseteq \Rnext D^1_\rtp, \quad D^1_\rtp \sqsubseteq \Rnext D^1_\rtp,\quad \text{ and }
\quad D^1_\rtp \sqsubseteq \exists P_1, \quad  D^1_\rtp \sqsubseteq  \exists T_2, \quad D^1_\rtp \sqsubseteq \exists Q.
\end{equation*}
Note that, in this case, instead of two CIs $D^0_\rtp \sqsubseteq \Rnext D^1_\rtp$ and $D^1_\rtp \sqsubseteq \Rnext D^1_\rtp$, we could use a single $D^0_\rtp \sqsubseteq \Rbox D^1_\rtp$.
\end{example}

Denote by $\T^\dagger_{\smash{\R}}$ the $\LTL_\bool\Xallop$ ontology obtained from $\T_\R$ by replacing every basic concept~$B$ in it with~$B^\dagger$. Consider an ABox $\A$. For any $a,b\in\ind(\A)$, let $\rod_{a,b}$  be the $\R$-canonical rod for $\A^\ddagger_{\smash{a,b}}$. We split $\A$ into the concept and role components, $\mathcal{U}$  and $\mathcal{B}$, as follows:
\begin{align*}
\mathcal{U} & = \bigl\{\,A(a,\ell) \mid A(a,\ell) \in \A\,\bigr\}, \\
\mathcal{B} & = \bigl\{\, \exists G_\rtp(a,\ell) \mid a\in\ind(\A),\ \ell\in\tem(\A), \rtp = \rod_{a,b}(\ell) \text{ is non-empty}, \text{ for some } b\in\ind(\A) \,\bigr\}.
\end{align*}
We denote by $\mathcal{U}^\dagger_a$ and $\mathcal{B}^\dagger_{a}$  the sets of all atoms $A^\dagger(\ell)$, for $A(a,\ell)\in\mathcal{U}$, and $(\exists G_\rtp)^\dagger(\ell)$, for $\exists G_\rtp(a,\ell)\in\mathcal{B}$, respectively. Observe that the connecting axioms are such that ${\bf (con)}^\dagger$ is an $\LTL_\core\Xnext$ ontology, and the ABox~$\mathcal{B}$ is defined so that, for any $a \in \ind(\A)$ and $n \in \Z$,
\begin{equation}\label{prop-con}
S \in \rod_{a,b}(n) \text{ for some } b\in\ind(\A) \ \ \text{ iff } \ \ (\exists S)^\dagger(n) \in \C_{\textbf{(con)}^\dagger,\mathcal{B}_a^\dagger} \ \ \text{for any role } S \text{ in } \R.
\end{equation}
Indeed, let $S \in \rod_{a,b}(n)$. If $n \in \tem(\A)$, then $S\in\rtp$ and $\exists G_\rtp(a,n) \in \mathcal{B}$, for $\rtp = \rod_{a,b}(n)$, whence~$(\exists G_\rtp)^\dagger(n) \in \mathcal{B}_a^\dagger$, and so $(\exists S)^\dagger(n)\in\C_{\textbf{(con)}^\dagger,\mathcal{B}_a^\dagger}$. If $n > \max\A$, then we consider $\rtp = \rod_{a,b}(\max \A)$. It should be clear that the canonical model of $(\R^\ddagger, \{R^\ddagger(\max\A) \mid R\in \rtp\})$ contains $S(n)$. So $\rtp$ is non-empty and $\exists G_\rtp(a,\max \A)\in\mathcal{B}$, whence $(\exists G_\rtp)^\dagger(\max \A) \in \mathcal{B}_a^\dagger$, and so $(\exists S)^\dagger(n)\in\C_{\textbf{(con)}^\dagger,\mathcal{B}_a^\dagger}$
The case $n < \min\A$ is symmetric. The converse implication follows from the definition of {\bf (con)}.
We use~\eqref{prop-con} to establish the following key result:

\begin{lemma}\label{thm:technical}
Let $(\TO,\varkappa)$ be a  $\bot$-free $\DL_{\bool/\horn}\Xallop{}$ \OMPIQ{} that contains no qualified existential restrictions. Then, for any ABox $\A$, we have
\begin{equation*}
\ans^\Z(\TO,\varkappa,\A) = \bigl\{\, (a,n) \mid a\in\ind(\A) \text{ and } n \in \ans^\Z(\T_\R^{\smash{\dagger}},\varkappa^\dag,\mathcal{U}^\dagger_a \cup \mathcal{B}_a^\dagger) \,\bigr\}.
\end{equation*}
\end{lemma}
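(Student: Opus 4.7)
The plan is to prove the two inclusions separately, the first by direct simulation and the second via Lemma~\ref{lem:unravelling}. The role of equation~\eqref{prop-con} together with the shift-invariance of $\R$-canonical rods is central to both directions: it lets us translate, back and forth, between `the role-type at $(a,b)$ is~$\rtp$ at time $\ell$' in the 2D world and `$\exists G_\rtp$ holds at $\ell$' in the LTL world.

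\medskip
\noindent\emph{Direction} $(\supseteq)$. Fix any model $\I \models (\TO,\A)$. Following the construction in Proposition~\ref{prop:conceptOMAQs} (observation~\eqref{observ1}), define an $\LTL$-interpretation $\I_a$ over the vocabulary of $\T_\R^\dagger$ by $\I_a\models A(k)$ iff $a^\I \in A^{\I(k)}$, $\I_a\models E_S(k)$ iff $a^\I \in (\exists S)^{\I(k)}$, $\I_a\models E_{G_\rtp}(k)$ iff $\rtp = \rod_{a,b}(k)$ for some $b \in \ind(\A)$, and propagate the $D_\rtp^m$ accordingly. Satisfaction of $\T^\dagger$ and of $\mathcal{U}_a^\dagger$ is immediate from the correspondence $a^\I\in C^{\I(k)}$ iff $\I_a\models C^\dag(k)$; satisfaction of $\mathcal{B}_a^\dagger$ is immediate from the definition of $E_{G_\rtp}$. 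For the connecting axioms, the only substantive check is $D_\rtp^m(k)\to E_S(k)$ for $S\in\rod_\rtp(m)$. If $D_\rtp^m(k)$ holds in $\I_a$, then $E_{G_\rtp}(k-m)$ holds, so $\rtp = \rod_{a,b}(k-m)$ for some~$b$; shift-invariance of canonical rods (together with the minimality property~\eqref{eq:rods:minimal}) yields $\rod_\rtp(m) \subseteq \rod_{a,b}(k)$, and canonicity of~$\rod_{a,b}$ for $\A_{a,b}^\ddagger$ together with $\I\models (\R,\A)$ gives $\rod_{a,b}(k) \subseteq \rod^\I_{a^\I\!,b^\I}(k)$. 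Hence $(a^\I,b^\I)\in S^{\I(k)}$, so $a^\I\in(\exists S)^{\I(k)}$ and $\I_a\models E_S(k)$. Thus $\I_a$ is a model of the LTL KB, so $\I_a\models B^\dag(n)$ by assumption, and therefore $a^\I\in B^{\I(n)}$.

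\medskip
\noindent\emph{Direction} $(\subseteq)$. Suppose $(a,n)$ is not in the right-hand side: there is a model $\M_a$ of $(\T_\R^\dagger, \mathcal{U}_a^\dagger\cup \mathcal{B}_a^\dagger)$ with $\M_a\not\models B^\dag(n)$. The plan is to extract beams from LTL models and invoke Lemma~\ref{lem:unravelling}. For each $b \in \ind(\A)$, pick any model $\M_b$ of $(\T_\R^\dagger, \mathcal{U}_b^\dagger\cup \mathcal{B}_b^\dagger)$ and set $\beam_b(k) = \{\,C\in\subt \mid \M_b\models C^\dag(k)\,\}$; these are concept types for $\T^\dagger$. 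For each role $S$ in $\TO$ consistent with $\TO$, pick a model of $\T_\R^\dagger$ with $(\exists S^-)^\dag$ satisfied at some point, and read off $\beam_{w_{S^-}}$ analogously.

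\medskip
\noindent\emph{Verification of Lemma~\ref{lem:unravelling}'s hypotheses.} Condition~\eqref{eq:quasimodel:beam:abox} is immediate from $\M_a\models \mathcal{U}_a^\dagger$. Condition~\eqref{eq:quasimodel:inv} holds by the choice of the witness beams. For~\eqref{eq:quasimodel:rod:abox}, given $a,b\in\ind(\A)$, take the $\R$-canonical rod $\rod_{a,b}$ for $\A_{a,b}^\ddagger$; for $\beam_a$-compatibility we must check $S\in\rod_{a,b}(k)\Rightarrow \exists S\in\beam_a(k)$, which is precisely the forward direction of~\eqref{prop-con} combined with $\M_a$ satisfying $\mathcal{B}_a^\dagger$ and the connecting axioms (the latter propagate $E_{G_\rtp}(\ell)$ to $E_S(\ell+k)$ for each $S\in\rod_\rtp(k)$). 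For~\eqref{eq:quasimodel:rod:witness}, if $\exists S\in \beam_w(n)$, take $\rtp$ to be any role type with $S\in\rtp$ consistent with the $D_\rtp^\bullet$ propagation forced on~$\beam_w$ by the connecting axioms, and use the $\R$-canonical rod $\rod_\rtp$ shifted to position $n$; $\beam_w$-compatibility is again a direct consequence of the connecting axioms in $\T_\R^\dagger$. Lemma~\ref{lem:unravelling} now yields a model $\I$ of $\K$ with $\beam^\I_{a^\I} = \beam_a$, and since $B\notin\beam_a(n)$ (as $\M_a\not\models B^\dag(n)$) we conclude $a^\I\notin B^{\I(n)}$.

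\medskip
\noindent The main technical obstacle is the $\beam$-compatibility of the rods chosen in~\eqref{eq:quasimodel:rod:abox} and~\eqref{eq:quasimodel:rod:witness}: this is where the whole point of the connecting axioms lies, and the reason they were designed to propagate $\exists G_\rtp$ (or, where applicable, $\exists S$ for atomic $S\in\rtp$) through the $D_\rtp^n$ into the appropriate $\exists S'$ assertions. Once~\eqref{prop-con} is used to relate LTL consequences to $\R$-canonical rods between named individuals, everything else follows by straightforward bookkeeping.
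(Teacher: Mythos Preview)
Your overall structure matches the paper's: $(\supseteq)$ by reading off an LTL model from a 2D model, and $(\subseteq)$ by assembling beams and invoking Lemma~\ref{lem:unravelling}. Your $(\supseteq)$ direction is more explicit than the paper's one-word ``straightforward'' and is fine.

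The substantive difference is in $(\subseteq)$: the paper takes $\beam_a$ from the LTL counter-model $\I_a$ but takes \emph{all other} beams---the $\beam_b$ for $b\in\ind(\A)\setminus\{a\}$ and, implicitly, the witness beams $\beam_{w_S}$---from a single fixed 2D model $\J\models(\TO,\A)$. You instead build \emph{every} beam from an LTL model of $\T_\R^\dagger$. This matters for condition~\eqref{eq:quasimodel:rod:witness}. When beams come from a genuine 2D model, \eqref{eq:quasimodel:rod:witness} is automatic: if $\exists S\in\beam_b(n)$ because $b^\J\in(\exists S)^{\J(n)}$, then the rod of $(b^\J,v)$ in $\J$ (for the actual witness $v$) is a $\beam_b$-compatible rod with $S$ at~$n$. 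Your LTL models do not give you this for free.

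Your verification of~\eqref{eq:quasimodel:rod:witness} has a gap. The connecting axioms in $\T_\R^\dagger$ only propagate \emph{from} $E_{G_\rtp}$: they say $E_{G_\rtp}\Rightarrow D_\rtp^0\Rightarrow\cdots\Rightarrow E_{S'}$ along the canonical rod $\rod_\rtp$. They impose no constraint on a bare $E_S$. So an LTL model of $\T_\R^\dagger$ may set $E_S(n)$ true ``spontaneously'' without setting $E_{S'}(n{+}1)$ true even when $\R\models S\sqsubseteq\Rnext S'$; in that case there is no $\beam_w$-compatible rod containing $S$ at $n$. Your sentence ``take $\rtp$ to be any role type with $S\in\rtp$ consistent with the $D_\rtp^\bullet$ propagation forced on $\beam_w$'' presupposes such propagation is forced, but nothing forces it unless an $E_{G_\rtp}$-atom is present---and your witness beams are chosen from models with only $(\exists S^-)^\dag$ asserted, no $G_\rtp$. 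The paper sidesteps exactly this problem by pulling the non-$a$ beams from a 2D model.
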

\begin{proof}
$(\subseteq)$ Suppose that $n \notin \ans^\Z(\T_\R^{\smash{\dagger}},\varkappa^\dagger,\mathcal{U}_a^\dagger \cup \mathcal{B}_a^\dagger)$. Then there is an \LTL{} model $\I_a$ of $(\T_\R^{\smash{\dagger}},\mathcal{U}_a^\dagger \cup \mathcal{B}_a^\dagger)$ with $\I_a\not\models \varkappa^\dag(n)$. We define a model $\I$ of $(\TO,\A)$ with $a^\I \notin \varkappa^{\I(n)}$ using unravelling (Lemma~\ref{lem:unravelling}) similarly to the proof of Lemma~\ref{lemma:consistency}.
To begin with, we take the beam $\beam_a\colon n\mapsto \bigl\{\, C\in\subt \mid \I_a\models C^\dagger(n) \,\bigr\}$; note that $\beam_a$ is a beam for $\T$ because $\T_{\smash{\R}}^\dagger$ extends~$\T^\dagger$. By~\eqref{prop-con}, the $\R$-canonical rod~$\rod_{a,b}$ for $\A_{\smash{a,b}}^\ddagger$ is $\beam_a$-compatible, for each $b\in\ind(\A)$.
Next, we fix a model $\J$ of $(\TO,\A)$ and, for every $b \in \ind(\A)\setminus \{ a\}$, take the beam $\beam_b$ of~$b^\J$ in $\J$. By Lemma~\ref{lem:unravelling}, we obtain a model $\I$ of $(\TO,\A)$ with $a^\I \notin \varkappa^{\I(n)}$.

The inclusion $(\supseteq)$ is straightforward.
\end{proof}

We now use this technical result to generalise Proposition~\ref{prop:conceptOMAQs} and construct rewritings for \OMPIQ{}s $(\TO, \varkappa)$ that contain no qualified existential quantifiers  from rewritings of suitable \LTL{} \OMPIQ{}s, where we identify a role type $\rtp$ with the intersection of all $R\in\rtp$, i.e., $\rtp = \bigsqcap_{R\in\rtp} R$, and $\rtp^\ddagger$ with the intersection of all $R^\ddagger$, for $R\in\rtp$:
\begin{lemma}\label{lemma:basis}
Let $\lang$ be one of $\FO(<)$, $\FOE$, or $\FO(\RPR)$, and $\fragr \in \{\core,\rhorn,\horn\}$.
A  $\bot$-free $\DL_{\frag/\fragr}^{\op}$ \OMPIQ{} $\q = (\TO, \varkappa)$ that contains no qualified existential restrictions is $\lang$-rewritable whenever
\begin{itemize}
\item[--] the $\LTL_\frag^{\op\cup\{{\scriptscriptstyle\bigcirc}\}}$ \OMPIQ{} $\q^\dagger = (\T_\R, \varkappa)^\dagger$ is $\lang$-rewritable and

\item[--] the $\LTL_\fragr^{\op}$ \OMPIQ{}s $\q^\ddagger_{\rtp} = (\R, \rtp)^\ddagger$ are $\lang$-rewritable, for role types $\rtp$ for $\R$.
\end{itemize}
\end{lemma}
\begin{proof}
By Proposition~\ref{prop:roleOMIQs}, we have an $\lang$-rewriting $\rew_\rtp(x,y,t)$ of every $\q_\rtp = (\R,\rtp)$. Similarly to the proof of Proposition~\ref{prop:conceptOMAQs}, we claim that the $\lang$-formula
$\rew(x,t)$ obtained from an $\lang$-rewriting~$\rew^\dagger(t)$ of $\q^\dagger$ by replacing every $A(s)$  with $A(x,s)$, every $(\exists P)^\dagger(s)$ with $\exists y\,P(x,y,s)$, every $(\exists P^-)^\dagger(s)$ with $\exists y\,P(y,x,s)$, every $(\exists G_\rtp)^\dagger(s)$ with $\exists y \, \rew_\rtp(x,y,s)$, and, in the case of $\FO(\RPR)$, by replacing every  $Q(t_1,\dots, t_k)$, for a relation variable $Q$, with $R(x,t_1,\dots, t_k)$ is an $\lang$-rewriting of~$\q$.

Indeed, we show that $\TO,\A \models \varkappa(a, \ell)$ iff  $\SA \models \rew(a,\ell)$, for any ABox $\A$, any $\ell \in \tem (\A)$ and any $a \in \ind (\A)$. 
By Lemma~\ref{thm:technical}, $\TO,\A \models \varkappa(a,\ell)$ iff $\T^{\smash{\dagger}}_\R, \mathcal{U}_a^\dagger \cup \mathcal{B}_a^\dagger \models \varkappa^\dagger(\ell)$.
As $\rew^\dagger(t)$ is an $\lang$-rewriting of $\smash{(\T^{\smash{\dagger}}_{\smash{\R}},\varkappa^\dagger)}$, the latter is equivalent to $\mathfrak{S}_{\mathcal{U}^\dagger_a \cup \mathcal{B}_a^\dagger} \models \rew^\dagger(\ell)$. Now, since $\rew_\rtp(x,y,t)$ is an $\lang$-rewriting of $\q_\rtp$, for all $b\in\ind(\A)$ and $n\in \tem(\A)$, we have  $\smash{(\exists G_\rtp)^\dagger(n) \in \mathcal{B}_b^\dagger}$ iff $\rtp = \rod_{b,c}(n)$ for the $\R$-canonical rod for $\A^\ddagger_{\smash{b,c}}$, for some $c \in \ind(\A)$, iff $\SA \models \exists y \, \rew_\rtp(b,y,n)$. Then, $\mathfrak{S}_{\mathcal{U}_a^\dagger \cup \mathcal{B}_a^\dagger} \models \rew^\dagger(\ell)$ iff $\SA \models \rew(a,\ell)$, as required.
\end{proof}

We are now in a position to obtain our first set of concrete rewritability results for \OMAQ{}s. Let $\q = (\TO,B)$ be a $\DL_{\smash{\krom/\core}}\Xallop$ \OMAQ{} with $\TO = \T\cup \R$ and a basic concept $B$. By Lemma~\ref{lemma:consistency}~(\emph{i}), $\q$ is $\lang$-rewritable whenever \OMPIQ{}s $(\TO',\varrho'_\bot)$, $(\TO', \varkappa'_\bot)$ and $(\TO',  \varkappa'_\bot \sqcup B)$ are $\lang$-rewritable. Suppose $\TO' = \T'\cup \R'$. By Proposition~\ref{prop:roleOMIQs}, the first \OMPIQ{} is $\lang$-rewritable if the $\LTL_{\core}\Xallop$ \OMPIQ{} $(\R',\varrho'_\bot)^\ddagger$ is $\lang$-rewritable, while, by Lemma~\ref{lemma:basis}, the other two are $\lang$-rewritable whenever the $\LTL_{\core}\Xallop$ \OMPIQ{}s $(\R',\varrho)^\ddagger$, for role types $\varrho$ for $\R'$, and the
$\LTL_\krom\Xallop$ \OMPIQ{}s $(\T'_{\R'}, \varkappa'_\bot)^\dagger$ and $(\T'_{\R'}, \varkappa'_\bot\sqcup B)^\dagger$ are $\lang$-rewritable. Theorems~24 and 27 of~\citeA{AIJ21} show that
 all $\LTL_{\smash{\core}}\Xallop$ and $\LTL_{\smash{\krom}}\Xallop$ \OMPIQ{}s  are $\FOE$ and $\FO(\RPR)$-rewritable, respectively. Note, however, that the consistency checking concept $\varkappa'_\bot$ in the two $\LTL_{\krom}\Xallop$ \OMPIQ{}s above has a special shape  (see Lemma~\ref{lemma:consistency}), and so the two \OMPIQ{}s are in fact equivalent to the $\LTL_{\krom}\Xallop$ \OMAQ{}s $(\T_\R,\bot)^\dagger$ and $(\T_\R, B)^\dagger$, respectively, where $\T_\R$ is constructed for $\TO$. The latter class of \OMAQ{}s is $\FOE$-rewritable~\cite[Theorem~16]{AIJ21}, which gives us our first optimal result. The case of $\DL_{\krom/\core}\Xallop$ \OMAQ{}s of the form $(\TO,S)$ with a basic role $S$ is similar; note that the \OMPIQ{}  with the consistency checking concept~$\varkappa'_\bot$ is required even for role \OMAQ{}s.
In the same way,  $\DL_{\krom/\horn}\Xbox$ \OMAQ{}s are reducible to $\LTL\Xallop_{\krom}$ \OMPIQ{}s of special form (equivalent to \OMAQ{}s) and $\LTL\Xbox_{\horn}$ \OMPIQ{}s, which, by Theorems~16 and  24 of~\citeA{AIJ21}, are $\FOE$- and $\FO(<)$-rewritable, respectively. It is worth pointing out that the TBox $\T_\R$ in Lemma~\ref{lemma:basis} requires the use of the $\nxt$ operators, which leads to the increased  language expressivity for $\LTL\Xallop_{\krom}$ \OMAQ{}s. Finally, $\DL_{\smash{\bool/\horn}}\Xallop$ \OMAQ{}s are reducible to  $\LTL_{\smash{\bool}}\Xallop$ and  $\LTL_{\smash{\horn}}\Xallop$ \OMPIQ{}s, which are known to be $\FO(\RPR)$-rewritable~\cite[Theorems~8]{AIJ21}. Hence, we obtain:
\begin{theorem}\label{cor:dl-omaq-rewritability}
$(i)$ All $\DL_{\smash{\krom/\core}}\Xallop$ and $\DL_{\smash{\krom/\horn}}\Xbox$  \OMAQ{}s are $\FOE$-rewritable.

$(ii)$ All $\DL_{\smash{\bool/\horn}}\Xallop$ \OMAQ{}s are $\FO(\RPR)$-rewritable.
\end{theorem}

Despite $\FO(<)$-rewritability of $\LTL_\horn\Xbox$ \OMPIQ{}s, we cannot obtain $\FO(<)$-rewritability of $\DL_{\smash{\horn}}\Xbox$ \OMAQ{}s because the connecting axioms {\bf (con)} use the next-time operator~$\Rnext$. We now show that actually there are  $\DL_{\smash{\horn}}\Xbox$ \OMAQ{}s that are not $\FO(<)$- and not even $\FOE$-rewritable:
\begin{theorem}\label{thm:unexpected}
$(i)$ There is a $\DL_{\smash{\horn}}\Xbox$ \OMAQ{}, answering which is $\NCo$-hard for data complexity. 

$(ii)$ There is a non-$\FO(<)$-rewritable $\DL_{\smash{\core/\horn}}\Xbox$ \OMAQ{}.
\end{theorem}
\begin{proof}
$(i)$ Consider the $\DL_\horn\Xbox$ \OMAQ{} $\q = (\TO, \exists S_0)$ with $\TO = \T \cup \R$ and
\begin{align*}
\T & = \bigl\{\,\exists R_k \sqcap A_0 \sqsubseteq \exists S_k, \  \ \exists R_k \sqcap A_1 \sqsubseteq \exists S_{1-k} \mid k=0,1\,\bigr\}\ \cup \ \bigl\{\,  B \sqsubseteq \exists S_0\,\bigr\},\\
\R & = \bigl\{\,S_k \sqsubseteq F_k, \ \ S_k \sqsubseteq \Rbox F_k, \ \ S_k \sqsubseteq \Lbox P_k, \ \ \Rbox F_k \sqcap P_k \sqsubseteq R_k \mid k=0,1\,\bigr\};
\end{align*}
cf.~Example~\ref{exampleBvNB:2}.
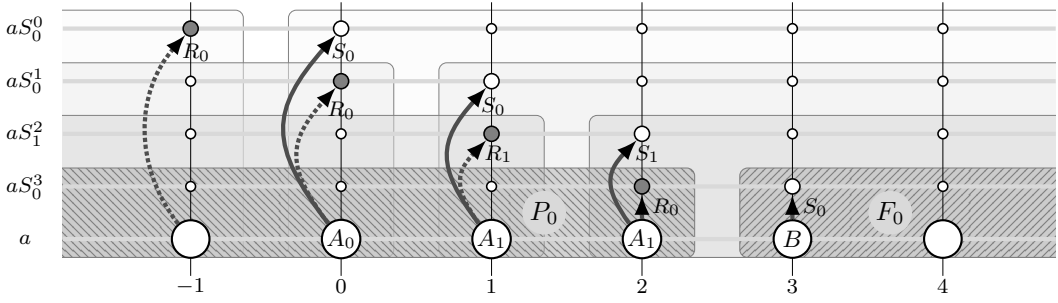
\begin{figure}[t]
\centering%
\begin{tikzpicture}[xscale=1, yscale=0.7, semithick, 
fpoint/.style={wpoint,inner sep=0.5pt,minimum size=5mm}]\footnotesize
\begin{scope}
\clip (-7.7,-0.5) rectangle +(13.2,5.35);
\filldraw[draw=gray,fill=gray!5,ultra thin,rounded corners=3pt,fill opacity=0.5] (-4.7,-0.35)  rectangle +(10.5,4.7);
\filldraw[draw=gray,fill=gray!5,ultra thin,rounded corners=3pt,fill opacity=0.5] (-5.3,-0.35)  rectangle +(-2.6,4.7);
\filldraw[draw=gray,fill=gray!15,ultra thin,rounded corners=3pt,fill opacity=0.5] (-2.7,-0.35)  rectangle +(8.5,3.7);
\filldraw[draw=gray,fill=gray!15,ultra thin,rounded corners=3pt,fill opacity=0.5] (-3.3,-0.35)  rectangle +(-4.6,3.7);
\filldraw[draw=gray,fill=gray!30,ultra thin,rounded corners=3pt,fill opacity=0.5] (-0.7,-0.35)  rectangle +(6.5,2.7);
\filldraw[draw=gray,fill=gray!30,ultra thin,rounded corners=3pt,fill opacity=0.5] (-1.3,-0.35)  rectangle +(-6.6,2.7);
\fill[fill=gray!60,thin,rounded corners=3pt,fill opacity=0.5] (1.3,-0.35)  rectangle +(4.5,1.7);
\fill[fill=gray!60,thin,rounded corners=3pt,fill opacity=0.5] (0.7,-0.35)  rectangle +(-8.6,1.7);
\filldraw[draw=gray,pattern=north east lines, pattern color=black!50,thin,rounded corners=3pt] (1.3,-0.35)  rectangle +(4.5,1.7);
\filldraw[draw=gray,pattern=north west lines, pattern color=black!50,thin,rounded corners=3pt] (0.7,-0.35)  rectangle +(-8.6,1.7);
\end{scope}
\foreach \y in {0,1,2,3,4} {
\draw[object-timeline] (-7.7,\y) -- ++(13.2,0);
}
\foreach \x/\l in {-6/-1,-4/0,-2/1,0/2,2/3,4/4} {
\draw[time-guideline] (\x,-0.7) -- ++(0,5.2);
\node at (\x,-0.9) {\scriptsize $\l$};
}
\node (a) at (-8.2,0) {\scriptsize $a$};
\begin{scope}\footnotesize
\node (am1) at (-6,0) [fpoint]{};
\node (a0) at (-4,0) [fpoint]{$A_0$}; 
\node (a1) at (-2,0) [fpoint]{$A_1$};
\node (a2) at (0,0) [fpoint]{$A_1$};
\node (a3) at (2,0) [fpoint]{$B$};
\node (a4) at (4,0) [fpoint]{};
\end{scope}
\node (r3) at (-8.2,1) {\scriptsize $aS_0^3$};
\node (r3-m1) at (-6,1) [point]{};
\node (r3-0) at (-4,1) [point]{};
\node (r3-1) at (-2,1) [point]{};
\node (r3-2) at (0,1) [ppoint]{};
\node (r3-3) at (2,1) [qpoint]{};
\node (r3-4) at (4,1) [point]{};
\node (r2) at (-8.2,2) {\scriptsize $aS_1^2$};
\node (r2-m1) at (-6,2) [point]{};
\node (r2-0) at (-4,2) [point]{};
\node (r2-1) at (-2,2) [ppoint]{};
\node (r2-2) at (0,2) [qpoint]{};
\node (r2-3) at (2,2) [point]{};
\node (r2-4) at (4,2) [point]{};
\node (r1) at (-8.2,3) {\scriptsize $aS_0^1$};
\node (r1-m1) at (-6,3) [point]{};
\node (r1-0) at (-4,3) [ppoint]{};
\node (r1-1) at (-2,3) [qpoint]{};
\node (r1-2) at (0,3) [point]{};
\node (r1-3) at (2,3) [point]{};
\node (r1-4) at (4,3) [point]{};
\node (r0) at (-8.2,4) {\scriptsize $aS_0^0$};
\node (r0-m1) at (-6,4) [ppoint]{};
\node (r0-0) at (-4,4) [qpoint]{};
\node (r0-1) at (-2,4) [point]{};
\node (r0-2) at (0,4) [point]{};
\node (r0-3) at (2,4) [point]{};
\node (r0-4) at (4,4) [point]{};
\begin{scope}[ultra thick,draw=black!70]
\draw[->] (a3)  to node [right]{\scriptsize $S_0$} (r3-3);
\draw[->,densely dotted] (a2)  to node [right]{\scriptsize $R_0$}  (r3-2);
\node[fill=gray!30,circle,inner sep=1pt] at (3.3,0.5) {\small $F_0$};
\node[fill=gray!30,circle,inner sep=1pt] at (-1.3,0.5) {\small $P_0$};
\draw[->,bend left,looseness=1.2] (a2)  to node [right,pos=0.85]{\scriptsize $S_1$} (r2-2);
\draw[->,densely dotted,bend left,looseness=1.2] (a1)  to node [right,pos=0.85]{\scriptsize $R_1$} (r2-1);
\draw[->,bend left,looseness=1.2] (a1)  to node [right,pos=0.88]{\scriptsize $S_0$} (r1-1);
\draw[->,densely dotted,bend left,looseness=1.2] (a0)  to node [right,pos=0.85]{\scriptsize $R_0$} (r1-0);
\draw[->,bend left,looseness=1.2] (a0)  to node [right,pos=0.92]{\scriptsize $S_0$} (r0-0);
\draw[->,densely dotted,bend left,looseness=0.9] (am1)  to node [right,pos=0.9]{\scriptsize $R_0$} (r0-m1);
\end{scope}
\end{tikzpicture}%
\caption{Proof of Theorem~\ref{thm:unexpected}.}\label{fig:unexpected}
\end{figure}
For $\avec{e} = (e_0,\dots,e_{n-1}) \in \{0,1\}^n$, let $\A_{\avec{e}} =  \{B(a,n)\} \cup  \mbox{$\{ A_{e_i}(a,i) \mid i < n\,\}$}$. Then $(a,0)$ is a certain answer to~$\q$ over $\A_{\avec{e}}$ iff the number of 1s in $\avec{e}$ is even---see Fig.~\ref{fig:unexpected} and note that $\exists S_k(a,n)$ always generates a \emph{fresh} witness $aS_k^n$.
The same idea can be used to simulate arbitrary NFAs as in the proof of Theorem~10 by~\citeA{AIJ21}, which shows \smash{\NCo}-hardness. We leave details to the reader. The case of \OMAQ{}s of the form $(\TO, S)$, for a role $S$, can be dealt with similarly to the proof of Theorem~\ref{thm:undec} above.

To prove $(ii)$, we can use the RBox $\R$ defined above and the fact that the \OMAQ{} $(\{A_0 \sqsubseteq \Lnext A_1,A_1 \sqsubseteq \Lnext A_0\},A_0)$ is not $\FO(<)$-rewritable~\cite{AIJ21}.
\end{proof}

\subsection{$\FO(<)$-Rewritability of Role-Monotone $\DL_{\bool/\horn}\Xallop{}$ OMAQs}\label{sec:rewrOMAQ:mon}

As follows from Theorem~\ref{thm:unexpected},  $\DL_\horn\Xbox$ turns out to be fundamentally different from $\LTL_\horn\Xbox$:
in the proof, we used  basic concepts of the form $\exists Q$  and Horn RIs with $\Rbox$ and $\Lbox$ to encode the $\Lnext$ operator on concepts in the sense that $\TO \models  \exists S_k \sqsubseteq  \Lnext \exists R_k$.
Our aim now is to give a sufficient condition under which the connecting axioms {\bf (con)} can be expressed in $\DL_\core\Xbox$, which would guarantee $\FO(<)$-rewritability of $\q^\dagger$ in Lemma~\ref{lemma:basis}.

We say that a $\DL_{\bool/\horn}\Xallop{}$ RBox $\R$ (and an ontology with such $\R$) is \emph{role-monotone} if, for any role type $\rtp$ for~$\R$ and any role $S$,
\begin{equation}\label{monotone}
S \in \rod_\rtp(n) \ \text{ and } \  n \neq 0 \quad \text{ implies  } \quad S \in \rod_\rtp(k) \text{ for all } k \ge n \text{ or for all } k \le n.
\end{equation}
In other words, $\bigl\{\, n\in\Z\mid S\in\rod_\rtp(n) \,\bigr\}  = I^S_\PP\cup I^S_0\cup I^S_\FF$, where each of $I^S_\PP$, $I^S_0$ and $I^S_\FF$ is either empty or an interval of the form $(-\infty,m']$,  $\{0\}$ and $[m,\infty)$, respectively; see Fig.~\ref{fig:monotone} for an illustration of the possible cases. The proof of Theorem~\ref{thm:unexpected} gives an example of an RBox that is not role-monotone: for the role type $\rtp$ containing $S_k$, $F_k$, $\Rbox F_k$ and~$\Lbox P_k$, we have $R_k\in\rod_\rtp(-1)$, but neither $R_k\in\rod_\rtp(-2)$ nor $R_k\in\rod_\rtp(0)$.

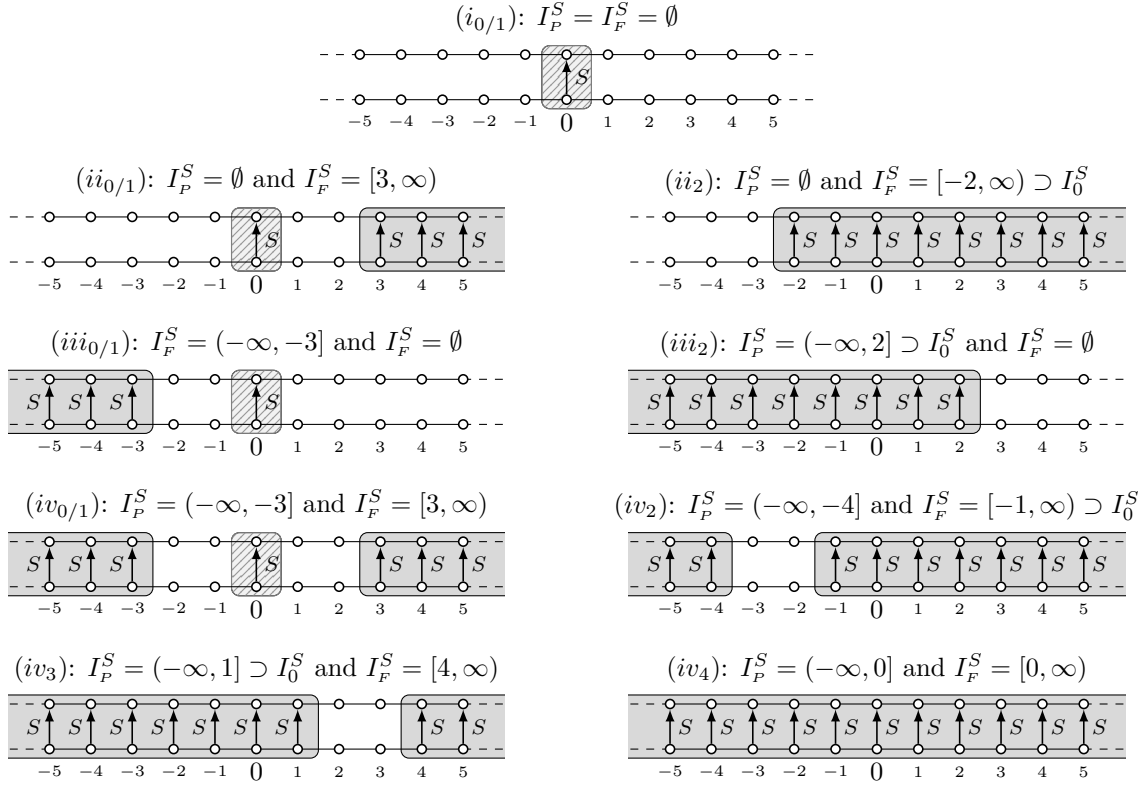
\begin{figure}[t]
\centerline{%
\begin{tikzpicture}[xscale=0.55,yscale=0.6, semithick]
\begin{scope}[yshift=36mm,xshift=75mm]
\node at (0,1.8) {\small $(i_{0/1})$: $I^S_\PP = I^S_\FF = \emptyset$};
\begin{scope}
\clip (-3,-0.5) rectangle +(9,2);
\fill[fill=gray!10,draw=black,thin,rounded corners=3pt] (-0.6,-0.2)  rectangle +(1.2,1.4);
\filldraw[draw=gray,pattern=north east lines, pattern color=black!40,thin,rounded corners=3pt] (-0.6,-0.2)  rectangle +(1.2,1.4);
\end{scope}
\foreach \y in {0,1} {
\draw[thin] (-5,\y) -- +(10,0);
\draw[dashed,thin] (-5,\y) -- +(-1,0);
\draw[dashed,thin] (5,\y) -- +(1,0);
\foreach \x in {-5,-4,-3,-2,-1,0,1,2,3,4,5} {
\node[tpoint] (a\x\y) at (\x,\y) {};
}
}
\foreach \x in {-5,-4,-3,-2,-1,1,2,3,4,5} {
\node at (\x,-0.5) {\tiny $\x$};
}
\node at (0,-0.5) {\small $0$};
\begin{scope}[semithick]\scriptsize
\draw[->] (a00) to node[right] {$S$} (a01);
\end{scope}
\end{scope}
%
\begin{scope}[xshift=0mm]
\node at (0,1.8) {\small $(ii_{0/1})$: $I^S_\PP = \emptyset$ and $I^S_\FF = [3,\infty)$};
\begin{scope}
\clip (-3,-0.5) rectangle +(9,2);
\filldraw[fill=gray!30,draw=black,thin,rounded corners=3pt] (2.5,-0.2)  rectangle +(5,1.4);
\fill[fill=gray!10,draw=black,thin,rounded corners=3pt] (-0.6,-0.2)  rectangle +(1.2,1.4);
\filldraw[draw=gray,pattern=north east lines, pattern color=black!40,thin,rounded corners=3pt] (-0.6,-0.2)  rectangle +(1.2,1.4);
\end{scope}
\foreach \y in {0,1} {
\draw[thin] (-5,\y) -- +(10,0);
\draw[dashed,thin] (-5,\y) -- +(-1,0);
\draw[dashed,thin] (5,\y) -- +(1,0);
\foreach \x in {-5,-4,-3,-2,-1,0,1,2,3,4,5} {
\node[tpoint] (a\x\y) at (\x,\y) {};
}
}
\foreach \x in {-5,-4,-3,-2,-1,1,2,3,4,5} {
\node at (\x,-0.5) {\tiny $\x$};
}
\node at (0,-0.5) {\small $0$};
\begin{scope}[semithick]\scriptsize
\draw[->] (a00) to node[right] {$S$} (a01);
\foreach \x in {3,4,5} {
\draw[->] (a\x0) to node[right] {$S$} (a\x1);
}
\end{scope}
\end{scope}
%
\begin{scope}[xshift=150mm]
\node at (0,1.8) {\small $(ii_2)$: $I^S_\PP = \emptyset$ and $I^S_\FF = [-2,\infty) \supset I^S_0$};
\begin{scope}
\clip (-3,-0.5) rectangle +(9,2);
\filldraw[fill=gray!30,draw=black,thin,rounded corners=3pt] (-2.5,-0.2)  rectangle +(10,1.4);
\end{scope}
\foreach \y in {0,1} {
\draw[thin] (-5,\y) -- +(10,0);
\draw[dashed,thin] (-5,\y) -- +(-1,0);
\draw[dashed,thin] (5,\y) -- +(1,0);
\foreach \x in {-5,-4,-3,-2,-1,0,1,2,3,4,5} {
\node[tpoint] (a\x\y) at (\x,\y) {};
}
}
\foreach \x in {-5,-4,-3,-2,-1,1,2,3,4,5} {
\node at (\x,-0.5) {\tiny $\x$};
}
\node at (0,-0.5) {\small $0$};
\begin{scope}[semithick]\scriptsize
\foreach \x in {-2,-1,0,1,2,3,4,5} {
\draw[->] (a\x0) to node[right] {$S$} (a\x1);
}
\end{scope}
\end{scope}
%
\begin{scope}[xshift=0mm,yshift=-36mm]
\node at (0,1.8) {\small $(iii_{0/1})$: $I^S_\FF = (-\infty,-3]$ and $I^S_\FF = \emptyset$};
\begin{scope}
\clip (3,-0.5) rectangle +(-9,2);
\filldraw[fill=gray!30,draw=black,thin,rounded corners=3pt] (-2.5,-0.2)  rectangle +(-5,1.4);
\fill[fill=gray!10,draw=black,thin,rounded corners=3pt] (-0.6,-0.2)  rectangle +(1.2,1.4);
\filldraw[draw=gray,pattern=north east lines, pattern color=black!40,thin,rounded corners=3pt] (-0.6,-0.2)  rectangle +(1.2,1.4);
\end{scope}
\foreach \y in {0,1} {
\draw[thin] (-5,\y) -- +(10,0);
\draw[dashed,thin] (-5,\y) -- +(-1,0);
\draw[dashed,thin] (5,\y) -- +(1,0);
\foreach \x in {-5,-4,-3,-2,-1,0,1,2,3,4,5} {
\node[tpoint] (a\x\y) at (\x,\y) {};
}
}
\foreach \x in {-5,-4,-3,-2,-1,1,2,3,4,5} {
\node at (\x,-0.5) {\tiny $\x$};
}
\node at (0,-0.5) {\small $0$};
\begin{scope}[semithick]\scriptsize
\draw[->] (a00) to node[right] {$S$} (a01);
\foreach \x in {-5,-4,-3} {
\draw[->] (a\x0) to node[left] {$S$} (a\x1);
}
\end{scope}
\end{scope}
%
\begin{scope}[xshift=150mm,yshift=-36mm]
\node at (0,1.8) {\small $(iii_2)$: $I^S_\PP = (-\infty,2]  \supset I^S_0$ and $I^S_\FF = \emptyset$};
\begin{scope}
\clip (3,-0.5) rectangle +(-9,2);
\filldraw[fill=gray!30,draw=black,thin,rounded corners=3pt] (2.5,-0.2)  rectangle +(-10,1.4);
\end{scope}
\foreach \y in {0,1} {
\draw[thin] (-5,\y) -- +(10,0);
\draw[dashed,thin] (-5,\y) -- +(-1,0);
\draw[dashed,thin] (5,\y) -- +(1,0);
\foreach \x in {-5,-4,-3,-2,-1,0,1,2,3,4,5} {
\node[tpoint] (a\x\y) at (\x,\y) {};
}
}
\foreach \x in {-5,-4,-3,-2,-1,1,2,3,4,5} {
\node at (\x,-0.5) {\tiny $\x$};
}
\node at (0,-0.5) {\small $0$};
\begin{scope}[semithick]\scriptsize
\foreach \x in {-5,-4,-3,-2,-1,0,1,2} {
\draw[->] (a\x0) to node[left] {$S$} (a\x1);
}
\end{scope}
\end{scope}
%
\begin{scope}[xshift=0mm,yshift=-72mm]
\node at (0,1.8) {\small $(iv_{0/1})$: $I^S_\PP = (-\infty,-3]$ and $I^S_\FF = [3,\infty)$};
\begin{scope}
\clip (-6,-0.5) rectangle +(12,2);
\filldraw[fill=gray!30,draw=black,thin,rounded corners=3pt] (-2.5,-0.2)  rectangle +(-5,1.4);
\filldraw[fill=gray!30,draw=black,thin,rounded corners=3pt] (2.5,-0.2)  rectangle +(5,1.4);
\fill[fill=gray!10,draw=black,thin,rounded corners=3pt] (-0.6,-0.2)  rectangle +(1.2,1.4);
\filldraw[draw=gray,pattern=north east lines, pattern color=black!40,thin,rounded corners=3pt] (-0.6,-0.2)  rectangle +(1.2,1.4);
\end{scope}
\foreach \y in {0,1} {
\draw[thin] (-5,\y) -- +(10,0);
\draw[dashed,thin] (-5,\y) -- +(-1,0);
\draw[dashed,thin] (5,\y) -- +(1,0);
\foreach \x in {-5,-4,-3,-2,-1,0,1,2,3,4,5} {
\node[tpoint] (a\x\y) at (\x,\y) {};
}
}
\foreach \x in {-5,-4,-3,-2,-1,1,2,3,4,5} {
\node at (\x,-0.5) {\tiny $\x$};
}
\node at (0,-0.5) {\small $0$};
\begin{scope}[semithick]\scriptsize
\draw[->] (a00) to node[right] {$S$} (a01);
\foreach \x in {-5,-4,-3} {
\draw[->] (a\x0) to node[left] {$S$} (a\x1);
}
\foreach \x in {3,4,5} {
\draw[->] (a\x0) to node[right] {$S$} (a\x1);
}
\end{scope}
\end{scope}
%
\begin{scope}[xshift=150mm,yshift=-72mm]
\node at (0,1.8) {\small $(iv_2)$: $I^S_\PP = (-\infty,-4]$ and $I^S_\FF = [-1,\infty)\supset I_0^S$};
\begin{scope}
\clip (-6,-0.5) rectangle +(12,2);
\filldraw[fill=gray!30,draw=black,thin,rounded corners=3pt] (-3.5,-0.2)  rectangle +(-4,1.4);
\filldraw[fill=gray!30,draw=black,thin,rounded corners=3pt] (-1.5,-0.2)  rectangle +(8,1.4);
\end{scope}
\foreach \y in {0,1} {
\draw[thin] (-5,\y) -- +(10,0);
\draw[dashed,thin] (-5,\y) -- +(-1,0);
\draw[dashed,thin] (5,\y) -- +(1,0);
\foreach \x in {-5,-4,-3,-2,-1,0,1,2,3,4,5} {
\node[tpoint] (a\x\y) at (\x,\y) {};
}
}
\foreach \x in {-5,-4,-3,-2,-1,1,2,3,4,5} {
\node at (\x,-0.5) {\tiny $\x$};
}
\node at (0,-0.5) {\small $0$};
\begin{scope}[semithick]\scriptsize
\foreach \x in {-5,-4} {
\draw[->] (a\x0) to node[left] {$S$} (a\x1);
}
\foreach \x in {-1,0,1,2,3,4,5} {
\draw[->] (a\x0) to node[right] {$S$} (a\x1);
}
\end{scope}
\end{scope}
%
\begin{scope}[xshift=0mm,yshift=-108mm]
\node at (0,1.8) {\small $(iv_3)$: $I^S_\PP = (-\infty,1]\supset I_0^S$ and $I^S_\FF = [4,\infty)$};
\begin{scope}
\clip (-6,-0.5) rectangle +(12,2);
\filldraw[fill=gray!30,draw=black,thin,rounded corners=3pt] (3.5,-0.2)  rectangle +(4,1.4);
\filldraw[fill=gray!30,draw=black,thin,rounded corners=3pt] (1.5,-0.2)  rectangle +(-8,1.4);
\end{scope}
\foreach \y in {0,1} {
\draw[thin] (-5,\y) -- +(10,0);
\draw[dashed,thin] (-5,\y) -- +(-1,0);
\draw[dashed,thin] (5,\y) -- +(1,0);
\foreach \x in {-5,-4,-3,-2,-1,0,1,2,3,4,5} {
\node[tpoint] (a\x\y) at (\x,\y) {};
}
}
\foreach \x in {-5,-4,-3,-2,-1,1,2,3,4,5} {
\node at (\x,-0.5) {\tiny $\x$};
}
\node at (0,-0.5) {\small $0$};
\begin{scope}[semithick]\scriptsize
\foreach \x in {-5,-4,-3,-2,-1,0,1} {
\draw[->] (a\x0) to node[left] {$S$} (a\x1);
}
\foreach \x in {4,5} {
\draw[->] (a\x0) to node[right] {$S$} (a\x1);
}
\end{scope}
\end{scope}
%
\begin{scope}[xshift=150mm,yshift=-108mm]
\node at (0,1.8) {\small $(iv_4)$: $I^S_\PP = (-\infty,0]$ and  $I^S_\FF = [0,\infty)$};
\begin{scope}
\clip (-6,-0.5) rectangle +(12,2);
\filldraw[fill=gray!30,draw=black,thin,rounded corners=3pt] (-7,-0.2)  rectangle +(14,1.4);
\end{scope}
\foreach \y in {0,1} {
\draw[thin] (-5,\y) -- +(10,0);
\draw[dashed,thin] (-5,\y) -- +(-1,0);
\draw[dashed,thin] (5,\y) -- +(1,0);
\foreach \x in {-5,-4,-3,-2,-1,0,1,2,3,4,5} {
\node[tpoint] (a\x\y) at (\x,\y) {};
}
}
\foreach \x in {-5,-4,-3,-2,-1,1,2,3,4,5} {
\node at (\x,-0.5) {\tiny $\x$};
}
\node at (0,-0.5) {\small $0$};
\begin{scope}[semithick]\scriptsize
\foreach \x in {-5,-4,-3,-2,-1,0,1,2,3,4,5} {
\draw[->] (a\x0) to node[right] {$S$} (a\x1);
}
\end{scope}
\end{scope}
\end{tikzpicture}}
\caption{Examples of possible configurations for a role in a role-monotone RBox: $I^S_0 = \emptyset$ in $(i_0)$ and $I^S_0 = \{0\}$ in $(i_1)$; similarly for other cases.}\label{fig:monotone}
\end{figure}

We now show how to replace {\bf (con)} for a role-monotone $\R$ with a set of $\DL_\core\Xbox$ CIs. Let $\rtp$ be a role type for~$\R$. We consider the following four groups of cases for each role $S$.
\begin{description}
\item[$(i)$] If $I^S_\PP = I^S_\FF = \emptyset$, then we take the CI
\begin{align}
\tag{$i_1$}
& \exists G_\rtp \sqsubseteq \exists S, && \text{ if } I_0^S \ne\emptyset;
\end{align}
otherwise---that is, in case $S\notin\rod_\rtp(n)$, for all $n \in \Z$---we do not need any CIs.
\item[$(ii)$] If $I^S_\PP = \emptyset$ and $I^S_\FF = [m,\infty)$,
then we take the following CIs, depending on $m$ and $I^S_0$ (note that $I^S_\emptyset \subset I^S_\FF$ if $m \leq 0$):
\begin{align}
\tag{$ii_0$}
& \exists G_\rtp \sqsubseteq \Rbox^m \exists S, && \text{if } m > 0 \text{ and } I_0^S =\emptyset;\\
\tag{$ii_1$}
& \exists G_\rtp \sqsubseteq \exists S\ \text{ and }\ \exists G_\rtp \sqsubseteq \Rbox^m \exists S, && \text{if } m > 0 \text{ and } I_0^S \ne\emptyset;\\
\tag{$ii_2$}
& \exists G_\rtp \sqsubseteq \Rbox D_{\rtp,S}\ \text{ and }\ \Rbox^{-m+1} D_{\rtp,S} \sqsubseteq \exists S, && \text{if } m \leq 0, \text{ for a fresh  } D_{\rtp,S}.
\end{align}

\item[$(iii)$] If $I^S_\PP = (-\infty, m']$ and $I^S_\FF = \emptyset$,
then we take the mirror image of $(ii)$, with  $\Lbox$, $<$, $\geq$ and $m'$ in place of $\Rbox$,  $>$,  $\leq$ and~$m$, respectively.
\item[$(iv)$] If $I^S_\PP = (-\infty,m']$ and $I^S_\FF = [m,\infty)$,
then we take the respective CIs from both cases $(ii)$ and $(iii)$.
\end{description}
The resulting set of CIs for all role types $\rtp$ for $\R$ and all roles $S$ is denoted by {\bf (mon-con)}. It is readily seen that the connecting axioms {\bf (con)} can be replaced by {\bf (mon-con)} in the definition of $\T_\R$; the resulting TBox is denoted by~$\T_{\textbf{mon-}\R}$. Thus, we obtain: 
\begin{lemma}\label{lemma:forbox}
For any  role-monotone $\TO$, Lemmas~\ref{thm:technical} and~\ref{lemma:basis} hold for $\T_{\textbf{mon-}\R}$ in place of~$\T_\R$.
\end{lemma}

We show now that all $\DL_{\smash{\core}}\Xbox$ RBoxes as well as
$\DL_{\smash{\horn}}\Xbox$ RBoxes without $\Box$-operators on the
left-hand side of RIs are role-monotone. Recall that the fragment of $\DL_{\smash{\bool/\horn}}\Xbox$ with
RIs of the latter type is denoted by
$\DL_{\smash{\bool/\rhorn}}\Xbox$.
Note that instead of RIs with $\Box$-operators on the
right-hand side only we can take RIs with
$\Diamond$-operators on the left-hand side only (see the discussion on \emph{TQL} in Section~\ref{sec:tdl}).

\begin{lemma}\label{lemma:role-monotone}
All $\DL_{\smash{\bool/\core}}\Xbox$ and $\DL_{\smash{\bool/\rhorn}}\Xbox$
ontologies are role-monotone.
\end{lemma}
\begin{proof}
Suppose first that $\TO = \T \cup \R$ is a $\DL_{\bool/\core}\Xbox$ ontology and $\rtp$ a role type for $\R$. Denote by~$\rod_\rtp^\alpha(n)$ the set of temporalised roles $R$ such that $R^{\smash{\ddagger}}(n)\in\cl_{\smash{\R^\ddagger}}^\alpha(\{R^{\smash{\ddagger}}(0) \mid R \in \rtp\})$; cf.~the canonical model construction in Section~\ref{sec:canonical-model} below or~\cite[Section~7]{AIJ21}. It is readily seen by induction that, for any $\alpha < \omega_1$, if $R \in \rod_\rtp^\alpha(n)$ and~$n \neq 0$, then~$R \in \rod_\rtp^\alpha(k)$ either for all~$k \ge n$ or for all $k \le n$.
In $\DL_{\smash{\bool/\rhorn}}\Xbox$, there are no
$\Box$-operators on the left-hand side of RIs, and so the
rules $(\Rbox^\leftarrow)$ and $(\Lbox^\leftarrow)$ in the canonical model construction of Section~\ref{sec:canonical-model} become
redundant. In view of this, we can easily show that if $R \in \rod_\rtp^\alpha(n)$ and either $0 <
n < k$ or $k < n < 0$, then
$R \in \rod_\rtp^\alpha(k)$.
\end{proof}

Following the lines of the argument for Theorem~\ref{cor:dl-omaq-rewritability}, Lemmas~\ref{lemma:forbox} and~\ref{lemma:role-monotone} and using $\FO(<)$-rewritability of $\LTL\Xbox_{\bool}$ \OMAQ{}s and $\LTL\Xbox_{\horn}$ \OMPIQ{}s~\cite[Theorems~11 and 24]{AIJ21}, we obtain:
\begin{theorem}\label{ex:dl-core-mon-fo}
All $\DL_{\smash{\bool/\core}}\Xbox$ and $\DL_{\smash{\bool/\rhorn}}\Xbox$
\OMAQ{}s are $\FO(<)$-rewritable.
\end{theorem}



\section{Rewriting $\DL_{\horn}\Xallop{}$ OMPIQs}\label{OBDAfor Temporal}

Our next aim is to lift, where possible, the rewritability results obtained in Section~\ref{sec:DL-Lite} for temporal \DL{} \OMAQ{}s to \OMPIQ{}s. 
First, we observe that, for \OMPIQ{}s of the form $(\TO,\varrho)$ with a positive temporal role $\varrho$, Lemma~\ref{lemma:consistency}~($ii$), Proposition~\ref{prop:roleOMIQs} and Lemmas~\ref{lemma:basis}, \ref{lemma:forbox} and~\ref{lemma:role-monotone} (for the consistency concept \OMPIQ{} in Lemma~\ref{lemma:consistency}~($ii$)) together with the \LTL-rewritability results of~\citeA{AIJ21} give us the following (see Table~\ref{TDL-table-omaq}):
\begin{theorem}\label{cor:role-ompiqs}
For \OMPIQ{}s of the form $(\TO,\varrho)$ with a positive temporal role $\varrho$,  

$(i)$ all $\DL_{\smash{\bool/\core}}\Xbox$ and $\DL_{\smash{\bool/\rhorn}}$ \OMPIQ{}s are $\FO(<)$-rewritable\textup{;}

$(ii)$ all $\DL_{\smash{\krom/\core}}\Xallop$ \OMPIQ{}s and $\DL_{\smash{\krom/\horn}}\Xbox$ are $\FOE$-rewritable\textup{;}

$(iii)$ all $\DL_{\smash{\bool/\horn}}\Xallop$ \OMPIQ{}s  are $\FO(\RPR)$-rewritable.
\end{theorem}

\begin{table}[t]
\centering%
\tabcolsep=0pt
\begin{tabular}{ccc}\toprule
 \rule[-3pt]{0pt}{13pt} & $\DL_{\frag/\fragr}\Xbox$ &  $\DL_{\frag/\fragr}\Xnext$ {\footnotesize and}  $\DL_{\frag/\fragr}\Xallop$  \\\midrule
$*/(\mathit{g}\text{-})\bool$ & \coNP-hard  {\scriptsize \cite{DBLP:journals/jiis/Schaerf93}} & undecidable {\scriptsize [Th.~\ref{thm:undec}]} \\\midrule
$*/\krom$ &  \multicolumn{2}{c}{\multirow{4}{*}{\coNP-hard  {\scriptsize\cite{DBLP:journals/jiis/Schaerf93}} }}  \\
$\bool/\horn$, $\krom/\horn$ & \\
$\bool/\rhorn$, $\krom/\rhorn$  & &\\
$\bool/\core$, $\krom/\core$  & &  \\\midrule
$\horn$ &$\FO(\RPR)$ {\scriptsize [Th.~\ref{cor:dl-omiq-rewritability}\,(\emph{iii})]}, {\small \NCo-hard} {\scriptsize [Th.~\ref{thm:unexpected}\,(\emph{i})]} &
\multirow{4}{*}[-1\dimexpr \aboverulesep + \belowrulesep + \cmidrulewidth]{\renewcommand{\tabcolsep}{0pt}\begin{tabular}{c}\FO(\RPR) {\scriptsize [Th.~\ref{cor:dl-omiq-rewritability}~(\emph{iii})]}\\[4pt]
{\small \NCo-hard}\\[-4pt]{\scriptsize \cite[Th.~10]{AIJ21}}\end{tabular}} \\ \cmidrule(lr){2-2}
$\core/\horn$ & $\FOE$  {\scriptsize [Th.~\ref{cor:dl-omiq-rewritability}\,(\emph{ii})  \& Th.~\ref{thm:unexpected}\,(\emph{ii})]} \\ \cmidrule(lr){2-2}
$\horn/\rhorn$, $\core/\rhorn$ &  \multirow{3}{*}[-0.5\dimexpr \aboverulesep + \belowrulesep + \cmidrulewidth]{$\FO(<)$ {\scriptsize [Th.~\ref{cor:dl-omiq-rewritability}\,(\emph{i})]}} & \\[3pt]
$\horn/\core$&  &  \\ \cmidrule(lr){3-3}
$\core$&  & $\FOE$  {\scriptsize [Th.~\ref{cor:dl-omiq-rewritability}\,(\emph{ii})]} \\\bottomrule
\end{tabular}
\caption{Rewritability and data complexity of $\DL_{\frag/\fragr}^{\op}$ \OMPIQ{}s of the form $(\TO,\varkappa)$, for a positive temporal concept $\varkappa$, where * denotes any of $\bool$, $\horn$, $\krom$ or $\core$.}
\label{TDL-table-ompiq}
\end{table}

So it remains to consider \OMPIQ{}s of the form $(\TO,\varkappa)$ with a positive temporal concept~$\varkappa$. 
Recall that all \LTL{} \OMPIQ{}s are $\FO(\RPR)$-rewritable, and so answering them can always be done in $\NCo$ for data complexity. On the other hand, as follows from the results of~\citeA{DBLP:journals/jiis/Schaerf93}, answering atemporal $\DL_\krom$ \OMPIQ{}s with positive temporal concepts can encode 2+2SAT:
\begin{equation*}
\big( \{\, \top \sqsubseteq A \sqcup B \,\}, \ \exists R. (\exists P_1.A \sqcap \exists P_2.A \sqcap \exists N_1.B \sqcap \exists N_2.B)\big)
\end{equation*}
is $\coNP$-complete for data complexity. In a similar manner, one can show that answering \OMPIQ{}s with the $\DL\Xbox_\krom$ ontology $\{A \sqsubseteq \Rdiamond B\}$ is $\coNP$-complete because its normal form is $\{A \sqcap \Rbox A' \sqsubseteq \bot,\ \top \sqsubseteq A' \sqcup B\}$. There are also $\DL_\krom$ \OMPIQ{}s that are complete for \LogSpace, \NL{} and \PTime~\cite{DBLP:conf/kr/GerasimovaKKPZ20}, with the inclusion $\NCo \subseteq \LogSpace$ believed to be strict.  (Note that these results require ABoxes with an unbounded number of individual names.) Thus, in the remainder of Section~\ref{OBDAfor Temporal}, we focus on $\DL_{\horn}\Xallop{}$ \OMPIQ{}s; see Table~\ref{TDL-table-ompiq}.

\subsection{Canonical Models}\label{sec:canonical-model}

The proofs of the rewritability results for \OMPIQ{}s require canonical models for fragments of $\DL_\horn\Xallop$, which generalise the canonical models of~\citeA{AIJ21}.

Suppose we are given a $\DL_\horn\Xallop$ ontology $\TO$ and an ABox $\A$. Let $\Lambda$ be a countable set of atoms of the form $\bot$, $C(w,n)$ and $R(w_1,w_2,n)$, where $C$ is a temporalised concept, $R$ a temporalised role and $n\in\Z$. To simplify notation, we refer to the concept and role atoms $C(w,n)$ and $R(w_1,w_2,n)$ as $\vartheta(\avec{w}, n)$, calling $\avec{w}$---that is, $w$ or $(w_1,w_2)$---a \emph{tuple}.
Denote by $\cl_\TO(\Lambda)$ the result of applying non-recursively to $\Lambda$ the following rules, where $S$ is a role: 
\begin{description}
\item[\rm (mp)] if $\TO$ contains $\vartheta_1 \sqcap \dots \sqcap \vartheta_m \sqsubseteq \vartheta$ and $\vartheta_i(\avec{w}, n) \in \Lambda$ for all $i$, $1 \le i \le m$, then we add $\vartheta(\avec{w}, n)$ to $\Lambda$;

\item[\rm (cls)] if $\TO$ contains $\vartheta_1 \sqcap \dots \sqcap \vartheta_m \sqsubseteq \bot$  and $\vartheta_i(\avec{w}, n) \in \Lambda$ for  all $i$, $1 \le i \le m$, then we add $\bot$;

\item[\rm ({$\Rbox^\to$})] if $\Rbox \vartheta(\avec{w}, n) \in \Lambda$, then we  add all $\vartheta(\avec{w}, k)$ with $k > n$;

\item[\rm ({$\Rbox^\leftarrow$})] if $\vartheta(\avec{w}, k) \in \Lambda$ for all $k > n$, then we add $\Rbox \vartheta(\avec{w}, n)$;

\item[\rm ({$\Rnext^\to$})] if $\Rnext \vartheta(\avec{w}, n) \in \Lambda$,  then we  add $\vartheta(\avec{w}, n+1)$;

\item[\rm ({$\Rnext^\leftarrow$})] if $\vartheta(\avec{w}, n + 1) \in \Lambda$,  then we add $\Rnext \vartheta(\avec{w}, n)$;

\item[\rm ({$\Lbox^\to$}), ({$\Lbox^\leftarrow$}), ({$\Lnext^\to$}), ({$\Lnext^\leftarrow$})] are the past-time counterparts of the four rules above;

\item[\rm (inv)] if $S(w_1,w_2,n) \in \Lambda$, then we add $S^-(w_2,w_1,n)$ to $\Lambda$ (assuming that $(P^-)^- = P$);

\item[\rm $(\exists^\leftarrow)$] if $S(w_1,w_2,n) \in \Lambda$, then we add $\exists S(w_1,n)$;

\item[\rm $(\exists^\to)$] if $\exists S(w,n) \in \Lambda$,
then we add $S(w,wS^n,n)$, where $wS^n$ is an individual  name called the  \emph{witness} for $\exists S(w,n)$.
\end{description}
We set $\cl_\TO^0(\Lambda) = \Lambda$ and, for any successor ordinal $\xi +1$ and limit ordinal $\zeta$,
\begin{equation}\label{closures}
\cl_\TO^{\smash{\xi +1}}(\Lambda) = \cl_\TO(\cl_\TO^{\smash{\xi}}(\Lambda))\qquad \text{ and }\qquad  \cl^{\smash{\zeta}}_\TO (\Lambda) = \bigcup\nolimits_{\xi<\zeta} \cl_\TO^{\smash{\xi}}(\Lambda).
\end{equation}
Let  $\can = \cl_\TO^{\omega_1}(\A)$, where $\omega_1$ is the first uncountable ordinal (as
$\cl_\TO^{\smash{\omega_1}}(\Lambda)$ is countable, there is an ordinal $\alpha < \omega_1$ such that $\cl^{\smash{\alpha}}_\TO(\Lambda) = \cl^{\smash{\beta}}_\TO (\Lambda)$, for all $\beta \ge \alpha$).
We regard $\can$ as both a set of atoms of the form $\bot$ and $\vartheta(\avec{w}, n)$ and as an interpretation whose domain, $\Delta^\can$, comprises $\ind(\A)$ and the witnesses~$wS^n$ that are used in the construction of~$\can$, and the interpretation function is defined by taking $\avec{w} \in \eta^{\can(n)}$ iff $\eta(\avec{w},n) \in \can$, for any concept or role name~$\eta$.

\begin{example}\label{ex:canon}\em
The interpretation $\can$ for $\A = \{\,A(a,0)\,\}$ and
\begin{equation*}
\TO ~=~ \bigl\{\, A \sqsubseteq \Rbox \exists P, \  P \sqsubseteq \Rnext R, \ R \sqsubseteq \Rnext R, \ \Rbox R \sqsubseteq S, \ \exists S^- \sqsubseteq \exists T, \ \exists T^- \sqsubseteq A \,\bigr\}
\end{equation*}
is shown in Fig.~\ref{fig:dl-lite:canonical}.
\begin{figure}[t]
\centerline{%
\begin{tikzpicture}[xscale=1.2, yscale=0.75, semithick]\footnotesize
\foreach \y in {0,1,1.75,2.45,2.95} {
\draw[object-timeline] (-4.5,\y) -- ++(10,0);
}
\foreach \x/\l in {-4/0,-2/1,-0/2,2/3,4/4} {
\node at (\x,-0.6) {\scriptsize $\l$};
\draw[time-guideline] (\x,-0.3) -- ++(0,3.5);
\draw[dotted] (\x,3.3) -- ++(0,0.5);
}
\node (a) at (-5,0) {$a$};
\node (a0) at (-4,0) [wpoint]{\scriptsize$A$};
\node (a1) at (-2,0) [point]{};
\node (a2) at (0,0) [point]{};
\node (a3) at (2,0) [point]{};
\node (a4) at (4,0) [point]{};
\foreach \y/\j in {1/1,1.75/2,2.45/3,2.95/4} {
\node (aP1) at (-5,\y) {$aP^\j$};
\foreach \x/\i in {-4/0,-2/1,0/2,2/3,4/4} {
\node (aP\j\i) at (\x,\y) [point]{};
}
}
\node at (-5,3.35) {$\cdots$};
\draw[->,ultra thick] (a1)  to node [left]{$P,S$} (aP11);
\draw[->,thick] (a2)  to node [right]{\scriptsize $R,S$}  (aP12);
\draw[->,ultra thick,bend left,looseness=1.2] (a2)  to node [left,pos=0.75]{$P,S$} (aP22);
\draw[->,thick] (a3)  to node [right]{\scriptsize $R,S$}  (aP13);
\draw[->,bend left,looseness=1.2] (a3)  to node [right,pos=0.8]{\scriptsize$R,S$} (aP23);
\draw[->,ultra thick, bend left,looseness=1.4] (a3)  to node [left,pos=0.6]{$P,S$} (aP33);
\draw[->,thick] (a4)  to node [right]{\scriptsize $R,S$}  (aP14);
\draw[->,bend left,looseness=1.2] (a4)  to node [right,pos=0.8]{\scriptsize$R,S$} (aP24);
\draw[->,bend left,looseness=1.4] (a4)  to node [right,pos=0.9]{\scriptsize$R,S$} (aP34);
\draw[->,ultra thick,bend left,looseness=1.6] (a4)  to node [left,pos=0.5]{$P,S$} (aP44);
\begin{scope}[xshift=2mm,yshift=45mm]
\foreach \y in {0,1,1.75,2.25} {
\draw[object-timeline] (-4.5,\y) -- ++(10,0);
}
\node at (-5,2.65) {$\cdots$};
\foreach \x/\i in {-4/0,-2/1,0/2,2/3,4/4} {
\draw[time-guideline] (\x,-0.3) -- ++(0,2.8);
\draw[dotted] (\x,2.6) -- ++(0,0.5);
}
\node (da) at (-5,0) {$aP^1T^1$};
\node (da0) at (-4,0) [point]{};
\node (da1) at (-2,0) [wpoint]{\scriptsize$A$};
\node (da2) at (0,0) [point]{};
\node (da3) at (2,0) [point]{};
\node (da4) at (4,0) [point]{};
\foreach \y/\j in {1/1,1.75/2,2.25/3} {
\node (daP1) at (-5.25,\y) {$aP^1T^1P^\j$};
\foreach \x/\i in {-4/0,-2/1,0/2,2/3,4/4} {
\node (daP\j\i) at (\x,\y) [point]{};
}
}
\draw[->,ultra thick] (da2)  to node [left]{$P,S$} (daP12);
\draw[->,thick] (da3)  to node [right]{\scriptsize $R,S$}  (daP13);
\draw[->,ultra thick,bend left,looseness=1.2] (da3)  to node [left,pos=0.75]{$P,S$} (daP23);
\draw[->,thick] (da4)  to node [right]{\scriptsize $R,S$}  (daP14);
\draw[->,bend left,looseness=1.2] (da4)  to node [right,pos=0.8]{\scriptsize$R,S$} (daP24);
\draw[->,ultra thick, bend left,looseness=1.4] (da4)  to node [left,pos=0.7]{$P,S$} (daP34);
\end{scope}
\draw[->,thick,out=62,in=-90] (aP11) to node[right,pos=0.75] {$T$} (da1);
\begin{scope}[gray]
\draw[->,thick,out=57,in=-90] (daP12) to node[left,pos=0.92] {$T$} ($(da1)+(2.4,3.2)$);
\draw[->,thick,out=60,in=-90] (aP22) to ($(da1)+(2.55,3.3)$);
\draw[->,thick,out=57,in=-90] (aP12) to node[right,pos=0.968] {$T$} ($(da1)+(2.7,3.4)$);
\draw[->,thick,out=54,in=-90] (daP23) to node[left,pos=0.83] {$T$} ($(da1)+(4.4,3.2)$);
\draw[->,thick,out=51,in=-90] (daP13) to ($(da1)+(4.53,3.3)$);
\draw[->,thick,out=57,in=-90] (aP33) to ($(da1)+(4.66,3.4)$);
\draw[->,thick,out=54,in=-90] (aP23) to ($(da1)+(4.79,3.5)$);
\draw[->,thick,out=51,in=-90] (aP13) to node[right,pos=0.955] {$T$} ($(da1)+(4.92,3.6)$);
\draw[->,thick,out=51,in=-90] (daP34) to node[left,pos=0.74] {$T$} ($(da1)+(6.4,3.2)$);
\draw[->,thick,out=48,in=-90] (daP24) to ($(da1)+(6.53,3.3)$);
\draw[->,thick,out=45,in=-90] (daP14) to ($(da1)+(6.66,3.4)$);
\draw[->,thick,out=54,in=-90] (aP44) to ($(da1)+(6.79,3.5)$);
\draw[->,thick,out=51,in=-90] (aP34) to ($(da1)+(6.92,3.6)$);
\draw[->,thick,out=48,in=-90] (aP24) to ($(da1)+(7.05,3.7)$);
\draw[->,thick,out=45,in=-90] (aP14) to node[right,pos=0.95] {$T$} ($(da1)+(7.18,3.8)$);
\end{scope}
\end{tikzpicture}%
}%
\caption{The interpretation $\can$ in Example~\ref{ex:canon}.}\label{fig:dl-lite:canonical}
\end{figure}
Note that its construction requires $\omega^2$ applications of $\cl_\TO$: $\omega + 2$ applications are needed to derive $S(a,aP^1,1)$ (in fact, all $S(a,aP^k, n)$, for $n \geq k > 0$), then a $T$-successor with $A$ is created, and the process is repeated again and again. The resulting interpretation has infinite branching and infinite depth.
\end{example}

Let $\I =(\Delta^\I,\cdot^{\I(n)})$ be a temporal interpretation. A \emph{homomorphism} from $\can$ to $\I$ is a map $h \colon \Delta^\can \to \Delta^\I$ such that $h(a) = a^\I$, for $a \in \ind(\A)$, and $\avec{w} \in \eta^{\can(n)}$ implies $h(\avec{w}) \in \eta^{\I(n)}$, for any concept or role name $\eta$, any tuple~$\avec{w}$ from~$\Delta^\can$ and any $n \in \Z$, where $h(w_1,w_2) = (h(w_1),h(w_2))$. We now show that the canonical models of $\DL_\horn\Xallop$ KBs satisfy the following important properties; cf.~Theorem~18 by~\citeA{AIJ21}.
\begin{theorem}\label{canonical-dl}
Let $\TO$ be a $\DL_\horn\Xallop$ ontology and $\A$ an ABox. Then the following hold\textup{:}
\begin{description}
\item[\rm (\emph{i})] for any temporalised concept or role $\vartheta$, any tuple $\avec{w}$ in $\Delta^\can$ and any $n \in \Z$, we have $\avec{w} \in \vartheta^{\can(n)}$ iff $\vartheta(\avec{w}, n) \in \can$\textup{;}

\item[\rm (\emph{ii})] for any model $\I$ of $\TO$ and $\A$, there exists a homomorphism $h$ from $\can$ to $\I$\textup{;} 

\item[\rm (\emph{iii})] if $\bot \in \can$, then $\TO$ and $\A$ are inconsistent\textup{;} otherwise, $\can$ is a model of $\TO$ and $\A$\textup{;}

\item[\rm (\emph{iv})] if $\TO$ and $\A$ are consistent, then, for any \OMPIQ{} $\q = (\TO,\varkappa)$ and any $n\in\Z$, we have $(a,n) \in \ans^\Z(\q,\A)$ iff $a \in \varkappa^{\can(n)}$\textup{;} similarly, for any \OMPIQ{} $\q = (\TO,\varrho)$, we have $(a,b,n) \in \ans^\Z(\q,\A)$ iff $(a,b) \in \varrho^{\can(n)}$.
\end{description}
\end{theorem}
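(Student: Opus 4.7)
The plan is to prove the four parts in the order (i), (iii), (ii), (iv), since (iii) is needed to make sense of $\can$ as an interpretation that satisfies $\TO$, and (ii) uses the homomorphism built along the ordinal stages~\eqref{closures}.

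For part~(i), I would argue by induction on the structure of the temporalised concept or role $\vartheta$. The base case, when $\vartheta$ is a concept or role name, is by definition of $\can$ as an interpretation. For $\vartheta = \exists S$, the rule $(\exists^\to)$ produces the witness $wS^n$ whenever $\exists S(w,n) \in \can$, which by the base case gives $w \in (\exists S)^{\can(n)}$; the converse direction uses $(\exists^\leftarrow)$ and the fact that any $(w,w') \in S^{\can(n)}$ forces $S(w,w',n) \in \can$. For $\vartheta = \Rbox \vartheta'$, rule $(\Rbox^\to)$ handles $(\Rightarrow)$ via the inductive hypothesis applied at each $k>n$, and rule $(\Rbox^\leftarrow)$ handles $(\Leftarrow)$; the cases of $\Lbox, \Rnext, \Lnext$ and inverse roles are analogous, using (inv), $(\Rnext^\to)$, $(\Rnext^\leftarrow)$, etc. The point is that each closure rule mirrors exactly one semantic clause.

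For part~(iii), if $\bot \notin \can$, I would verify that $\can$ satisfies every CI and RI: given an inclusion $\vartheta_1 \sqcap \dots \sqcap \vartheta_m \sqsubseteq \vartheta$ in $\TO$ and a tuple $\avec{w}$ with $\avec{w} \in \vartheta_i^{\can(n)}$ for all $i$, part~(i) gives $\vartheta_i(\avec{w}, n) \in \can$, whence rule (mp) yields $\vartheta(\avec{w},n) \in \can$, and (i) again delivers $\avec{w} \in \vartheta^{\can(n)}$; the disjointness case is handled by rule (cls), which would otherwise force $\bot \in \can$. Since $\A \subseteq \can$, all ABox atoms are satisfied. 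If instead $\bot \in \can$, then by construction $\bot$ is introduced only at some stage where (cls) fires on atoms that by (i) must hold in \emph{every} model extending $\A$, so $(\TO,\A)$ is inconsistent (this fact is cleaner once part~(ii) is in hand, so I would just defer the detailed justification until after (ii)).

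For part~(ii), I would construct the homomorphism $h\colon\Delta^\can \to \Delta^\I$ by transfinite recursion along the stages $\cl_\TO^\xi(\A)$. Set $h(a) = a^\I$ for $a \in \ind(\A)$. Whenever a new witness $wS^n$ is introduced at stage $\xi+1$ by $(\exists^\to)$ from $\exists S(w,n) \in \cl_\TO^\xi(\A)$, I use the inductive claim that $\vartheta(\avec{w}',k) \in \cl_\TO^\xi(\A)$ implies $h(\avec{w}') \in \vartheta^{\I(k)}$ (which has already been established for existing elements): in particular $h(w) \in (\exists S)^{\I(n)}$, so a witness $v \in \Delta^\I$ exists with $(h(w),v) \in S^{\I(n)}$, and I set $h(wS^n) = v$. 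I then verify rule by rule that after each application of $\cl_\TO$ the preservation property is maintained — the rules (mp), (cls), $(\Rbox^\to)$, $(\Rbox^\leftarrow)$, $(\Rnext^\to)$, $(\Rnext^\leftarrow)$, (inv), $(\exists^\leftarrow)$ all correspond to valid semantic entailments in $\I$, since $\I \models \TO$. For limit ordinals, preservation passes to unions because all rules are finitary.

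Finally, part~(iv) follows directly: by~(iii), $\can$ is itself a model of $(\TO,\A)$, so any $(a,n) \in \ans^\Z(\q,\A)$ must satisfy $a \in \varkappa^{\can(n)}$; conversely, since the query $\varkappa$ is positive and thus preserved under homomorphisms (which is shown by a straightforward induction on~$\varkappa$ using part~(ii)), $a \in \varkappa^{\can(n)}$ implies $h(a) = a^\I \in \varkappa^{\I(n)}$ for every model $\I$ of $(\TO,\A)$. The role case is identical. The main technical obstacle will be the bookkeeping in part~(ii) for temporal operators $\Rbox$ and $\Lbox$, where the closure rules $(\Rbox^\leftarrow)$ and $(\Lbox^\leftarrow)$ fire only when infinitely many atoms are already present; here I would exploit that the ordinal stages~\eqref{closures} go up to $\omega_1$, so any infinite set of atoms derived is derived by some countable stage, after which the infinitary rule applies and the preservation step still goes through using the inductive hypothesis on all those atoms.
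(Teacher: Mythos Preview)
Your proposal is correct and follows essentially the same approach as the paper: structural induction on $\vartheta$ for (i), transfinite construction of $h$ along the stages $\cl_\TO^\xi(\A)$ for (ii), and the model/homomorphism combination for (iii) and (iv). One small slip worth cleaning up: you write that ``preservation passes to unions because all rules are finitary,'' but $(\Rbox^\leftarrow)$ and $(\Lbox^\leftarrow)$ are not finitary (as you yourself note later); the correct reason preservation holds at limit stages is simply that no new atoms are added there, so the claim follows immediately from the inductive hypothesis at earlier stages.
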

\begin{proof}
Claim $(i)$ is proved by induction on the construction of $\vartheta$. The basis of induction (for a concept or role name~$\vartheta$) follows from the definition of $\can$. Suppose $\vartheta = \exists S$ and $w \in (\exists S)^{\can(n)}$. Then $(w,w') \in S^{\can(n)}$, for some $w' \in \Delta^\can$, and so, by the induction hypothesis, $S(w,w', n) \in \can$, which gives $\exists S(w,n) \in \can$ by $(\exists^\leftarrow)$. Conversely, if $\exists S(w,n) \in \can$ then, by $(\exists^\to)$, we have $S(w,wS^n,n) \in \can$, whence, by the induction hypothesis, $(w,wS^n) \in S^{\can(n)}$, and so $w \in (\exists S)^{\can(n)}$. The case of $\vartheta = P^-$ is straightforward by~(inv). For $\vartheta = \Rbox \vartheta_1$, suppose first that $\avec{w}\in (\Rbox \vartheta_1)^{\can(n)}$. Then $\avec{w}\in \vartheta_1^{\can(k)}$ for all $k > n$, whence, by the induction hypothesis, $\vartheta_1(\avec{w},k) \in \can$, and so, by ({$\Rbox^\leftarrow$}), we obtain $\Rbox \vartheta_1(\avec{w},n) \in\can$. Conversely, if $\Rbox \vartheta_1(\avec{w},n) \in\can$ then, by ({$\Rbox^\to$}), we have $\vartheta_1(\avec{w},k) \in \can$ for all $k > n$. By the induction hypothesis, $\avec{w} \in \smash{\vartheta_1^{\can(k)}}$ for all $k > n$, and so $\avec{w}\in (\Rbox \vartheta_1)^{\can(n)}$. The other temporal operators, $\Lbox$, $\Rnext$ and $\Lnext$, are treated similarly.

\smallskip

$(ii)$ Suppose $\I$ is a model of $\TO$ and $\A$. By induction on $\alpha < \omega_1$, we construct a map $h^\alpha$ from $\Delta^\can$ to $\Delta^\I$ such that, for any $\vartheta$, any tuple $\avec{w}$ in $\Delta^\can$ and any $n \in \Z$, if $\vartheta(\avec{w},n) \in \cl_\TO^\alpha(\A)$ then $h^\alpha(\avec{w}) \in \vartheta^{\I(n)}$. For the basis of induction, we set  $h^0(a) = a^\I$. By~(\emph{i}), the basis is chosen correctly. Next, for a successor ordinal $\xi + 1$, we define $h^{\xi+1}$ by extending $h^\xi$ to the new witnesses $wS^n$ introduced by the rule $(\exists^\to)$ at step $\xi + 1$. If $\exists S(w,n) \in \cl_\TO^{\smash{\xi}}(\A)$, then $h^\xi(w) \in (\exists S)^{\I(n)}$ by the induction hypothesis, and so there is $w' \in \Delta^\I$ such that $(h^\xi(w),w') \in S^{\I(n)}$. We set $h^{\xi +1}(wS^n) = w'$ for all such $wS^n$ and keep $h^{\xi +1}$ the same as $h^\xi$ on the domain elements of $\cl_\TO^{\smash{\xi}}(\A)$. That $\vartheta(\avec{w},n) \in \cl^{\smash{\xi + 1}}_\TO(\A)$ implies $h^{\xi + 1}(\avec{w}) \in \vartheta^{\I(n)}$ follows from the induction hypothesis and the fact that $\I$ is `closed' under all of our rules save $(\exists^\to)$. For a limit ordinal $\zeta$, we take $h^\zeta$ to be the union of all $h^\xi$, for~$\xi < \zeta$.

\smallskip

(\emph{iii}) Suppose first that $\bot\in \can$. Then there is an axiom  $\vartheta_1 \sqcap \dots \sqcap \vartheta_m \sqsubseteq \bot$ in $\TO$ and $n\in \Z$ such that $\vartheta_{\smash{i}}(\avec{w},n)\in \can$ for all $i$ $(1 \le i \le m)$. By (\emph{i}) and (\emph{ii}), $\avec{w} \in \vartheta_{\smash{i}}^{\I(n)}$, for all models $\I$ of $\TO$ and $\A$ and all $i$ ($1\leq i\leq m$). But then $\A$ is inconsistent with $\TO$. On the other hand, if~$\A$ is consistent with $\TO$, then $\can$
is a model of $\TO$ by (\emph{i}) and closure under the rules~(mp) and~(cls) since $\bot\notin\can$; $\can$ is a model of $\A$ by definition.

\smallskip

To show (\emph{iv}), observe that, for any interpretations $\I$ and $\J$, if $\eta^{\I(n)}\subseteq \eta^{\J(n)}$
for all $n\in\Z$ and all concept and role names $\eta$, then $\vartheta^{\I(n)}\subseteq\vartheta^{\J(n)}$ for all $n\in\Z$ and all positive
temporal concepts and roles $\vartheta$.
Now, that $(\avec{w},n) \in \ans^\Z(\q,\A)$ implies $\avec{w}\in\vartheta^{\can(n)}$
follows by~(\emph{iii}) from the fact that $\can$ is a model of $\TO$ and $\A$, while the converse direction follows
from (\emph{i}), (\emph{ii}) and the observation above.\mbox{}\hfill
\end{proof}

Items~(\emph{iii}) and~(\emph{iv}) allow us to establish the following simpler analogue of Lemma~\ref{lemma:consistency}~$(i)$, which holds for Horn OMPIQs (rather than OMAQs with Boolean CIs):
\begin{lemma}\label{lemma:consistency:horn}
Let $\lang$ be one of $\FO(<)$, $\FOE$, or $\FO(\RPR)$, $\fragr \in \{\core,\rhorn,\horn\}$ and $\frag\in\{\core,\horn\}$.
Let $\TO$ be a $\DL_{\smash{\frag/\fragr}}^\op$ ontology and $\TO'$, $\varkappa_{\smash{\bot}}'$  and $\varrho_{\smash{\bot}}'$ as in  Lemma~\ref{lemma:consistency}.
For any positive temporal concept~$\varkappa$, if $\rew'(x,t)$, $\rew^\T_{\smash{\bot}}(x,t)$ and $\rew^\R_{\smash{\bot}}(x,y,t)$ are $\lang$-rewritings of the \OMPIQ{}s $(\TO',  \varkappa)$, $(\TO', \varkappa'_\bot)$ and $(\TO',\varrho'_\bot)$, respectively, then
$\rew'(x,t)   \lor  \chi_{\smash{\bot}}$
is an $\lang$-rewriting of the \OMPIQ{}~$(\TO,\varkappa)$, where $\chi_{\smash{\bot}} = \exists x,t\,\rew_\bot^\T(x, t)  \lor  \exists x,y,t\,\rew_\bot^\R(x,y, t)$.
\end{lemma}

We also require the following simple properties of the canonical models:
\begin{lemma}\label{lem:witness-interaction}
  Let $\TO$ be a $\DL_\horn\Xallop$ ontology with $\TO = \T \cup \R$ and $\A$  an ABox.
  If~$\TO$ and~$\A$ are consistent, then, for any
  role~$S$ and $n\in \Z$, we have
\begin{description}
\item[\rm (\emph{i})] \label{lem:witness-inter:item0} $wS^n\in\Delta^\can$ iff $\exists S(w,n)\in \can$, for any $w\in \Delta^\can$\textup{;}
\item[\rm (\emph{ii})] \label{lem:witness-inter:item1}
$R(w,wS^n,m)\in \can$ iff
  $\R\models S \sqsubseteq \nxt^{m-n}R$, for any temporalised role $R$,  $wS^n\in\Delta^\can$ and $m\in\Z$\textup{;}
\item[\rm (\emph{iii})] \label{lem:witness-inter:item2}
$C(wS^n,m)\in \can$ iff
  $\TO\models \exists S^- \sqsubseteq \nxt^{m-n}C$, for any temporalised concept
  $C$, any $wS^n\in\Delta^\can$ and any $m\in\Z$.
\end{description}
\end{lemma}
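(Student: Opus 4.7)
\medskip\noindent
\textbf{Proof plan.} Part~(\emph{i}) is essentially tautological: the rule $(\exists^\to)$ introduces the witness $wS^n$ into $\Delta^\can$ at the stage it is applied to an atom $\exists S(w,n)\in \can$, and no other rule spawns elements into $\Delta^\can$; conversely, if $\exists S(w,n)\in \can$ then $(\exists^\to)$ must fire (recall $\can = \cl_\TO^{\omega_1}(\A)$).

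For the $(\Leftarrow)$ directions of (\emph{ii}) and (\emph{iii}), I would use Theorem~\ref{canonical-dl}. Since $\TO$ and $\A$ are consistent, $\can$ is a model of $\TO$ by Theorem~\ref{canonical-dl}(\emph{iii}), and the rule $(\exists^\to)$ ensures $S(w,wS^n,n)\in \can$, so $(w,wS^n)\in S^{\can(n)}$ and $wS^n\in (\exists S^-)^{\can(n)}$ by Theorem~\ref{canonical-dl}(\emph{i}). If $\R\models S\sqsubseteq \nxt^{m-n}R$, then $(w,wS^n)\in R^{\can(m)}$, whence $R(w,wS^n,m)\in \can$; similarly, if $\TO\models \exists S^-\sqsubseteq \nxt^{m-n}C$, then $C(wS^n,m)\in\can$.

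The $(\Rightarrow)$ directions are the heart of the argument; I would prove them simultaneously by transfinite induction on the stage $\alpha < \omega_1$ of the closure~$\cl_\TO^\alpha(\A)$. Define the \emph{expected} sets
\begin{align*}
\Psi^{\text{r}}_{w,S,n} & ~=~ \bigl\{\,R(w,wS^n,m),\ R^-(wS^n,w,m) \mid \R\models S\sqsubseteq \nxt^{m-n}R\,\bigr\},\\
\Psi^{\text{c}}_{w,S,n} & ~=~ \bigl\{\,C(wS^n,m) \mid \TO\models \exists S^-\sqsubseteq \nxt^{m-n}C\,\bigr\},
\end{align*}
and show by induction on $\alpha$ that every atom of the form $R(w,wS^n,m)$ or $R(wS^n,w,m)$ in $\cl_\TO^\alpha(\A)$ belongs to $\Psi^{\text{r}}_{w,S,n}$, and every atom $C(wS^n,m)$ in $\cl_\TO^\alpha(\A)$ belongs to $\Psi^{\text{c}}_{w,S,n}$. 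The base case $\alpha$ where $wS^n$ is introduced gives only $S(w,wS^n,n)$, which trivially lies in $\Psi^{\text{r}}_{w,S,n}$. For the inductive step, I would go through each closure rule: rules (mp), $(\Rbox^\to)$, $(\Rbox^\leftarrow)$, $(\Rnext^\to)$, $(\Rnext^\leftarrow)$ (and their past-time counterparts), (inv) preserve membership in $\Psi^{\text{r}}_{w,S,n}$ and $\Psi^{\text{c}}_{w,S,n}$ by straightforward syntactic manipulation of the entailment $\R\models S\sqsubseteq \nxt^{k-n}R$ or $\TO\models \exists S^-\sqsubseteq \nxt^{k-n}C$.

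The main obstacle is the interaction between concepts and roles via the rules $(\exists^\leftarrow)$ and $(\exists^\to)$, which requires the two inductive claims to be coupled. The crucial observation is that $wS^n$ only `talks' to $w$ and to its own descendant witnesses $wS^n R^{m'}\ldots$: any role atom $R(wS^n,v,m)\in\can$ with $v\ne w$ must have $v$ among those descendants (since the only way to obtain a new role endpoint at $wS^n$ is via $(\exists^\to)$ applied to some $\exists R(wS^n,m)\in\can$). Hence the application of $(\exists^\leftarrow)$ to a role atom involving $wS^n$ either uses an atom from $\Psi^{\text{r}}_{w,S,n}$ (giving a concept atom in $\Psi^{\text{c}}_{w,S,n}$ since $\R\models S\sqsubseteq \nxt^{m-n}R^-$ implies $\TO\models \exists S^-\sqsubseteq \nxt^{m-n}\exists R$), or uses a role atom to a descendant witness, in which case the corresponding $\exists R(wS^n,m)$ atom already belongs to $\Psi^{\text{c}}_{w,S,n}$ by the induction hypothesis (since it must be derived \emph{before} $(\exists^\to)$ creates the descendant). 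Applying $(\exists^\to)$ to an atom in $\Psi^{\text{c}}_{w,S,n}$ generates descendants of $wS^n$ whose atoms are irrelevant to the claims about $wS^n$ itself. The limit ordinal step is immediate from the union definition in~\eqref{closures}.
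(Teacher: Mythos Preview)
Your proof is correct. The paper takes a shorter, more conceptual route for the $(\Rightarrow)$ direction of~(\emph{ii}): rather than inducting on closure stages, it simply asserts that, by the construction of $\can$, any role atom $R(w,wS^n,m)\in\can$ already lies in $\C_{\R,\{S(w,wS^n,n)\}}$ --- the closure of the single seed $S(w,wS^n,n)$ under the RBox rules alone --- and then invokes the minimality characterisation~\eqref{intersection} of the $\LTL$ canonical model to conclude $\R\models S\sqsubseteq \nxt^{m-n}R$. Part~(\emph{iii}) is dispatched with ``similarly''. Your transfinite induction is in effect a detailed justification of that one-line assertion: it spells out why the role atoms on the pair $(w,wS^n)$ depend only on $\R$ and the seed, and why the concept atoms at $wS^n$ depend only on $\TO$ and the seed $\exists S^-$, making the coupling via $(\exists^\leftarrow)$ and $(\exists^\to)$ explicit. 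The paper's approach buys brevity by leaning on~\eqref{intersection}; yours buys transparency by unwinding the fixpoint construction directly.
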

\begin{proof}
Claim~(\emph{i}) follows immediately from rule~$(\exists^\rightarrow)$ in the definition of $\can$.

(\emph{ii}) Suppose $\R\models S \sqsubseteq \nxt^{m-n}R$ and
  $wS^n\in \Delta^\can$. Then $S(w,wS^n,n)\in \can$ by the definition of $\can$. By Theorem~\ref{canonical-dl}~(\emph{iii}),
  $\can\models \R$, and so $R(w,wS^n,m)\in \can$.
Conversely, suppose $R(w,wS^n,m)\in \can$. By the construction of $\can$, we have $S(w,wS^n,n)\in \can$ and $R(w,wS^n,m) \in \C_{\R,\{S(w,wS^n,n)\}}$. It follows, by~\eqref{intersection}, that $\R\models S \sqsubseteq \nxt^{m-n}R$.

The proof of (\emph{iii}) is similar.
\end{proof}

\subsection{Rewriting $\DL_{\horn}\Xallop{}$ OMPIQs}\label{sec:OMPIQs}

In this section, we prove our main technical result that reduces rewritability of $\bot$-free $\DL_{\horn}\Xallop{}$ \OMPIQ{}s $\q= (\TO,\varkappa)$ with a positive temporal concept~$\varkappa$ to rewritability of $\DL_{\horn}\Xallop{}$ OMAQs.
Without loss of generality, we assume that all concept and roles names in $\varkappa$ occur in $\TO = \T \cup \R$ and that a role name occurs in $\T$ iff it occurs in~$\R$.

In order to encode the structure of the infinite canonical model in a finite way, we use phantoms~\cite{AIJ21}, which are formulas that encode certain answers to \OMPIQ{}s  beyond the active temporal domain $\tem(\A)$. In the context of \DL{}, we require the following modification of the original definition.
\begin{definition}\label{def:phantomDL}\em
Let $\lang$ be one of $\FO(<)$, $\FOE$, or $\FO(\RPR)$. An $\lang$-\emph{phantom} of the given \OMPIQ{} $\q = (\TO,\varkappa)$ for $k \ne 0$ is an $\lang$-formula $\Phi^\kpar_{\q}(x)$ such that, for any ABox $\A$,
\begin{equation*}
\SA \models \Phi^\kpar_{\q}(a) \quad \text{ iff } \quad (a,\ka{\kpar})\in \ans^\Z(\q,\A), \qquad \text{  for any } a \in \ind (\A),
\end{equation*}
where
\begin{equation*}
\ka{\kpar} = \begin{cases}\max \A + \kpar, & \text{ if } \kpar > 0, \\  \min \A + \kpar, & \text{ if } \kpar < 0.\end{cases}
\end{equation*}
Similarly, an $\lang$-\emph{phantom} of an \OMPIQ{} $\q = (\TO,\varrho)$ for $\kpar\ne 0$ is an $\lang$-formula $\Phi^\kpar_{\q}(x,y)$ such that, for any ABox $\A$,
\begin{equation*}
\SA \models \Phi^\kpar_{\q}(a,b) \quad \text{ iff } \quad  (a,b, \ka{\kpar})\in \ans^\Z(\q,\A), \qquad \text{  for any } a,b \in \ind (\A).
\end{equation*}
\end{definition}
For $\bot$-free \OMPIQ{}s of the form $\q = (\TO,\varrho)$,  we can, by~\eqref{observ2}, construct $\lang$-phantoms $\Phi^\kpar_{\q}(x,y)$ from $\lang$-phantoms $\Phi^\kpar_{\smash{\q^\ddagger}}$ for the $\LTL$ \OMPIQ{}s $\q^\ddagger = (\R,\varrho)^\ddagger$, provided that they exist.
For $\bot$-free \OMAQ{}s of the form $\q = (\TO, B)$, we have the following analogue of Lemma~\ref{lemma:basis}:
\begin{lemma}\label{lemma:basis-k}
Let $\lang$ be one of $\FO(<)$, $\FOE$, or $\FO(\RPR)$. A $\bot$-free $\DL_{\horn}\Xallop$ \OMAQ{}  $(\TO,B)$ with $\TO = \T \cup \R$ and a basic concept $B$ has an $\lang$-phantom for $k \neq 0$ whenever
\begin{itemize}
\item[--] the $\LTL_\horn\Xallop$ \OMAQ{} $\q^\dagger = (\T_\R, B)^\dagger$ has an $\lang$-phantom for $\kpar$ and

\item[--] the $\LTL_\horn\Xallop$ \OMPIQ{}s $\q^\ddagger_\rtp = (\R, \rtp)^\ddagger$ are $\lang$-rewritable, for role types $\rtp$ for~$\R$.
\end{itemize}
\end{lemma}
The proof relies on the fact that Lemma~\ref{thm:technical} applies to all $n \in \Z$ and proceeds by taking an $\lang$-phantom for $\q^\dagger$ and replacing every $A(s)$ with $A(x,s)$, every $(\exists P)^\dagger(s)$ with $\exists y\,P(x,y,s)$, every $(\exists P^-)^\dagger(s)$ with $\exists y\,P(y,x,s)$ and every $(\exists G_\varrho)^\dagger(s)$ with $\exists y\,\rew_\rtp(x,y,s)$, where $\rew_\rtp(x,y,s)$ is an $\lang$-rewriting for $(\R, \rtp)$ provided by  Proposition~\ref{prop:roleOMIQs} for each $\smash{\q^\ddagger_\rtp}$.

Since the \LTL{} canonical models have ultimately periodic structure, there are only finitely many non-equivalent phantoms for any \OMPIQ{} $\q= (\TO,\varkappa)$. We also have the following:
\begin{restatable}{lemma}{dlperiod}\label{th:dl-period}
For any $\DL_{\horn}\Xallop$ ontology $\TO$,
there are positive integers $s_{\TO}$ and $p_{\TO}$  such that, for any  positive temporal concept $\varkappa$ \textup{(}all of whose concept and role names occur in
$\TO$\textup{)}, any ABox $\A$ consistent with $\TO$, and any $w\in  \Delta^{\can}$, 
\begin{align*}
& \can, \kpar\models \varkappa(w) \quad
    \text{iff} \quad \can, \kpar + p_{\TO} \models
    \varkappa(w), \quad \text{ for any } \kpar \geq M^w_\A + s_{\TO} +
    |\varkappa| p_{\TO},\\*
&    \can, \kpar \models \varkappa(w) \quad \text{iff} \quad
    \can, \kpar - p_{\TO} \models \varkappa(w), \quad \text{ for any } \kpar
    \leq \bar{M}^w_\A - (s_{\TO} + |\varkappa| p_{\TO}),
\end{align*}
where, for $w = a S_1^{m_1} \dots S_l^{m_l}\in \Delta^{\can}$, with $l \ge 0$, we denote
\begin{equation*}
M^w_\A = \max \{\, \max \A, m_1, \dots, m_l \,\} \qquad \text{ and }\qquad \bar{M}^w_\A = \min \{\, \min \A, m_1, \dots, m_l \,\}.
\end{equation*}
\end{restatable}
The bounds on $s_{\TO}$ and $p_{\TO}$ are double- and triple-exponential in the size of $\TO$, and the proof of this lemma, which generalises Lemma~\ref{period:A} in Section~\ref{sec:ltl-can-period}, can be found in Appendix~\ref{app:psi}; cf.~Lemma 22 by~\citeA{AIJ21} for the \LTL{} case.

We now have all of the ingredients to prove Lemma~\ref{lemma:mainDL}, our main technical result in Section~\ref{OBDAfor Temporal}. First, we illustrate the rather involved construction in its proof by an example.
\begin{example}\em\label{ex:phantoms}
Consider the \OMPIQ{} $\q = (\TO,\varkappa)$ with $\varkappa = \exists Q. \Rdiamond B$ and $\TO = \T \cup \R$, where
\begin{equation*}
\T = \bigl\{\, A \sqsubseteq \Rnext \exists P,\ \ \exists P^- \sqsubseteq \Rnext^2 B, \ \ \exists Q \sqsubseteq \exists Q \,\bigr\} \qquad\text{ and }\qquad \R = \bigl\{\, P \sqsubseteq \Lnext Q \,\bigr\}
\end{equation*}
(recall that we require that the TBox contains the same role names as the RBox).
We construct an $\FOE$-rewriting $\rew_{\TO, \varkappa}(x,t)$ of $\q$ by induction. To illustrate, consider an ABox $\A$ consistent with $\TO$, $a \in \ind(\A)$ and $\ell \in \tem(\A)$ such that $a \in (\exists Q. \Rdiamond B)^{\can(\ell)}$. Our rewriting has to cover two main cases; see Fig.~\ref{fig:phantoms}.

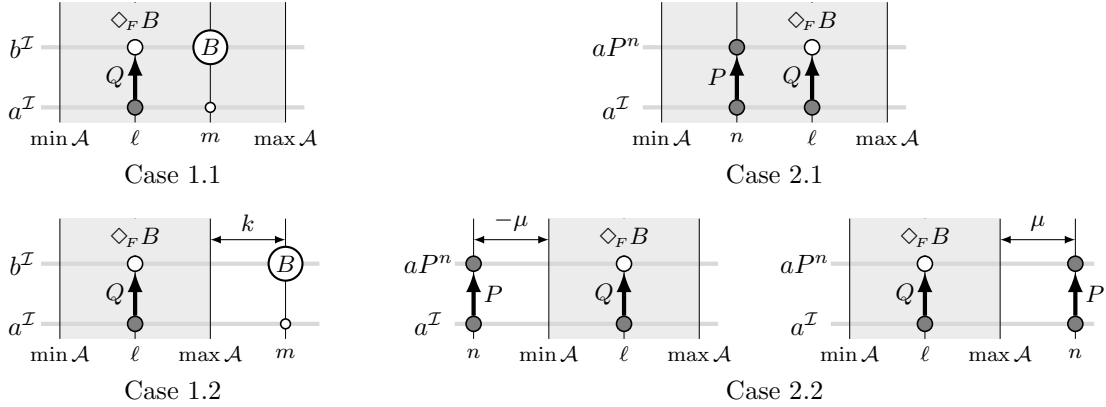
\begin{figure}[t]
\centering%
\begin{tikzpicture}[xscale=1, yscale=0.8]\footnotesize
\begin{scope}
\fill[gray!15] (0,-0.25) rectangle +(2,2);
\begin{scope}\small
\draw[object-timeline] (-0.25, 0) -- ++(3.7,0); \node at (-0.5, 0) {$a^\I$};
\draw[object-timeline] (-0.25, 1) -- ++(3.7,0); \node at (-0.5, 1) {$b^\I$};
\end{scope}
\begin{scope}\scriptsize
\draw[time-guideline] (1,-0.25) -- ++(0,2); \node at (1,-0.5) {$\ell$};
\draw[time-guideline] (3,-0.25) -- ++(0,2); \node at (3,-0.5) {$m$};
\draw[time-guideline] (0,-0.25) -- ++(0,2); \node at (0,-0.5) {$\min \A$};
\draw[time-guideline] (2,-0.25) -- ++(0,2); \node at (2,-0.5) {$\max \A$};
\end{scope}
\draw[<->] (2,1.4) -- ++(1,0) node[midway, above] {$k$};
\node[ppoint] (al) at  (1,0) {};
\node[qpoint,label={[fill=gray!15,above]{$\Rdiamond B$}}] (bl) at  (1,1) {};
\node[point] (am) at  (3,0) {};
\node[wpoint] (bm) at  (3,1) {$B$};
\begin{scope}[ultra thick]
\draw[->] (al) to node[left,midway] {$Q$} (bl);
\end{scope}
\node at (1.5,-1.1) {\small Case 1.2};
\end{scope}
\begin{scope}[yshift=36mm]
\fill[gray!15] (0,-0.25) rectangle +(3,2);
\begin{scope}\small
\draw[object-timeline] (-0.25, 0) -- ++(3.5,0); \node at (-0.5, 0) {$a^\I$};
\draw[object-timeline] (-0.25, 1) -- ++(3.5,0); \node at (-0.5, 1) {$b^\I$};
\end{scope}
\begin{scope}\scriptsize
\draw[time-guideline] (1,-0.25) -- ++(0,2); \node at (1,-0.5) { $\ell$};
\draw[time-guideline] (2,-0.25) -- ++(0,2); \node at (2,-0.5) {$m$};
\draw[time-guideline] (0,-0.25) -- ++(0,2); \node at (0,-0.5) {$\min \A$};
\draw[time-guideline] (3,-0.25) -- ++(0,2); \node at (3,-0.5) {$\max \A$};
\end{scope}
\node[ppoint] (al) at  (1,0) {};
\node[qpoint,label={[fill=gray!15,above]{$\Rdiamond B$}}] (bl) at  (1,1) {};
\node[point] (am) at  (2,0) {};
\node[wpoint] (bm) at  (2,1) {$B$};
\begin{scope}[ultra thick]
\draw[->] (al) to node[left,midway] {$Q$} (bl);
\end{scope}
\node at (1.5,-1.1) {\small Case 1.1};
\end{scope}
\begin{scope}[xshift=80mm,yshift=36mm]
\fill[gray!15] (0,-0.25) rectangle +(3,2);
\begin{scope}\small
\draw[object-timeline] (-0.25, 0) -- ++(3.5,0); \node at (-0.6, 0) {$a^\I$};
\draw[object-timeline] (-0.25, 1) -- ++(3.5,0); \node at (-0.6, 1) {$aP^n$};
\end{scope}
\begin{scope}\scriptsize
\draw[time-guideline] (1,-0.25) -- ++(0,2); \node at (1,-0.5) {$n$};
\draw[time-guideline] (2,-0.25) -- ++(0,2); \node at (2,-0.5) {$\ell$};
\draw[time-guideline] (0,-0.25) -- ++(0,2); \node at (0,-0.5) {$\min \A$};
\draw[time-guideline] (3,-0.25) -- ++(0,2); \node at (3,-0.5) {$\max \A$};
\end{scope}
\node[ppoint] (an) at  (1,0) {};
\node[ppoint] (bn) at  (1,1) {};
\node[ppoint] (al) at  (2,0) {};
\node[qpoint,label={[fill=gray!15,above]{$\Rdiamond B$}}] (bl) at  (2,1) {};
\begin{scope}[ultra thick]
\draw[->] (an) to node[left,midway] {$P$} (bn);
\draw[->] (al) to node[left,midway] {$Q$} (bl);
\end{scope}
\node at (1.5,-1.1) {\small Case 2.1};
\end{scope}
\begin{scope}[xshift=80mm,yshift=0mm]
\begin{scope}[xshift=-25mm]
\fill[gray!15] (1,-0.25) rectangle +(2,2);
\begin{scope}\small
\draw[object-timeline] (-0.25, 0) -- ++(3.5,0); \node at (-0.6, 0) {$a^\I$};
\draw[object-timeline] (-0.25, 1) -- ++(3.5,0); \node at (-0.6, 1) {$aP^n$};
\end{scope}
\begin{scope}\scriptsize
\draw[time-guideline] (0,-0.25) -- ++(0,2); \node at (0,-0.5) {$n$};
\draw[time-guideline] (2,-0.25) -- ++(0,2); \node at (2,-0.5) {$\ell$};
\draw[time-guideline] (1,-0.25) -- ++(0,2); \node at (1,-0.5) {$\min \A$};
\draw[time-guideline] (3,-0.25) -- ++(0,2); \node at (3,-0.5) {$\max \A$};
\end{scope}
\node[ppoint] (an) at  (0,0) {};
\node[ppoint] (bn) at  (0,1) {};
\node[ppoint] (al) at  (2,0) {};
\node[qpoint,label={[fill=gray!15,above]{$\Rdiamond B$}}] (bl) at  (2,1) {};
\draw[<->] (0,1.4) -- ++(1,0) node[midway, above] {$-\mu$};
\begin{scope}[ultra thick]
\draw[->] (an) to node[right,midway] {$P$} (bn);
\draw[->] (al) to node[left,midway] {$Q$} (bl);
\end{scope}
\end{scope}
\begin{scope}[xshift=15mm]
\fill[gray!15] (1,-0.25) rectangle +(2,2);
\begin{scope}\small
\draw[object-timeline] (0.75, 0) -- ++(3.5,0); \node at (0.4, 0) {$a^\I$};
\draw[object-timeline] (0.75, 1) -- ++(3.5,0); \node at (0.4, 1) {$aP^n$};
\end{scope}
\begin{scope}\scriptsize
\draw[time-guideline] (4,-0.25) -- ++(0,2); \node at (4,-0.5) {$n$};
\draw[time-guideline] (2,-0.25) -- ++(0,2); \node at (2,-0.5) {$\ell$};
\draw[time-guideline] (1,-0.25) -- ++(0,2); \node at (1,-0.5) {$\min \A$};
\draw[time-guideline] (3,-0.25) -- ++(0,2); \node at (3,-0.5) {$\max \A$};
\end{scope}
\node[ppoint] (an) at  (4,0) {};
\node[ppoint] (bn) at  (4,1) {};
\node[ppoint] (al) at  (2,0) {};
\node[qpoint,label={[fill=gray!15,above]{$\Rdiamond B$}}] (bl) at  (2,1) {};
\draw[<->] (3,1.4) -- ++(1,0) node[midway, above] {$\mu$};
\begin{scope}[ultra thick]
\draw[->] (an) to node[right,midway] {$P$} (bn);
\draw[->] (al) to node[left,midway] {$Q$} (bl);
\end{scope}
\end{scope}
\node at (1.5,-1.1) {\small Case 2.2};
\end{scope}
\end{tikzpicture}%
\caption{Cases in Example~\ref{ex:phantoms}: $\varkappa = \exists Q. \Rdiamond B$.}\label{fig:phantoms}
\end{figure}

\smallskip

\noindent\emph{Case} 1: If there is $b \in \ind(\A)$ with $(a,b) \in Q^{\can(\ell)}$ and $b \in (\Rdiamond B)^{\can(\ell)}$, then we can describe this configuration by the formula
\begin{equation}\label{abox}
\exists y \, \bigl[\rew_{\R, Q}(x,y,t) \land \rew_{\TO, \Rdiamond B}(y,t)\bigr],
\end{equation}
where $\rew_{\R, Q}(x,y,t)$ is a rewriting of the \OMAQ{} $(\R, Q)$ provided by Proposition~\ref{prop:roleOMIQs}, and $\rew_{\TO, \Rdiamond B}(x,t)$ is a rewriting of a simpler \OMPIQ{} $(\TO, \Rdiamond B)$, which can be defined as follows:
\begin{equation*}
\rew_{\TO, \Rdiamond B}(x,t)  ~=~ \exists s \, \bigl( (s > t) \land \rew_{\TO, B}(x,s) \bigr) \ \lor \ \bigvee_{k \in (0,N]} \Phi_{\TO, B}^k(x),
\end{equation*}
where $\rew_{\TO, B}(x,t)$ is the rewriting  and the $\Phi_{\TO, B}^k(x)$ are the phantoms of the OMAQ $(\TO, B)$ provided by Lemmas~\ref{lemma:basis} and~\ref{lemma:basis-k}, respectively, and $N = s_\TO + |\varkappa|p_\TO$; see Lemma~\ref{th:dl-period}.
That is, $N$ is a suitable integer that depends on~$\q$ and reflects the periodicity of the canonical model of~$\TO$: a `witness' $B(b,m)$ for $(\Rdiamond B)(b,\ell)$ is either in the active temporal domain $\tem(\A)$ (see Case 1.1 in Fig.~\ref{fig:phantoms}) or at a distance $k \leq N$  from $\max\A$ (see Case 1.2 in Fig.~\ref{fig:phantoms}). The latter requires phantoms~$\Phi_{\TO, B}^k(x)$; note that due to $\Rdiamond$ we look only at integers larger than~$\ell$ (in particular, only positive~$k$).

\smallskip

\noindent
\emph{Case} 2: If there exists $n \in \Z$ such that
\begin{itemize}
\item[--] the canonical model for $(\TO, \A)$ contains $P(a,aP^n,n)$,

\item[--] the canonical model for $(\R,\{P(a,aP^n,n)\})$ contains $Q(a,aP^n,\ell)$,

\item[--] the canonical model for $(\TO,\{P(a,aP^n,n)\})$ contains $(\Rdiamond B)(aP^n, \ell)$,
\end{itemize}
then we consider two further options.

\noindent\emph{Case} 2.1: If $n \in \tem(\A)$, then, for the first item, we  use a rewriting $\rew_{\TO, \exists P}(x,s)$ of the \OMAQ{} $(\TO, \exists P)$. For the second, we need a formula $\Theta_{P \leadsto Q}(s, t)$ such that $\SA\models\Theta_{P \leadsto Q}(n,\ell)$ iff the canonical model for $(\R,\{P(a,b,n)\})$ contains $Q(a,b,\ell)$, for $n,\ell \in \tem(\A)$. To capture the third condition, we need a formula $\tilde\Psi_{P, \Rdiamond B}(x, s, t)$ such that $\SA\models \tilde\Psi_{P, \Rdiamond B}(a, n, \ell)$ iff the canonical model for $(\TO, \{P(a,b,n)\})$ contains $(\Rdiamond B)(b,\ell)$, for $n,\ell \in \tem(\A)$. Assuming that these formulas are available, we can express Case~2.1 by the formula
\begin{equation}\label{case2.1}
\exists s \, \bigl[\rew_{\TO, \exists P}(x,s) \ \land  \ \Theta_{P \leadsto Q}(s, t) \ \land \ \tilde\Psi_{P, \Rdiamond B}(x, s, t)\bigr].
\end{equation}
\emph{Case} 2.2: If $n \notin \tem(\A)$, then we need phantom versions $\Phi^\mu_{\TO, \exists P}(x)$, $\Theta^\mu_{P \leadsto Q}(t)$ and  $\tilde\Psi^\mu_{P, \Rdiamond B}(x, t)$ of the subformulas in~\eqref{case2.1}. For example, $\tilde\Psi^\mu_{P, \Rdiamond B}(x, t)$ should be such that $\SA \models \tilde\Psi^\mu_{P, \Rdiamond B}(a, \ell)$ iff the canonical model for $(\TO, \{P(a,b,\ka{\mu})\})$ contains $(\Rdiamond B)(b,\ell)$, for $\ell\in\tem(\A)$. Provided that such phantoms are available, Case~2.2 can be represented by the formula
\begin{equation}\label{case2.2}
\bigvee_{\mu\in [-N,0) \cup (0, N]} \bigl[\Phi^{\mu}_{\TO, \exists P}(x) \ \land \ \Theta_{P \leadsto Q}^{\mu}(t)\  \land \ \tilde\Psi_{P, \Rdiamond B}^{\mu}(x,t)\bigr];
\end{equation}
notice that the quantifier $\exists s$ (corresponding to the choice of $n$) is replaced by the (finite) disjunction over $\mu$, while the argument $s$ of the subformulas is now shifted to the superscript~$\mu$ of the phantoms.

The required $\FOE$-rewriting $\rew_{\TO, \varkappa}(x,t)$ of $\q$ is a disjunction of~\eqref{abox}--\eqref{case2.2}.
All auxiliary formulas mentioned above are constructed using the same type of analysis that will be explained in full detail below. For example, in formula~\eqref{case2.1}, that is, Case~2.1 with $n\in\tem(\A)$, one can take
\begin{equation*}
\tilde\Psi_{P, \Rdiamond B}(x, s, t) \ \ = \ \ \exists s' \, \bigl( (s'>t) \land \tilde\Psi_{P,B}(x, s, s')\bigr) \ \ \lor \bigvee_{k \in (0,N]} \bar \Psi_{P, B}^{k}(x,s),
\end{equation*}
where $\bar\Psi_{P, B}^{k}(x,s)$ is such that $\SA \models \bar\Psi^\kpar_{P, B}(a, n)$ iff the canonical model for $(\TO,\{P(a,b, n)\})$ contains $B(b,\ka{\kpar})$. Note the similarity to formula $\rew_{\TO, \Rdiamond B}(x,t)$ in Case 1 above: the first disjunct covers the case when the canonical model contains $B(b,m)$ with $n$ in the active temporal domain $\tem(\A)$, while the second disjunct covers the case when $B(b,m)$ is found beyond $\tem(\A)$, and so we have to resort to the phantoms.
\end{example}

\begin{table}[t]
\centering\begin{tabular}{lcl}\toprule
$\SA \models \Theta_{S \leadsto S_1}(n, n_1)$ & $\Longleftrightarrow$ &
$\R\models S\sqsubseteq \nxt^{n_1 - n} S_1$
\\[2pt]
$\SA \models \Theta^k_{S \leadsto S_1}(n_1)$ & $\Longleftrightarrow$ &
$\R\models S\sqsubseteq \nxt^{n_1 -  \ka{\kpar}} S_1$
\\[2pt]
$\SA \models \bar \Theta^{\kpar_1}_{S \leadsto S_1}(n)$ & $\Longleftrightarrow$ &
$\R\models S\sqsubseteq \nxt^{\ka{\kpar_1} - n} S_1$
\\[2pt]
$\SA \models \Theta^{\kpar, \kpar_1}_{S \leadsto S_1}$ & $\Longleftrightarrow$ &
$\R\models S\sqsubseteq \nxt^{\ka{\kpar_1} - \ka{\kpar}} S_1$
\\[2pt]\midrule
$\SA \models \Xi_{B \leadsto B_1}(n, n_1)$ & $\Longleftrightarrow$ &
$\TO\models B\sqsubseteq \nxt^{n_1 - n} B_1$
\\[2pt]
$\SA \models \Xi^k_{B \leadsto B_1}(n_1)$ & $\Longleftrightarrow$ &
$\TO\models B\sqsubseteq \nxt^{n_1 - \ka{\kpar}} B_1$
\\[2pt]
$\SA \models \bar \Xi^{\kpar_1}_{B \leadsto B_1}(n)$ & $\Longleftrightarrow$ &
$\TO\models B\sqsubseteq \nxt^{\ka{\kpar_1} - n} B_1$
\\[2pt]
$\SA \models \Xi^{\kpar, \kpar_1}_{B \leadsto B_1}$ & $\Longleftrightarrow $ &
$\TO\models B\sqsubseteq \nxt^{\ka{\kpar_1} - \ka{\kpar}} B_1$
\\[2pt]\midrule
\multicolumn{3}{l}{$\SA \models  \Psi^{\mu_1,\dots,\mu_l}_{S_1 \dots S_l, \varkappa}(a,n_1, \dots, n_l, n) $ $\Longleftrightarrow$}\\
\multicolumn{3}{r}{\hspace*{14em}$S_1(w_0, w_1,m_1), \dots, S_l(w_{l-1}, w_l,m_l) \in \C$ and $w_l \in \varkappa^{\C(n)}$,}\\
\multicolumn{3}{l}{$\SA \models  \Psi^{\mu_1,\dots,\mu_l,k}_{S_1 \dots S_l, \varkappa}(a,n_1, \dots, n_l)$ $\Longleftrightarrow$}\\
\multicolumn{3}{r}{$S_1(w_0, w_1,m_1), \dots, S_l(w_{l-1}, w_l,m_l)\in \C$ and  $w_l \in \varkappa^{\C(\ka{\kpar})}$,}\\[4pt]
\multicolumn{3}{l}{where $\C = \C_{\TO,\{\exists S_1(a,m_1)\}}$, $w_0 = a$ and $w_i = w_{i-1}S_i^{m_i}$, for all $i$, $1 \leq i \leq l$,
}\\
\multicolumn{3}{r}{and $m_i = \begin{cases}
  n_i, & \text{if $\mu_i = 0$},\\
  \ka{\mu_i}, & \text{otherwise},
 \end{cases}$ for all $i$, $1 \leq i \leq l$.}
\\[3pt]\bottomrule
\end{tabular}
\caption{Semantic characterisations of the auxiliary formulas in rewritings: $a\in \ind(\A)$, $n, n_1, \dots, n_l\in\tem(\A)$, $k\in\Z\setminus\{0\}$ and $\mu_1,\dots,\mu_l\in \Z$.}\label{aux-formulas}
\end{table}

\begin{lemma}\label{lemma:mainDL}
Let $\lang$ be one of $\FO(<)$, $\FOE$, or $\FO(\RPR)$.  A $\bot$-free $\DL_{\horn}\Xallop{}$ \OMPIQ{} $\q = (\TO,\varkappa)$ with $\TO = \T \cup \R$  and a positive temporal concept $\varkappa$ is $\lang$-rewritable if
\begin{itemize}
\item[--] for any \OMAQ{} $\q = (\TO, B)$ with $B$ in the alphabet of $\TO$, there exist an $\lang$-rewriting $\rew(x,t)$ of $\q$  and $\lang$-phantoms~$\Phi_{\q}^\kpar(x)$ of $\q$ for all $\kpar\ne 0$\textup{;}

  \item[--] for any OMAQ $\q = (\TO, P)$ with $P$ from $\TO$, there exist an $\lang$-rewriting $\rew(x,y,t)$ of $\q$ and $\lang$-phantoms $\Phi_{\q}^\kpar(x,y)$ of $\q$ for all $\kpar\ne 0$.
\end{itemize}
\end{lemma}
\begin{proof}
We require a number of auxiliary FO-formulas (similar to $\Theta^k_{P \leadsto Q}(t)$ and  $\Psi^k_{P, \Rdiamond B}(x, t)$ in Example~\ref{ex:phantoms}), which are characterised semantically in Table~\ref{aux-formulas} and defined syntactically in Appendix~\ref{sec:appendix}.
For the last two items in Table~\ref{aux-formulas}, examples of the canonical model $\C$ are illustrated in Fig.~\ref{fig:successors} for $l = 3$; note that, in general, the order of the time instants $m_1,\dots,m_l,n$ can be arbitrary.
Note that if $\mu_i \ne 0$, then the formula $\Psi^{\mu_1,\dots,\mu_l}_{S_1\dots S_l,\varkappa}(x,t_1,\dots,t_l, t)$ does not depend on $t_i$.
In the context of Example~\ref{ex:phantoms}, we have $\tilde\Psi_{P, \varkappa}(x,s, s') = \Psi^{0}_{P, \varkappa}(x, s, s')$, $\tilde\Psi^\mu_{P, \varkappa}(x, t) = \Psi^{\mu}_{P, \varkappa}(x, 0, t)$ and $\bar \Psi^k_{P, \varkappa}(x, s) = \Psi^{0, k}_{P, \varkappa}(x, s)$, where $0$ is used in place of the dummy argument because $\Psi^{\mu}_{P, \varkappa}(x,t_1,t)$
does not depend on $t_1$.

As we shall see in Appendix~\ref{sec:appendix}, the formulas in Table~\ref{aux-formulas} are all in $\FOE$ if $\TO$ is a $\DL_{\horn}\Xallop$ ontology, and in $\FO(<)$ if  $\TO$ is in $\DL_{\horn/\core}\Xbox$ and $\DL_{\horn/\rhorn}\Xbox$.
Using these formulas, we now construct an $\lang$-rewriting $\rew_{\TO,\varkappa}(x,t)$ of $\q$ and $\lang$-phantoms $\Phi^k_{\TO,\varkappa}(x)$ for $k \ne 0$. For the given $\DL_\horn\Xallop$ ontology $\TO = \T \cup \R$, we take $s_{\TO}$ and $p_{\TO}$ as defined in Lemma~\ref{th:dl-period}.

\smallskip

\noindent\emph{Case} $\varkappa = A$. An $\lang$-rewriting $\rew_{\TO, A}(x,t)$ and $\lang$-phantoms $\Phi_{\TO, A}^k(x)$ are given by the formulation of the theorem.

\smallskip

\noindent \emph{Cases} $\varkappa = \varkappa_1 \sqcap \varkappa_2$ and $\varkappa = \varkappa_1 \sqcup \varkappa_2$ are trivial.

\smallskip

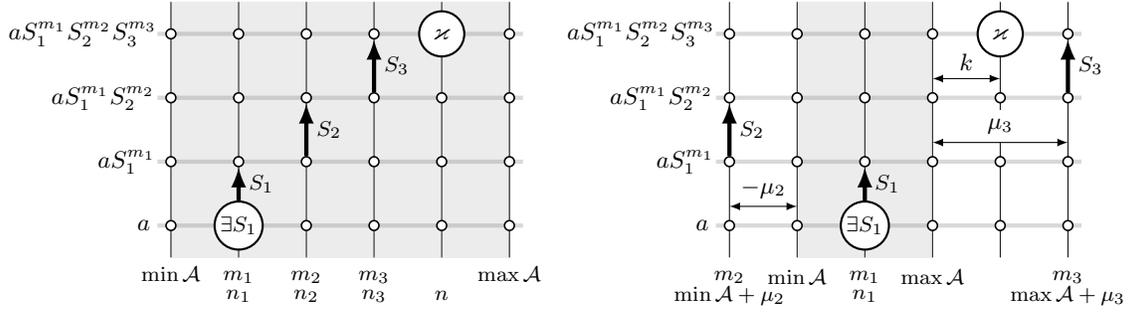
\begin{figure}[t]
\centering%
\begin{tikzpicture}[xscale=0.45, yscale=0.85, semithick]\footnotesize
\foreach \y in {0,1,2,3} {
\draw[object-timeline] (-6.4,\y) -- ++(10.8,0);
}
\foreach \x/\l/\c in {-4/$m_1$/$n_1$,-2/$m_2$/$n_2$,0/$m_3$/$n_3$,2//$n$} {
\draw[time-guideline] (\x,-0.5) -- ++(0,4);
\node at (\x,-0.8) {\scriptsize\l};
\node at (\x,-1.1) {\scriptsize\c};
}
\fill[gray,fill opacity=0.15] (-6,-0.5) rectangle +(10,4);
\draw[time-guideline] (-6,-0.5) -- ++(0,4); \node at (-6,-0.75) {\scriptsize$\min \A$};
\draw[time-guideline] (4,-0.5) -- ++(0,4); \node at (4,-0.75) {\scriptsize$\max \A$};
\node  at (-6.8,0) {$a$};
\node (a-2) at (-6,0) [point]{};
\node (a-1) at (-4,0) [wpoint]{\scriptsize $\exists S_1$};
\node (a0) at (-2,0) [point]{};
\node (a1) at (0,0) [point]{};
\node (a2) at (2,0) [point]{};
\node (a3) at (4,0) [point]{};
\node at (-7.3,1) {$aS_1^{m_1}$};
\node (c-2) at (-6,1) [point]{};
\node (c-1) at (-4,1) [point]{};
\node (c0) at (-2,1) [point]{};
\node (c1) at (0,1) [point]{};
\node (c2) at (2,1) [point]{};
\node (c3) at (4,1) [point]{};
\node  at (-8,2) {$aS_1^{m_1}S_2^{m_2}$};
\node (d-1-2) at (-6,2) [point]{};
\node (d-1-1) at (-4,2) [point]{};
\node (d-10) at (-2,2) [point]{};
\node (d-11) at (0,2) [point]{};
\node (d-12) at (2,2) [point]{};
\node (d-13) at (4,2) [point]{};
\node at (-8.6,3) {$aS_1^{m_1}S_2^{m_2}S_3^{m_3}$};
\node (d0-2) at (-6,3) [point]{};
\node (d0-1) at (-4,3) [point]{};
\node (d00) at (-2,3) [point]{};
\node (d01) at (0,3) [point]{};
\node (d02) at (2,3) [wpoint,minimum size=6mm]{$\varkappa$};
\node (d03) at (4,3) [point]{};
\draw[->,ultra thick] (a-1)  to node [right]{\scriptsize $S_1$} (c-1);
\draw[->,ultra thick] (c0)  to node [right]{\scriptsize $S_2$} (d-10);
\draw[->,ultra thick] (d-11)  to node [right]{\scriptsize $S_3$} (d01);
\begin{scope}[xshift=165mm]
\foreach \y in {0,1,2,3} {
\draw[object-timeline] (-6.4,\y) -- ++(10.8,0);
}
\foreach \x/\l/\c in {-6/$m_2$/$\min \A + \mu_2$,-4/$\min \A$/,-2/$m_1$/$n_1$,0/$\max \A$/,2//,4/$m_3$/$\max\A + \mu_3$} {
\draw[time-guideline] (\x,-0.5) -- ++(0,4);
\node at (\x,-0.8) {\scriptsize\l};
\node at (\x,-1.1) {\scriptsize\c};
}
\fill[gray,fill opacity=0.15] (-4,-0.5) rectangle +(4,4);
\node at (-6.8,0) {$a$};
\node (a0) at (-6,0) [point]{};
\node (a1) at (-4,0) [point]{};
\node (a2) at (-2,0) [wpoint]{\scriptsize $\exists S_1$};
\node (a3) at (0,0) [point]{};
\node (a4) at (2,0) [point]{};
\node (a5) at (4,0) [point]{};
\node at (-7.3,1) {$aS_1^{m_1}$};
\node (b0) at (-6,1) [point]{};
\node (b1) at (-4,1) [point]{};
\node (b2) at (-2,1) [point]{};
\node (b3) at (0,1) [point]{};
\node (b4) at (2,1) [point]{};
\node (b5) at (4,1) [point]{};
\node  at (-8,2) {$aS_1^{m_1}S_2^{m_2}$};
\node (c0) at (-6,2) [point]{};
\node (c1) at (-4,2) [point]{};
\node (c2) at (-2,2) [point]{};
\node (c3) at (0,2) [point]{};
\node (c4) at (2,2) [point]{};
\node (c5) at (4,2) [point]{};
\node at (-8.6,3) {$aS_1^{m_1}S_2^{m_2}S_3^{m_3}$};
\node (d0) at (-6,3) [point]{};
\node (d1) at (-4,3) [point]{};
\node (d2) at (-2,3) [point]{};
\node (d3) at (0,3) [point]{};
\node (d4) at (2,3)  [wpoint,minimum size=6mm]{$\varkappa$};
\node (d5) at (4,3) [point]{};
\draw[->,ultra thick] (a2)  to node [right]{\scriptsize $S_1$} (b2);
\draw[->,ultra thick] (b0)  to node [right]{\scriptsize $S_2$} (c0);
\draw[->,ultra thick] (c5)  to node [right]{\scriptsize $S_3$} (d5);
\begin{scope}\footnotesize
\draw[thin,<->] (0,2.3) -- node[above]{$k$} +(2,0);
\draw[thin,<->] (0,1.3) -- node[above,fill=white]{$\mu_3$} +(4,0);
\draw[thin,<->] (-4,0.3) -- node[above]{$-\mu_2$} +(-2,0);
\end{scope}
\end{scope}
\end{tikzpicture}%
\caption{Canonical model for $\Psi^{0,0,0}_{S_1,S_2,S_3,\varkappa}(a, n_1, n_2, n_3,n)$ and $\Psi^{0,\mu_2,\mu_3,k}_{S_1,S_2,S_3,\varkappa}(a,n_1,0,0)$.}\label{fig:successors}
\end{figure}%

\noindent \emph{Case} $\varkappa = \Rbox \varkappa'$. Using Lemma~\ref{th:dl-period}, we take  $N = s_{\TO} + |\varkappa| p_{\TO}$ and set
\begin{align*}
\rew_{\TO,\Rbox\varkappa'}(x,t) & = \forall s\, \bigl[(s > t) \to \rew_{\TO,\varkappa'}(x,s)\bigr] \ \ \land 
\hspace*{-0.5em}\bigwedge_{k \in (0, N]} \hspace*{-0.5em}\Phi_{\TO, \varkappa'}^k(x),\\
\Phi_{\TO,\Rbox\varkappa'}^k(x) & =   \begin{cases}\displaystyle
\bigwedge_{i \in (k, k +N]} \hspace*{-0.5em}\Phi_{\TO, \varkappa'}^i(x), & \text{if } k >0,\\
\displaystyle\rule{0pt}{18pt}
\bigwedge_{i \in (k,0)} \hspace*{-0.5em}\Phi_{\TO, \varkappa'}^i(x)  \land \rew_{\TO,\Rbox\varkappa'}(x,0), & \text{if } k <0,
\end{cases}
\end{align*}

\noindent
The cases of the other temporal operators are similar and left to the reader.

\smallskip

\noindent\emph{Case} $\varkappa = \exists S. \varkappa'$. The rewriting reflects the possible configurations depicted in Fig.~\ref{fig:phantoms}. Case~1 is straightforward and involves individual names only, but in Case~2 we need to pinpoint the time instant when a suitable witness is created. Case~2.1 deals with the (finite) active domain, and so an existential quantifier can be used to fix the required time instant. Case~2.2, however, deals with the infinitely many time instants outside the active domain, but Lemma~\ref{th:dl-period-cor} from Appendix~\ref{sec:appendix} shows that it actually is enough to consider only a bounded number of time instants before $\min \A$ and after $\max \A$, which can be referred to by suitable phantoms.
Using this observation, we take $N = s_\TO + |\varkappa|p_\TO$ and set
\begin{multline*}
\rew_{\TO,\exists S. \varkappa'} (x, t) \ \ = \ \  \exists y\, \bigl(\rew_{\R,S}(x,y,t) \land \rew_{\TO,\varkappa'}(y,t)\bigr) \  \ \vee\\[4pt]
 \bigvee_{\text{role $S_1$ in }\TO} \Bigl[ \exists t_1\, \bigl(\rew_{\TO,\exists S_1} (x, t_1) \land \Theta_{S_1 \leadsto S}(t_1,t) \land \Psi^{0}_{S_1,\varkappa'}(x,t_1,t)\bigr) \ \ \vee \\[-4pt]
 \hspace*{3em}\bigvee_{ \mu\in [-N,0)\cup (0, N]} \hspace*{-1em}\bigl(\Phi_{\TO,\exists S_1}^\mu (x) \land \Theta^\mu_{S_1 \leadsto S}(t) \land \Psi^{\mu}_{S_1,\varkappa'}(x,0,t)\bigr)\Bigr].
\end{multline*}
The three groups of disjuncts correspond to Cases 1,  2.1 and~2.2 in Example~\ref{ex:phantoms}, respectively.
We define the phantoms for $k > 0$ as follows:
\begin{multline*}
\Phi_{\TO,\exists S.\varkappa'}^k (x) \ \  = \ \  \exists y\, \bigl(\Phi^k_{\R,S}(x,y) \land \Phi^k_{\TO,\varkappa'}(y)\bigr) \ \ \vee \\[4pt]
 \bigvee_{\text{role $S_1$ in }\TO} \Bigl[\exists t_1\, \bigl(\rew_{\TO,\exists S_1} (x, t_1) \land \bar \Theta_{S_1 \leadsto S}^k(t_1) \land \Psi^{0,k}_{S_1,\varkappa'}(x,t_1)\bigr) \ \ \vee \\[-4pt]
  \hspace*{3em}\bigvee_{\mu\in [-N,0)\cup (0,k + N]} \hspace*{-2em}\bigl(\Phi_{\TO,\exists S_1}^\mu (x) \land \Theta^{\mu,k}_{S_1 \leadsto S} \land \Psi^{\mu, k}_{S_1,\varkappa'}(x, 0)\bigr)\Bigr].
\end{multline*}
If $k<0$, then we replace the bounds of the last disjunction by $[k-N,0) \cup (0, N]$.
\end{proof}

As a consequence of Lemmas~\ref{lemma:consistency:horn},~\ref{lemma:basis-k} and~\ref{lemma:mainDL} together with the \LTL-rewritability results of~\citeA{AIJ21} we obtain:
\begin{theorem}\label{cor:dl-omiq-rewritability}
For \OMPIQ{}s of the form $(\TO,\varkappa)$ with a positive temporal concept $\varkappa$,  

$(i)$ all $\DL_{\horn/\core}\Xbox$ and
$\DL_{\smash{\horn/\rhorn}}\Xbox$ \OMPIQ{}s are
$\FO(<)$-rewritable\textup{;}

$(ii)$ all $\DL_{\smash{\core}}\Xallop$ \OMPIQ{}s and $\DL_{\smash{\core/\horn}}\Xbox$ are $\FOE$-rewritable\textup{;}

$(iii)$ All $\DL_{\smash{\horn}}\Xallop$ \OMPIQ{}s are $\FO(\RPR)$-rewritable.
\end{theorem}

%




\section{First-Order Temporal OMQs under the Epistemic Semantics}\label{sec:omq}

We use positive temporal concepts and roles as building blocks for our most expressive query language. It is inspired by the epistemic queries introduced by~\citeA{CalvaneseGLLR07} and the SPARQL~1.1 entailment regimes~\cite{GlimmOgbuji13}; cf.~\citeA{Motik12,GutierrezHV07}. The definition and main rewritability result are straightforward, so this section will be brief. A (\emph{temporal}) \emph{ontology-mediated query} (\OMQ{}) is a pair $\q=(\TO, \psi(\avec{x},\avec{t}))$, in which $\TO$ is an ontology and $\psi(\avec{x},\avec{t})$  a first-order formula built from atoms of the form $\varkappa(x,t)$, $\varrho(x,y,t)$, and $t<t'$, where $\varkappa$ and $\varrho$ are a positive temporal concept and role, respectively, $x$ and $y$ are \emph{individual variables}, and $t$ and~$t'$ \emph{temporal variables}; the free variables $\avec{x}$ and $\avec{t}$ of $\psi$ are called the \emph{answer variables} of~$\q$.
Given an ABox $\A$, the \OMQ~$\q$ is evaluated over a two-sorted  structure $\GOA$ with domain $\ind(\A)\cup \tem(\A)$, where the extension of $\varkappa$, $\rho$ and $<$ in $\GOA$ is defined by setting,
for any assignment $\mathfrak{a}$ mapping individual and temporal variables to elements of $\ind(\A)$ and $\tem(\A)$:
\begin{align*}
\GOA\models^\mathfrak{a}\varkappa(x,t)\quad & \text{ iff }\quad (\mathfrak{a}(x),\mathfrak{a}(t))\in \ans(\TO,\varkappa,\A),\\
\GOA\models^\mathfrak{a}\!\varrho(x,y,t)\quad & \text{ iff }\quad (\mathfrak{a}(x),\mathfrak{a}(y),\mathfrak{a}(t)) \in  \ans(\TO,\varrho,\A),\\
\GOA\models^\mathfrak{a} t < t' \quad & \text{ iff } \quad \mathfrak{a}(t) < \mathfrak{a}(t').
\end{align*}
We extend these to arbitrary formulas $\psi$ using the standard clauses for the Boolean connectives and first-order quantifiers over both $\ind(\A)$ and $\tem(\A)$.
Let $\avec{x} = x_1,\dots,x_k$ and $\avec{t} = t_1,\dots,t_m$ be the free variables of
$\psi$. We say that
$(a_1,\dots,a_k,\ell_1,\dots,\ell_m)$ is an \emph{answer} to the \OMQ{} $\q=(\TO,\psi(\avec{x},\avec{t}))$ over $\A$ if $\GOA\models^\mathfrak{a} \psi$, where $\mathfrak{a}(x_i) = a_i$, for all $i$, $1\leq i \leq k$, and $\mathfrak{a}(t_j) = \ell_j$, for all $j$, $1\leq j \leq m$.
Thus, we keep the open-world interpretation of positive temporal concepts and roles, the individual and temporal variables of \OMQ{}s range over the active
domains only, and the first-order constructs in $\psi$ are interpreted under the  epistemic semantics~\cite{CalvaneseGLLR07}.
Let $\lang$ be one of $\FO(<)$, $\FOE$, or $\FO(\RPR)$. We call an \OMQ{} $\q=(\TO, \psi(\avec{x},\avec{t}))$ $\lang$-\emph{rewritable} if there is an $\lang$-formula $\rew(\avec{x},\avec{t})$ such that, for any ABox $\A$ and any tuples $\avec{a}$ and $\avec{\ell}$ in $\ind(\A)$ and $\tem(\A)$, respectively, the pair $(\avec{a},\avec{\ell})$ is an answer to $\q$ over $\A$ iff~$\SA \models \rew(\avec{a},\avec{\ell})$.
It is straightforward to construct an $\lang$-rewriting of~$\q$ by replacing all occurrences of positive temporal concepts and roles in~$\psi$ with their $\lang$-rewritings.  Thus, we obtain:
\begin{theorem}\label{th:constitute}
Let $\lang$ be one of $\FO(<)$, $\FOE$, or $\FO(\RPR)$, and  $\q=(\TO, \psi)$ an \OMQ{}. If all \OMPIQ{}s $(\TO,\varkappa)$ and $(\TO,\varrho)$ with positive temporal concepts $\varkappa$  and roles $\varrho$  in $\psi$ are $\lang$-rewritable, then $\q$ is also $\lang$-rewritable.
\end{theorem}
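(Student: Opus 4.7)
The plan is to define the rewriting of $\psi$ by structural induction and verify correctness atom-by-atom, relying on the atom-level rewritings guaranteed by the hypothesis; the only subtle point is forcing individual-sorted quantifiers in $\psi$ to range over $\ind(\A)$ rather than over the full domain $\tem(\A)$ of $\SA$.

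To handle the sort mismatch, I would first introduce a constant-free $\FO(<)$-formula
\begin{equation*}
\mathsf{Ind}(x) \ \ = \ \ \bigvee_{A \in \Sigma_C} \exists t\, A(x,t) \ \vee \ \bigvee_{P \in \Sigma_R} \exists t\, \exists y\, \bigl(P(x,y,t) \lor P(y,x,t)\bigr),
\end{equation*}
where $\Sigma_C$ and $\Sigma_R$ collect the concept and role names occurring in $\TO$ or in $\psi$ (together with the dummy concept $D$ used to pad the active temporal domain, cf.\ Section~\ref{sec:tdl}). Since every individual in $\ind(\A)$ occurs in some ABox atom whose predicate is in this signature, we have $\SA \models \mathsf{Ind}(c)$ exactly for $c \in \ind(\A)$; this is the device that separates the two sorts when both live inside $\tem(\A)$ under the identification $a_k \leftrightarrow k$.

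Next, fix an $\lang$-rewriting $\rq_\varkappa(x,t)$ of each $(\TO,\varkappa)$ and $\rq_\varrho(x,y,t)$ of each $(\TO,\varrho)$ whose atom appears in $\psi$; these exist by assumption. I would define the rewriting $\rq$ of $\q$ by structural induction on $\psi$: replace each atom $\varkappa(x,t)$ by $\rq_\varkappa(x,t)$, each atom $\varrho(x,y,t)$ by $\rq_\varrho(x,y,t)$, and leave atoms $t<t'$ untouched; commute with the Boolean connectives; leave temporal quantifiers unchanged, since the domain of $\SA$ is precisely $\tem(\A)$; and relativise individual quantifiers as
\begin{equation*}
(\exists x\, \varphi)^\star = \exists x\, \bigl(\mathsf{Ind}(x) \wedge \varphi^\star\bigr), \qquad (\forall x\, \varphi)^\star = \forall x\, \bigl(\mathsf{Ind}(x) \to \varphi^\star\bigr).
\end{equation*}
Because $\rq_\varkappa$ and $\rq_\varrho$ are built from the constructors of~$\lang$, $\rq$ stays in~$\lang$; in the $\FO(\RPR)$ case the recursive definitions of distinct subqueries use disjoint relation variables, as is standard.

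Correctness is then a straightforward induction on the structure of $\psi$. The atomic cases are exactly Definition~\ref{rewriting}: $\SA \models \rq_\varkappa(a,\ell)$ iff $(a,\ell) \in \ans(\TO,\varkappa,\A)$, matching the semantics $\GOA\models^{\mathfrak{a}}\varkappa(x,t)$, and similarly for $\varrho$; the atom $t<t'$ is interpreted identically in $\GOA$ and $\SA$. The Boolean cases are immediate. For temporal quantifiers the match is direct. For individual quantifiers, $\SA \models \exists x(\mathsf{Ind}(x)\wedge\varphi^\star(x,\dots))$ ranges $x$ over $\ind(\A)$ by the property of $\mathsf{Ind}$, which is precisely the range of individual variables under the epistemic semantics of $\q$. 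The only obstacle worth naming is the sort discipline sketched above: without the $\mathsf{Ind}$ guards the atomic rewritings $\rq_\varkappa$ and $\rq_\varrho$ need not behave correctly on elements of $\tem(\A)\setminus\ind(\A)$, so spurious witnesses could arise; the guards close this gap and nothing else is required.
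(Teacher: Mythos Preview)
Your approach is the same as the paper's: the paper's ``proof'' is a single sentence before the theorem statement saying that one constructs the rewriting by replacing every occurrence of a positive temporal concept or role in~$\psi$ by its $\lang$-rewriting. So the compositional substitution you describe is exactly what is intended, and your inductive correctness argument is the obvious elaboration.

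Where you add something the paper does not is the $\mathsf{Ind}$ guard for individual quantifiers. Two remarks on this. First, the paper declares (abstract, introduction, and the examples in Section~\ref{sec:tomq}) that the target languages $\FO(<)$, $\FOE$, $\FO(\RPR)$ are \emph{two-sorted}, with a sort for ABox individuals and a sort for time points; under that reading the individual quantifiers in the rewriting already range over $\ind(\A)$ and no relativisation is required --- this is why the paper can get away with one sentence. Second, if one insists on the one-sorted encoding of $\SA$ with domain $\tem(\A)$, your guard is the right device, but your justification that ``every individual in $\ind(\A)$ occurs in some ABox atom whose predicate is in this signature'' tacitly assumes the ABox uses only symbols from $\Sigma_C\cup\Sigma_R\cup\{D\}$. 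If an ABox mentions an individual $b$ only via a fresh predicate, $\mathsf{Ind}$ misses $b$, and a query such as $\forall x\,A(x,t)$ is then rewritten incorrectly. Either state the (standard) assumption that ABoxes are over the signature of $\TO$ and $\psi$, or simply adopt the paper's two-sorted view, in which case the whole issue evaporates.
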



\section{Conclusions}\label{conclusions}

In this article, aiming to extend the well-developed theory of ontology-based data access (OBDA) to temporal data, we designed a family of 2D ontology languages that combine logics from the \DL{} family for representing knowledge about object domains and clausal fragments of linear temporal logic \LTL{} over $(\Z,<)$ for capturing knowledge about the evolutions of objects in time. We also suggested a 2D query language that integrates first-order logic for querying the object domains with positive temporal concepts and roles as FO-atoms. The FO-constructs in these queries are interpreted under the epistemic semantics, while the temporal concepts and roles under the open-world semantics. The resulting ontology-mediated queries (OMQs) can be regarded as temporal extensions of SPARQL queries under the (generalised) \OWLQL{} direct semantics entailment regime~\cite{KontchakovRRXZ14}. 
Our main result is the identification of classes of OMQs that are $\FO(<)$-, $\FOE$-, or $\FO(\RPR)$-rewritable, with the first two types of rewriting guaranteeing OMQ answering in $\ACz$ and the last one in $\NCo$ for data complexity. In particular, we proved that all $\DL_{\core}^{\Box}$ OMQs are $\FO(<)$-rewritable and all $\DL_\core\Xallop$ OMQs are $\FOE$-rewritable, which means that classical atemporal OBDA with the W3C standard ontology language \OWLQL{} and SPARQL queries can be extended to temporal ontologies, queries and data without sacrificing the data complexity of OMQ answering.

Having designed suitable languages for temporal OBDA and established their efficiency in terms FO-rewritability and data complexity, we are facing a number of further open questions.
\begin{description}
\item[(succinctness)] What is the size of (various types of) minimal rewritings of temporal OMQs compared to that of \OWLQL{} OMQs investigated by~\citeA{DBLP:journals/jacm/BienvenuKKPZ18,DBLP:conf/pods/BienvenuKKPRZ17,DBLP:conf/dlog/BienvenuKKRZ17} The FO-rewritings constructed in this article seem to be far from optimal.

\item[(parameterised complexity)] What is the combined complexity of answering temporal  OMQs, in which, for example, the temporal depth of positive temporal concepts and roles is regarded as the parameter? Or what is the data  complexity of OMQ answering over data instances where the number of individual objects and/or the number of timestamps are treated as parameters.

\item[(non-uniform approach)] How complex is it to decide, given an arbitrary $\DL_{\bool/\horn}\Xallop$ OMAQ or  OMPIQ, whether it is $\FO(<)$-, $\FOE$-, or $\FO(\RPR)$-rewritable? For atemporal OMQs with DL-ontologies, this non-uniform approach to OBDA has been actively developed since the mid 2010s~\cite{DBLP:journals/tods/BienvenuCLW14,DBLP:conf/dlog/Bienvenu0LW16,DBLP:conf/ijcai/BarceloBLP18}. In the temporal case, first steps have recently been made for \LTL{} OMQs by~\citeA{DBLP:conf/time/21}.

\item[(two-sorted conjunctive queries)] Is it possible to generalise the results of this article from OMPIQs to two-sorted first-order conjunctive queries under the open-world semantics? Which of these two languages or their fragments would be more suitable for industrial users?

\item[(Krom RIs)] Are all $\DL_{\krom}^\Box$ OMAQs $\FO(<)$-rewritable? Are all $\DL_{\krom}\Xallop$ OMAQs $\FOE$-rewritable? What is the combined complexity of the consistency problem for $\DL\Xbox_{\bool/\krom}$ KBs? To answer these questions, one may need a type-based technique similar to the approach in the proof of Theorem~16 by~\citeA{AIJ21} as our Lemma~\ref{lemma:basis} is not applicable to Krom role inclusions.
\end{description}

We conclude by emphasising another aspect of the research project we are proposing in this article. It has recently been observed~\cite{DBLP:journals/dint/XiaoDCC19} that OBDA should be regarded as a principled way to integrate and access data via virtual knowledge graphs (VKGs). In VKGs, instead of structuring the integration layer as a collection of relational tables, the rigid structure of tables is replaced by the flexibility of graphs that are kept virtual and embed domain knowledge. In this setting, we propose to integrate temporal data via virtual temporal knowledge graphs.


\section*{Acknowledgements}

This research was supported by the EPSRC UK grants EP/M012646 and EP/M012670 for the joint project  `\textsl{quant$^\textsl{MD}$}: Ontology-Based Management for Many-Dimensional Quantitative Data'\!, by the DFG grant 389792660 as part of TRR~248 -- CPEC~(\textsf{https://perspicuous-computing.science}), and by the RSF grant 22-11-00323 when M.~Zakharyaschev was visiting the HSE University. We are grateful to the anonymous referees of this article for their careful reading, valuable comments, and constructive suggestions.

%

\appendix
\section{Proof of Theorem~\ref{app:th:coNP-gbool}}\label{sec:app-atemporal}

\begin{theorem}\label{app:th:coNP-gbool}
There is a $\DL_{\core/\gbool}{}$ ontology $\TO$ such that answering \OMAQ{}s of the form $(\TO,B)$, for a basic concept $B$, and of the form $(\TO, S)$, for a role $S$, is \coNP-hard.
\end{theorem}
\begin{proof}
The proof is by reduction of the 2$+$2 SAT problem, which is known to be \NP-complete~\cite{DBLP:journals/jiis/Schaerf93}. A 2$+$2-CNF formula over a set $\Sigma$  of propositional variables is a CNF formula $\varphi=c_1 \land c_2 \land \dots \land c_n$ such that each clause $c_i$ is  of the form $l_i^1 \lor  l_i^2 \lor \neg l_i^3\lor \neg l_i^4$, for $1 \leq i \leq n$, where $l_i^j$  are elements of $\bar\Sigma = \Sigma \cup \{\mathsf{true}, \mathsf{false}\}$.

Let $\varphi$ be a 2$+$2-CNF. We construct  an \OMAQ{} with a $\DL_{\core/\gbool}{}$ ontology~$\TO $ and an ABox~$\A$ such that a fixed individual is a certain answer to the \OMAQ{} over $\A$ iff $\varphi$ is unsatisfiable.
The ABox $\A$ has an individual $l$ for each letter $l\in\bar\Sigma$ (including $\mathsf{true}$ and $\mathsf{false}$), an individual~$c_i$ for each clause $c_i$, and an individual $f$ for the formula.
It consists of the following atoms, with concepts $C_{\mathsf{true}}$, $C_{\mathsf{false}}$, and roles $L$, $P_1$, $P_2$, $N_3$, and $N_4$:
\begin{align*}
& C_{\mathsf{true}}(\mathsf{true}), \;
C_{\mathsf{false}}(\mathsf{false}),\\
& L(c_i,f), \; P_1(c_i,l_i^1), \;
P_2(c_i,l_i^2), \;
N_3(c_i,l_i^3), \;
N_4(c_i,l_i^4), \qquad \text{ for } 1 \leq i \leq n.
\end{align*}
The ontology $\TO = \T\cup \R$ has additional roles $T$, $F$, and $H_j,G_j,R_j$, for $1 \leq j \leq 4$:
\begin{align*}
\R = \{ \
& P_j \sqsubseteq T \sqcup F
\text{ and } N_j \sqsubseteq T \sqcup F, \text{ for } j=1,2, \\*
& P_1 \sqcap F \sqsubseteq H_1, \quad
P_2 \sqcap F \sqsubseteq H_2, \quad
N_3 \sqcap T \sqsubseteq H_3, \quad
N_4 \sqcap T \sqsubseteq H_4, \nonumber \\*
& L \sqsubseteq  G_j \sqcup R_j \text{ and } G_j \sqcap R_j \sqsubseteq \bot, \text{ for } 1 \leq j \leq 4, \quad
G_1 \sqcap G_2 \sqcap G_3 \sqcap G_4 \sqsubseteq Q
\ \} \quad  \text{ and }  \nonumber \\
\T = \{ \
& \exists T^- \sqcap \exists F^- \sqsubseteq \bot, \quad   
\exists T^- \sqcap C_{\mathsf{false}} \sqsubseteq \bot, \quad
\exists F^- \sqcap C_{\mathsf{true}} \sqsubseteq \bot, \nonumber \\*
& \exists H_j  \sqcap \exists R_j \sqsubseteq \bot \text{ for } 1 \leq j \leq 4 \ \}. \nonumber
\end{align*}
Informally, each model is viewed as a truth assignment to the propositional variables: a propositional letter belongs to the range of $T$ if it is assumed to be true (and false otherwise, that is, if it belongs to the range of $F$). Then, roles $H_j$ pick up those literals in every clause~$c_i$ that are false under this assignment, and role $Q$ aggregates this information: tuple $(c_i,f)$ belongs to $Q$ if all of its literals are false, and so is the clause $c_i$.  In other words, $f$ is a certain answer to \OMAQ{} $(\TO,\exists Q^-)$ just in case, for every truth assignment, there is a clause~$c_i$ whose positive variables are false, and whose negative variables are true.

We claim that
\begin{equation*}
\varphi \text{ is unsatisfiable \quad iff\quad } \ans(\TO,\exists Q^-,\A) = \{ f \}.
\end{equation*}
Suppose $\varphi$ is unsatisfiable. Let $\I$ be a model of $(\TO,\A)$. We define a truth assignment $\sigma$ by taking $\sigma (l)$ to be true iff $l \in (\exists T^-)^\I$. Since $\varphi$ is unsatisfiable, there is a clause~$c_i$ such that $\sigma(l_i^1 \lor  l_i^2 \lor \neg l_i^3\lor \neg l_i^4)$ is false under the resulting $\sigma$, which means $l_i^1\in (\exists F^-)^\I$, $l_i^2 \in (\exists F^-)^\I$, $l_i^3 \in (\exists T^-)^\I$ and  $l_i^4 \in (\exists T^-)^\I$.  Consequently, $c_i \in \exists H_j^\I$, and $(c_i,f)^\I\in G_j^\I$, for each $j$, $1 \leq j \leq 4$, and so $(c_i,f)^\I \in Q^\I$. It follows that $(\TO,\A)\models \exists Q^-(f)$.

Conversely, suppose $\varphi$ is satisfied under a truth assignment $\sigma$. Consider an interpretation~$\I$ such that $T^\I = \{(c_i, l_i^j)^\I \;| \; \sigma(l_i^j) \text{ is true} \}$, $F^\I = \{(c_i, l_i^j)^\I \;|\; i, j \} \setminus T^\I$,
$H_j^\I = \{(c_i, l_i^j)^\I \;| \;  (c_i, l_i^j)^\I \in F^\I \}$, for $j = 1,2$,
$H_j^\I = \{(c_i, l_i^j)^\I \;| \; (c_i, l_i^j)^\I \in T^\I \}$, for $j = 3,4$, and
$R_j^\I = \{(c_i,f)^\I \; |\; (c_i, l_i^j)^\I \not\in H_j^\I\}$ with $G_j^\I = \{ (c_i,f)^\I \;|\; i \} \setminus R_j^\I$. It can be easily verified that  $\I$ is a model of $(\TO,\A)$. Since $\varphi$ is satisfied under $\sigma$, for every $c_i$, there is either $j=1,2$ with true $\sigma(l^j_i)$ or $j=3,4$ with false $\sigma(l^j_i)$. Thus, for every $c_i$, there is $j$, $1\leq j\leq 4$, with $(c_i,l_i^j)^\I \notin H_j^\I$ and $(c_i,f)^\I\in R_j^\I$. Since $R_j$ and $G_j$ are disjoint, $\I \not \models \exists Q^-(f)$.
\end{proof}

\section{Definitions of Formulas in Table~\ref{aux-formulas} }\label{sec:appendix}

We provide syntactic definitions of the auxiliary formulas in Table~\ref{aux-formulas} for a given $\bot$-free $\DL_{\horn}\Xallop{}$ ontology $\TO$ with a TBox $\T$ and an RBox $\R$. Recall from Section~\ref{sec:proj} that $\R^\ddagger$ is the translation of $\R$ to $\LTL_{\horn}\Xallop{}$, which uses the concept names $A_S$ for roles $S$, while $\T^\dagger_\R$ is the translation of $\T_\R = \T \cup \textbf{(con)}$ to $\LTL_{\horn}\Xallop{}$ defined in Section~\ref{sec:rewrOMAQ}, which uses concept names from $\T$ along with concept names $(\exists S)^\dagger = E_S$ for roles $S$.

In the auxiliary formulas we use the $\FOE$-abbreviation
$t - t' \in r + p \N$, for $r,p \geq 0$, from Remark 3~$(ii)$ of~\citeA{AIJ21} with the following meaning: for $n, n'\in\tem(\A)$,
\begin{align*}
\SA \models n - n' \in r + p \N & \quad\text{ iff }\quad n' +r \in \tem(\A)  \text{ and } n = n' + r+pk, \text{ for some } k\in \N.
\end{align*}
Note that, if $n' + r > \max\A$, then the formula evaluates to~$\bot$.
We use two more shortcuts: $t - t' = r$ for $t - t'\in r + 0\N$ and $t \in r + p \N$ for $t - 0 \in r + p \N$.

Observe that, if $p = 1$, then $\FOE$-formula $t - t' \in r + p \N$ is equivalent to an $\FO(<)$-formula that we abbreviate by $t - t' > r$ and define by taking $t > t'$ for $r = 0$ and
\begin{equation*}
\exists t_1,\dots,t_r \bigl((t > t_r) \land (t_r > t_{r-1}) \land \dots \land (t_2 > t_1) \land (t_1 > t')\bigr) \qquad \text{ for $r > 0$.}
\end{equation*}

Also, we need symbols $\max$ and $\min$ that are used in place of temporal variables: $\max$ stands for a variable $t_{\max}$ additionally satisfying $\neg\exists t\,(t > t_{\max})$; similarly, $\min$ stands for a variable $t_{\min}$ with for $\neg\exists t\,(t < t_{\min})$. Finally, as usual, the empty disjunction is $\bot$.

\subsection{$\Theta$- and $\Xi$-formulas.}

To define the four types of $\Theta$-formulas, we consider the $\R$-canonical rod~$\rod$  for $\{S^\ddagger(0)\}$; see Section~\ref{sec:HornRI:sat}. Let $s = s_{\R^\ddagger, \{S^\ddagger(0)\}}$ and $p = p_{\R^\ddagger, \{S^\ddagger(0)\}}$ be the integers provided by Lemma~\ref{period:A}~$(i)$.
Let $0 \leq s_1 < \dots < s_l\leq s$ be all the numbers with $S_1 \in \rod(s_i)$ and let $1 \leq p_1 < \dots < p_m \leq p$ be all the numbers with $S_1 \in \rod(s + p_i)$. Symmetrically, let $0 \leq s_1' < \dots < s_{l'}' \leq s$ and $1 \leq p_1' < \dots < p'_{m'} \leq p$ be all the numbers with $S_1 \in \rod(-s_i')$ and $S_1 \in \rod(-s - p_i')$.

\smallskip

Formulas $\Theta_{S \leadsto S_1}(t, t_1)$ simply list the cases for the distance between $t_1$ and $t$ such that an $S$ at $t$ implies an $S_1$ at $t_1$: by Lemma~\ref{period:A}~$(i)$, the distance can be one of the $s_i$, or one of the $-s'_i$, or belong to one of the the arithmetic progressions $s + p_i + p\N$ or $-s - p_i' - p\N$. So, we define the formula by taking
\begin{align*}
\Theta_{S \leadsto S_1}(t, t_1) \ \ =   \ \ & \bigvee_{1 \leq i \leq l} (t_1  - t = s_i) \ \ \lor \bigvee_{1 \leq i \leq m} (t_1 - t \in s + p_i + p \N) \ \ \lor \\
 & \bigvee_{1 \leq i \leq l'} (t  - t_1 = s_i') \ \ \lor \bigvee_{1 \leq i \leq m'} (t - t_1 \in s + p_i' + p \N);
\end{align*}
note that the $s_i'$ and the $p_i'$ are non-negative, and we flip the sign of the arithmetic expressions when $t_1 < t$.

\smallskip

Formulas $\bar \Theta^{\kpar_1}_{S \leadsto S_1}(t)$ follow the same principle. If $\kpar_1  > 0$, then we list the cases when the distance between $t$ and $\max{} + \kpar_1$ is suitable. Note, however, that, as $t$ ranges over the active temporal domain only, we have $t <  \max{} + \kpar_1$, and so, we need only cases with the~$s_i$ and the~$p_i$, but not with the~$s_i'$ and the~$p_i'$. So, we set
\begin{align*}
\bar \Theta^{\kpar_1}_{S \leadsto S_1}(t)  \ \ = &  \bigvee_{1 \leq i \leq l} (\max{}  - t =  s_i - \kpar_1) \ \ \lor{}\\ & \bigvee_{1 \leq i \leq m} \Bigl[
(\max{} - t\in s + p_i + (p - \kpar_1 \bmod{p}) + p \N)  \quad
 \lor \hspace*{-1em}\bigvee_{\begin{subarray}{c}0 \leq q \leq s + p_i \text{ with}\\ \kpar_1 \in s + p_i-q + p\N\end{subarray}} \hspace*{-2em}(\max{} - t = q)\Bigr].
\end{align*}
Since the value of $\max + \kpar_1$ does not belong to the active temporal domain, we cannot use the $(\max + \kpar_1) - t\in s + p_i + p\N$ abbreviation directly and have to consider two cases, depending on whether $\max{} - t$ is larger or smaller than $s + p_i$ (see Fig.~\ref{fig:two-cases-theta}): these two cases are encoded in the two disjuncts in the second line of  the definition of $\bar \Theta^{\kpar_1}_{S \leadsto S_1}(t)$.
\begin{figure}[t]
\centering%
\begin{tikzpicture}[xscale=1.3, yscale=0.8, semithick]\footnotesize
\begin{scope}
\fill[gray!15] (0,-0.25) rectangle +(5,0.25);
\begin{scope}[fill=gray!70]
\fill (4,1) rectangle ++(1.5, 0.15);
\fill (7,1) rectangle ++(1.5, 0.15);
\fill (5.5,0.85) rectangle ++(1.5, 0.15);
\fill (8.5,0.85) rectangle ++(1.5, 0.15);
\end{scope}
\begin{scope}[thin,draw=gray]\scriptsize
\draw[ultra thick] (-0.25, 0) -- ++(10.75,0);
\draw (2,-0.4) -- ++(0,2); \node at (2,-0.7) { $t$};
\draw (0,-0.4) -- ++(0,1); \node at (0,-0.7) {$\min$};
\draw (5,-0.4) -- ++(0,2.1); \node at (5,-0.7) {$\max$};
\draw (10,-0.4) -- ++(0,2); \node at (10,-0.7) {$\max + \kpar_1$};
\draw (4, 0.8) -- (4, 1.7);
\draw (5.5, 0.6) -- (5.5, 1);
\end{scope}
\node[wpoint,minimum size=5mm] at (2,0) {$S$};
\node[wpoint,minimum size=5mm] at (10,0) {$S_1$};
\draw[semithick] (2,1) -- (10,1);
\begin{scope}[ultra thick]
\draw (2, 0.8) -- +(0,0.4);
\node at (3,1.3) {$s + p_i$};
\draw (4, 0.8) -- +(0,0.4);
\foreach \x in {5.5, 7, 8.5, 10} {
   \draw (\x, 0.8) -- ++(0,0.4);
   \node at ($(\x - 0.75, 1.3)$) {$p$};
}
\end{scope}
\draw[<->] (5.5, 0.7) -- (5,0.7) node[below,midway] {$\kpar_1 \bmod p$};
\draw[<->] (4, 1.6) -- (5,1.6) node[above,midway] {$p - \kpar_1 \bmod p$};
\node at (0.5,1) {\normalsize $\max{} - t > s + p_i$};
\end{scope}
\begin{scope}[yshift=-40mm]
\fill[gray!15] (0,-0.25) rectangle +(5,0.25);
\begin{scope}[fill=gray!70]
\fill (7,1) rectangle ++(1.5, 0.15);
\fill (5.5,0.85) rectangle ++(1.5, 0.15);
\fill (8.5,0.85) rectangle ++(1.5, 0.15);
\end{scope}
\begin{scope}[thin,draw=gray]\scriptsize
\draw[ultra thick] (-0.25, 0) -- ++(10.75,0);
\draw (2,-0.4) -- ++(0,2); \node at (2,-0.7) {$t$};
\draw (0,-0.4) -- ++(0,1); \node at (0,-0.7) {$\min$};
\draw (5,-0.4) -- ++(0,1.9); \node at (5,-0.7) {$\max$};
\draw (10,-0.4) -- ++(0,2); \node at (10,-0.7) {$\max + \kpar_1$};
\draw[gray] (5.5, 1) -- (5.5, 1.5);
\end{scope}
\node[wpoint,minimum size=5mm] at (2,0) {$S$};
\node[wpoint,minimum size=5mm] at (10,0) {$S_1$};
\draw[semithick] (2,1) -- (10,1);
\begin{scope}[ultra thick]
\draw (2, 0.8) -- +(0,0.4);
\node at (3.75,1.3) {$s + p_i$};
\draw (5.5, 0.8) -- +(0,0.4);
\foreach \x in {7, 8.5, 10} {
   \draw (\x, 0.8) -- ++(0,0.4);
   \node at ($(\x - 0.75, 1.3)$) {$p$};
}
\end{scope}
\draw[<->] (5, 0.7) -- (2,0.7) node[below,midway] {$q$};
\draw[<->] (5, 1.4) -- (5.5,1.4) node[above,midway] {$s + p_i - q$};
\node at (0.5,1) {\normalsize $\max{} - t \leq s + p_i$};
\end{scope}
\end{tikzpicture}
\caption{Two cases for $\bar \Theta^{\kpar_1}_{S \leadsto S_1}(t)$.}\label{fig:two-cases-theta}
\end{figure}%
The case $\kpar_1 < 0$ is similar and left to the reader.

\smallskip

Formulas $\Theta^k_{S \leadsto S_1}(t_1)$ are constructed similarly to $\bar \Theta^{\kpar_1}_{S \leadsto S_1}(t)$.

\smallskip

Formulas $\Theta^{k, k_1}_{S \leadsto S_1}$ have no free variables, but follow the same principle. We only consider the most interesting case of $\kpar < 0$ and $\kpar_1 > 0$ (leaving the three remaining cases to the reader): in this formula, we check that the difference between $\max{} + \kpar_1$ and $\min{} + \kpar$ either is  one of the $s_i$ (again, the $s'_i$ are irrelevant because the difference is positive) or belongs to one of the the arithmetic progressions $s + p_i + p\N$ (the $p'_i$ are irrelevant), where we again have two cases, with $\max{} - \min{}$ larger/smaller than $s + p_i$. We denote $\bar{\kpar} = \kpar_1 - \kpar$ and  set
\begin{align*}
\Theta^{k, k_1}_{S \leadsto S_1}  \ =    & \bigvee_{1 \leq i \leq l} (\max{} - \min{} =  s_i - \bar{\kpar}))\  \ \lor {}\\*
& \bigvee_{1 \leq i \leq m}
\Bigl[ (\max{} -\min{} \in s + p_i + (p - \bar{\kpar} \bmod p) + p \N) \quad \lor \hspace*{-1em}\bigvee_{\begin{subarray}{c}0 \leq q \leq s + p_i \text{ with}\\\bar{\kpar} \in s+p_i- q + p\N\end{subarray}} \hspace*{-2em}(\max{} - \min{} = q) \Bigr].
\end{align*}

\medskip

We note that if $\R$ is a $\DL_\horn\Xbox$ RBox, then, by Lemma~\ref{period:A} $(i)$, we can take $p = 1$, and so  $\Theta_{S \leadsto S_1}(t, t_1)$,  $\Theta^k_{S \leadsto S_1}(t_1)$, $\bar \Theta^{\kpar_1}_{S \leadsto S_1}(t)$ and $\Theta^{k, k_1}_{S \leadsto S_1}$ are equivalent to $\FO(<)$-formulas.

The $\Xi$-formulas are constructed similarly to the $\Theta$-formulas  using the ontology~$\smash{\T_\R^\dagger}$ (see Lemma~\ref{thm:technical}) and the beam in $\C_{\smash{\T_\R^\dagger}, \{ B^\dagger(0) \}}$ (see Section~\ref{sec:HornRI:sat}), instead of $\R^\ddagger$ and the $\R$-canonical rod for $\{S^\ddagger(0)\}$. Observe that, for role-monotone
$\DL_\horn\Xbox$-ontologies~$\TO$, those formulas are in $\FO(<)$ because
$\smash{\T_\R^\dagger}$ is in $\LTL_\horn\Xbox$.

To sum up, we have shown the following:
\begin{lemma}
Let $\TO = (\T, \R)$ be a $\bot$-free $\DL_\horn\Xallop$ ontology, $k,k_1 \in \Z\setminus\{0\}$, $B$, $B_1$
basic concepts, and $S$, $S_1$ roles. Then there are $\FOE$-formulas
$\Theta_{S \leadsto S_1}(t, t_1)$, $\Theta^\kpar_{S \leadsto S_1}(t_1)$, $\bar
\Theta^{\kpar_1}_{S \leadsto S_1}(t)$, $\Theta^{\kpar, \kpar_1}_{S \leadsto S_1}$, as well as
$\Xi_{B \leadsto B_1}(t, t_1)$, $\Xi^\kpar_{B \leadsto B_1}(t_1)$, $\bar \Xi^{\kpar_1}_{B
\leadsto B_1}(t)$, $\Xi_{B \leadsto B_1}^{\kpar, \kpar_1}$ satisfying the
characterisations in Table~\ref{aux-formulas}.

Moreover, if $\TO$ is a
$\smash{\DL_{\horn/\core}\Xbox}$ or $\smash{\DL_{\horn/\rhorn}\Xbox}$ ontology, then those formulas are
in $\FO(<)$.
\end{lemma}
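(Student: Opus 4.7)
The plan is to verify that the explicitly given formulas from the preceding paragraphs do indeed satisfy the semantic characterisations in Table~\ref{aux-formulas}, and then to read off the $\FO(<)$ claim from the special case $p = 1$ of Lemma~\ref{period:A}~$(i)$.

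First I would focus on the $\Theta$-formulas and their underlying invariant. By equation~\eqref{eq:rods:minimal} applied to the singleton ABox $\{S^\ddagger(0)\}$, together with the definition of the $\R$-canonical rod, we have $S_1 \in \rod(n)$ iff $\R \models S \sqsubseteq \nxt^n S_1$. So the semantic characterisation in Table~\ref{aux-formulas} reduces to expressing, in $\FOE$, the predicate ``$n_1 - n$ belongs to the set $\mathcal{N}_{S \leadsto S_1} = \{\, m \in \Z \mid S_1 \in \rod(m)\,\}$''. The key structural fact is Lemma~\ref{period:A}~$(i)$: there are $s \le 2^{|\R|}$ and $p \le 2^{2|\R|}$ such that $\rod$ is $p$-periodic outside $[-s,s]$. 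Consequently $\mathcal{N}_{S \leadsto S_1}$ decomposes as a finite set of singletons $\{s_i\}$, $\{-s_i'\}$ inside $[-s,s]$ together with finitely many arithmetic progressions $s + p_i + p\N$ and $-s - p_i' - p\N$ outside. The disjuncts in $\Theta_{S \leadsto S_1}(t,t_1)$ are precisely one per component of this decomposition, so their correctness is immediate from the abbreviation conventions for ``$t_1 - t = r$'' and ``$t_1 - t \in r + p\N$''.

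The main obstacle — and the place where the construction is genuinely delicate — is in the phantom variants $\bar \Theta^{k_1}_{S \leadsto S_1}(t)$, $\Theta^{k}_{S \leadsto S_1}(t_1)$, and especially $\Theta^{k,k_1}_{S \leadsto S_1}$, because the quantity $\ka{k_1} = \max \A + k_1$ (or $\min \A + k$) is \emph{not} itself in the active domain, so the abbreviation $t_1 - t \in r + p\N$ from~\cite[Remark~3]{AIJ21} cannot be applied directly to a hypothetical timestamp $\ka{k_1}$. I would handle this by casing on whether $\max{} - t$ exceeds $s + p_i$ or not, as illustrated in Fig.~\ref{fig:two-cases-theta}: in the first case, $(\max{}+k_1) - t \in s+p_i + p\N$ holds iff $\max{} - t \in s + p_i + (p - k_1 \bmod p) + p\N$, a genuine $\FOE$-statement about the active-domain values; in the second case there are only finitely many possible values $q$ for $\max{} - t$, and we list those $q$ with $k_1 \in s + p_i - q + p\N$ as explicit equalities. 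The analogous two-sided case analysis handles $\Theta^{k,k_1}_{S \leadsto S_1}$, where both endpoints lie outside $\tem(\A)$; setting $\bar k = k_1 - k$ reduces the problem to expressing membership of $(\max{} - \min{}) + \bar k$ in the same decomposition as above. The cases $k_1 < 0$ (using the $s_i'$, $p_i'$ data) and mixed signs of $k,k_1$ are symmetric and can be dispatched by the same case analysis.

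For the $\Xi$-formulas I would repeat the entire argument verbatim, replacing $\R^\ddagger$ and the $\R$-canonical rod for $\{S^\ddagger(0)\}$ by $\T^\dagger_\R$ and the canonical beam of $\C_{\T^\dagger_\R, \{B^\dagger(0)\}}$. The semantic anchor now is that $B_1 \in \type_{\T^\dagger_\R, \{B^\dagger(0)\}}(n)$ iff $\T^\dagger_\R \models B^\dagger \sqsubseteq \nxt^n B_1^\dagger$, which by the soundness half of Lemma~\ref{thm:technical} and the $\bot$-free reduction is equivalent to $\TO \models B \sqsubseteq \nxt^n B_1$; periodicity is again Lemma~\ref{period:A}~$(i)$ (applied to the $\LTL_\horn\Xallop$ ontology $\T^\dagger_\R$), and the same decomposition-plus-case-analysis yields the formulas. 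Finally, the $\FO(<)$ claim for $\DL_{\horn/\core}\Xbox$ and $\DL_{\horn/\rhorn}\Xbox$ ontologies is a direct corollary: in those fragments $\R^\ddagger$ and $\T^\dagger_\R$ are $\LTL_\horn\Xbox$ ontologies, so Lemma~\ref{period:A}~$(i)$ gives $p = 1$, every abbreviation ``$t_1 - t \in r + 1\cdot\N$'' collapses to the $\FO(<)$-shortcut $t_1 - t > r - 1$, and all the congruence atoms disappear.
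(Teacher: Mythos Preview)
Your proposal is correct and follows essentially the same approach as the paper: both exploit the eventual periodicity of the canonical rod/beam (Lemma~\ref{period:A}~$(i)$) to decompose the target set of distances into finitely many singletons and arithmetic progressions, handle the phantom variants by the two-case split of Fig.~\ref{fig:two-cases-theta}, and obtain the $\FO(<)$ claim from $p=1$. One small gloss worth making explicit (the paper's text is equally terse here): $\T_\R^\dagger$ as defined via $\textbf{(con)}$ contains $\Rnext$-axioms, so for the $\DL_{\horn/\core}\Xbox$ and $\DL_{\horn/\rhorn}\Xbox$ cases you must implicitly use the role-monotone replacement $\T_{\textbf{mon-}\R}$ from Section~\ref{sec:rewrOMAQ:mon} (Theorem~\ref{thmforbox}) to actually land in $\LTL_\horn\Xbox$.
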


\subsection{$\Psi$-formulas}\label{app:psi}
The formulas $\Psi^{\mu_1,\dots,\mu_l}_{S_1\dots S_l,\varkappa}(x, t_1,\dots,t_l, t)$ and
$\Psi^{\mu_1,\dots,\mu_l, k}_{S_1\dots S_l,\varkappa}(x, t_1,\dots,t_l)$ have, however,
more involved definitions, which take into account the periodicity of the canonical model $\can$.
The next
lemma characterises the temporal periodicity of $\can$ on both ABox and
non-ABox elements.
\dlperiod*

\begin{proof}
Since the number of role types is bounded by~$2^{|\R|}$, the  size of $\textbf{(con)}$ is bounded by~$2^{|\R|} (3+2(|\R|+2^{2|\R|})(1+|\R|)) \leq  2^{4|\R|}$ (assuming that $\R$ is non-empty). It follows that~$|\T_\R| = |\T| + |\textbf{(con)}| \leq |\T| + 2^{4|\R|} \leq 2^{4|\TO|}$. Therefore, we have $s_{\T_\R} < 2^{|\T_\R|} \leq 2^{2^{4|\TO|}}$ and $p_{\T_\R} < 2^{2|\T_\R|\cdot 2^{|\T_\R|}} \leq  2^{2^{(2^{4|\TO|} + 4|\TO| + 1)}} \leq 2^{2^{2^{5|\TO|}}}$. Set $s_\TO=s_{\T_\R} + p_{\T_\R}$ and $p_{\TO}= p_{\T_\R}$. Note that $s_\TO \geq p_\TO$. Also, as $\textbf{(con)}$ mimics the behaviour of roles using concept names, we can assume in applications of Lemma~\ref{period:A}~(\emph{ii}) to $\R$ below that $s_{\TO} \geq s_\R$ and $p_{\TO}$ is divisible by $p_\R$.

The proof of the lemma is by  induction on the construction of $\varkappa$. In fact,
we also prove (by simultaneous induction) the following auxiliary claim:\\[6pt]
\textsc{Claim 1.}
For any $wT_1^{n_1}\dots T_r^{n_r}\in  \Delta^{\can}$, with $r > 0$,
we have
\begin{align*}
& \can, \kpar \models \varkappa(w T_1^{n_1} \dots
    T_r^{n_r}) \quad \Longleftrightarrow \quad \can, \kpar + p_{\TO}
    \models \varkappa(w T_1^{n_1+p_\TO} \dots T_r^{n_r+p_\TO}),\\
    &\hspace*{10em}
    \text{ for any }\kpar \geq M^w_\A + s_{\TO} +
    |\varkappa| p_{\TO} \text{ provided that } n_1 \geq M^w_\A +s_\TO,\\
& \can, \kpar \models \varkappa(w T_1^{n_1} \dots T_r^{n_r}) \quad
    \Longleftrightarrow \quad \can, \kpar - p_{\TO} \models
    \varkappa(w T_1^{n_1-p_\TO} \dots T_r^{n_r-p_\TO}), \\
    &\hspace*{10em}
    \text{ for any } \kpar \leq \bar{M}^w_\A - (s_{\TO} + |\varkappa| p_{\TO}) \text{ provided that  } n_1 \leq \bar{M}^w_\A -s_\TO,
\end{align*}
We prove only the first equivalence in each pair; the other is shown by a symmetric argument.
First, observe that, by Lemma~\ref{period:A}~$(ii)$,  if $n_1 \geq M^w_\A +s_\TO$, then $w T_1^{n_1+p_\TO} \dots T_r^{n_r+p_\TO}\in\Delta^{\can}$, and similarly, if $n_1 \leq \bar{M}^w_\A - s_\TO$, then $w T_1^{n_1-p_\TO} \dots T_r^{n_r-p_\TO}\in\Delta^{\can}$. We now proceed by induction on the structure of $\varkappa$.

\smallskip

\noindent\textit{Cases $\varkappa = A$ and $\varkappa = \exists S.\top$}  for both the claim of the lemma and Claim~1 follow immediately from Lemmas~\ref{period:A}~$(ii)$ and~\ref{lem:witness-interaction}~(\emph{i}).

\smallskip

\noindent\textit{Case $\varkappa = \exists S. \varkappa'$} with $|\varkappa'|\geq 1$. We first prove the claim of the lemma. Suppose first that $\can, \kpar \models \varkappa'(w')$ and $\can, \kpar \models
    S(w, w')$, for some $w'\in\Delta^{\can}$.  There are two possible  locations for $w'$.
\begin{itemize}
\item[--] Let $w' = a
    S_1^{m_1} \dots S_{l-1}^{m_{l-1}}$ (if $l > 0$).   Since
    $M^{w'}_\A \leq M^{w}_\A$, and so $\kpar \ge M^{w'}_\A + s_\TO +
    |\varkappa'|p_\TO$, by the induction hypothesis, we obtain  $\can, \kpar + p_{\TO} \models
    \varkappa'(w')$.
    On the other hand, by Lemma~\ref{lem:witness-interaction}~(\emph{ii}), we have $S^-(w',w,\kpar) \in
    \C_{\R,\{S_l(w',w,m_l)\}}$, and so $S^- \in \rod_l(\kpar)$, where $\rod_l$ is the $\R$-canonical rod for $\{S_l(m_l)\}$. Since $\kpar \geq m_l  + s_\TO$,  by Lemma~\ref{period:A} $(ii)$, we obtain $S^- \in \rod_l(\kpar +
    p_{\TO})$ and $S^-(w',w,\kpar +
    p_{\TO}) \in \C_{\R,\{S_l(w',w,m_l)\}}$, whence, by Lemma~\ref{lem:witness-interaction}~(\emph{ii}), $\can,\kpar + p_{\TO}\models S(w,w')$.

\item[--] Let $w' = w S_{l+1}^{m_{l+1}}$. We have two further cases to consider.

If $m_{l+1} <  M^w_\A + s_{\TO}$,  then $M^{w'}_\A < M^w_\A+s_\TO$. As $s_\TO \leq p_\TO$,  we have $\kpar \geq M^{w'} + s_{\TO} + |\varkappa'| p_{\TO}$. By the induction hypothesis, we obtain $\can, \kpar + p_\TO \models \varkappa'(w')$; see Fig.~\ref{fig:appendix-witness-period}a.
On the other hand, since  $\kpar \geq m_{l+1} + s_\TO$, we can apply the argument with the $\R$-canonical rods as above and obtain $\can, \kpar + p_{\TO} \models S(w, w')$.

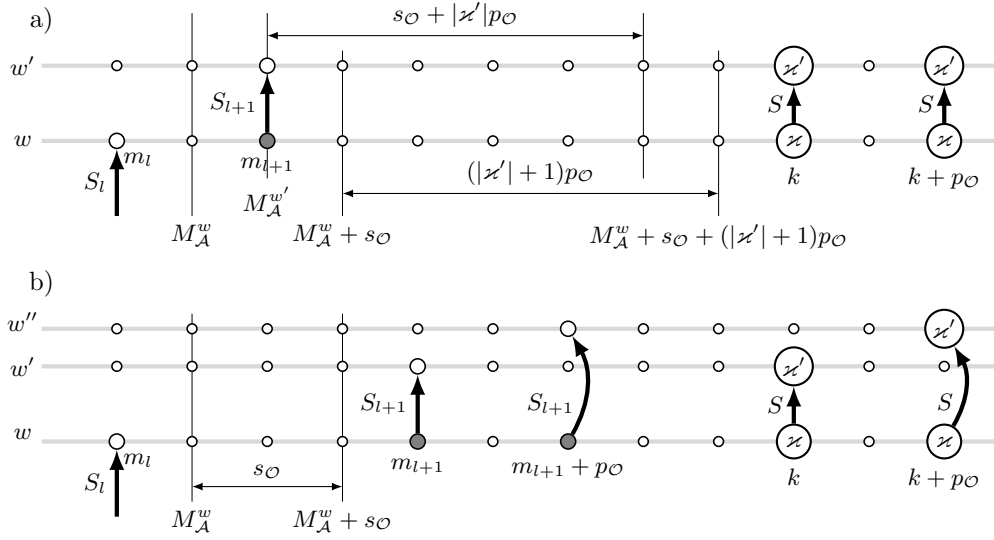
\begin{figure}[t]
\centerline{%
\begin{tikzpicture}[xscale=1, yscale=1, semithick]\footnotesize
\begin{scope}
\foreach \y in {0,1} {
    \draw[object-timeline] (-3,\y) -- ++(12.75,0);
}
\node (d) at (-3,1.6) {\small a)};
\node (a) at (-3.25,0) {$w$};
\node (d) at (-3.25,1) {$w'$};
\draw[time-guideline] (-1,-1) node[below]{$M^w_\A$}  -- ++(0,2.7);
\draw[time-guideline] (0,-0.5) node[below]{$M^{w'}_\A$}  -- ++(0,2.2);
\draw[time-guideline] (1,-1) node[below]{$M^w_\A + s_{\TO}$}  -- ++(0,2.2);
\draw[time-guideline] (5,-0.5)   -- ++(0,2.2);
\draw[time-guideline] (6,-1) node[below]{$M^w_\A+s_{\TO}+(|\varkappa'|+1)p_{\TO}$}  -- ++(0,2.2);
\node (a01) at (-2,0) [qpoint, label={[xshift=8pt, yshift=-16pt] $m_l$}]{};
\node at (-1,0) [point]{};
\node (a02) at (0,0) [ppoint, label=below:{$m_{l+1}$}]{};
\node (a12) at (0,1) [qpoint]{};
\node (a02') at (1,0) [point]{};
\node (a03) at (7,0) [wpoint, label=below:{$\kpar$}]{$\varkappa$};
\node (a04) at (9,0) [wpoint, label=below:{$\kpar+p_{\TO}$}]{$\varkappa$};
\foreach \x in {2,3,4,5,6,8} {
    \node at (\x,0) [point]{};
}
\foreach \x/\i in {-2/0,-1/1,1/3,2/4,3/5,4/6,5/9,6/10,8/11} {
     \node (a1\i) at (\x,1) [point]{};
 }
\node (a17) at (7,1) [wpoint]{$\varkappa'$};
\node (a18) at (9,1) [wpoint]{$\varkappa'$};
\begin{scope}[ultra thick]
\draw[->] (a03)  to node [left,midway]{$S$} (a17);
\draw[->] (a04)  to node [left,midway]{$S$} (a18);
\draw[->] ($(a01)+(0,-1)$)  to node [left,midway]{$S_l$} (a01);
\draw[->] (a02)  to node [left,midway]{$S_{l+1}$} (a12);
\end{scope}
\draw[<->,thin]	($(a13)+(0,-1.7)$) to node [above] {$(|\varkappa'|+1)p_{\TO}$} ($(a110)+(0,-1.7)$);
\draw[<->,thin]	($(a12)+(0,0.4)$) to node [above] {$s_{\TO}+|\varkappa'|p_{\TO}$} ($(a19)+(0,0.4)$);
\end{scope}%
\begin{scope}[yshift=-40mm]
\foreach \y in {0,1,1.5} {
    \draw[object-timeline] (-3,\y) -- ++(12.75,0);
}
\node (d) at (-3,2.1) {\small b)};
\node (a) at (-3.25,0.1) {$w$};
\node (d) at (-3.25,1.0) {$w'$};
\node (d) at (-3.25,1.6) {$w''$};
\draw[time-guideline] (-1,-0.8) node[below]{$M^w_\A$}  -- ++(0,2.5);
\draw[time-guideline] (1,-0.8) node[below]{$M^w_\A + s_{\TO}$}  -- ++(0,2.5);
\node (a01) at (-2,0) [qpoint, label={[xshift=8pt, yshift=-16pt] $m_l$}]{};
\node at (-1,0) [point]{};
\node (a02) at (2,0) [ppoint, label=below:{$m_{l+1}$}]{};
\node (a12) at (2,1) [qpoint]{};
\node (a02') at (4,0) [ppoint, label=below:{$m_{l+1}+p_{\TO}$}]{};
\node (a12') at (4,1.5) [qpoint]{};
\node (a03) at (7,0) [wpoint, label=below:{$\kpar$}]{$\varkappa$};
\node (a04) at (9,0) [wpoint, label=below:{$\kpar+p_{\TO}$}]{$\varkappa$};
\foreach \y in {0,1,3,5,6,8} {
    \node at (\y,0) [point]{};
}
    \foreach \x/\i in {-2/0,-1/1,0/4,1/3,3/5,4/6,5/9,6/10,7/12,8/11,9/13} {
        \node (a1\i) at (\x,1) [point]{};
    }
    \foreach \x/\i in {-2/0,-1/1,0/4,1/3,2/2,3/5,5/9,6/10,7/12,8/11,9/13} {
        \node (a2\i) at (\x,1.5) [point]{};
    }
\node (a17) at (7,1) [wpoint]{$\varkappa'$};
\node (a18) at (9,1.5) [wpoint]{$\varkappa'$};
\begin{scope}[ultra thick]
\draw[->] (a03)  to node [left,pos=0.5]{$S$} (a17);
\draw[->,bend right, pos=0.7] (a04)  to node [left,pos=0.3]{$S$} (a18);
\draw[->] ($(a01)+(0,-1)$)  to node [left,pos=0.5]{$S_l$} (a01);
\draw[->] (a02)  to node [left,pos=0.5]{$S_{l+1}$} (a12);
\draw[->,bend right, pos=0.7] (a02')  to node [left,pos=0.3]{$S_{l+1}$} (a12');
\end{scope}
\draw[<->,thin]	($(a11)+(0,-1.6)$) to node [above] {$s_{\TO}$} ($(a13)+(0,-1.6)$);
\end{scope}
\end{tikzpicture}%
}%
\caption{Proof of Lemma~\ref{th:dl-period}.}
\label{fig:appendix-witness-period}
\end{figure}

    If $m_{l+1} \geq M^w_\A + s_{\TO}$, then, by applying Claim~1 (as the induction hypothesis) to \mbox{$w'' = w S_{l+1}^{m_{l+1}+p_\TO}$}, we obtain $\can, \kpar+p_\TO \models
    \varkappa'(w'')$; see Fig.~\ref{fig:appendix-witness-period}b. On the other hand, $S(w,w',\kpar)
    \in \C_{\R,\{S_{l+1}(w,w',m_{l+1})\}}$, whence, by shifting the time instant for $S_{l+1}$, we obtain
    $S(w,w'',\kpar+p_\TO) \in
    \C_{\R,\{S_{l+1}(w,w'',m_{l+1}+p_\TO)\}}$, and so $\can,
    \kpar+p_\TO \models S(w, w'')$.
\end{itemize}
The converse, $\can, \kpar +
    p_{\TO} \models \varkappa'(w')$ and $\can, \kpar + p_{\TO} \models
    S(w, w')$ imply $\can, \kpar \models \varkappa'(w')$ and
    $\can, \kpar \models S(w, w')$, is shown similarly.
   
We now establish Claim~1 by distinguishing the following two cases.
\begin{itemize}
\item[--] Suppose $\can, \kpar \models \varkappa'(w T_1^{n_1} \dots T_{r-1}^{n_{r-1}})$ and $\can, \kpar \models S(w T_1^{n_1} \dots T_{r}^{n_r}, w T_1^{n_1} \dots T_{r-1}^{n_{r-1}})$. Observe that,  by shifting the time instant for $T_1$ (and therefore for all $T_2,\dots,T_r$), we obtain
    $\can, \kpar+p_\TO \models S(w T_1^{n_1+p_\TO} \dots
    T_{r}^{n_{r}+p_\TO}, w T_1^{n_1+p_\TO} \dots
    T_{r-1}^{n_{r-1}+p_\TO})$.
If $r=1$, then, by the induction hypothesis of the lemma,
    $\can,\kpar +p_\TO \models \varkappa'(w)$.
Otherwise, by Claim~1 as the induction hypothesis, we have $\can, \kpar+p_\TO \models \varkappa'(w
    T_1^{n_1+p_\TO}
    \dots T_{r-1}^{n_{r-1}+p_\TO})$.
    In either case, $\can, \kpar+p_\TO \models \exists
    S.\varkappa'(w T_1^{n_1+p_\TO} \dots T_r^{n_r+p_\TO})$.  The converse is similar.

\item[--] Suppose $\can, \kpar \models \varkappa'(w T_1^{n_1} \dots
    T_{r+1}^{n_{r+1}})$ and $\can, \kpar \models S(w T_1^{n_1} \dots
    T_{r}^{n_r}, w T_1^{n_1} \dots T_{r+1}^{n_{r+1}})$.
    Again, by shifting the time instant for $T_1$ (and therefore for all $T_2,\dots,T_{r+1}$), we obtain $\can, \kpar+p_\TO \models S(w T_1^{n_1+p_\TO} \dots
    T_{r}^{n_{r}+p_\TO}, w T_1^{n_1+p_\TO} \dots
    T_{r+1}^{n_{r+1}+p_\TO})$. On the other hand,
    by the induction hypothesis (Claim~1), $\can,\kpar+p_\TO \models \varkappa'(w
    T_1^{n_1+p_\TO} \dots T_{r+1}^{n_{r+1}+p_\TO})$. Thus, $\can, \kpar+p_\TO \models \exists S.\varkappa'(w
    T_1^{n_1+p_\TO} \dots T_r^{n_r+p_\TO})$.
    The converse implication is similar.
\end{itemize}

\noindent\textit{Cases $\varkappa = \varkappa_1 \sqcap \varkappa_2$ and $\varkappa = \varkappa_1\sqcup\varkappa_2$} are easy and left to the reader. 

\smallskip

\noindent\textit{Cases of temporal operators} follow from the proof of Lemma 22 by~\citeA{AIJ21}: we use~$|\varkappa| p_\TO$ instead of $\boldsymbol{l}_\varkappa p_\TO$, where $\boldsymbol{l}_\varkappa$ is the number of temporal operators in~$\varkappa$.
\end{proof}

\begin{lemma}\label{th:dl-period-cor}
In the context of Lemma~\ref{th:dl-period}, for any $w\in\Delta^{\can}$, any role $T$ from $\TO$, and any $n \in \tem(\A)$, we have
\begin{align*}
& \can, n \models \varkappa(w T^k) \quad
    \Longleftrightarrow \quad \can, n \models
    \varkappa(w T^{k+p_\TO}), \quad &&\text{for every $k \geq M^w_\A + s_\TO + |\varkappa|p_\TO$}, \\*
& \can, n \models \varkappa(w T^k) \quad
    \Longleftrightarrow \quad \can, n \models
    \varkappa(w T^{k-p_\TO}), \quad &&\text{for every $k \le \bar{M}^w_\A - (s_\TO + |\varkappa|p_\TO)$}.
\end{align*}
\end{lemma}
To prove this lemma, it is more convenient to show a more general result for an arbitrarily long role chain and disregard the active domain boundaries for $n$.
\\[12pt]
\textsc{Claim 2.}
For roles $T_1,\dots,T_r$ from $\TO$, $k_1,\dots,k_r \in \Z$, $r \geq 0$, the following holds:
\begin{align*}
\can, n
    \models \varkappa(w T_1^{k_1} \dots
    T_r^{k_r}) &\quad \Longleftrightarrow \quad \can, n
    \models \varkappa(w T_1^{k_1+p_\TO} T_{\smash{2}}^{k_2^+} \dots T_r^{k_r^+}), \\
& \text{for every $k_1 \geq M^w_\A + s_\TO$ and every $n \leq k_1 - (r \cdot s_\TO + |\varkappa|p_\TO)$,} \\
\can, n \models \varkappa(w T_1^{k_1} \dots T_r^{k_r}) &\quad
    \Longleftrightarrow \qquad \can, n \models
    \varkappa(w T_1^{k_1-p_\TO} T_{\smash{2}}^{k_2^-} \dots T_r^{k_r^-}), \\
& \text{for every $k_1 \le \bar{M}^w_\A - s_\TO$ and every $n \geq k_1 + r \cdot s_\TO + |\varkappa|p_\TO$,}
\end{align*}
where
\begin{align*}
& k_i^{+} = \begin{cases} k_i, & \text{if there is } 1 < j\leq i \text{ with } k_{j-1} - k_j \geq  s_\TO,\\ k_i + p_\TO, & \text{otherwise,} \end{cases} \\
& k_i^{-} = \begin{cases} k_i,  & \text{if there is } 1 < j\leq i \text{ with } k_{j} -k_{j-1} \geq  s_\TO,\\ k_i-  p_\TO, & \text{otherwise.} \end{cases}
\end{align*}
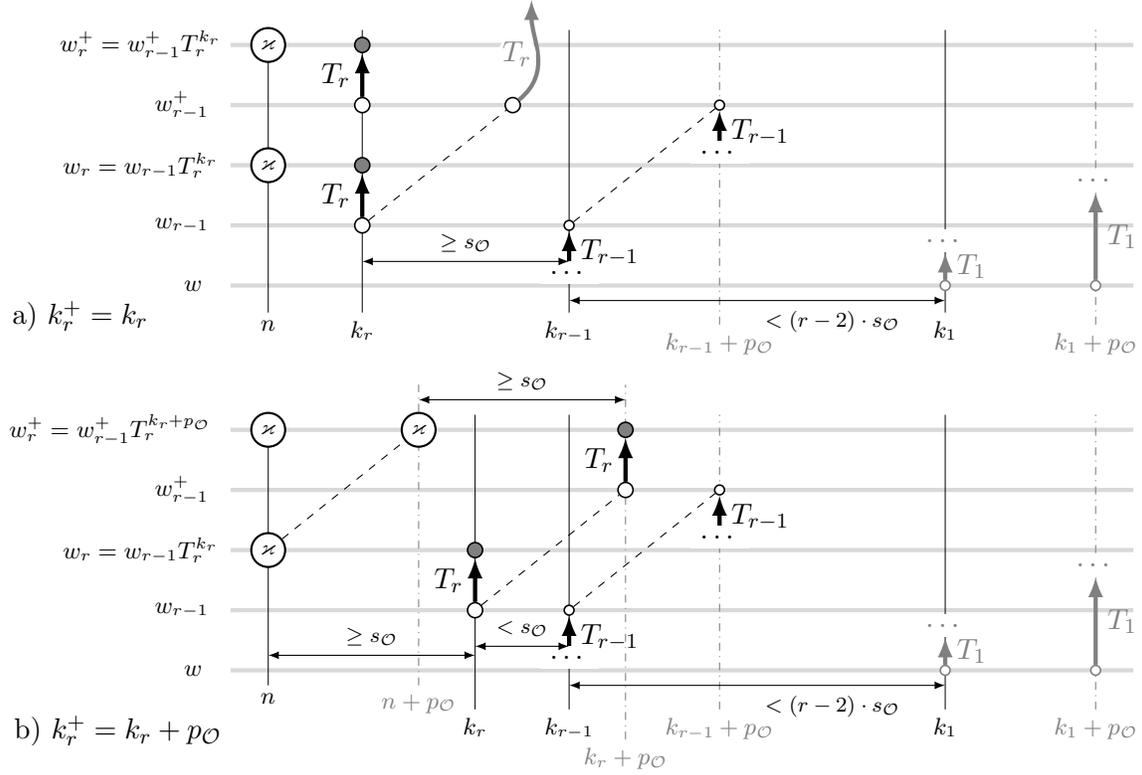
\begin{figure}[t]
\centerline{%
\begin{tikzpicture}[xscale=1, yscale=0.8]
\node at (-3.5,-0.5) {a) $k_r^+ = k_r$};
\begin{scope}\footnotesize
\draw[object-timeline] (-1.5,0) -- ++(12,0); \node (a) at (-2,0) {$w$};
\draw[object-timeline] (-1.5,1) -- ++(12,0); \node (a) at (-2.15,1) {$w_{r-1}$};
\draw[object-timeline] (-1.5,2) -- ++(12,0); \node (a) at (-2.7,2) {$w_r = w_{r-1}T_r^{k_r}$};
\draw[object-timeline] (-1.5,3) -- ++(12,0); \node (a) at (-2.15,3) {$w_{r-1}^+$};
\draw[object-timeline] (-1.5,4) -- ++(12,0); \node (a) at (-2.7,4) {$w_r^+ = w_{r-1}^+T_r^{k_r}$};
\end{scope}
\begin{scope}\footnotesize
\draw[time-guideline] (8,-0.45)  node[below]{$k_1$} -- ++(0,4.7);
\draw[time-guideline2] (10,-0.65)  node[below]{$k_1+p_\TO$} -- ++(0,4.9);
\draw[time-guideline] (3,-0.45)   node[below]{$k_{r-1}$} -- ++(0,4.7);
\draw[time-guideline2] (5,-0.65) node[below]{$k_{r-1}+p_\TO$} -- ++(0,4.9);
\draw[time-guideline] (0.25,-0.45)   node[below]{$k_r$} -- ++(0,4.7);
\draw[time-guideline] (-1,-0.45)  node[below]{$n$} -- ++(0,4.7);
\end{scope}
\node (w-kr) at (8,0) [point, draw=gray]{};
\node (w-kr') at (10,0) [point, draw=gray]{};
\draw[->,ultra thick,gray] (w-kr)  to node [right,midway]{$T_1$} +(0,0.55) node[label={[yshift=-5pt,fill=white] $\dots$}] {};
\draw[->,ultra thick,gray] (w-kr')  to node [right,midway]{$T_1$} +(0,1.55) node[label={[yshift=-5pt,fill=white] $\dots$}] {};
\node (w-kr-1) at (3,1) [point]{};
\node (w-kr-1') at (5,3) [point]{};
\draw[<-,ultra thick]  (w-kr-1) to node [right,midway]{$T_{r-1}$} ++(0,-0.8) node[label={[yshift=-9pt,fill=white] $\dots$}] {};
\draw[<-,ultra thick]   (w-kr-1') to node [right,midway]{$T_{r-1}$} ++(0,-0.8) node[label={[yshift=-9pt,fill=white] $\dots$}] {};
\node (w-kr) at (0.25,2) [ppoint]{};
\node (w-kr-s) at (0.25,1) [qpoint]{};
\draw[->,ultra thick]  (w-kr-s)  to node [left,midway]{$T_r$} (w-kr) ;
\node (w-kr') at (0.25,4) [ppoint]{};
\node (w-kr'-s) at (0.25,3) [qpoint]{};
\draw[->,ultra thick]  (w-kr'-s)  to node [left,midway]{$T_r$} (w-kr');
\node (w-kr'-s-1) at (2.25,3) [qpoint]{};
\draw[->,ultra thick,gray,out=45,in=-90]  (w-kr'-s-1)  to node [left,midway,fill=white,inner sep=0pt,circle]{$T_r$} ++(0.25,1.75);
\node (w-n) at (-1,2) [wpoint]{\scriptsize $\varkappa$};
\node (w-n'') at (-1,4) [wpoint]{\scriptsize $\varkappa$};
\draw[<->,thin]	(3,-0.25) to node [below,pos=0.7] {\scriptsize $< (r-2)\cdot s_{\TO}$} (8,-0.25);
\draw[<->,thin]	(0.25,0.4) to node [above] {\scriptsize $\geq s_{\TO}$} (3,0.4);
\draw[dashed] (w-kr-s) -- (w-kr'-s-1);
\draw[dashed] (w-kr-1) -- (w-kr-1');
\begin{scope}[yshift=-64mm]
\node at (-3,-1) {b) $k_r^+ = k_r + p_\TO$};
\begin{scope}\footnotesize
\draw[object-timeline] (-1.5,0) -- ++(12,0); \node (a) at (-2,0) {$w$};
\draw[object-timeline] (-1.5,1) -- ++(12,0); \node (a) at (-2.15,1) {$w_{r-1}$};
\draw[object-timeline] (-1.5,2) -- ++(12,0); \node (a) at (-2.7,2) {$w_r = w_{r-1}T_r^{k_r}$};
\draw[object-timeline] (-1.5,3) -- ++(12,0); \node (a) at (-2.15,3) {$w_{r-1}^+$};
\draw[object-timeline] (-1.5,4) -- ++(12,0); \node (a) at (-3.1,4) {$w_r^+ =w_{r-1}^+T_r^{k_r + p_\TO}$};
\end{scope}
\begin{scope}\footnotesize
\draw[time-guideline] (8,-0.65)  node[below]{$k_1$} -- ++(0,4.9);
\draw[time-guideline2] (10,-0.65)  node[below]{$k_1+p_\TO$} -- ++(0,4.9);
\draw[time-guideline] (3,-0.65)   node[below]{$k_{r-1}$} -- ++(0,4.9);
\draw[time-guideline2] (5,-0.65) node[below]{$k_{r-1}+p_\TO$} -- ++(0,4.9);
\draw[time-guideline] (1.75,-0.65)   node[below]{$k_r$} -- ++(0,4.9);
\draw[time-guideline2] (3.75,-1.15)   node[below]{$k_r + p_\TO$} -- ++(0,5.9);
\draw[time-guideline] (-1,-0.25)  node[below]{$n$} -- ++(0,4.5);
\draw[time-guideline2] (1,-0.25)  node[below]{$n + p_\TO$} -- ++(0,5);
\end{scope}
\node (w-kr) at (8,0) [point, draw=gray]{};
\node (w-kr') at (10,0) [point, draw=gray]{};
\draw[->,ultra thick,gray] (w-kr)  to node [right,midway]{$T_1$} +(0,0.55) node[label={[yshift=-5pt,fill=white] $\dots$}] {};
\draw[->,ultra thick,gray] (w-kr')  to node [right,midway]{$T_1$} +(0,1.55) node[label={[yshift=-5pt,fill=white] $\dots$}] {};
\node (w-kr-1) at (3,1) [point]{};
\node (w-kr-1') at (5,3) [point]{};
\draw[<-,ultra thick]  (w-kr-1) to node [right,midway]{$T_{r-1}$} ++(0,-0.8) node[label={[yshift=-9pt,fill=white] $\dots$}] {};
\draw[<-,ultra thick]   (w-kr-1') to node [right,midway]{$T_{r-1}$} ++(0,-0.8) node[label={[yshift=-9pt,fill=white] $\dots$}] {};
\node (w-kr) at (1.75,2) [ppoint]{};
\node (w-kr-s) at (1.75,1) [qpoint]{};
\draw[->,ultra thick]  (w-kr-s)  to node [left,midway]{$T_r$} (w-kr) ;
\node (w-kr') at (3.75,4) [ppoint]{};
\node (w-kr'-s) at (3.75,3) [qpoint]{};
\draw[->,ultra thick]  (w-kr'-s)  to node [left,midway]{$T_r$} (w-kr');
\node (w-n) at (-1,2) [wpoint]{\scriptsize $\varkappa$};
\node (w-n') at (1,4) [wpoint]{\scriptsize $\varkappa$};
\node (w-n'') at (-1,4) [wpoint]{\scriptsize $\varkappa$};
\draw[<->,thin]	(3,-0.25) to node [below,pos=0.7] {\scriptsize $< (r-2)\cdot s_{\TO}$} (8,-0.25);
\draw[<->,thin]	(1.75,0.4) to node [above] {\scriptsize $< s_{\TO}$} (3,0.4);
\draw[<->,thin]	(-1,0.25) to node [above] {\scriptsize $\geq s_{\TO}$} (1.75,0.25);
\draw[<->,thin]	(1,4.5) to node [above] {\scriptsize $\geq s_{\TO}$} (3.75,4.5);
\draw[dashed] (w-kr-s) -- (w-kr'-s);
\draw[dashed] (w-kr-1) -- (w-kr-1');
\draw[dashed] (w-n) -- (w-n');
\end{scope}
\end{tikzpicture}%
}%
\caption{Two cases for $k_i^+$ for the basis of induction in Lemma~\ref{th:dl-period-cor}: $k_{r-1}^+ = k_{r-1} + p_\TO$, $w_{r-1} = wT_1^{k_1}\dots T_{r-1}^{k_{r-1}}$,  and  $w_{r-1}^+ = wT_1^{k_1+p_\TO}\dots T_{r-1}^{k_{r-1}+p_\TO}$.}
\label{fig:appendix-dl-period-cor}
\end{figure}
\begin{proof}
We are going to show only the former statement; a proof for the latter will follow the same line of reasoning.

We abbreviate $wT_1^{k_1}T_2^{k_2} \dots T_r^{k_r}$ and  $wT_1^{k_1+p_\TO} T_{\smash{2}}^{k_2^+} \dots T_r^{k_r^+}$
by $w_r$ and $w^+_r$, respectively, and begin by observing that, if $k_1 \geq M^w_\A + s_\TO$, then
\begin{equation}\label{eq:L45:domain}
w_r \in \Delta^{\can} \quad\text{ iff }\quad w_r^+ \in \Delta^{\can}.
\end{equation}
The argument is by induction on $r$. For the basis case, $r = 1$, we have $w_1 = wT_1^{k_1}$ and $w_1^+ = wT_1^{k_1 + p_\TO}$, and~\eqref{eq:L45:domain} follows from Lemmas~\ref{period:A}~$(ii)$ and~\ref{lem:witness-interaction}~(\emph{i}). For the inductive step, suppose~\eqref{eq:L45:domain} holds for $r - 1$. If $k_{r-1}^+ = k_{r-1}$, then also $k_r^+ =  k_r$, and so~\eqref{eq:L45:domain} is immediate from the induction hypothesis. If $k_{r-1}^+ = k_{r-1} + p_\TO$, then we need to consider two further cases. If $k_r^+ = k_r + p_\TO$, then again~\eqref{eq:L45:domain} is immediate. If $k_r^+ = k_r$, then suppose first  $w_r\in\Delta^{\can}$. Then
$\can, k_r \models \exists T_r (w_{r-1})$. As $k_{r-1}^+ = k_{r-1} + p_\TO$, we observe that
$\can, k_r+p_\TO \models \exists T_r (w^+_{r-1})$, which, by Lemma~\ref{period:A}~(\emph{ii}), implies
$\can, k_r\models \exists T_r (w^+_{r-1})$ because  $k_r +p_\TO \leq (k_{r-1}+p_\TO) - s_\TO$. Thus, $w^+_r\in\Delta^{\can}$.  The converse direction is similar. This completes the argument for~\eqref{eq:L45:domain}.

Now, we proceed by induction on the construction of $\varkappa$ to show Claim~2.
\smallskip

\noindent\textit{Cases $\varkappa = A$ and $\varkappa = \exists S.\top$.}
If $k_r^+=k_r$, then the claim is immediate from Lemma~\ref{lem:witness-interaction}~(\emph{iii}) as the beams of both $w_r$ and $w^+_r$ are generated by $\exists T_r^-$ at $k_r$; see Fig.~\ref{fig:appendix-dl-period-cor}a.
Otherwise, $k_{j-1} - k_j< s_\TO$, for all $1< j\leq r$, and $w^+_r = \smash{wT_1^{k_1+p_\TO} T_2^{k_2+p_\TO} \dots T_r^{k_r+p_\TO}}$; see Fig.~\ref{fig:appendix-dl-period-cor}b. In this case, by Lemma~\ref{lem:witness-interaction}~(\emph{iii}), $\can, n \models \varkappa(w_r)$ iff \mbox{$\can, n +p_\TO \models \varkappa(w^+_r)$}, for all $n\in\Z$. By applying Lemma~\ref{period:A}~$(ii)$ to $w_r^+$,
we obtain
    $\can, n +p_\TO \models   \varkappa(w^+_r)$ iff  $\can, n \models \varkappa(w^+_r)$,  for every $n$ with  $n +p_\TO \leq  (k_r+p_\TO)- s_{\TO}$, that is, for every $n \leq k_r - s_\TO$, which, in view of our assumption on the $k_j$, means that $\can, n \models \varkappa(w_r)$ iff \mbox{$\can, n \models \varkappa(w^+_r)$} for every $n \leq k_1 - r \cdot s_\TO$ (for example, consider $n \leq k_2 - s_\TO$: as $k_1 - k_2 < s_\TO$, we have $k_2 > k_1 - s_\TO$, and so the equivalence holds for every $n \leq k_1 - 2s_\TO \leq k_2 - s_\TO$).

\medskip

\noindent\textit{Case $\varkappa = \exists S. \varkappa'$ with $|\varkappa| \geq 1$}. Suppose first $\can,n\models S(w_r, w')$  and $\can,n\models \varkappa'(w')$, for some $w'\in\Delta^{\can}$.  There are two possible  locations for $w'$.
\begin{itemize}
\item[--] If $w' = w_{r-1} = T_1^{k_1}T_2^{k_2} \dots T_{r-1}^{k_{r-1}}$, then we can repeat the argument from the basis case with Lemma~\ref{lem:witness-interaction}~(\emph{iii}) replaced by Lemma~\ref{lem:witness-interaction}~(\emph{ii}) and show that $\can, n \models S(w^+_r, w^+_{r-1})$, where $w^+_{r-1} = wT_1^{\smash{k_1+p_\TO}} T_2^{\smash{k_2^+}} \dots T_{r-1}^{\smash{k_{r-1}^+}}$, because $n \leq k_1 - r \cdot s_\TO$. On the other hand, by the induction hypothesis, $\can, n \models \varkappa'(w_{r-1}^+)$ whenever $n \leq k_1 - (r-1) \cdot s_\TO - |\varkappa'|p_\TO$, which, in particular, holds if $n\leq k_1 - r \cdot s_\TO - |\varkappa|p_\TO$ because $s_\TO, p_\TO\geq 0$.

\item[--]
Otherwise,  $w' = w_{r+1} = w_r T^{k_{r+1}}_{r+1}$. Let $w_{r+1}^+ = w_r^+ T^{k_{r+1}^+}_{r+1}$.
If $k_r^+ = k_r$, then also  $k^+_{r+1} = k_{r+1}$, and, by Lemma~\ref{lem:witness-interaction}~(\emph{ii}),
\mbox{$\can, n \models S(w^+_r, w^+_{r+1})$}.
Otherwise, $k_r^+ = k_r + p_\TO $, and we need to consider two further cases.
\begin{itemize}
\item[--] If $k_{r+1}^+ = k_{r+1}$, then, by Lemma~\ref{lem:witness-interaction}~(\emph{ii}), we have $\can, n \models S (w_{r}^+, w_{r+1}^+)$.

\item[--] If $k_{r+1}^+ = k_{r+1} + p_\TO$, then, by Lemma~\ref{lem:witness-interaction}~(\emph{ii}), we have \mbox{$\can, n +p_\TO \models S(w^+_r, w^+_{r+1})$}.
By  Lemma~\ref{period:A}~(\emph{ii}) applied to the rod, we obtain
\mbox{$\can, n \models S(w^+_r, w^+_{r+1})$} whenever  $n +p_\TO \leq  (k_{r+1}+p_\TO)- s_{\TO}$. Since
$k_{j-1} - k_j< s_\TO$, for all $1< j\leq r+1$, this holds, in particular, if $n \leq k_1 - (r+1) \cdot s_\TO$.
\end{itemize}
On the other hand, by the induction hypothesis, we have $\can, n \models \varkappa'(w_{r+1}^+)$ provided that $n \leq k_1 - (r+1) \cdot s_\TO - |\varkappa'|p_\TO$, which, in particular, holds if $n \leq k_1 - (r \cdot s_\TO + |\varkappa|p_\TO)$ because $s_\TO\leq p_\TO$.
\end{itemize}
Thus, in each case we  obtain $\can, n \models \varkappa(w_r^+)$ provided that $n \leq k_1 - (r \cdot s_\TO + |\varkappa|p_\TO)$.
The converse implication, if \mbox{$\can,n\models S(w_r^+, w')$} and \mbox{$\can,n\models \varkappa'(w')$}, for some $w'\in\Delta^{\can}$, is treated in a similar way.

\smallskip

\noindent\textit{Cases $\varkappa = \varkappa_1 \sqcap \varkappa_2$ and $\varkappa = \varkappa_1\sqcup\varkappa_2$} are easy and left to the reader. 

\smallskip

\noindent\textit{Cases of temporal operators} follow from the proof of Lemma 22 by~\citeA{AIJ21}.

\smallskip

This completes the proof of Claim~2.
\end{proof}

We can now define the remaining two formulas in Table~\ref{aux-formulas}, $\smash{\Psi^{\mu_1,\dots,\mu_l}_{S_1,\dots,S_l,\varkappa}(x, t_1,\dots,t_l, t)}$ and $\smash{\Psi^{\mu_1,\dots,\mu_l, k}_{S_1,\dots,S_l,\varkappa}(x, t_1,\dots,t_l)}$. The definition is by induction on the structure of $\varkappa$.
For convenience, we denote the sequence $S_1,\dots,S_l$ by $\avec{S}_l$, the sequence $t_1,\dots, t_l$  by $\avec{t}_l$, and the sequence $\mu_1,\dots,\mu_l$ by $\avec{\mu}_l$ and adopt similar notation for their prefixes: for example, $\avec{S}_0$ is the empty list of roles. We also assume that $\smash{\Psi^{\avec{\mu}_0}_{\avec{S}_0,\varkappa'}(x, t) =   \Phi_{\TO,\varkappa'}(x,t)}$ and $\smash{\Psi^{\avec{\mu_0},\kpar}_{\avec{S}_0,\varkappa'}(x, t) =   \Phi^\kpar_{\TO,\varkappa'}(x,t)}$, for any $\varkappa'$.

\medskip

\noindent\emph{Case} $\varkappa = A$.  If $\mu_l= 0$, then  $\Psi^{\avec{\mu}_l}_{\avec{S}_l,\varkappa}(x,
\avec{t}_l, t)=\Xi_{\exists S_l^- \leadsto A}(t_l, t)$ and $\Psi^{\avec{\mu}_l, k}_{\avec{S}_l,\varkappa}(x, \avec{t}_l)=\bar
\Xi^{k}_{\exists S_l^- \leadsto A}(t_l)$. Otherwise, we take
 $\Psi^{\avec{\mu}_l}_{\avec{S}_l,\varkappa}(x,
\avec{t}_l, t)=\Xi^{\mu_l}_{\exists S_l^- \leadsto A}(t)$ and $\Psi^{\avec{\mu}_l, k}_{\avec{S}_l,\varkappa}(x, \avec{t}_l)=\Xi_{\exists
S_l^- \leadsto A}^{\mu_l, k}$.

\medskip

\noindent\emph{Case} $\varkappa = \exists S. \varkappa'$. Let $N = (l+1) (s_\TO+|\varkappa|p_\TO)$; here we propagate the worst-case creation time boundaries from Lemma~\ref{th:dl-period-cor} for a witness of depth $l+1$.

\noindent If $\mu_l = 0$, then $\Psi^{\avec{\mu}_l}_{\avec{S}_l,\varkappa}(x, \avec{t}_l, t)$ is the following:
\begin{multline*}
\bigvee_{\text{role $S_{l + 1}$ in }\TO}%
\Bigl(\exists t_{l+1} \bigl[\Xi_{\exists S_l^- \leadsto \exists
    S_{l+1}}(t_l, t_{l +1})   \land \Theta_{S_{l+1} \leadsto
    S}(t_{l+1}, t) \land
\Psi^{\avec{\mu}_l,0}_{\avec{S}_l, S_{l+1},\varkappa'}(x,
\avec{t}_l, t_{l+1}, t)\bigr] \ \ \lor{}\\
\bigvee_{i \in[-N,0) \cup (0,N]}\hspace*{-1em} \bigl[\bar \Xi^{i}_{\exists S_l^-
    \leadsto \exists S_{l+1}}(t_l) \land \Theta^i_{S_{l+1} \leadsto
  S}(t) \land \Psi^{\avec{\mu}_l,i}_{\avec{S}_{l+1},\varkappa'}(x, \avec{t}_l, t_{l+1}, t)\bigr]\Bigr) \ \ \lor{} \\
\bigl[\Theta_{S_l \leadsto S^-}(t_l, t) \land \Psi^{\avec{\mu}_{l-1}}_{\avec{S}_{l-1},\varkappa'}(x, \avec{t}_{l-1}, t)\bigr],
\end{multline*}
and the $\Psi^{\avec{\mu}_l, \kpar}_{\avec{S}_l,\varkappa}(x, \avec{t}_l)$ are obtained from the above by removing occurrences of variable~$t$ and adding the $\kpar$ decoration instead:
\begin{multline*}
\bigvee_{\text{role $S_{l + 1}$ in }\TO}%
\Bigl(\exists t_{l+1} \bigl[\Xi_{\exists S_l^- \leadsto \exists
    S_{l+1}}(t_l, t_{l +1})   \land \bar \Theta_{S_{l+1} \leadsto
    S}^{\kpar}(t_{l+1}) \land
\Psi^{\avec{\mu}_l,0, \kpar}_{\avec{S}_l, S_{l+1},\varkappa'}(x,
\avec{t}_l, t_{l+1})\bigr] \ \ \lor{} \\*
\bigvee_{i\in [-N,0)\cup(0,N]} \bigl[\bar \Xi^{i}_{\exists S_l^- \leadsto
    \exists
  S_{l+1}}(t_l) \land \Theta^{i,k}_{S_{l+1} \leadsto S} \land \Psi^{\avec{\mu}_l,i, \kpar}_{\avec{S}_l, S_{l+1},\varkappa'}(x, \avec{t}_l, t_{l+1})\bigr]\Bigr) \ \ \lor{} \\*
\bigl[\bar\Theta_{S_l \leadsto S^-}^{\kpar}(t_l) \land \Psi^{\avec{\mu}_{l-1}, \kpar}_{\avec{S}_{l-1},\varkappa'}(x, \avec{t}_{l-1})\bigr].
\end{multline*}

\smallskip

\noindent If $\mu_l \neq 0$, then $\Psi^{\avec{\mu}_l}_{\avec{S}_l,\varkappa}(x,
\avec{t}_l, t)$ is the following:
\begin{multline*}
\bigvee_{\text{role $S_{l + 1}$ in }\TO}%
\Bigl(\exists t_{l+1} \bigl[\Xi^{\mu_l}_{\exists S_l^- \leadsto
    \exists S_{l+1}}(t_{l +1})   \land \Theta_{S_{l+1} \leadsto
    S}(t_{l+1}, t) \land
\Psi^{\avec{\mu}_l,0}_{\avec{S}_l, S_{l+1},\varkappa'}(x,
\avec{t}_l, t_{l+1}, t)\bigr] \ \ \lor{} \\
\bigvee_{i\in[-N,0)\cup(0,N]} \hspace*{-2em}\bigl[\Xi^{\mu_l, i}_{\exists S_l^- \leadsto \exists
    S_{l+1}} \land \Theta^i_{S_{l+1} \leadsto S}(t) \land \Psi^{\avec{\mu}_l, i}_{\avec{S}_l, S_{l+1},\varkappa'}(x, \avec{t}_l, t_{l+1},
    t)\bigr]\Bigr) \ \ \lor{} \\
\bigl[\Theta^{\mu_l}_{S_l \leadsto S^-}(t) \land \Psi^{\avec{\mu}_{l-1}}_{\avec{S}_{l-1},\varkappa'}(x,
    \avec{t}_{l-1}, t)\bigr],
\end{multline*}
and the $\Psi^{\avec{\mu}_l, k}_{\avec{S}_l,\varkappa}(x, \avec{t}_l)$ are again obtained by replacing $t$ with the $\kpar$ decorations:
\begin{multline*}
\bigvee_{\text{role $S_{l + 1}$ in }\TO}%
\Bigl(\exists t_{l+1} [\Xi^{\mu_l}_{\exists S_l^- \leadsto
    \exists S_{l+1}}(t_{l +1})   \land \bar \Theta_{S_{l+1} \leadsto
    S}^{\kpar}(t_{l+1}) \land
\Psi^{\avec{\mu}_l,0,\kpar}_{\avec{S}_l, S_{l+1},\varkappa'}(x,
\avec{t}_l, t_{l+1})] \ \ \lor{} \\
\bigvee_{i\in [-N,0)\cup(0,N]} [\Xi^{\mu_l, i}_{\exists S_l^- \leadsto \exists
  S_{l+1}} \land \Theta^{i,k}_{S_{l+1} \leadsto S} \land \Psi^{\avec{\mu}_l,i,\kpar}_{\avec{S}_l, S_{l+1},\varkappa'}(x, \avec{t}_l, t_{l+1})\bigr]\Bigr) \ \ \lor{} \\
\bigl[\Theta^{\mu_l,k}_{S_l \leadsto S^-} \land \Psi^{\avec{\mu}_{l-1},\kpar}_{\avec{S}_{l-1},\varkappa'}(x,
    \avec{t}_{l-1})\bigr].
\end{multline*}

\medskip

\noindent\emph{Case} $\varkappa = \Rbox \varkappa'$. By
Lemma~\ref{th:dl-period}, we take  again $N =  (l+1) (s_{\TO} + |\varkappa| p_{\TO})$
 and set
\begin{align*}
\Psi^{\avec{\mu}_l}_{\avec{S}_l,\varkappa}(x, \avec{t}_l, t) \ = \  & \forall s\, [(s > t) \to{} \Psi^{\avec{\mu}_l}_{\avec{S}_l,\varkappa'}(x,\avec{t}_l, s)] \ \ \land
  \bigwedge_{\kpar \in (0,N]} \Psi^{\avec{\mu}_l,\kpar}_{\avec{S}_l,\varkappa'}(x,\avec{t}_l),\\
\Psi^{\avec{\mu}_l, k}_{\avec{S}_l,\varkappa}(x, \avec{t}_l) \ = \ & \begin{cases}
\displaystyle\rule{0pt}{14pt} \bigwedge_{i \in (k, k+N]}\Psi^{\avec{\mu}_l,i}_{\avec{S}_l,\varkappa'}(x,\avec{t}_l), & \text{if } k > 0;\\
\displaystyle\rule{0pt}{20pt}  \forall s\, \Psi^{\avec{\mu}_l}_{\avec{S}_l,\varkappa'}(x,\avec{t}_l, s)  \land  \bigwedge_{i\in (k,0)\cup(0,N]} \Psi^{\avec{\mu}_l,i}_{\avec{S}_l,\varkappa'}(x,\avec{t}_l), &  \text{if } k < 0.
\end{cases}
\end{align*}

\medskip

\noindent The cases of other temporal operators are similar, and the cases $\varkappa =
\varkappa_1 \sqcap \varkappa_2$ and $\varkappa = \varkappa_1 \sqcup \varkappa_2$
are trivial. This completes the construction of the $\Psi$-formulas in Table~\ref{aux-formulas}. To sum up, we have shown the following:
\begin{lemma}
Let $(\TO,\varkappa)$ be a $\bot$-free $\DL_{\horn}\Xallop{}$ \OMPIQ{} with a positive temporal concept $\varkappa$, $k \in \Z\setminus\{0\}$, and $S_1, \dots, S_l$ roles. Then there are $\FOE$-formulas
$\Psi^{\mu_1,\dots,\mu_l}_{S_1\dots S_l,\varkappa}(x, t_1,\dots,t_l, t)$ and
$\Psi^{\mu_1,\dots,\mu_l, k}_{S_1\dots S_l,\varkappa}(x, t_1,\dots,t_l)$ satisfying the
characterisations in Table~\ref{aux-formulas}.

Moreover, if $\TO$ is a
$\smash{\DL_{\horn/\core}\Xbox}$ or $\smash{\DL_{\horn/\rhorn}\Xbox}$ ontology, then those formulas are
in $\FO(<)$.
\end{lemma}

%


\bibliography{LTL,time17-bib}
\bibliographystyle{theapa}

\end{document}